\numberwithin{equation}{section}
\def\cH{\mathcal{H}}
\def\cN{\mathcal{N}}
\def\cK{\mathcal{K}}
\def\cV{\mathcal{V}}
\def\cE{\mathcal{E}}
\def\cG{\mathcal{G}}
\def\Fock{\mathscr{F}}
\let\L=\Lambda
\DeclareMathOperator{\tr}{Tr}
\newtheorem{theorem}{Theorem}[section]
\newtheorem{proposition}[theorem]{Proposition}
\newtheorem{lemma}[theorem]{Lemma}
\newtheorem{corollary}[theorem]{Corollary}
\theoremstyle{definition}
\newtheorem{remark}[theorem]{Remark}
\newcommand{\dda}{\mathrm{d}}
\newcommand{\de}{\,\dda}
\renewcommand\rho\varrho
\renewcommand\epsilon\varepsilon
\definecolor{darkred}{rgb}{0.9,0,0.3}
\definecolor{darkblue}{rgb}{0,0.3,0.9}
\definecolor{vdarkred}{rgb}{0.7,0,0.2}
\definecolor{vdarkblue}{rgb}{0,0.2,0.7}
\def\cV{\mathcal{V}}
\def\cG{\mathcal{G}}
\def\cN{\mathcal{N}}
\def\cE{\mathcal{E}}
\def\cK{\mathcal{K}}
\def\cH{\mathcal{H}}
\def\fock{\mathscr{F}}
                  \let\p=\pi
\let\G=\Gamma        \let\L=\Lambda
\theoremstyle{definition}
\newcommand{\beq}{\begin{equation}}
\newcommand{\eeq}{\end{equation}}
\begin{document}

\title{Upper bound for the grand canonical free energy of the Bose gas in the Gross--Pitaevskii limit}

\author{Chiara Boccato, Andreas Deuchert, David Stocker}

\date{\today}

\maketitle

\begin{abstract} 
	We consider a homogeneous Bose gas in the Gross--Pitaevskii limit at temperatures that are comparable to the critical temperature for Bose--Einstein condensation in the ideal gas. Our main result is an upper bound for the grand canonical free energy in terms of two new contributions: (a) the free energy of the interacting condensate is given in terms of an effective theory describing its particle number fluctuations, (b) the free energy of the thermally excited particles equals that of a temperature-dependent Bogoliubov Hamiltonian. 
\end{abstract}

\setcounter{tocdepth}{2}
\tableofcontents

\section{Introduction and main results}

\subsection{Background and summary}
\label{sec:Introduction}
The dilute Bose gas, that is, a bosonic system with rare but strong collisions, is one of the most fundamental and interesting models in quantum statistical mechanics. Its prominence is mostly due to the occurrence of the Bose–Einstein condensation (BEC) phase transition and its numerous phenomenological consequences. Triggered by the experimental realization of BEC in ultra cold alkali gases in 1995, see \cite{Wieetal1995,Davisetal1995}, and by the subsequent experimental progress,  in the past two decades there have been numerous mathematical investigations of dilute Bose gases in different parameter regimes. 

The most relevant parameter regime for the description of modern experiments with cold quantum gases is the Gross--Pitaevskii (GP) limit. Here the scattering length of the interaction between the particles is scaled with the particle number $N$ in such a way that the interaction energy, in the limit $N \to \infty$, is of the same order of magnitude as the spectral gap in the trap. It has been shown in \cite{LiSeiYng2000} that the ground state energy per particle can, in this limit, be approximated by the minimum of the GP energy functional. Moreover, approximate ground states of a trapped Bose gas display BEC and superfluidity, see \cite{LiSei2002,LiSeiYng2002}. The derivation of the GP energy functional has later been extended in \cite{Sei2003,LiSei2006} to the case of rotating gases, see also \cite{NamRouSei2016}. In such a system, the one-particle density matrices of approximate ground states can be shown to converge to a convex combination of projections onto the minimizers of the GP energy functional.

As predicted by Bogoliubov in 1947 in \cite{Bog1947}, the subleading correction to the ground state energy of a dilute Bose gas is given by the ground state energy of a certain quadratic Hamiltonian called Bogoliubov Hamiltonian. Recently, this claim has been proved in the GP limit for a homogeneous Bose gas in \cite{BocBreCeSchl2019,BocBreCeSchl2020}, for a homogeneous Bose gas with a slightly more singular interaction (Thomas-Fermi limit) in \cite{AdhBreSchl2020,BreCapoSchl2021}, and for a trapped Bose gas in \cite{NamNapRicTri2022,NamTriay2021,BreSchlSchr2021a,BreSchlSchr2021b}. The two-dimensional case has been investigated in \cite{CarCenaSchl2021,CarCenaSchl2023}. In all these works it was also possible to compute the low-lying eigenvalues of the Hamiltonian as well as the corresponding eigenfunctions. Simplified approaches in the homogeneous case have been provided in \cite{Hai2021,HaiSchlTriay2022}, and a second order upper bound for a system with hard core interactions was proved in \cite{BaCenaOlgPasqSchl2022}. A Bose gas in a box with Neumann boundary conditions has been studied in \cite{BocSei2023}. In case of mean-field interactions, Bogoliubov theory had previously been justified in \cite{Sei2011b,GrechSei2013}.

Low energy eigenstates provide an accurate description of a Bose gas at zero temperature. However, the understanding of the model at positive temperature is essential for the full description of experiments and crucial for the mathematical understanding of the BEC phase transition. In this case one is interested in the free energy and the Gibbs state, which are natural equivalents of the ground state energy and the corresponding eigenfunction. A trapped Bose gas in a combination of a thermodynamic limit in the trap and a GP limit was studied in \cite{DeuSeiYng2019}. There it could be shown that the difference between the free energy of the system and that of the ideal gas is approximately given by the minimum of the GP energy functional. Moreover, the one-particle density matrix of approximate minimizers of the free energy is, to leading order, given by the one of the ideal gas, where the condensate wave function has been replaced by the minimizer of the GP energy functional. This, in particular, establishes the existence of a BEC phase transition in the system. Comparable results have been obtained also for a homogeneous Bose gas, see \cite{DeuSeiGP2020}. 

Consider an approximate ground state of a trapped Bose gas in the GP limit. Its dynamics after the trapping potential has been switched off can be described by the time-dependent GP equation, see \cite{ErdSchlYau2009,ErdSchlYau2010,BenOlivSchl2015,Pickl2015,JeblLeopPickl2019,BreSchl2019}. The dynamics of approximate positive temperature states has so far been studied only for mean field (high density) systems. More information about the derivation of effective evolution equations for bosonic many-particle systems can be found in the reviews  \cite{BenPorSchl2015,Nap2021}.

The GP limit is appropriate to describe experiments with $10^2-10^6$ alkali atoms. In contrast, truly macroscopic samples with particle numbers of the order of the Avogadro constant $N_A \approx 6.022 \times 10^{23}$ are well-described by the thermodynamic limit followed by a dilute (i.e., low density) limit.  The asymptotic behavior of the specific energy in this setting has been obtained in \cite{Dyson,LiYng1998}. Results in two and one space dimensions can be found in \cite{LiYng2001} and \cite{AgersReuversSol2022}, respectively. Also the next-to-leading order correction (Lee--Huang--Yang (LHY) term) predicted in \cite{LHY1957} could recently be established, see \cite{YauYin2009,BastCenaSchlein2021} (upper bound), \cite{FouSol2020,FouSol2021} (lower bound), and \cite{Fouetal} (comparable result in two space dimensions). A two-term expansion for the free energy of the three-dimensional system has been proved in \cite{Yin2010} (upper bound) and \cite{Sei2008} (lower bound), and for the two-dimensional system in \cite{MaySei2020} (upper bound) and \cite{DeuMaySei2020} (lower bound). In the latter case the result depends on the critical temperature of the Berezinskii–-Kosterlitz–-Thouless critical temperature for superfluidity. Finally, a LHY type lower bound for the free energy at suitably low temperatures, where the contribution of the excitation spectrum and the LHY correction are of the same order, has been proved in \cite{HabHaiNamSei2023}. For a more extensive list of references concerning static properties of Bose gases we refer to \cite{LiSeiSolYng2005,Rou2015}.


In the present article we consider a homogeneous Bose gas in the GP limit at temperatures of the order of the critical temperature for BEC. Our main result is an upper bound for the grand canonical free energy in terms of two new contributions. The first is the free energy of the particle number fluctuations of the interacting condensate and is described by a suitable effective theory. The second new contribution is related to the free energy of thermal excitations over the condensate. For temperatures of the order of the critical temperature, the number of excited particles may be of the same order as the number of particles in the condensate, and Bogoliubov modes need to be described in terms of a \emph{temperature-dependent Bogoliubov Hamiltonian}. To obtain our upper bound, we construct a trial state as follows: particles in the condensate are described by a convex combination of coherent states, which allows us to increase their entropy. The excitations are described by a Gibbs state of free bosons with \emph{Bogoliubov dispersion relation}. The resulting state is a convex combination of quasi-free states, which we further transform to include two-body correlations. To do this we employ a suitable second quantized quartic operator. When computing the energy of our trial state, this operator allows us to renormalize the interaction potential and to show that the result only depends on the scattering length.

\subsection{Notation}
\label{sec:Notation}
For two functions $a$ and $b$ of the particle number and other parameters of the system, we use the notation $a \lesssim b$ to say that there exists a constant $C > 0$ independent of the parameters such that $a \leq C b$. If we want to highlight that $C$ depends on a parameter $k$ we use the symbol $\lesssim_k$. If $a \lesssim b$ and $b \lesssim a$ we write $a \sim b$ and $a \simeq b$ means that $a$ and $b$ are equal to leading order in the limit considered. By $C,c>0$ we denote generic constants, whose values may change from line to line. The Fourier coefficients of a periodic function $f : [0,L]^3 \to \mathbb{C}$ are denoted by $\hat{f}(p) = \int_{[0,L]^3} e^{-\mathrm{i} p x} f(x) \de x$, and for two Fourier coefficients $\hat{f}, \hat{g}$ we define their convolution as
\begin{equation}
	\hat{f} \ast \hat{g} (p) = L^{-3} \sum_{p \in (2 \pi/L) \mathbb{Z}^3} \hat{f}(p-q) \hat{g}(q).
	\label{eq:Convolution}
\end{equation} 
This, in particular, implies $\widehat{fg}(p) = \hat{f} \ast \hat{g}(p)$.
\subsection{Fock space and Hamiltonian}
\label{sec:FockSpace_Hamiltonian}
We consider a system of bosons confined to a three dimensional flat torus $\Lambda$ with side length $L$. In what follows, we could set $L = 1$ but we prefer to keep a length scale to explicitly display units in formulas. The one-particle Hilbert space of the system is given by $L^2(\Lambda, \de x)$, with $\de x$ denoting the Lebesgue measure. We are interested in the grand canonical ensemble, that is, in a system with a fluctuating particle number. The Hilbert space of the entire system is therefore given by the bosonic Fock space
\begin{equation}
	\mathscr{F}( L^2(\Lambda, \de x) ) = \bigoplus_{n=0}^{\infty} L^2_{\mathrm{sym}}(\Lambda^n, \de x).
	\label{eq:FockSpace}
\end{equation}
Here $L^2_{\mathrm{sym}}(\Lambda^n, \de x)$ denotes the closed linear subspace of $L^2(\Lambda^n, \de x)$ consisting of those functions $\Psi(x_1,...,x_n)$ that are invariant under any permutation of the coordinates $x_1, ... , x_n \in \Lambda$. As usual, we define $L^2_{\mathrm{sym}}(\Lambda^0, \de x) = \mathbb{C}$. 

On the $n$-particle Hilbert space $L^2_{\mathrm{sym}}(\Lambda^n, \de x)$ with $n \geq 1$ we define the Hamiltonian 
\begin{equation}
	\mathcal{H}_N^{(n)} = \sum_{i=1}^n -\Delta_i + \sum_{1 \leq i < j \leq n} v_N(d(x_i,x_j)),
	\label{eq:Hamiltonian(n)}
\end{equation}
where $\Delta$ denotes the Laplacian on the torus $\Lambda$ and $d(x,y)$ is the distance between two points $x,y \in \Lambda$. In the realization of $\Lambda$ as the set $[0,L]^3$, $\Delta$ is the usual Laplacian with periodic boundary conditions and $d(x,y) = \min_{k \in \mathbb{Z}^3} | x-y-k L |$. We also define $\mathcal{H}_N^{(0)} = 0$. The interaction potential is of the form 
\begin{equation}
	v_N(d(x,y)) = N^2 v(Nd(x,y))
	\label{eq:InteractionPotential}
\end{equation}
with a measurable, compactly supported function $v : [0,\infty) \to [0,\infty]$ and a parameter $N>0$. We will later choose $N$ as the expected number of particles in the system. Our assumptions on $v$ guarantee that it has a finite scattering length $\frak{a} \geq 0$. The scattering length is a combined measure for the range and the strength of an interaction potential. For its definition we refer to \cite[Appendix C]{LiSeiSolYng2005} and Appendix~\ref{app:ScatteringEquation}.
A simple scaling argument shows that the scattering length of $v_N$ is $\frak{a}_N = \frak{a}/N$.  Finally, the Hamiltonian $\mathcal{H}_N$ acting on $\mathscr{F}$ is defined by 
\begin{equation}
	\mathcal{H}_N = \bigoplus_{n=0}^{\infty} \mathcal{H}_N^{(n)}.
	\label{eq:Hamiltonian}
\end{equation}


\subsection{Grand canonical free energy, Gibbs state and Gibbs variational principle}
\label{sec:FreeEnergy_GibbsState}
We are interested in a gas of bosons in the grand canonical ensemble. The usual thermodynamic variables used to describe such a system are the inverse temperature, the chemical potential and the volume of the container. The chemical potential can later be chosen to obtain a desired particle number. In this article we replace the chemical potential in the above list of variables by the expected number of particles, which yields an equivalent description of the system. This motivates the following definitions.

The set of states on the bosonic Fock space $\mathscr{F}(L^2(\Lambda,\de x))$ with an expected number of $N \geq 0$ particles is defined by
\begin{equation}
	\mathcal{S}_N = \{ \Gamma \in \mathcal{B}(\mathscr{F}) \ | \ \Gamma \geq 0, \tr \Gamma = 1, \tr[\mathcal{N}\Gamma] = N \},
	\label{eq:States}
\end{equation}
where 
\begin{equation}
	\mathcal{N} = \bigoplus_{n=0}^{\infty} n
	\label{eq:ParticleNumberOp}
\end{equation}
denotes the number operator on $\mathscr{F}$. For a state $\Gamma \in \mathcal{S}_N$, the Gibbs free energy functional reads
\begin{equation}
	\mathcal{F}(\Gamma) = \tr[\mathcal{H}_N \Gamma] - \frac{1}{\beta} S(\Gamma) \quad \text{ with the von-Neumann entropy } \quad S(\Gamma) = - \tr[ \Gamma \ln(\Gamma) ]
	\label{eq:GibbsFreeEnergyFunctional}
\end{equation}
and the inverse temperature $\beta > 0$. The grand canonical free energy of the system is defined as the minimum of $\mathcal{F}$ in the set $\mathcal{S}_N$:
\begin{equation}
	F(\beta,N,L) = \min_{\Gamma \in \mathcal{S}_N} \mathcal{F}(\Gamma) = -\frac{1}{\beta} \ln\left( \tr[\exp(-\beta(\mathcal{H}_N - \mu \mathcal{N}))] \right) + \mu N.
	\label{eq:FreeEnergy}
\end{equation}
Here the chemical potential $\mu$ is chosen such that the unique minimizer 
\begin{equation}
	G = \frac{\exp(-\beta(\mathcal{H}_N - \mu \mathcal{N}))}{\tr[\exp(-\beta(\mathcal{H}_N - \mu \mathcal{N}))]}
	\label{eq:GibbsState}
\end{equation}
of $\mathcal{F}$ satisfies $\tr[\mathcal{N}G] = N$. The state $G$ is called the (grand canonical) Gibbs state. 
\subsection{The ideal Bose gas on the torus}
\label{sec:IdealBoseGas}
The bound that we prove for the free energy $F(\beta,N,L)$ in \eqref{eq:FreeEnergy} depends on several quantities related to the ideal (i.e., noninteracting) Bose gas on the torus. In this section we recall their definition and briefly discuss their behavior as a function of the inverse temperature $\beta$.

The chemical potential $\mu_0(\beta,N,L) < 0$ of the ideal gas is defined as the unique solution to the equation
\begin{equation}
	N = \sum_{p \in \Lambda^*} \frac{1}{\exp(\beta(p^2-\mu_0(\beta,N,L))) - 1}, 
	\label{eq:ChemicalPotentialIdealGas}
\end{equation}
where $\Lambda^* = (2\pi/L) \mathbb{Z}^3$. The expected number of particles with momentum $p=0$ and their density read
\begin{equation}
	N_0(\beta,N,L) = (\exp(-\beta \mu_0) - 1 )^{-1} \quad \text{ and } \quad \varrho_0(\beta,N,L) = N_0(\beta,N,L)/L^3,
	\label{eq:DensityBEC}
\end{equation}
respectively. The asymptotic behavior of $N_0$ in the limit $N \to \infty$ is given by
\begin{equation}
	\frac{N_0(\beta,N,L)}{N} \simeq \left[ 1 - \frac{\beta_{\mathrm{c}}}{\beta} \right]_+ \quad \text{ with } \quad \beta_{\mathrm{c}} = \frac{1}{4 \pi} \left( \frac{N}{L^3 \zeta(3/2) } \right)^{-2/3}.
	\label{eq:BECPhaseTransition}
\end{equation}
We note that $\beta$ in \eqref{eq:BECPhaseTransition} usually depends on $N$. By $\zeta$ we denote the Riemann zeta function and $[x]_+ = \max\{0,x \}$. The above formula implies that the ideal Bose gas displays a BEC phase transition: If $\beta = \kappa \beta_{\mathrm{c}}$ with $\kappa \in (1,\infty)$ then $N_0 \simeq N [1-1/\kappa]$ and $|\mu_0| \sim L^{-2} N^{-1/3}$. In contrast, for $\beta = \kappa \beta_{\mathrm{c}}$ with $\kappa \in (0,1)$ we have $N_0 \sim 1$ and $|\mu_0| \sim L^{-2} N^{2/3}$. Finally, the grand canonical free energy of the ideal gas is given by $F_0 = F_0^{\mathrm{BEC}} + F_0^+$. Here
\begin{equation}
	F_0^{\mathrm{BEC}}(\beta,N,L) = \frac{1}{\beta} \ln\left( 1 - \exp\left( \beta \mu_0 \right) \right) + \mu_0 N_0
	\label{eq:FreeEnergyIdealGasCond}
\end{equation}
denotes the free energy of the condensate and 
\begin{equation}
	F_0^+(\beta,N,L) = \frac{1}{\beta} \sum_{p \in \Lambda_+^*} \ln\left( 1 - \exp\left( -\beta (p^2 - \mu_0) \right) \right) + \mu_0 (N-N_0)
	\label{eq:FreeEnergyIdealGas}
\end{equation}
that of the non-condensed particles.
\subsection{Main results}
\label{sec:main}
Our main result is the following upper bound for the free energy of the homogeneous Bose gas in the GP limit.

\begin{theorem}
	\label{thm:Main}
	Assume that the function $v : [0,\infty) \to [0,\infty]$ is nonnegative, compactly supported, and satisfies $v(| \cdot |) \in L^3(\Lambda, \de x)$. By $\varrho = N/L^3$ we denote the particle density. In the combined limit $N \to \infty$, $\beta = \kappa \beta_{\mathrm{c}}$ with  $\kappa \in (0,\infty)$ and $\beta_{\mathrm{c}}$ in \eqref{eq:BECPhaseTransition}, the free energy in \eqref{eq:FreeEnergy} satisfies the upper bound
	\begin{align}
		F(\beta,N,L) \leq& F^+_0(\beta,N,L) + 8 \pi \mathfrak{a}_N L^{3} \varrho^2 + \min\{ F^{\mathrm{BEC}} - 8 \pi \mathfrak{a}_N L^3 \varrho_0^2, F_0^{\mathrm{BEC}} \} \nonumber \\
		&- \frac{1}{2 \beta} \sum_{p \in \Lambda^*_+} \left[ \frac{16 \pi \mathfrak{a}_N \varrho_0(\beta,N,L)}{p^2} - \ln\left( 1 + \frac{16 \pi \mathfrak{a}_N \varrho_0(\beta,N,L)}{p^2} \right) \right]  + O(L^{-2} N^{7/12}), 
		\label{eq:MainFreeEnenergyBound}
	\end{align}
	with $\varrho_0$ in \eqref{eq:DensityBEC}, $F_0^{\mathrm{BEC}}$ in \eqref{eq:FreeEnergyIdealGasCond}, $F^+_0$ in \eqref{eq:FreeEnergyIdealGas}, and
	\begin{equation}
		F^{\mathrm{BEC}}(\beta,N_0,L,\mathfrak{a}_N) = -\frac{1}{\beta} \ln\left( \int_{\mathbb{C}} \exp\left( - \beta \left( 4 \pi \mathfrak{a}_N L^{-3} |z|^4 - \mu |z|^2 \right) \right) \de z \right) + \mu N_0(\beta,N,L).
		\label{eq:FreeEnergyBEC}
	\end{equation}
	Here $\de z = \de x \de y/\pi$, where $x$ and $y$ denote the real and imaginary part of the complex number $z$, respectively. The chemical potential $\mu$ in \eqref{eq:FreeEnergyBEC} is chosen such that the Gibbs distribution
	\begin{equation}
		g(z) = \frac{\exp\left( - \beta \left( 4 \pi \mathfrak{a}_N L^{-3} |z|^4 - \mu |z|^2 \right) \right) }{ \int_{\mathbb{C}} \exp\left( - \beta \left( 4 \pi \mathfrak{a}_N L^{-3} |z|^4 - \mu |z|^2 \right) \right) \de z } \quad \text{ satisfies } \quad \int_{\mathbb{C}} |z|^2 g(z) \de z = N_0(\beta,N,L).
		\label{eq:GibbsDistribution}
	\end{equation}
\end{theorem}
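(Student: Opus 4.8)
\section*{Proof proposal}

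The plan is to apply the Gibbs variational principle: since $F(\beta,N,L)=\min_{\Gamma\in\mathcal S_N}\mathcal F(\Gamma)$, it suffices to exhibit one trial state $\Gamma\in\mathcal S_N$ whose free energy $\mathcal F(\Gamma)=\tr[\mathcal H_N\Gamma]-\beta^{-1}S(\Gamma)$ is bounded above by the right-hand side of \eqref{eq:MainFreeEnenergyBound}. Following the heuristic of Section~\ref{sec:Introduction}, I would split the Fock space $\mathscr F(L^2(\Lambda))\cong\mathscr F_0\otimes\mathscr F_+$ into the Fock space of the zero mode $p=0$ and that of $L^2_+(\Lambda)=\mathrm{span}\{e^{\ii px}:p\in\Lambda^*_+\}$, and take
\begin{equation*}
  \Gamma=\int_{\mathbb C}g(z)\,\mathcal U_z\,\bigl(|0\rangle\langle 0|\otimes G_+\bigr)\,\mathcal U_z^*\de z,\qquad \mathcal U_z=W_z\,e^{B},
\end{equation*}
where $W_z$ is the Weyl operator on $\mathscr F_0$ with $W_z^*a_0W_z=a_0+z$, the distribution $g$ is the one in \eqref{eq:GibbsDistribution}, $G_+$ is the Gibbs state at inverse temperature $\beta$ of a \emph{temperature-dependent Bogoliubov Hamiltonian} with dispersion $\epsilon(p)=\bigl(p^4+16\pi\mathfrak a_N\varrho_0\,p^2\bigr)^{1/2}$ (equivalently, the quasi-free state obtained by applying the corresponding Bogoliubov rotation to the ideal-gas excitation Gibbs state), and $e^{B}$ is a unitary generated by a quartic operator $B$ in the creation and annihilation operators of the excited modes, built from the ground-state scattering solution of $v_N$ as in \cite{BocBreCeSchl2019,DeuSeiGP2020}, which implements the short-range two-body correlations. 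The parameters are tuned so that $\langle|z|^2\rangle_g=N_0(\beta,N,L)$ — which fixes the chemical potential $\mu$ in \eqref{eq:FreeEnergyBEC} — and $\tr[\mathcal N\Gamma]=N$, the latter requiring a subleading adjustment of the number of Bogoliubov excitations. For $\kappa\le 1$, when there is essentially no condensate, one instead takes $G_+$ to be the ideal-gas excitation Gibbs state and omits the correlation transformation; this produces the $F_0^{\mathrm{BEC}}$ alternative inside the minimum in \eqref{eq:MainFreeEnenergyBound}.

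The entropy $S(\Gamma)$ splits into two contributions. The averaging over $z$ yields a classical mixture of coherent states in $\mathscr F_0$; since $g$ is rotation invariant this operator is diagonal in the number basis and its von Neumann entropy is bounded below by the classical entropy $-\int_{\mathbb C}g\ln g\de z$ up to an error of order $\ln N_0$, which is absorbed into $O(L^{-2}N^{7/12})$. The excitation part contributes the free-boson entropy $\sum_{p\in\Lambda^*_+}\bigl[(1+n_p)\ln(1+n_p)-n_p\ln n_p\bigr]$ of the Bogoliubov modes, with $n_p=(e^{\beta\epsilon(p)}-1)^{-1}$, and is unchanged by the unitary correlation transformation. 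When combined with the energy contributions below, the classical-entropy piece assembles together with the condensate self-energy into $F^{\mathrm{BEC}}$ (the $\mu$-dependent terms cancel by the definition of $\mu$), while the Bogoliubov entropy combines with the quadratic energy into the free energy of the temperature-dependent Bogoliubov Hamiltonian.

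The core of the argument is the energy $\tr[\mathcal H_N\Gamma]$. Conjugating $\mathcal H_N$ by $W_z$ expands it into terms of degree $0$ to $4$ in the excited-mode operators with coefficients polynomial in $z$: a constant $\tfrac12 L^{-3}\hat v_N(0)|z|^4$, a quadratic part that together with the kinetic energy is a Bogoliubov-type operator carrying the \emph{bare} coupling $\hat v_N$, and cubic and quartic remainders. Conjugating further by $e^{B}$ and taking the expectation in the quasi-free state $G_+$ via Wick's theorem, the quartic operator $B$ is designed — through the scattering-equation identity, which turns the difference between $\hat v_N$ and $8\pi\mathfrak a_N$ into the renormalizing correlation energy — so that the bare couplings are replaced by the scattering length: the constant becomes $4\pi\mathfrak a_N L^{-3}|z|^4$ and the quadratic operator acquires the dispersion $\epsilon(p)$. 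Collecting the resulting renormalized interaction, condensate self-energy and Bogoliubov constant, and evaluating the momentum sums asymptotically, one obtains the leading term $8\pi\mathfrak a_N L^3\varrho^2$, the piece $-8\pi\mathfrak a_N L^3\varrho_0^2$ together with $F^{\mathrm{BEC}}$, and the correction $-\tfrac{1}{2\beta}\sum_{p\in\Lambda^*_+}\bigl[\tfrac{16\pi\mathfrak a_N\varrho_0}{p^2}-\ln(1+\tfrac{16\pi\mathfrak a_N\varrho_0}{p^2})\bigr]$; everything else has to be shown to be $O(L^{-2}N^{7/12})$.

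I expect this last estimate to be the main obstacle. In the positive-temperature regime the expected number of excitations $\sum_p n_p$ is comparable to $N_0$ rather than $O(1)$, so the cubic terms, the exchange contributions from Wick's theorem, and the error in the potential renormalization must all be controlled by bounds that are uniform both in the (large) excitation number and in $z$ over the support of $g$. Controlling the cross-terms between the correlation structure $e^{B}$ and the thermally occupied Bogoliubov modes — i.e.\ carrying out the potential renormalization in the presence of a macroscopic thermal cloud — is the crux; by comparison, the entropy bookkeeping and the asymptotic evaluation of the momentum sums are refinements of techniques already present in \cite{DeuSeiGP2020}.
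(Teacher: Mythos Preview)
Your overall architecture is right --- variational principle, coherent-state mixture for the condensate, Bogoliubov-type excitation state, correlation structure to renormalize $\hat v_N$ to $8\pi\mathfrak a_N$, and a second, simpler trial state to produce the $F_0^{\mathrm{BEC}}$ alternative in the minimum. But several points differ from the paper, and one is a genuine gap.

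\medskip
\textbf{Genuine gap.} For the non-condensed phase you propose to take $G_+$ as the ideal-gas excitation Gibbs state \emph{and omit the correlation transformation}. That fails at leading order: without the correlation structure the interaction energy of a quasi-free state carries the first Born approximation $\hat v_N(0)=N^{-1}\!\int v$ instead of $8\pi\mathfrak a_N=8\pi\mathfrak a/N$, and the discrepancy $L^{-3}N^2\bigl(\hat v_N(0)-8\pi\mathfrak a_N\bigr)\sim L^{-2}N$ is far above the $O(L^{-2}N^{7/12})$ budget. The paper's second trial state $\widetilde\Gamma$ is the ideal-gas Gibbs state \emph{dressed with the same $(1+B)$ correlation} (see \eqref{eq:Sec68}--\eqref{eq:Sec69}); the correlation is indispensable in both regimes.

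\medskip
\textbf{Differences in implementation.} The paper does not use a unitary $e^B$; it applies $(1+B)$ to each eigenvector of the uncorrelated state and renormalizes, see \eqref{trialstate}. This avoids any Duhamel expansion of the quartic exponential --- the energy computation reduces to finitely many normal-ordered traces --- at the price that entropy is no longer preserved; the loss is controlled by Seiringer's lemma (Lemma~\ref{lem:entropy1}) together with a crude bound on $\tr[(B^*B)^2\Gamma_0]$. Second, the paper does not put Bogoliubov dispersion on all excited modes: it splits $\Lambda^*_+$ into $P_{\mathrm B}$ (Bogoliubov), $P_{\mathrm I}$ (treated as free ideal gas), and $P_{\mathrm H}$ (carries $\eta_p$), with disjoint supports. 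This separation is precisely what tames the cross-terms between the thermal cloud and the correlation structure that you flag as the crux; without it your proposed unitary $e^B$ acting on a fully Bogoliubov-rotated thermal state would be substantially harder to control. Third, the particle-number constraint is met not by tuning the excitation sector but by adjusting the chemical potential $\widetilde\mu$ in the condensate distribution $\zeta$, and then invoking convexity of $N\mapsto F(\beta,N,L)$ to avoid needing $\tr[\mathcal N\Gamma]=N$ exactly (see the paragraph after Lemma~\ref{lm:NG}).

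\medskip
\textbf{Minor confusion.} Your entropy remark --- that the von Neumann entropy of the coherent-state mixture differs from the classical entropy by ``$\ln N_0$, absorbed into $O(L^{-2}N^{7/12})$'' --- is off on both counts. The Berezin--Lieb inequality (Lemma~\ref{lem:BerezinLieb}) gives $S(\Gamma_0)\ge S(\zeta)+\int S(G_{\mathrm B}(z))\zeta\,\de z+S(G_{\mathrm{free}})$ with no error in the needed direction; and $\beta^{-1}\ln N_0\sim L^{-2}N^{2/3}\ln N$ is \emph{larger} than $L^{-2}N^{7/12}$ --- in fact it is the very contribution sitting inside $F^{\mathrm{BEC}}$ (cf.\ Proposition~\ref{prop:FreeEnergyBECb}), not a remainder to absorb.
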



The terms on the r.h.s. of \eqref{eq:MainFreeEnenergyBound} are listed in descending order concerning their order of magnitude in the limit $N \to \infty$. The free energy of the non-condensed particles satisfies $F_0^+ \sim L^{-2} N^{5/3}$. The second term is a density-density interaction of the order $L^{-2} N$. As we will see with Proposition~\ref{prop:FreeEnergyBECb} below, the energy of the interacting condensate (the third term), contributes on two orders of magnitude (if $\kappa > 1$): $L^{-2} N$ and $L^{-2} N^{2/3} \ln(N)$. The term in the second line is a correction to the free energy of the non-condensed particles coming from Bogoliubov theory, and is of the order $L^{-2} N^{2/3}$. 

The following proposition provides us with a simplified expression for $F^{\mathrm{BEC}}$ above and below the critical point. This, in particular, allows us to compute the minimum on the r.h.s. of \eqref{eq:MainFreeEnenergyBound}.

\begin{proposition}
	\label{prop:FreeEnergyBECb} We consider the limit $N \to \infty$, $\beta = \kappa \beta_{\mathrm{c}}$ with $\kappa \in (0,\infty)$ and $\beta_{\mathrm{c}}$ in \eqref{eq:BECPhaseTransition}.
	The following statements hold for given $\epsilon > 0$:
	\begin{enumerate}[(a)]
	\item Assume that $N_0 \gtrsim N^{5/6 + \epsilon}$. There exists a constant $c>0$ such that 
	\begin{equation}
		F^{\mathrm{BEC}}(\beta,N_0,L,\mathfrak{a}_N) = 4 \pi \mathfrak{a}_N L^{3} \varrho_0^2 + \frac{\ln \left( 4 \beta \mathfrak{a}_N/L^3 \right)}{2 \beta} + O\left( L^{-2} \exp\left(- c N^{\epsilon} \right) \right).
		\label{eq:FreeEnergyBECInteractingLimitb}
	\end{equation}
	\item Assume that $N_0 \lesssim N^{5/6 - \epsilon}$. Then 
	\begin{equation}
		F^{\mathrm{BEC}}(\beta,N_0,L,\mathfrak{a}_N) = - \frac{1}{\beta} \ln(N_0) - \frac{1}{\beta} + O\left( L^{-2} N^{2/3 - 2 \epsilon} \right)
		\label{eq:FreeEnergyBECNonInteractingLimitb}
	\end{equation}
	holds. In particular, $F^{\mathrm{BEC}}(\beta,N_0,L,\mathfrak{a}_N)$ is independent of $\mathfrak{a}_N$ at the given level of accuracy. 
\end{enumerate} 
\end{proposition}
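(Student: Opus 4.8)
The plan is to analyze the one-dimensional (complex) integral defining $F^{\mathrm{BEC}}$ in \eqref{eq:FreeEnergyBEC} directly, by passing to the radial variable $t = |z|^2 \in [0,\infty)$, so that $\de z = \de x\,\de y/\pi$ becomes $\de t$ and
\[
	Z \deq \int_{\mathbb{C}} \exp\!\left( - \beta \left( 4 \pi \mathfrak{a}_N L^{-3} |z|^4 - \mu |z|^2 \right) \right) \de z = \int_0^\infty \exp\!\left( -\beta\,( a t^2 - \mu t) \right) \de t, \qquad a \deq 4\pi\mathfrak{a}_N L^{-3}.
\]
The constraint \eqref{eq:GibbsDistribution} reads $\langle t\rangle_g = N_0$, where $\langle\,\cdot\,\rangle_g$ is the expectation under the density $\propto e^{-\beta(at^2-\mu t)}$ on $[0,\infty)$; one should first record that this fixes $\mu = \mu(N_0)$ monotonically, and identify the two regimes by where the minimum of $t\mapsto at^2-\mu t$ sits. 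For part~(a) the relevant scaling gives $a \sim L^{-3}N^{-1}$ and $N_0$ large, and the exponent is sharply peaked near $t_\ast = \mu/(2a)$; I would show $\mu \simeq 2 a N_0$ (so $t_\ast \simeq N_0$, consistent with the constraint since the distribution is nearly Gaussian there) and then perform a Laplace/Gaussian expansion of $Z$ around $t_\ast$, extending the $t$-integral to all of $\mathbb{R}$ at the cost of an error $\exp(-cN^\epsilon)$ because the Gaussian has width $\sim (\beta a)^{-1/2}\sim N^{1/2}$ while $t_\ast \sim N_0 \gtrsim N^{5/6+\epsilon}$, so $t_\ast$ is many standard deviations from $0$. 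This yields $-\tfrac1\beta\ln Z = a t_\ast^2 - \mu t_\ast - \tfrac{1}{2\beta}\ln(\pi/(\beta a)) + \dots$, and assembling with $\mu N_0$ and using $t_\ast\simeq N_0$ produces $a N_0^2/L^{?}$... more precisely $4\pi\mathfrak{a}_N L^3\varrho_0^2 = a N_0^2$ together with $-\tfrac{1}{2\beta}\ln(\pi/(\beta a)) = \tfrac{1}{2\beta}\ln(\beta a/\pi)$; matching the stated $\tfrac{1}{2\beta}\ln(4\beta\mathfrak a_N/L^3) = \tfrac{1}{2\beta}\ln(\beta a/\pi)$ is then just the definition of $a$. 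One must be careful that the constraint forces a small shift of $\mu$ away from $2aN_0$ which, fed back into $at_\ast^2-\mu t_\ast+\mu N_0$, only contributes at the claimed exponentially small order.

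For part~(b), the scaling is different: now $a\sim L^{-3}N^{-1}$ but $N_0\lesssim N^{5/6-\epsilon}$ is comparatively small, so $at^2$ is a negligible perturbation on the scale of $t\sim N_0$ and the integral is, to leading order, that of a pure exponential $e^{\beta\mu t}$ with $\mu<0$. Indeed $\int_0^\infty e^{\beta\mu t}\,\de t = -1/(\beta\mu)$, whose mean is $\langle t\rangle = -1/(\beta\mu) = N_0$, fixing $\mu = -1/(\beta N_0)$. Then $-\tfrac1\beta\ln Z \approx -\tfrac1\beta\ln(-1/(\beta\mu)) = -\tfrac1\beta\ln N_0$ and $\mu N_0 = -1/\beta$, giving exactly the stated main terms. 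The remaining task is to control the effect of the quartic term: one expands $e^{-\beta a t^2} = 1 - \beta a t^2 + \dots$ inside the integral and estimates $\beta a \langle t^2\rangle \sim \beta a N_0^2 \sim \beta L^{-3}\mathfrak a_N N_0^2 \sim L^{-2}N^{-1/3}N_0^2 \lesssim L^{-2}N^{-1/3}N^{5/3-2\epsilon}\cdot N^{-?}$... tracking $\beta\sim L^2 N^{-2/3}$ carefully one gets the relative correction of size $\beta a N_0^2 \lesssim N^{2/3-2\epsilon}$ after multiplication by $1/\beta$, i.e.\ an additive error $O(L^{-2}N^{2/3-2\epsilon})$, which is exactly the claimed remainder, and it indeed swallows any $\mathfrak a_N$-dependence.

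The main obstacle I anticipate is the bookkeeping of the constraint \eqref{eq:GibbsDistribution} in regime~(a): the chemical potential $\mu$ is only defined implicitly through $\langle t\rangle_g = N_0$, and one needs quantitative control of $\mu - 2aN_0$ (together with the boundary-at-zero error) sharp enough to show the difference $a t_\ast^2 - \mu t_\ast + \mu N_0$ is $O(\exp(-cN^\epsilon))$ rather than merely $o(1)$. This requires showing the truncated Gaussian integral $\int_0^\infty e^{-\beta a(t-t_\ast)^2}\de t$ differs from $\int_{\mathbb R}$ by an error controlled by the Gaussian tail $\mathrm{erfc}(\sqrt{\beta a}\,t_\ast)$, and that all moment corrections induced by this truncation are of the same exponentially small order; the inequality $N_0\gtrsim N^{5/6+\epsilon}$ versus width $\sim N^{1/2}$ is precisely what makes $\sqrt{\beta a}\,t_\ast \gtrsim N^{\epsilon/2}$ and hence the tail $\exp(-cN^{\epsilon})$. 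Everything else is a routine Laplace-method expansion.
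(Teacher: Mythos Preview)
Your proposal is correct and follows essentially the same route as the paper: the paper also passes to the radial variable $x=|z|^2$, computes the one-dimensional integrals explicitly in terms of the complementary error function, and then uses the standard asymptotics of $\mathrm{erfc}$ (from Abramowitz--Stegun) in the two regimes $\eta=\mu\sqrt{\beta/(4a)}\to\pm\infty$, which is exactly the Laplace/Gaussian tail analysis you describe. The only organizational difference is that the paper first isolates the chemical potential asymptotics ($|\mu-2aN_0|\lesssim L^{-2}e^{-cN^{\epsilon}}$ in~(a) and $|\mu+1/(\beta N_0)|\lesssim N^{-2\epsilon}/(\beta N_0)$ in~(b)) in a separate lemma before treating the free energy itself.
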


The interpretation of Proposition~\ref{prop:FreeEnergyBECb} is as follows: if the number of particles in the BEC is sufficiently large, we see a contribution of the order $L^{-2} N^{2/3} \ln(N)$  in addition to the density-density interaction $4 \pi \mathfrak{a}_N L^{3} \varrho_0^2$. This new contribution (the second term on the r.h.s. of \eqref{eq:FreeEnergyBECInteractingLimitb})
is a consequence of the particle number fluctuations in the BEC and will be discussed in more detail in Remark~\ref{rmk:MainResults}.(b) below. In contrast, if the expected particle number inside the BEC satisfies $1 \ll N_0 \leq N^{5/6-\epsilon}$ its free energy equals that of an ideal gas to leading order. The appearance of the exponent $5/6$ is explained by the fact that $4 \pi \mathfrak{a}_N L^{3} \varrho_0^2 \sim L^{-2} N^{2/3}$ if $N_0 \sim N^{5/6}$. This should be compared to $1/\beta$ times the classical entropy of $g$ (for a definition see \eqref{eq:ClassicalEntropyb} below), which, for $N^{\epsilon} \leq N_0 \leq N^{5/6}$ with $\epsilon > 0$, is always of the order $\ln(N)/\beta \sim L^{-2} N^{2/3} \ln(N)$. That is, in the parameter region $N^{5/6 - \epsilon} \lesssim N_0 \lesssim N^{5/6 + \epsilon}$ the effective theory of the condensate transitions from a regime where the interaction is relevant to a regime where it is not. For those values of $N_0$ the free energy $F^{\mathrm{BEC}}$ does not have a form that is as simple as that in \eqref{eq:FreeEnergyBECInteractingLimitb} or \eqref{eq:FreeEnergyBECNonInteractingLimitb}. 

Proposition~\ref{prop:FreeEnergyBECb} allows us to bring our main result into a form that is better suited for a comparison to the existing literature, as stated in the following Corollary. 

\begin{corollary}
	\label{cor:MainCorollary}
	Assume that the function $v : [0,\infty) \to [0,\infty]$ is nonnegative, compactly supported, and satisfies $v(| \cdot |) \in L^3(\Lambda, \de x)$. By $\varrho = N/L^3$ we denote the particle density. We consider the combined limit $N \to \infty$, $\beta = \kappa \beta_{\mathrm{c}}$ with $\kappa \in (0,\infty)$ and $\beta_{\mathrm{c}}$ in \eqref{eq:BECPhaseTransition}. If $\kappa \in (1,\infty)$ the free energy in \eqref{eq:FreeEnergy} satisfies the upper bound
	\begin{align}
		F(\beta,N,L) \leq& F^+_0(\beta,N,L) + 4 \pi \mathfrak{a}_N L^{3} \left( 2 \varrho^2 - \varrho_0^2(\beta,N,L) \right) + \frac{\ln \left( 4 \beta \mathfrak{a}_N/L^3 \right)}{2 \beta} \nonumber \\
		&- \frac{1}{2 \beta} \sum_{p \in \Lambda^*_+} \left[ \frac{16 \pi \mathfrak{a}_N \varrho_0(\beta,N,L)}{p^2} - \ln\left( 1 + \frac{16 \pi \mathfrak{a}_N \varrho_0(\beta,N,L)}{p^2} \right) \right]  + O(L^{-2} N^{7/12}) 
		\label{eq:UpperBoundFinal}
	\end{align}
	and if $\kappa \in (0,1)$ we have
	\begin{equation}
		F(\beta,N,L) \leq F_0(\beta,N,L) + 8 \pi \mathfrak{a}_N L^{3} \varrho^2 + O(L^{-2} N^{1/2})
		\label{eq:UpperBoundFinalb}
	\end{equation}
	with $F_0$ defined above \eqref{eq:FreeEnergyIdealGasCond}.
\end{corollary}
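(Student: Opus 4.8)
The plan is to derive Corollary~\ref{cor:MainCorollary} from Theorem~\ref{thm:Main} and Proposition~\ref{prop:FreeEnergyBECb}, splitting according to the size of the condensate occupation, which by \eqref{eq:BECPhaseTransition} and the discussion following it satisfies $N_0\simeq(1-\kappa^{-1})N$ when $\kappa>1$ and $N_0\sim1$ when $\kappa<1$. I will also use the elementary ideal-gas asymptotics from Section~\ref{sec:IdealBoseGas}: for $\beta=\kappa\beta_{\mathrm c}$ one has $\beta^{-1}\sim L^{-2}N^{2/3}$, together with $|\mu_0|\sim L^{-2}N^{-1/3}$ for $\kappa>1$ and $|\mu_0|\sim L^{-2}N^{2/3}$ for $\kappa<1$.

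For $\kappa\in(1,\infty)$: since $N_0\gtrsim N\gtrsim N^{5/6+\epsilon}$ for, e.g., $\epsilon=1/12$, Proposition~\ref{prop:FreeEnergyBECb}(a) applies and its exponentially small remainder is absorbed into $O(L^{-2}N^{7/12})$. Inserting \eqref{eq:FreeEnergyBECInteractingLimitb} yields $F^{\mathrm{BEC}}-8\pi\mathfrak{a}_NL^3\varrho_0^2=-4\pi\mathfrak{a}_NL^3\varrho_0^2+\frac{\ln(4\beta\mathfrak{a}_N/L^3)}{2\beta}+O(L^{-2}N^{7/12})$. It then remains to decide which entry of the minimum in \eqref{eq:MainFreeEnenergyBound} is realized. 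With $\varrho_0\sim\varrho$ and $\mathfrak{a}_N=\mathfrak{a}/N$ the term $-4\pi\mathfrak{a}_NL^3\varrho_0^2$ is negative of order $L^{-2}N$ and dominates $\beta^{-1}\ln(4\beta\mathfrak{a}_N/L^3)=o(L^{-2}N)$; on the other hand, substituting $\exp(-\beta\mu_0)=1+N_0^{-1}$ into \eqref{eq:FreeEnergyIdealGasCond} gives $F_0^{\mathrm{BEC}}=-\beta^{-1}\ln(1+N_0)-\beta^{-1}N_0\ln(1+N_0^{-1})$, which is $O(L^{-2}N^{2/3}\ln N)$ in modulus. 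Hence $F^{\mathrm{BEC}}-8\pi\mathfrak{a}_NL^3\varrho_0^2<F_0^{\mathrm{BEC}}$ for $N$ large, the minimum equals its first entry, and the identity $8\pi\mathfrak{a}_NL^3\varrho^2-4\pi\mathfrak{a}_NL^3\varrho_0^2=4\pi\mathfrak{a}_NL^3(2\varrho^2-\varrho_0^2)$ brings the bound of Theorem~\ref{thm:Main} into the form \eqref{eq:UpperBoundFinal}.

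For $\kappa\in(0,1)$: here $N_0\sim1$ and I would not evaluate the minimum. Trivially $\min\{F^{\mathrm{BEC}}-8\pi\mathfrak{a}_NL^3\varrho_0^2,F_0^{\mathrm{BEC}}\}\le F_0^{\mathrm{BEC}}$, and since $x-\ln(1+x)\ge0$ for $x\ge0$ the Bogoliubov sum in \eqref{eq:MainFreeEnenergyBound} is nonnegative, so its term (with prefactor $-\frac1{2\beta}<0$) is $\le0$. Theorem~\ref{thm:Main} therefore already gives $F\le F_0+8\pi\mathfrak{a}_NL^3\varrho^2+O(L^{-2}N^{7/12})$ with $F_0=F_0^{\mathrm{BEC}}+F_0^+$. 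The only point where the statement of Theorem~\ref{thm:Main} is not by itself enough is the sharpening of the error to $O(L^{-2}N^{1/2})$: this comes from re-examining the proof of Theorem~\ref{thm:Main} with $N_0=O(1)$ inserted, the error contributions of size $L^{-2}N^{7/12}$ being produced by estimates that carry positive powers of the condensate amplitude and collapsing to $O(L^{-2}N^{1/2})$ in this regime --- equivalently, for $\kappa<1$ one may re-run the trial-state energy estimate with a simpler state in which the excitations are not dressed by the quadratic transformation.

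I expect the main obstacle to be, for $\kappa>1$, making the comparison of the two entries of the minimum rigorous --- this needs the sharp orders of $\varrho_0,\beta,N_0,\mu_0$ from \eqref{eq:ChemicalPotentialIdealGas}--\eqref{eq:BECPhaseTransition}, in particular a genuine two-sided bound $N_0\sim N$ rather than only the leading-order equivalence --- and, for $\kappa<1$, the bookkeeping inside the proof of Theorem~\ref{thm:Main} required to extract the improved error $O(L^{-2}N^{1/2})$.
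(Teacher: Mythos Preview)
Your proposal is essentially correct and matches the paper's approach. Two minor remarks: for $\kappa>1$ you do not need to decide which entry of the minimum is realized, since $\min\{A,B\}\le A$ already gives \eqref{eq:UpperBoundFinal} after inserting Proposition~\ref{prop:FreeEnergyBECb}(a); and for $\kappa<1$ the paper obtains the sharper $O(L^{-2}N^{1/2})$ exactly via your second option, namely a separate (simpler) trial state --- the dressed ideal-gas Gibbs state --- rather than by tracking $N_0=O(1)$ through the main argument.
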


If $\kappa \in (1,\infty)$ the minimum in \eqref{eq:MainFreeEnenergyBound} is attained by the first term and one obtains \eqref{eq:UpperBoundFinal}. In contrast, for $\kappa \in (0,1)$ it equals the second term, which leads to \eqref{eq:UpperBoundFinalb}. At the critical point ($\kappa = 1$, or $\kappa\to 1$ as $N\to\infty$, see also Remark~\ref{rmk:MainResults}.(h) below) the minimum is needed. We have the following remarks concerning the above statements.

\begin{remark}
	\label{rmk:MainResults} 
	\begin{enumerate}[(a)]
		\item The first two terms on the r.h.s. of \eqref{eq:UpperBoundFinal} and \eqref{eq:UpperBoundFinalb} already appeared in an asymptotic expansion of the canonical free energy in the GP limit in \cite{DeuSeiGP2020} (with a remainder of the order $o(L^{-2} N)$). To be precise, the result in \eqref{eq:UpperBoundFinal} has been stated with $F_0^+$ replaced by the canonical free energy $F_0^{\mathrm{c}}$ of the ideal gas. From \cite[Lemma~A1]{DeuSeiGP2020} we however know that $F_0^{\mathrm{c}}$ and $F_0^+$ agree up to a remainder of the order $L^{-2} N^{2/3} \ln(N)$. It is to be expected that the result in \cite{DeuSeiGP2020} also holds if the grand canonical ensemble is considered. That is, the two ensembles are expected to be equivalent if one allows for remainders of the order $o(L^{-2} N)$. We highlight that the first two terms on the r.h.s. of \eqref{eq:UpperBoundFinal} had for the first time been justified in the thermodynamic limit, see \cite{Yin2010} (upper bound) and \cite{Sei2008} (lower bound). The inclusion of the remaining two (negative) terms in the upper bound for the free energy in \eqref{eq:UpperBoundFinal} is therefore our main new contribution.  
		\item The third term on the r.h.s. of \eqref{eq:UpperBoundFinal} is related to the particle number fluctuations in the BEC. Let us explain this in some more detail: it is well known that a c-number substitution for one momentum mode in the spirit of \cite{LiSeiYng2005,DeuSeiGP2020} (method of coherent states) introduces only a small correction to the free energy. Motivated by this, we use a trial state of the form
		\begin{equation}
			\Gamma_0 = \int_{\mathbb{C}} |z \rangle \langle z | \ p(z) \de z, \quad \text{ where } \quad |z \rangle = \exp( z a_0^* - \overline{z} a_0 ) | \mathrm{vac} \rangle
			\label{eq:TrialStateCondensate}
		\end{equation}
		to describe the BEC. Here $a_0^*$ and $a_0$ denote the usual creation and annihilation operators of a particle in the $ p = 0 $ mode and $| \mathrm{vac} \rangle$ is the related vacuum vector. Moreover, $p(z)$ is a probability distribution on $\mathbb{C}$ w.r.t. the measure $\de z$ defined below \eqref{eq:FreeEnergyBEC}. Let us assume that the interaction energy of the BEC is described by the effective Hamiltonian $4 \pi \mathfrak{a}_N L^{-3} a_0^* a^*_0 a_0 a_0$. The free energy of $\Gamma_0$ is then given by
		\begin{equation}
			\mathcal{F}^{\mathrm{BEC}}(\Gamma_0) = 4 \pi \mathfrak{a}_N L^{-3} \int_{\mathbb{C}} |z|^4 p(z) \de z - \frac{1}{\beta}S(\Gamma_0).
		\end{equation}
		From the the Berezin--Lieb inequality, see e.g. \cite{Berezin1972,Lieb1973}, we know that the last term on the r.h.s. is bounded from above by $-1/\beta$ times
		\begin{equation}
			S(p) = -\int_{\mathbb{C}} p(z) \ln(p(z)) \de z.
			\label{eq:ClassicalEntropyb}
		\end{equation}
		When we minimize $\mathcal{F}^{\mathrm{BEC}}(\Gamma_0)$ with $S(\Gamma_0)$ replaced by $S(p)$ under the constraint $\int |z|^2 p(z) \de z = N_0$ over all probability distributions $p$, we obtain $F^{\mathrm{BEC}}$ in \eqref{eq:FreeEnergyBEC}. The unique minimizer is the Gibbs distribution $g$ in \eqref{eq:GibbsDistribution}. With the above considerations, Proposition~\ref{prop:FreeEnergyBECb}.(a), and $\int_{\mathbb{C}} |z|^2 g(z) \de z = N_0$ we conclude that
		\begin{equation}
			4 \pi \mathfrak{a}_N L^{-3} \left( \int_{\mathbb{C}} |z|^4 g(z) \de z - \left( \int_{\mathbb{C}} |z|^2 g(z) \de z \right)^2 \right) - \frac{1}{\beta}S(g) = \frac{\ln \left( 16 \beta \mathfrak{a}_N/L^3 \right)}{2 \beta} + O\left( L^{-2} \exp\left(- c N^{\epsilon/2} \right) \right) 
			\label{eq:EnergyFluctuationsBEC}
		\end{equation}
		provided that $N_0 \geq N^{5/6 + \epsilon}$ holds for some fixed $\epsilon > 0$. That is, the term on the r.h.s. of the above equation indeed describes the free energy related to the particle number fluctuations in the BEC. It is interesting to note that this contribution vanishes in the thermodynamic limit because it is bounded from above by a constant times $\ln(N)/\beta$. 
		\item The Gibbs distribution $g$ in \eqref{eq:GibbsDistribution} satisfies 
		\begin{equation}
			\mathrm{Var}_g(|z|^2) =  \int_{\mathbb{C}} |z|^4 g(z) \de z - \left( \int_{\mathbb{C}} |z|^2 g(z) \de z \right)^2 \sim N^{5/3}
			\label{eq:condfluct}
		\end{equation}
		for $\kappa > 1$, which should be compared to the grand canonical ideal Bose gas. Here the same quantity is of the order $N^2$. This decrease of the number fluctuations in the BEC is a well-known effect caused by the repulsive interaction between the particles. Motivated by the recent experimental realization of a system with grand canonical number statistics, see \cite{Schmittetal2014}, a discrete version of $g$ in \eqref{eq:GibbsDistribution} has recently been used in \cite{Wurffetal2014} to compute the particle number fluctuations in an interacting grand canonical trapped BEC. To rigorously justify the computations in \cite{Wurffetal2014}, it is necessary to show that $g(z)$ approximates $\tr[ |z \rangle \langle z | G ]$ with the interacting Gibbs state $G$ in \eqref{eq:GibbsState}. This is a very interesting mathematical problem, whose solution is beyond the scope of the present investigation. 
		\item The term in the second line of \eqref{eq:UpperBoundFinal} is a correction to the free energy of the non-condensed particles coming from Bogoliubov theory. It can be motivated by the following heuristic computation: We write the Hamiltonian $\mathcal{H}_N$ in \eqref{eq:Hamiltonian} in terms of creation and annihilation operators $a_p$ and $a_p^*$ of a particle with momentum $p \in \Lambda^*$. Next we replace $a_0$ and $a_0^*$ by $\sqrt{N_0}$, and $\hat{v}(p)$ by $4 \pi \mathfrak{a}_N L^{-3}$. When we additionally neglect cubic and quartic terms in $a_p$ and $a^*_p$, we obtain the Bogoliubov Hamiltonian
		\begin{equation}
			\mathcal{H}^{\mathrm{Bog}} = \sum_{p \in \Lambda^*_+} p^2 a_p^* a_p + 4 \pi \mathfrak{a}_N \varrho_0(\beta,N,L) \sum_{p \in \Lambda^*_+} \left( 2 a_p^* a_p + a_p^* a_{-p}^* + a_p a_{-p} \right).
			\label{eq:BogoliubovHamiltonian}
		\end{equation}
        The above heuristics is also supported by non-rigorous arguments in the physics article \cite{LY1958}.
		A careful analysis shows that the grand potential $\Phi^{\mathrm{Bog}}(\beta,\mu_0,L)$ associated to $\mathcal{H}^{\mathrm{Bog}}$ with $\mu_0$ in \eqref{eq:ChemicalPotentialIdealGas} satisfies (compare to Lemma~\ref{lem:freeEnergyBog} in Appendix~\ref{app:BogFreeEnergy})
		\begin{align}
			\Phi^{\mathrm{Bog}}(\beta,\mu_0,L) =& \frac{1}{\beta} \sum_{p \in \Lambda_+^*} \ln\left( 1 - \exp\left( \beta | p^2 - \mu_0 | \sqrt{ p^2 - \mu_0 + 16 \pi \mathfrak{a}_N \varrho_0 } \right) \right) \nonumber \\
			=& \frac{1}{\beta} \sum_{p \in \Lambda_+^*} \ln\left( 1 - \exp\left( -\beta (p^2 - \mu_0) \right) \right) + 8 \pi \mathfrak{a}_N L^3 (\varrho - \varrho_0 ) \varrho_0 \nonumber \\
			&- \frac{1}{2 \beta} \sum_{p \in \Lambda^*_+} \left[ \frac{16 \pi \mathfrak{a}_N \varrho_0(\beta,N,L)}{p^2} - \ln\left( 1 + \frac{16 \pi \mathfrak{a}_N \varrho_0(\beta,N,L)}{p^2} \right) \right]  + o(L^{-2} N^{2/3}).
			\label{eq:ExpansionGrandPotential}
		\end{align}
		The first term on the r.h.s. contributes to $F_0^+$, the second term is part of the density-density interaction energy, and the third term is the novel contribution in the second line of \eqref{eq:UpperBoundFinal}.
		\item In \cite{BocBreCeSchl2019} it has been shown that eigenvalues $e_B$ of $\mathcal{H}_N^{(N)}-E_N$ (with $\mathcal{H}_N^{(N)}$ in \eqref{eq:Hamiltonian(n)} and $E_N$ its ground state energy) that satisfy $e_B \ll L^{-2} N^{1/8}$ are, to leading order, as $N \to \infty$, approximated by those of a Bogoliubov Hamiltonian. If we compare this energy scale to our  temperature $1/\beta \sim 1/\beta_{\mathrm{c}} \sim L^{-2} N^{2/3}$, which is a measure for the \textit{energy per particle} in our system, we see that the result in \cite{BocBreCeSchl2019} is far from being sufficient to draw conclusions about the free energy.
		\item It is interesting to note that if one replaces $\mathfrak{a}_N$ by $\mathfrak{a}$ and takes the thermodynamic limit ($N,L \to \infty$ with $\varrho = N/L^3$ fixed) of the last term in \eqref{eq:ExpansionGrandPotential} divided by $L^3$, one obtains 
		\begin{equation}
			-\frac{1}{2 \beta (2 \pi)^3} \int_{\mathbb{R}^3} \left[ \frac{16 \pi \mathfrak{a} \varrho_0}{p^2} - \ln\left( 1 + \frac{16 \pi \mathfrak{a} \varrho_0}{p^2} \right) \right] \de p = - \frac{16 \sqrt{\pi}}{3\beta} (\mathfrak{a} \varrho_0)^{3/2}. \label{eq:born}
		\end{equation}
		The r.h.s. has been conjectured to appear in the asymptotic expansion of the specific free energy in the dilute limit, see \cite[Theorem~11]{NapReuvSol2017}. There it is shown that the restricted minimization of the free energy functional \eqref{eq:GibbsFreeEnergyFunctional} over the class of quasi-free states leads to \eqref{eq:born} with the scattering length replaced by its first Born approximation. This is also true for the natural equivalent of the second term on the r.h.s. of \eqref{eq:UpperBoundFinal} in the thermodynamic limit.
		\item The dependence of the third term on the r.h.s. of \eqref{eq:MainFreeEnenergyBound} on $F_0^{\text{BEC}}$ is needed because $F^{\mathrm{BEC}}- 8 \pi \mathfrak{a}_N L^3 \varrho_0^2$  fails to describe the free energy of the $p=0$ mode correctly if $N_0 \sim 1$ ($\Leftrightarrow \kappa < 1$), that is, if there is no BEC.
%
%
This is also related to the fact that we describe the discrete random variable associated to the operator $a^*_0 a_0$ by one that is continuous.
		\item Theorem~\ref{thm:Main} is stated and proved for fixed $\kappa \in (0,\infty)$. Our proof can, however, easily be extended to cover the case when $\kappa$ depends on $N$ provided $\kappa \gtrsim 1$ holds.
		\item We expect the upper bound in Theorem~\ref{thm:Main} to be accurate up to a remainder of the order $o(L^{-2} N^{2/3})$. That is, we expect it to be possible to prove a matching lower bound. 
		\item In case of the canonical ensemble we expect that $F_0^+ + F^{\mathrm{BEC}}$ needs to replaced by $F_0^{\mathrm{c}} + 4 \pi \mathfrak{a}_N \varrho_0^2$, where $F_0^{\mathrm{c}}$ denotes the free energy of the canonical ideal gas. This is a consequence of the fact the variance of the number of condensed particles in the canonical ideal gas lives, for $\beta = \kappa \beta_{\mathrm{c}}$ with $\kappa > 1$, on the scale $N^{4/3}$. This needs to be compared to \eqref{eq:condfluct} and \eqref{eq:EnergyFluctuationsBEC}. For a thorough analysis of condensate fluctuations in the canonical ideal gas we refer to \cite{ChattDiac2014}.
		\end{enumerate}
\end{remark}

\subsection{Organization of the article}
\label{sec:ProofStrategy}
We prove Theorem~\ref{thm:Main} with a trial state argument. In Section~\ref{sec:TrialState} we define our trial state and establish some of its properties that are needed for proving an upper bound for its free energy. In Section~\ref{sec:energy}, which is the core of our analysis, we provide an upper bound for the energy of our trial state, and Section~\ref{entr} is devoted to an estimate for its entropy. Finally, in Section~\ref{sec:ProofOfThm} we use these results to give the proof of Theorem~\ref{thm:Main}. To not dilute the main line of the argument, we deferred some technical parts of our proof to an Appendix. In  Appedix~\ref{app:ScatteringEquation} we collect known properties of the solution to the scattering equation in a ball with Neumann boundary conditions. Appendix~\ref{app:BogFreeEnergy} contains the proof of an expansion of the free energy related to a Bogoliubov Hamiltonian in the spirit of \eqref{eq:ExpansionGrandPotential}. In Appendix~\ref{app:CondensateFreeEnergy} we prove Proposition~\ref{prop:FreeEnergyBECb} as well as some lemmas concerning $F^{\mathrm{BEC}}$ in \eqref{eq:FreeEnergyBEC} and $g$ in \eqref{eq:GibbsDistribution}. Finally, in Appendix~\ref{app:ExpectedParticleNumber} we give the proof of a lemma that allows us to estimate the influence of the correlation structure on the expected number of particles in our trial state.
\section{The trial state}
\label{sec:TrialState}
In this section we define our trial state and collect some of its properties. 
\subsection{Definition of the trial state}
\label{sec:ConstructionTrialState}
We start our analysis with the definition of the trial state. To be able to distinguish between different parts of the system as e.g. the condensate, thermally excited particles, and the microscopic correlations between the particles induced by $v_N$, we start by introducing several subsets of the momentum space $\Lambda^*$. Let $\delta_{\mathrm{B}}, \delta_{\mathrm{L}}, \delta_{\mathrm{H}} > 0$ with $\delta_{\mathrm{B}} < 1/3$ and $\delta_{\mathrm{L}} + \delta_{\mathrm{H}} < 2/3$ and define 
\begin{align}
	P_{\mathrm{L}} &:= \left\{p \in \Lambda^* \ | \ |p| \leq N^{1/3 + \delta_{\mathrm{L}}} /L \right\}, \nonumber \\
	P_{\mathrm{B}} &:= \left\{p \in \Lambda^* \ | \ 0 < |p| \leq N^{\delta_{\mathrm{B}}} /L \right\}, \nonumber \\
	P_{\mathrm{I}} &:= \left\{p \in \Lambda^* \ | \ N^{\delta_{\mathrm{B}}} / L < |p| \leq N^{1/3 + \delta_{\mathrm{L}}} /L \right\}, \nonumber \\
	P_{\mathrm{H}} &:= \left\{p \in \Lambda^* \ | \ |p| \geq N^{1 - \delta_{\mathrm{H}}} /L \right\}. \label{eq:PH}
\end{align}
Our assumptions on the parameters assure that $P_{\mathrm{B}} \subset P_{\mathrm{L}}$ and $P_{\mathrm{L}} \cap P_{\mathrm{H}} = \emptyset$. Later, the parameters $\delta_{\mathrm{B}}, \delta_{\mathrm{L}}$ and $\delta_{\mathrm{H}}$ will be chosen independently of $N$. The meaning of our sets in \eqref{eq:PH} is the following: the set $P_{\mathrm{L}}$ is appropriate to describe the BEC and the thermally excited particles described by our trial state. To that end, it is sufficient to choose $\delta_{\mathrm{L}}>0$ as small as we wish. For any $\delta_{\mathrm{B}} > 0$, the set $P_{\mathrm{B}}$ is large enough to describe the Bogoliubov excitations in the system. The part of the trial state with support in $\mathscr{F}(L^2(P_{\mathrm{I}}))$ will be chosen as the Gibbs state of an ideal gas. That is, for these modes Bogoliubov theory is not relevant. Finally, the microscopic correlations between the particles induced by the singular interaction $v_N$ will be chosen to live in the set $P_{\mathrm{H}}$. 

It is convenient for us to introduce the following decomposition of the bosonic Fock space:
\begin{equation}
	\mathscr{F}(L^2(\Lambda,\de x)) \cong \mathscr{F}_{0} \otimes \mathscr{F}_{\mathrm{B}} \otimes \mathscr{F}_{\mathrm{I}} \otimes \mathscr{ F }_{>},
	\label{eq:FockSpaceDecomposition}
\end{equation} 
where $\mathscr{F}_{0}$ denotes the Fock space over the $p=0$ mode, $\mathscr{F}_{\mathrm{B}}$ is the Fock space over $L^2(P_{\mathrm{B}})$, $\mathscr{F}_{\mathrm{I}}$ denotes the Fock space over $L^2(P_{\mathrm{I}})$, and $\mathscr{ F }_{>}$ is the Fock space over all remaining momentum modes. Moreover, by $\cong$ we denote unitary equivalence. In the following we will use, without explicitly mentioning it, the same symbol for an operator acting on $\mathscr{F}$ and for its unitary image acting on $\mathscr{F}_{0} \otimes \mathscr{F}_{\mathrm{B}} \otimes \mathscr{F}_{\mathrm{I}} \otimes \mathscr{ F }_{>}$.
By $a^*(g)$ and $a(g)$ we denote the usual creation and annihilation operators of a particle in the function $g \in L^2 (\Lambda, \de x)$, which satisfy the canonical commutation relations
\begin{equation}\label{eq:ccr} 
[a (g), a^* (h) ] = \langle g,h \rangle , \quad [ a(g), a(h)] = 0 = [a^* (g), a^* (h) ]. 
\end{equation}
We also use the notation $a_p = a(\varphi_p)$ with the plane wave $L^{-3/2} e^{\mathrm{i}px}$. In this special case the first identity in \eqref{eq:ccr} reads $[a_p,a_q^*] = \delta_{p,q}$.

We are now prepared to define our trial state and start by introducing the Bogoliubov Hamiltonian
\begin{equation}
	\mathcal{H}^{\mathrm{B}} = \sum_{p \in P_{\mathrm{B}}}(p^2 - \mu_0) a^*_{p} a_{p} + \frac{\varrho_0(\beta,N,L)}{2}\sum_{p \in P_{\mathrm{B}}} \hat{v}_N*\hat{f}_{N}(p) \left[2 a^*_{p} a_{p}+ (z/|z|)^2 a^*_{p} a^*_{-p}+ (\overline{z}/|z|)^2 a_{p} a_{-p} \right]
	\label{eq:HB}
\end{equation}
with $\mu_0$ in \eqref{eq:ChemicalPotentialIdealGas}, $\varrho_0$ in \eqref{eq:DensityBEC}, and the Fourier coefficients $\hat{f}_{N} =\int_\L e^{-\mathrm{i} p x} v_N(x) \de x$ of the solution $f_N(x)$ to a version of the zero energy scattering equation that will be introduced more carefully below. We also recall our definition of the convolution in \eqref{eq:Convolution}. By
\begin{equation}
	G_{\mathrm{B}}(z) = \frac{\exp(-\beta \mathcal{H}^{\mathrm{B}})}{\tr_{\mathscr{F}_{\mathrm{B}}}[\exp(-\beta \mathcal{H}^{\mathrm{B}})]}
	\label{eq:Gbog}
\end{equation}
we denote the Gibbs state related to $\mathcal{H}^{\mathrm{B}}$, which acts on $\mathscr{F}_{\mathrm{B}}$. We also introduce the Gibbs state of the ideal gas 
\begin{equation}
	G_{\mathrm{free}} = \frac{\exp(-\beta \de \Gamma( \mathds{1}( -\mathrm{i} \nabla \in P_{\mathrm{I}})(- \Delta - \mu_0 ))}{\tr_{\mathscr{F}_{\mathrm{I}}} \exp(-\beta \de \Gamma( \mathds{1}( -\mathrm{i} \nabla \in P_{\mathrm{I}})(- \Delta - \mu_0 ))},
	\label{eq:Gfree}
\end{equation}
acting on $\mathscr{F}_{\mathrm{I}}$. With these definitions at hand, we define the state $\Gamma_0$ without microscopic correlations between the particles by
\begin{equation}
	\Gamma_0 = \int_{\mathbb{C}} | z \rangle \langle z | \otimes G_{\mathrm{B}}(z) \zeta(z) \de z \otimes G_{\mathrm{free}}.
	\label{eq:modfreestate}
\end{equation}
Here $|z \rangle$ is the coherent state in \eqref{eq:TrialStateCondensate}. The probability distribution  $\zeta(z)$ is given by
\begin{equation}
		\zeta(z) = \frac{\exp\left( - \beta \left( 4 \pi \mathfrak{a}_N L^{-3} |z|^4 - \widetilde{\mu} |z|^2 \right) \right) }{ \int_{\mathbb{C}} \exp\left( - \beta \left( 4 \pi \mathfrak{a}_N L^{-3} |z|^4 - \widetilde{\mu} |z|^2 \right) \right) \de z } 
		\label{eq:GibbsDistributionzeta}
\end{equation}
i.e., it equals $g(z)$ in \eqref{eq:GibbsDistribution} except for the fact that the chemical potential $\widetilde{\mu}$ in the definition of $\zeta$ will be used to adjust the expected number of particles in our final trial state $\Gamma$ that we define below in \eqref{trialstate}. To be able to adjust the particle number correctly, we will need the assumption $N_0 \geq N^{2/3}$. This is related to the fact that the correlation structure we add later changes the particle number on the scale $N^{\delta_{\mathrm{H}}}$ ($\delta_{\mathrm{H}} < 2/3$). In the parameter regime where $N_0 < N^{2/3}$ holds, we use a much simpler trial state. This is discussed at the end of Section~\ref{sec:ProofOfThm}. We define $\widetilde{N}_0$ by
\begin{equation}
\widetilde{N}_0=\int_{\mathbb{C}} |z|^2 \zeta(z) \de z.
		\label{eq:WidetildeN0}
\end{equation}
In combination, Lemmas~\ref{lm:NG}, \ref{lem:BoundWidetildeN0} below and our choice of parameters below \eqref{eq:Sec67} show that $\widetilde{N}_0 $ equals $N_0$ in \eqref{eq:DensityBEC} up to a correction of the order $N^{2/3}$ if $N_0 \sim N$. In the computation of the free energy of our trial state we obtain the term $F^{\mathrm{BEC}}(\beta,\widetilde{N}_0,L,\mathfrak{a}_N)$. To replace this free energy by the same expression with $\widetilde{N}_0$ replaced by $N_0$, we use Lemma~\ref{lem:ChemPotBEC} in Appendix~\ref{app:CondensateFreeEnergy}. It is important to note that the difference between these two free energies yields a contribution of the order $L^{-2} N^{2/3}$. More details concerning this issue can be found in Section~\ref{sec:ProofOfThm} in the analysis following \eqref{eq:Sec64}.

The definition of the condensate part and the part of our trial state related to the Bogoliubov modes $p\in P_{\mathrm{B}}$ have been motivated in Remark~\ref{rmk:MainResults}.(b) and (d), respectively. For higher momenta the Bogoliubov dispersion relation $\sqrt{p^2 - \mu_0} \sqrt{p^2-\mu_0 + 16 \pi \mathfrak{a}_N \varrho_0}$ resembles $p^2-\mu_0$, to leading order. We find it therefore more convenient to describe the thermally excited particles with momenta in $P_{\mathrm{I}}$ by $G_{\mathrm{free}}$. The Bogoliubov Hamiltonian in \eqref{eq:HB} depends on $z/|z|$ because the condensate is described by the coherent state $|z \rangle \langle z |$. The complex phase $z/|z|$ will cancel out in the computation of the energy but its inclusion here is crucial for certain terms not to vanish. 

In the final step, we dress our trial state with a correlation structure that describes the microscopic correlations introduced by $v_N$. Let $f_{N}$ denote the ground state solution to the Neumann problem
\begin{equation}
	(-\Delta + v_N(x)/2) f_{N}(x) = \lambda_{N} f_{N}(x)
	\label{eq:NP}
\end{equation}
on the ball $B_{\ell}(0)$ with some fixed $0 < \ell < L/2$. 
We assume that $f_{N}$ is normalized such that it equals $1$ on $\partial B_{ \ell}(0)$ and we interpret it as a function on $\Lambda$ by extending it by $1$ outside of $B_{\ell}(0)$. Eq~\eqref{eq:NP} is a finite volume version of the zero energy scattering equation $\Delta f(x) = v(x) f(x) /2$ with boundary condition $\lim_{|x| \to \infty} f(x) = 1$ on $\mathbb{R}^3$. We also define 
\begin{equation}
	\eta_p = \hat{f}_{N}(p) - L^3 \delta_{p,0}.
	\label{eq:eta}
\end{equation}
More information on the functions $f_{N}$ and $\eta_p$ can be found in Appendix~\ref{app:ScatteringEquation}. With $\eta_p$ at hand, we define the two-body operator
\begin{equation} 
	B = \frac{1}{2L^3} \sum_{p \in P_{\mathrm{H}}, \ u,v \in P_{\mathrm{L}}} \eta_p\, a_{u+p}^*a_{v-p}^* a_u a_v
	\label{eq:B}
\end{equation}
on $\mathscr{F}$. Except for the restriction of the momenta, it is a multiplication operator with the inverse Fourier transform of the function $\eta_p$. We apply the spectral theorem to write $\Gamma_0 = \sum_{\alpha} \lambda_{\alpha} | \psi_{\alpha} \rangle \langle \psi_{\alpha} |$ and define our trial state $\Gamma$ by
\begin{equation}
	\Gamma = \sum_{\alpha=1}^{\infty} \lambda_{\alpha} | \phi_{\alpha} \rangle \langle \phi_{\alpha} |, \quad \text{ where } \quad \phi_{\alpha} = \frac{(1+B) \psi_{\alpha}}{\Vert (1+B) \psi_{\alpha} \Vert}.
	\label{trialstate}
\end{equation}
The general idea behind the way we introduce correlations is as follows: let us for the sake of simplicity consider an $N$-particle wave function $\psi$ that we want to dress. A natural way to introduce correlations is to multiply $\psi$ with a Jastrow factor $\prod_{i<j} f_{N}(x_i-x_j)$. When we write $f_{N} = 1-w_{N}$ and expand the product in powers of $w_{N}$, we obtain $(1 - \sum_{i<j} w_N(x_i-x_j)) \psi$ plus higher order contributions in $w_{N}$. Except for our momentum cut-offs, these first two terms equal $(1+B) \psi$. Since higher order corrections in $w_{N}$ are not necessary to obtain the correct energy in the GP limit, we omit these contributions. The restrictions of the momentum sums in the definition of $B$ turn out to be mathematically convenient. Intuitively, $p \in P_{\mathrm{H}}$ and $u,v \in P_{\mathrm{L}}$ because $\eta_p$ can be well approximated with momenta in $ P_{\mathrm{H}}$ and $G_{\mathrm{B}}(z)$ and $G_{\mathrm{free}}$ can be well approximated with momenta in $P_{\mathrm{L}}$. Correlation structures that are similar to the one introduced by $B$ have been used at zero temperature in \cite{BocBreCeSchl2019,BocBreCeSchl2020}. A similar approach to describe correlations can be found in \cite{YauYin2009,Yin2010}. 

The introduction of our correlation structure changes the expected number of particles in the trial state (slightly) because $\Gamma_0$ does not commute with $\mathcal{N}$. The following lemma provides us with a bound relating the expected number of particles in $\Gamma$ and $\Gamma_0$.


\begin{lemma}\label{lm:NG}
	We consider the limit $N \to \infty$, $\beta = \kappa \beta_{\mathrm{c}}$ with $\kappa \in (0,\infty)$ and $\beta_{\mathrm{c}}$ in \eqref{eq:BECPhaseTransition}. The bound
	\begin{align}\label{eq:NG}
		|\tr [\cN \G] -\tr [\cN \G_0] |  \lesssim N^{\delta_{\mathrm{H}}}
	\end{align}
	holds uniformly in $0 \leq \widetilde{N}_0 \leq C N$.
\end{lemma}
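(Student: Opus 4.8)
The plan is to compute $\tr[\cN\Gamma]$ directly from the definition \eqref{trialstate}. Since $\Gamma_0 = \sum_\alpha \lambda_\alpha |\psi_\alpha\rangle\langle\psi_\alpha|$ and $\phi_\alpha = (1+B)\psi_\alpha / \|(1+B)\psi_\alpha\|$, we have
\begin{equation}
\tr[\cN\Gamma] = \sum_\alpha \lambda_\alpha \frac{\langle \psi_\alpha, (1+B^*)\cN(1+B)\psi_\alpha\rangle}{\langle \psi_\alpha, (1+B^*)(1+B)\psi_\alpha\rangle}.
\end{equation}
The first step is to expand the numerator: $(1+B^*)\cN(1+B) = \cN + [\cN, B] + B^*\cN + B^*[\cN,B] + \cdots$, but more usefully, since $B$ is a sum of terms $a^*a^*aa$ that each preserve particle number, $B$ commutes with $\cN$, so $(1+B^*)\cN(1+B) = \cN(1+B^*)(1+B) + [\,\cdot\,]$ vanishes and we are left with
\begin{equation}
\langle \psi_\alpha, (1+B^*)\cN(1+B)\psi_\alpha\rangle = \langle \psi_\alpha, \cN\psi_\alpha\rangle + \langle \psi_\alpha, \cN(B + B^*)\psi_\alpha\rangle + \langle \psi_\alpha, B^*\cN B\psi_\alpha\rangle,
\end{equation}
where I used $[\cN,B]=0$ to move $\cN$ freely. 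Hence the difference $\tr[\cN\Gamma] - \tr[\cN\Gamma_0]$ will reduce, after dividing by the normalization $\|(1+B)\psi_\alpha\|^2 = 1 + 2\Re\langle\psi_\alpha, B\psi_\alpha\rangle + \|B\psi_\alpha\|^2$, to controlling expectation values of the operators $\cN B$, $\cN B^2$, $B^*\cN B$, and $B^* B$ in the states $\psi_\alpha$, i.e. summed against $\lambda_\alpha$, in the state $\Gamma_0$.

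**Key estimates.** The core of the argument is therefore a set of operator bounds of the schematic form $\pm(\cN B + B^*\cN) \lesssim N^{\delta_{\mathrm H}}(\cN+1)^k$ and $B^*\cN B \lesssim N^{\delta_{\mathrm H}}(\cN+1)^k$ and $B^*B \lesssim N^{\delta_{\mathrm H}-?}(\cN+1)^k$ on suitable domains, together with the observation that all moments $\tr[(\cN+1)^k \Gamma_0]$ are finite and controlled. The key structural input is that $B$ carries the small factor $\eta_p$: from Appendix~\ref{app:ScatteringEquation} one has $\|\eta\|_\infty$-type and $\ell^1$/$\ell^2$ bounds on $\eta_p$ for $p \in P_{\mathrm H}$, which scale so that $\sum_{p\in P_{\mathrm H}}|\eta_p|^2 L^{-3} \lesssim \mathfrak a_N \sim N^{-1}$ and $L^{-3}\sum_{p\in P_{\mathrm H}}|\eta_p| \lesssim N^{\delta_{\mathrm H}}/N$ or similar. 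Combining these with the fact that in $\Gamma_0$ the number of particles with momenta in $P_{\mathrm L}$ is $\lesssim N$ (the condensate $|z|^2 \lesssim N$, plus $O(N^{2/3})$ excitations), each application of $B$ effectively pulls out two low-momentum particles and redistributes them, producing at most a factor $N^{\delta_{\mathrm H}}$ relative to $\cN$. Concretely, I would use a Cauchy--Schwarz splitting $a^*_{u+p}a^*_{v-p}a_u a_v$ with the $\eta_p$ weight distributed as $|\eta_p|^{1/2}\cdot|\eta_p|^{1/2}$, bound the high-momentum creation operators by $\|(\cN_{>}+1)^{1/2}\cdot\|$, and the low-momentum annihilation operators by $\|(\cN_{\mathrm L}+1)\cdot\|$; summing the $\eta_p$'s then gives the stated power of $N$.

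**Assembling the bound.** Putting $E_\alpha := \langle\psi_\alpha, B\psi_\alpha\rangle$ and noting $\sum_\alpha \lambda_\alpha |E_\alpha| \le \tr[\,|B|\,\Gamma_0]$ type quantities are small, I expand
\begin{equation}
\frac{\langle\psi_\alpha,(1+B^*)\cN(1+B)\psi_\alpha\rangle}{\langle\psi_\alpha,(1+B^*)(1+B)\psi_\alpha\rangle} = \langle\psi_\alpha,\cN\psi_\alpha\rangle + \langle\psi_\alpha,(\cN(B+B^*) + B^*\cN B)\psi_\alpha\rangle - \langle\psi_\alpha,\cN\psi_\alpha\rangle\cdot\big(2\Re E_\alpha + \|B\psi_\alpha\|^2\big) + (\text{h.o.t.}),
\end{equation}
where the geometric-series remainder from the denominator is controlled once $\|B\psi_\alpha\|$ is small. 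After multiplying by $\lambda_\alpha$ and summing, every term except $\sum_\alpha\lambda_\alpha\langle\psi_\alpha,\cN\psi_\alpha\rangle = \tr[\cN\Gamma_0]$ is bounded by $C N^{\delta_{\mathrm H}}$ using the operator estimates above combined with $\tr[(\cN+1)^k\Gamma_0] \lesssim N^k$ (which follows from the explicit quasi-free / coherent structure of $\Gamma_0$ and the moment bounds on $\zeta$, using $\widetilde N_0 \lesssim N$). This yields \eqref{eq:NG}. The main obstacle I anticipate is not any single inequality but the bookkeeping of making the operator bounds on $\cN B$, $B^*\cN B$, $B^*B$ genuinely uniform in $z$ (and hence in $\alpha$) with the correct power $N^{\delta_{\mathrm H}}$ rather than something larger like $N^{2\delta_{\mathrm H}}$ or $N^{1/3+\delta_{\mathrm L}}$; this requires carefully exploiting that $B$ changes only the high-momentum sector while the particle-number reservoir it draws from is the low-momentum sector, and that the $\eta_p$ weights decay fast enough on $P_{\mathrm H}$ — precisely the estimates proved (or cited) in Appendix~\ref{app:ExpectedParticleNumber}, on which I would lean for the detailed bounds.
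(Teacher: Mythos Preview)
Your overall structure is on the right track, and you correctly identify that $[\cN,B]=0$ is the key algebraic input. However, you miss an important simplification and there is a genuine gap in the handling of the denominator.

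First, the simplification: because $B$ creates two particles with momenta in $P_{\mathrm H}\pm P_{\mathrm L}$ (disjoint from $P_{\mathrm L}$) while every $\psi_\alpha$ is supported in the Fock sector with \emph{no} particles outside $P_{\mathrm L}$, one has $\langle\psi_\alpha,B\psi_\alpha\rangle=0$ and, since $\cN$ does not change momenta, also $\langle\psi_\alpha,\cN B\psi_\alpha\rangle=0$. So the cross terms $E_\alpha$ and $\langle\psi_\alpha,\cN(B+B^*)\psi_\alpha\rangle$ in your expansion vanish identically; there is no need to estimate them. This also means the denominator equals $1+\|B\psi_\alpha\|^2$ exactly.

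The actual gap is your treatment of the denominator. You write ``the geometric-series remainder from the denominator is controlled once $\|B\psi_\alpha\|$ is small'', but $\|B\psi_\alpha\|$ is \emph{not} small uniformly in $\alpha$: the eigenvectors $\psi_\alpha$ of $\Gamma_0$ include states with arbitrarily large particle number in $P_{\mathrm L}$ (with small but nonzero $\lambda_\alpha$), and on such states $B^*B$ is large. A term-by-term geometric expansion of $1/(1+\|B\psi_\alpha\|^2)$ is therefore not justified, and the ``higher-order terms'' cannot be controlled by any finite-order moment bound on $\Gamma_0$. The paper circumvents this asymmetrically: for the \emph{upper} bound one simply uses $\|(1+B)\psi_\alpha\|^2\ge 1$, giving $\tr[\cN\Gamma]\le\tr[(\cN+B^*B\cN)\Gamma_0]$, and then $\tr[B^*B\cN\Gamma_0]\lesssim N^{\delta_{\mathrm H}}$ follows from the $\eta$-estimates and Lemma~\ref{lm:Tr}. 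For the \emph{lower} bound one first inserts the cutoff $\mathds{1}(\cN_0\le cN)\mathds{1}(\cN_{\mathrm B}\le cN)\mathds{1}(\cN_{\mathrm I}\le cN)$, which makes $\langle\psi_\alpha,\cN\psi_\alpha\rangle\le 3cN$ on the surviving set and legitimizes $1/(1+x)\ge 1-x$; the price is that one must show the discarded tail is negligible, which requires genuine large-deviations bounds for the three number operators in the respective Gibbs/coherent pieces of $\Gamma_0$ (this is Lemma~\ref{lm:NGBGF} and the coherent-state estimate around \eqref{eq:AppD6}). This cutoff-plus-large-deviations step is the missing idea in your proposal.
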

We recall that $0 < \delta_{\mathrm{H}} < 2/3$. The proof of the above lemma is based on simpler versions of the techniques that we use to prove our upper bound for the energy of $\Gamma$ in Section~\ref{sec:energy}, and we therefore prefer to give it in Appendix~\ref{app:ExpectedParticleNumber}. 

The above lemma quantifies the change in expected particle number caused by the correlation structure but it does not guarantee the existence of $\widetilde{\mu}$ with $\tr[\mathcal{N}\Gamma] = N$. This is because we are missing the information that $\tr[\mathcal{N}\Gamma]$ is a continuous function of $\widetilde{\mu}$. To circumvent this problem, we use the fact that the free energy in \eqref{eq:FreeEnergy} is, for fixed $v_N$ and $\beta$, a convex function of $N$. To see this, we first note that
\begin{equation}
	\frac{\partial F(\beta,N,L)}{\partial N} = \mu(\beta,N,L).
	\label{eq:defmu}
\end{equation}
Moreover, differentiation of both sides of the equation $\tr[\mathcal{N} G] = N$ with $G$ in \eqref{eq:GibbsState} with respect to $N$ yields
\begin{equation}
	\frac{\partial \mu(\beta,N,L)}{\partial N} = \frac{1}{\beta \left\{ \tr[\mathcal{N}^2 G] - (\tr[\mathcal{N} G])^2 \right\}} > 0.
	\label{eq:diffmu}
\end{equation}
In combination, \eqref{eq:defmu} and \eqref{eq:diffmu} show that the map $N \mapsto F(\beta,N,L)$ is convex. This implies the following statement: for given $|M| > 0$ it is always possible to satisfy $F(\beta,N + M ,L) \geq F(\beta,N,L)$ by choosing the correct sign for $M$. Motivated by this, we choose $\widetilde{\mu}$ such that $\tr[\mathcal{N} \Gamma_0]= N + \Delta N$, where $|\Delta N|$ is twice as large as the error term in \eqref{eq:NG} and the sign of $\Delta N$ is chosen such that the free energy is increased. This is possible because the remainder in \eqref{eq:NG} is uniform in $0 \leq \widetilde{N}_0 \leq CN$ and $\widetilde{N}_0 \geq 0$ can be chosen arbitrarily by varying $\widetilde{\mu}$. A bound relating our choice of $\widetilde{N}_0$ and $N_0$ is provided by Lemma~\ref{lem:BoundWidetildeN0} below. 

In the remainder of this article, we prove an upper bound for the free energy of $\Gamma$ that implies Theorem~\ref{thm:Main}.

\subsection{Preparatory lemmas}
\label{eq:PreparatoryLemmas}
In this section we state and prove several lemmas that are needed for the computation of the free energy of $\Gamma$ in \eqref{trialstate}. We present them here to not interrupt the main line of the argument in Section \ref{sec:energy}.

\subsubsection*{Properties of the state $G_{\mathrm{B}}(z) \otimes G_{\mathrm{free}}$}

The first lemma provides us with a Bogoliubov transformation that diagonalizes the Bogoliubov Hamiltonian $\mathcal{H}^{\mathrm{B}}$ in \eqref{eq:HB}. Before we state it, we introduce the following notation. For fixed $z \in \mathbb{C}$ and $p \in \Lambda^*$, we define the functions $\varphi_{p,z}(x) = (z/|z|) L^{-3/2} e^{\mathrm{i} p x}$, which are planes waves with a $z$-dependent phase. By $a^*_{p,z} = a^*(\varphi_{p,z})$ and $a_{p,z} = a(\varphi_{p,z})$ we denotes the operators that create and annihilate a particle in the function $\varphi_{p,z}$, respectively. Since $\{ \varphi_{p,z} \}_{p \in 2 \pi \mathbb{Z}/L}$ is an orthonormal basis of $L^2(\Lambda, \de x)$ the operators $a^*_{p,z}$ and $a_{p,z}$ satisfy the canonical commutation relations in the form stated below \eqref{eq:ccr}. We also define the (unitary) Bogoliubov transformation $\mathcal{U}_z: \mathscr{F}\to \mathscr{F}$ (up to a global phase) by its action on our $z$-dependent creation and annihilation operators in the following way ($p \in P_{\mathrm{B}}$):

\begin{equation}
	\mathcal{U}_z^* a^*_{p,z} \mathcal{U}_z = u_p a^*_{p,z} + v_p a_{-p,z}, \quad \quad \mathcal{U}_z^* a_{p,z} \mathcal{U}_z = u_p a_{p,z} + v_p a^*_{-p,z}. 
	\label{eq:Btr}
\end{equation}
The $z$-independent coefficients $u_p$ and $v_p$ are defined by
\begin{align}
	u_{p} &= \frac{1}{2} \left( \frac{p^2-\mu_0}{p^2-\mu_0 + 2 \hat{v}_N \ast \hat{f}_{N}(p) \varrho_0} \right)^{1/4} + \frac{1}{2} \left( \frac{p^2-\mu_0}{p^2-\mu_0 + 2 \hat{v}_N \ast \hat{f}_{N}(p) \varrho_0} \right)^{-1/4} \quad \text{ and } \nonumber \\
	v_{p} &= \frac{1}{2} \left( \frac{p^2-\mu_0}{p^2-\mu_0 + 2 \hat{v}_N \ast \hat{f}_{N}(p) \varrho_0} \right)^{1/4} - \frac{1}{2} \left( \frac{p^2-\mu_0}{p^2-\mu_0 + 2 \hat{v}_N \ast \hat{f}_{N}(p) \varrho_0} \right)^{-1/4} 
	\label{eq:coefficientsBogtrafo}
\end{align}
with $\mu_0$ in \eqref{eq:ChemicalPotentialIdealGas} and $\varrho_0$ in \eqref{eq:DensityBEC}, respectively. The function $\hat{v}_N \ast \hat{f}_{N}(p)$ may take negative values. However, we claim that there exists $\widetilde{N} \in \mathbb{N}$ such that it is nonnegative uniformly in $p \in P_{\mathrm{B}}$ provided $N \geq \widetilde{N}$. To prove this, we note that
\begin{equation}
	\hat{v}_N \ast \hat{f}_{N}(p) \geq \hat{v}_N \ast \hat{f}_{N}(0) - |p| \int_0^1 | \nabla \hat{v}_N \ast \hat{f}_{N}(tp) | \de t \geq \hat{v}_N \ast \hat{f}_{N}(0) - N^{\delta_{\mathrm{B}}} L^{-1} \int_{\Lambda} v_N(x) f_{N}(x) |x| \de x.
	\label{eq:Posititvity}
\end{equation}
By Lemma~\ref{lemma:NPprop} we know that $0 \leq f_{N} \leq 1$; we use this and $\int v_N(x) N |x| \de x/L = N^{-1} \Vert \ | \cdot | v \Vert_1$ to see that the last term on the r.h.s.\ is bounded by a constant times $L N^{\delta_{\mathrm{B}} - 2}$. Since $0<\delta_{\mathrm{B}} < 1/3$ by assumption, and $\hat{v}_N \ast \hat{f}_{N}(0) =N^{-1}\int  v(x)f(x) \de x\gtrsim L N^{-1}$ by \eqref{eq:sl}, the claim is proved.  In particular, it assures that $u_p$ and $v_p$ are well defined. In the following we will always assume that $N \geq \widetilde{N}$, and hence $\hat{v}_N \ast \hat{f}_{N}(p) \geq 0$ for $p \in P_{\mathrm{B}}$. 

We are now prepared to state our first lemma.
\begin{lemma}
	\label{lem:DiagBogoH}
	The Bogoliubov Hamiltonian $\mathcal{H}^{\mathrm{B}}$ in \eqref{eq:HB} satisfies
	\begin{equation}
		\mathcal{U}_z^* \mathcal{H}^{\mathrm{B}} \mathcal{U}_z = E_0 + \sum_{p \in P_{\mathrm{B}}} \epsilon(p) a_p^* a_p
		\label{diag}
	\end{equation}
	with 
	\begin{equation}
		\epsilon(p) = \sqrt{p^2 - \mu_0} \sqrt{ p^2 - \mu_0 + 2 \hat{v}_N \ast \hat{f}_{N}(p) \varrho_0 } \quad \text{ and } \quad E_0 = -\frac{1}{2} \sum_{p \in P_{\mathrm{B}}} \left[ p^2 - \mu_0 + \varrho_0 \hat{v}_N \ast \hat{f}_{N}(p) - \epsilon(p) \right].
		\label{eq:BogoDispersion}
	\end{equation}  
\end{lemma}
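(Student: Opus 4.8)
The plan is to diagonalize $\mathcal{H}^{\mathrm{B}}$ by the standard Bogoliubov strategy: work in the $z$-dependent plane-wave basis $\varphi_{p,z}$, pair up the momenta $\pm p$ for $p \in P_{\mathrm{B}}$, and conjugate with the transformation $\mathcal{U}_z$ defined in \eqref{eq:Btr}. First I would rewrite $\mathcal{H}^{\mathrm{B}}$ in \eqref{eq:HB} in terms of $a_{p,z}^*, a_{p,z}$. Since $a_{p,z} = (\overline{z}/|z|)\, a_p$ and $a^*_{p,z} = (z/|z|)\, a^*_p$, the phase factors $(z/|z|)^2$ and $(\overline{z}/|z|)^2$ in the definition of $\mathcal{H}^{\mathrm{B}}$ are exactly absorbed: the anomalous terms become $a^*_{p,z} a^*_{-p,z}$ and $a_{p,z} a_{-p,z}$, while the diagonal term $a^*_p a_p = a^*_{p,z} a_{p,z}$ is unchanged. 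Writing $A_p := p^2 - \mu_0 + \varrho_0\, \hat v_N \ast \hat f_N(p)$ (the coefficient of $a^*_{p,z}a_{p,z}$) and $B_p := \varrho_0\, \hat v_N \ast \hat f_N(p)$ (the coefficient of the pairing terms), one gets
\begin{equation}
	\mathcal{H}^{\mathrm{B}} = \sum_{p \in P_{\mathrm{B}}} \Bigl[ A_p\, a^*_{p,z} a_{p,z} + \tfrac{1}{2} B_p \bigl( a^*_{p,z} a^*_{-p,z} + a_{p,z} a_{-p,z} \bigr) \Bigr],
\end{equation}
using the symmetry $p \leftrightarrow -p$ of the coefficients to recombine the diagonal part.

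Next I would conjugate with $\mathcal{U}_z$. Using \eqref{eq:Btr} and the canonical commutation relations, a direct computation of $\mathcal{U}_z^*\, a^*_{p,z} a_{p,z}\, \mathcal{U}_z$ and $\mathcal{U}_z^*\, a^*_{p,z} a^*_{-p,z}\, \mathcal{U}_z$ (plus its adjoint) shows that the transformed Hamiltonian is again a combination of $a^*_{p,z}a_{p,z}$, $a^*_{p,z}a^*_{-p,z}$, $a_{p,z}a_{-p,z}$ and a constant (coming from normal-ordering the $v_p^2$ contributions, which is where $E_0$ arises). The coefficient of the off-diagonal terms after conjugation is proportional to $2 A_p\, u_p v_p + B_p(u_p^2 + v_p^2)$. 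The point of the specific choice of $u_p, v_p$ in \eqref{eq:coefficientsBogtrafo} — which automatically satisfy $u_p^2 - v_p^2 = 1$ because they are $\frac{1}{2}(t^{1/4} \pm t^{-1/4})$ with $t = (p^2-\mu_0)/(p^2-\mu_0+2B_p) = A_p^{-}/A_p^{+}$ in suggestive notation — is precisely that this off-diagonal coefficient vanishes. Indeed $2u_pv_p = \frac{1}{2}(t^{1/2} - t^{-1/2})$ and $u_p^2 + v_p^2 = \frac{1}{2}(t^{1/2} + t^{-1/2})$, so the vanishing condition reduces, after multiplying through by $2$, to $A_p(t^{1/2} - t^{-1/2}) + B_p(t^{1/2} + t^{-1/2}) = 0$, i.e. $(A_p+B_p) t^{1/2} = (A_p - B_p) t^{-1/2}$, i.e. $t = (A_p-B_p)/(A_p+B_p)$; and since $A_p - B_p = p^2-\mu_0$ and $A_p + B_p = p^2 - \mu_0 + 2B_p = p^2 - \mu_0 + 2\varrho_0\hat v_N\ast\hat f_N(p)$, this is exactly the definition of $t$. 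So the anomalous terms cancel identically.

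It then remains to read off the two surviving coefficients. The coefficient of $a^*_{p,z}a_{p,z}$ is $A_p(u_p^2 + v_p^2) + 2 B_p u_p v_p = \frac{1}{2} A_p(t^{1/2}+t^{-1/2}) + \frac{1}{2} B_p(t^{1/2}-t^{-1/2})$; substituting $t = (p^2-\mu_0)/(p^2-\mu_0+2B_p)$ and simplifying gives $\sqrt{(p^2-\mu_0)(p^2-\mu_0+2B_p)} = \epsilon(p)$, which is the claimed dispersion (note $a^*_{p,z}a_{p,z} = a^*_p a_p$, so this matches \eqref{diag}). The constant term collects $A_p v_p^2 + 2 B_p u_p v_p$ summed over $p \in P_{\mathrm{B}}$ (from commuting annihilation operators past creation operators), which one checks equals $-\frac{1}{2}\sum_{p}[A_p - \epsilon(p)] = -\frac{1}{2}\sum_p[p^2 - \mu_0 + \varrho_0\hat v_N\ast\hat f_N(p) - \epsilon(p)] = E_0$. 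Collecting the diagonalized terms gives \eqref{diag} with $\epsilon(p)$ and $E_0$ as in \eqref{eq:BogoDispersion}. The only mild subtlety — and the part I would be most careful about — is the bookkeeping of the normal-ordering constants (the factors of $\frac{1}{2}$ and the sum over $P_{\mathrm{B}}$ versus pairs $\{p,-p\}$), together with confirming that $\hat v_N \ast \hat f_N(p) \geq 0$ on $P_{\mathrm{B}}$ (established in the paragraph preceding the lemma via \eqref{eq:Posititvity}) so that $t > 0$ and $u_p, v_p, \epsilon(p)$ are all well defined and real; the algebraic cancellation itself is forced by the choice of $u_p,v_p$ and is routine.
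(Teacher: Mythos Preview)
Your proposal is correct and follows exactly the same approach as the paper's proof: rewrite $\mathcal{H}^{\mathrm{B}}$ in terms of the $z$-dependent operators $a_{p,z}^*, a_{p,z}$ to absorb the phases (this is the paper's \eqref{eq:HBz}), then carry out the standard Bogoliubov diagonalization, which the paper simply cites to \cite[Lemma~5.2]{BocBreCeSchl2020b} rather than spelling out. One small bookkeeping slip in your write-up: the normal-ordering constant per mode is $A_p v_p^2 + B_p u_p v_p$, not $A_p v_p^2 + 2 B_p u_p v_p$ (each of the two anomalous terms carries the prefactor $\tfrac{1}{2}B_p$, so together they contribute $B_p u_p v_p$); with the correct coefficient the identity $A_p v_p^2 + B_p u_p v_p = \tfrac{1}{2}(\epsilon(p) - A_p)$ indeed gives the stated $E_0$.
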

\begin{proof}
	To see that the Bogoliubov transformation $\mathcal{U}_z$ diagonalizes $\mathcal{H}^{\mathrm{B}}$, we note that the latter can be written as
	\begin{equation}
		\mathcal{H}^{\mathrm{B}} = \sum_{p \in P_{\mathrm{B}}}(p^2 - \mu_0) a^*_{p,z} a_{p,z} + \frac{\varrho_0(\beta,N,L)}{2}\sum_{p \in P_{\mathrm{B}}} \hat{v}_N*\hat{f}_{N}(p) \left[2 a^*_{p,z} a_{p,z}+ a^*_{p,z} a^*_{-p,z}+ a_{p,z} a_{-p,z} \right].
		\label{eq:HBz}
	\end{equation}
	Eq.~\eqref{diag} now follows from a standard computation that uses \eqref{eq:Btr} and \eqref{eq:HBz}, see e.g. \cite[Lemma~5.2]{BocBreCeSchl2020b}.
\end{proof}
\begin{remark}
	It is worth noting that, although $\mathcal{H}^{\mathrm{B}}$ depends on $z \in \mathbb{C}$, the r.h.s. of \eqref{diag} is independent of $z$. This is related to the fact that the $z$-dependence of $\mathcal{H}^{\mathrm{B}}$ is quite simple: all functions of the plane wave basis are multiplied by the same complex phase.
\end{remark}
Next, we compute the 1-pdm and the pairing function of the state $G_{\mathrm{B}}(z) \otimes G_{\mathrm{free}}$. 
\begin{lemma}
	\label{lem:1pdm}
	The 1-pdm and the pairing function of the state $G_{\mathrm{B}}(z) \otimes G_{\mathrm{free}}$ with $G_{\mathrm{B}}(z)$ in \eqref{eq:Gbog} and $G_{\mathrm{free}}$ in \eqref{eq:Gfree} are for $p,q \in P_{\mathrm{B}} \cup P_{\mathrm{I}}$ given by
	\begin{equation}
		\tr_{\mathscr{F}_{\mathrm{B}} \otimes \mathscr{F}_{\mathrm{I}}} [ a_q^* a_p G_{\mathrm{B}}(z) \otimes G_{\mathrm{free}} ] = \delta_{p,q} \gamma(p) \quad \text{ and } \quad \tr_{\mathscr{F}_{\mathrm{B}} \otimes \mathscr{F}_{\mathrm{I}}} [ a_p a_q G_{\mathrm{B}}(z) \otimes G_{\mathrm{free}} ] = \delta_{p,-q} (z/|z|)^2 \alpha(p),
		\label{eq:TI}
	\end{equation}
	respectively. Here
	\begin{align}
		\gamma(p) &= \mathds{1}(p \in P_{\mathrm{B}}) \left( (u_p^2 + v_p^2) \frac{1}{\exp(\beta \epsilon(p))-1}  + v_p^2 \right) + \mathds{1}(p \in P_{\mathrm{I}}) \frac{1}{\exp(\beta (p^2-\mu_0))-1} \quad \text{and} \nonumber \\
		\alpha(p) &= \mathds{1}(p \in P_{\mathrm{B}}) u_p v_p \left( \frac{2}{\exp(\beta \epsilon(p)) - 1} + 1 \right) \label{eq:defGam}
	\end{align}
	with $\epsilon(p)$ in \eqref{eq:BogoDispersion}.
\end{lemma}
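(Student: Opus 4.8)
\textbf{Proof plan for Lemma~\ref{lem:1pdm}.}
The plan is to exploit the tensor product structure $G_{\mathrm{B}}(z) \otimes G_{\mathrm{free}}$ together with the diagonalization of $\mathcal{H}^{\mathrm{B}}$ provided by Lemma~\ref{lem:DiagBogoH}. First I would treat the two factors separately. For $p,q \in P_{\mathrm{I}}$ the relevant operators act only on $\mathscr{F}_{\mathrm{I}}$, and since $G_{\mathrm{free}}$ is the Gibbs state of the diagonal operator $\de\Gamma(\mathds{1}(-\mathrm{i}\nabla\in P_{\mathrm{I}})(-\Delta-\mu_0)) = \sum_{p\in P_{\mathrm{I}}} (p^2-\mu_0) a_p^* a_p$, the standard computation for a quasi-free state of free bosons gives $\tr_{\mathscr{F}_{\mathrm{I}}}[a_q^* a_p G_{\mathrm{free}}] = \delta_{p,q}(\exp(\beta(p^2-\mu_0))-1)^{-1}$ and $\tr_{\mathscr{F}_{\mathrm{I}}}[a_p a_q G_{\mathrm{free}}] = 0$. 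The vanishing of the pairing term and the off-diagonal one-particle term here follows from gauge invariance: $G_{\mathrm{free}}$ commutes with the number operator restricted to each mode, so only terms with matching creation/annihilation content survive.

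For $p,q \in P_{\mathrm{B}}$ the key step is to pass to the $z$-dependent creation/annihilation operators $a^*_{p,z}, a_{p,z}$ and then to the Bogoliubov-transformed operators. Since $\mathcal{U}_z^* a_{p,z} \mathcal{U}_z = u_p a_{p,z} + v_p a^*_{-p,z}$ for $p\in P_{\mathrm{B}}$, and $G_{\mathrm{B}}(z) = \mathcal{U}_z (\text{Gibbs state of } \sum_{p\in P_{\mathrm{B}}}\epsilon(p) a_p^* a_p) \mathcal{U}_z^*$ by Lemma~\ref{lem:DiagBogoH} (the additive constant $E_0$ cancels between numerator and denominator), I would write $\tr[a_q^* a_p G_{\mathrm{B}}(z)] = \tr[\mathcal{U}_z^* a_q^* a_p \mathcal{U}_z \, G_\epsilon]$ where $G_\epsilon$ is the free Gibbs state with dispersion $\epsilon$. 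Expanding $\mathcal{U}_z^* a_q^* \mathcal{U}_z$ and $\mathcal{U}_z^* a_p \mathcal{U}_z$ and using that $G_\epsilon$ is gauge-invariant quasi-free (so $\tr[a_k^* a_{k'} G_\epsilon] = \delta_{k,k'}(e^{\beta\epsilon(k)}-1)^{-1}$ and $\tr[a_k a_{k'}G_\epsilon]=\tr[a_k^* a_{k'}^* G_\epsilon]=0$), only the terms $u_q u_p a_q^* a_p$, $v_q v_p a_{-q} a_{-p}^*$ (for the one-pdm) and $u_q v_p a_q^* a_{-p}^*$, $v_q u_p a_{-q} a_p$ (for the pairing function) contribute. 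Collecting these, using the commutator $[a_{-p}, a_{-p}^*]=1$, $u_p^2 - v_p^2 = 1$ (which one checks from \eqref{eq:coefficientsBogtrafo}), and the parity $u_{-p}=u_p$, $v_{-p}=v_p$, yields $\gamma(p)$ and $\alpha(p)$ as stated; the phase factor $(z/|z|)^2$ appears precisely because the pairing function is computed with the untwisted operators $a_p a_q$ rather than $a_{p,z}a_{q,z}$, and $a_p = (\bar z/|z|) a_{p,z}$.

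The remaining point is the cross terms: one-pdm entries with $p\in P_{\mathrm{B}}$, $q\in P_{\mathrm{I}}$ (or vice versa), and pairing entries $\tr[a_p a_q]$ with $p\in P_{\mathrm{B}}$, $q\in P_{\mathrm{I}}$. These all vanish immediately because $G_{\mathrm{B}}(z)$ and $G_{\mathrm{free}}$ act on distinct tensor factors, so such an expectation factorizes as $\tr_{\mathscr{F}_{\mathrm{B}}}[a_p \,\cdot\,]\,\tr_{\mathscr{F}_{\mathrm{I}}}[a_q\,\cdot\,]$ (or similar), and a single creation or annihilation operator has vanishing expectation in either gauge-invariant state. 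Also $\tr[a_p a_q]$ with both in $P_{\mathrm{B}}$ but $q\neq -p$ vanishes by momentum conservation built into the Bogoliubov structure. I do not anticipate a genuine obstacle here; the only mildly delicate bookkeeping is keeping track of the $z$-phase and the $\pm p$ indices consistently when passing between $a_p$, $a_{p,z}$, and the Bogoliubov operators, and making sure the normalization constants of the Gibbs states drop out — but this is routine once Lemma~\ref{lem:DiagBogoH} is in hand.
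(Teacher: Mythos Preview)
Your approach is essentially the same as the paper's: both exploit Lemma~\ref{lem:DiagBogoH} to write $\mathcal{U}_z^* G_{\mathrm{B}}(z)\mathcal{U}_z$ as the Gibbs state of a diagonal Hamiltonian, conjugate $a_q^* a_p$ (resp.\ $a_p a_q$) by $\mathcal{U}_z$ using \eqref{eq:Btr}, and then evaluate against a gauge-invariant quasi-free state; the $P_{\mathrm{I}}$ part and the cross terms are handled exactly as you describe. One small slip: the correct relation is $a_p = (z/|z|)\,a_{p,z}$ (since $a(\cdot)$ is antilinear and $\varphi_{p,z} = (z/|z|)\varphi_p$), not $(\bar z/|z|)\,a_{p,z}$; with the correct sign your argument indeed yields the phase $(z/|z|)^2$ in the pairing function.
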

\begin{proof}
	We start by noting that the special form of the 1-pdm and the pairing function in \eqref{eq:TI} follows immediately from the translation invariance of the state $G_{\mathrm{B}}(z) \otimes G_{\mathrm{free}}$. To compute $\gamma(p)$, we write
	\begin{align}
		\tr_{\mathscr{F}_{\mathrm{B}} \otimes \mathscr{F}_{\mathrm{I}}} [ a_q^* a_p G_{\mathrm{B}}(z) \otimes G_{\mathrm{free}} ] = \tr_{\mathscr{F}_{\mathrm{B}} \otimes \mathscr{F}_{\mathrm{I}}} [ \mathcal{U}_z^* a_{q}^* a_{p} \mathcal{U}_z \mathcal{U}_z^* G_{\mathrm{B}}(z) \mathcal{U}_z \otimes G_{\mathrm{free}} ].
		\label{eq:Comp1}
	\end{align}
	Using $a_{p} = (z/|z|) a_{p,z}$ and Lemma~\ref{lem:DiagBogoH}, we see that
	\begin{align}
		\tr_{\mathscr{F}_{\mathrm{B}} \otimes \mathscr{F}_{\mathrm{I}}} [ a_{q}^* a_{p} \mathcal{U}_z^* G_{\mathrm{B}}(z) \mathcal{U}_z ] &= \delta_{p,q} \frac{1}{\exp(\beta \epsilon(p)-1}, \nonumber \\ \tr_{\mathscr{F}_{\mathrm{B}} \otimes \mathscr{F}_{\mathrm{I}}} [ a_{q}^* a_{p}^* \mathcal{U}_z^* G_{\mathrm{B}}(z) \mathcal{U}_z ] &= 0 = \tr_{\mathscr{F}_{\mathrm{B}} \otimes \mathscr{F}_{\mathrm{I}}} [ a_{q} a_{p} \mathcal{U}_z^* G_{\mathrm{B}}(z) \mathcal{U}_z ]
		\label{eq:Comp2}
	\end{align}
	holds for $p,q \in P_{\mathrm{B}}$. We also have
	\begin{align}
		\tr_{\mathscr{F}_{\mathrm{B}} \otimes \mathscr{F}_{\mathrm{I}}} [ a_{q}^* a_{p}  G_{\mathrm{free}} ] &= \delta_{p,q} \frac{1}{\exp(\beta (p^2 - \mu_0)-1}, \nonumber \\ \tr_{\mathscr{F}_{\mathrm{B}} \otimes \mathscr{F}_{\mathrm{I}}} [ a_{q}^* a_{p}^*  G_{\mathrm{free}} ] &= 0 = \tr_{\mathscr{F}_{\mathrm{B}} \otimes \mathscr{F}_{\mathrm{I}}} [ a_{q} a_{p} G_{\mathrm{free}} ],
		\label{eq:Comp3}
	\end{align}
	for $p,q \in P_{\mathrm{I}}$. Since $G_{\mathrm{B}}(z) \otimes G_{\mathrm{free}}$ is a quasi-free state we know that the expectations in \eqref{eq:TI} vanish if one momentum is in $P_{\mathrm{B}}$ and the other in $P_{\mathrm{I}}$. When we use \eqref{eq:Btr}, $a_{p} = (z/|z|) a_{p,z}$, \eqref{eq:Comp2} and \eqref{eq:Comp3} on the r.h.s. of \eqref{eq:Comp1}, we obtain the claimed formula for the 1-pdm. The formula for the pairing function follows from a similar computation.
\end{proof}

We highlight that the pairing function of $G_{\mathrm{B}}(z) \otimes G_{\mathrm{free}}$ depends on $z/|z|$, while the 1-pdm does not. We state now a result useful to estimate momentum sums. Its proof can be found in \cite[Lemma~3.3]{DeuSeiGP2020}.
\begin{lemma}
	\label{lem:RiemannSum}
	Let $f:[0,\infty) \to \mathbb{R}$ be a nonnegative and monotone decreasing function and choose some $\kappa \geq 0$. Then
	\begin{equation}
		\sum_{p \in \Lambda_+^*} f(|p|) \mathds{1}(|p| \geq \kappa) \leq \left( \frac{L^3}{2\pi} \right) \int_{|p| \geq \left[\kappa - \sqrt{3} \frac{2\pi}{L} \right]_+} f(|p|) \left( 1 + \frac{3 \pi}{L |p|} + \frac{6 \pi}{L^2 p^2} \right) \de p.		
		\label{eq:RiemannSum}
	\end{equation}
\end{lemma}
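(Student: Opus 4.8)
The plan is to establish \eqref{eq:RiemannSum} by comparing the lattice sum over $\Lambda_+^* = (2\pi/L)\mathbb{Z}^3 \setminus \{0\}$ with the corresponding integral over $\mathbb{R}^3$, using monotonicity of $f$ to control the discretization error cube-by-cube. First I would note that it suffices to prove the bound for the full sum over $\Lambda^*$ (dropping the point $p=0$ only helps, since $f\geq 0$) restricted to $|p|\geq\kappa$, and then absorb the indicator into the integral by enlarging the integration region. The key geometric idea: to each lattice point $p\in\Lambda^*$ associate the half-open cube $Q_p = p + [0,2\pi/L)^3$ of side $2\pi/L$ and volume $(2\pi/L)^3$, so that the cubes tile $\mathbb{R}^3$. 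Then $f(|p|)\,(2\pi/L)^3 = \int_{Q_p} f(|p|)\,\dd q$, and since $f$ is monotone decreasing we want to bound $f(|p|)$ by $f(|q|)$ times a correction for each $q\in Q_p$; this needs $|q|$ not too much larger than $|p|$, which fails when $|q| > |p|$. The fix is the standard one: shift so that $p$ is the far corner of its cube rather than the near corner, i.e. use $Q_p' = p + [-2\pi/L, 0)^3$, but this still allows $|q| < |p|$ only by a bounded amount. More precisely, for $q \in Q_p'$ we have $|q| \le |p|$ is false in general; instead one uses that $|p| \le |q| + \sqrt{3}\,(2\pi/L)$ for all $q$ in the cube containing $p$, hence $f(|p|) \le f\big([\,|q| - \sqrt{3}(2\pi/L)\,]_+\big)$ by monotonicity.

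Next I would control the correction factor. Writing $r = |q|$ and $s = [r - \sqrt{3}(2\pi/L)]_+$, one needs a pointwise bound of the form $f(s) \le f(r)\,(1 + 3\pi/(Lr) + 6\pi/(L^2 r^2))$ — but this cannot hold for a general monotone $f$ at the level of a single point, so the argument must instead be organized radially. The correct approach: decompose $\sum_{p\in\Lambda^*} f(|p|)\mathds{1}(|p|\geq\kappa)$ by collecting lattice points into spherical shells, or more cleanly, bound the sum by an integral with a shifted and dilated radial variable. Concretely, I would use the substitution in the integral: after replacing the sum by $\sum_p (2\pi/L)^{-3}\int_{Q_p} f(|p|)\dd q$, bound $f(|p|)$ pointwise on $Q_p$ (with $p$ at the corner closest to the origin) by observing $|p| \ge |q| - \sqrt{3}(2\pi/L)$ when $q\in Q_p$, giving $f(|p|) \le f(|q| - \sqrt{3}(2\pi/L))$; summing over $p$ and using that the $Q_p$ tile $\mathbb{R}^3$ yields $\sum_p f(|p|)\mathds{1}(|p|\ge\kappa) \le (L/2\pi)^3 \int_{\mathbb{R}^3} f([\,|q| - \sqrt{3}(2\pi/L)\,]_+)\,\mathds{1}(|p(q)|\ge\kappa)\,\dd q$ where $p(q)$ is the lattice corner of the cube containing $q$; since $|q| \le |p(q)| + \sqrt3(2\pi/L)$ one has $\mathds{1}(|q|\ge \kappa) \le \mathds 1(|p(q)| \ge \kappa - \sqrt3(2\pi/L))$ is the wrong direction — rather, $|p(q)| \ge \kappa$ forces $|q| \ge \kappa - \sqrt3(2\pi/L)$, so the indicator on $p$ is dominated by $\mathds{1}(|q| \ge [\kappa - \sqrt3(2\pi/L)]_+)$. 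Then change variables $q \mapsto q + \sqrt3(2\pi/L)\,\hat q$ in spherical coordinates: with $r = |q|$, the radial integral $\int_0^\infty f([r-\sqrt3 \tfrac{2\pi}{L}]_+) r^2 \dd r$ becomes, after $r \mapsto r + \sqrt3(2\pi/L)$ (and handling the region $r < \sqrt3(2\pi/L)$ where $f$ is evaluated at $0$ separately), $\int f(\tilde r)(\tilde r + \sqrt3\tfrac{2\pi}{L})^2 \dd\tilde r = \int f(\tilde r)\tilde r^2(1 + \sqrt3\tfrac{2\pi}{L\tilde r})^2\dd\tilde r$, and expanding $(1+\sqrt3\tfrac{2\pi}{L\tilde r})^2 = 1 + 2\sqrt3\tfrac{2\pi}{L\tilde r} + 3\tfrac{4\pi^2}{L^2\tilde r^2}$ gives exactly the factor $1 + \tfrac{3\pi}{L\tilde r} + \tfrac{6\pi}{L^2\tilde r^2}$ after noting $2\sqrt3 \cdot 2\pi \le ?$ — here I would double-check the numerology ($2\sqrt{3}\cdot 2\pi \approx 21.8$ versus $3\pi \approx 9.4$, so one actually needs $\sqrt3$ replaced appropriately or the shift taken as $\tfrac{2\pi}{L}$ rather than $\sqrt3 \tfrac{2\pi}{L}$ in the expansion) and adjust constants accordingly, absorbing the gap into the monotonicity step by choosing the cube decomposition optimally.

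The main obstacle I anticipate is exactly this bookkeeping of constants: getting the precise numerical coefficients $3\pi$ and $6\pi$ (and the shift $\sqrt3\,\tfrac{2\pi}{L}$ in the lower limit) to come out, rather than merely some universal constants, requires care about which corner of each cube one associates to the lattice point, how the spherical shell thickness $\sqrt3\,\tfrac{2\pi}{L}$ (the cube diameter) enters the monotonicity estimate versus how the shift $\tfrac{2\pi}{L}$ (the cube side) enters the volume comparison, and treating the small-$r$ regime $r \lesssim \tfrac{2\pi}{L}$ where the correction factor blows up but $f$ is bounded by $f(0)$ and the phase-space volume is tiny. Since the lemma's proof is explicitly cited to \cite[Lemma~3.3]{DeuSeiGP2020}, I would in practice simply invoke that reference; but if reconstructing it, the heart of the matter is the monotone-rearrangement / shifted-Riemann-sum comparison, and everything else is elementary calculus. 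I expect no conceptual difficulty beyond the constant-chasing, and the structure — tile by cubes, bound $f$ by monotonicity with a cube-diameter shift, convert to a radial integral, expand the resulting polynomial correction — is completely standard.
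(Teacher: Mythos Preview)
The paper does not give its own proof of this lemma; it simply states that the proof can be found in \cite[Lemma~3.3]{DeuSeiGP2020}. Your proposal correctly identifies this and sketches the standard cube-tiling argument (associate to each lattice point a cube of side $2\pi/L$, use monotonicity of $f$ together with the cube diameter $\sqrt{3}\,(2\pi/L)$ to bound $f(|p|)$ by a shifted value, then change to spherical coordinates and expand). This is indeed the approach of the cited reference, so there is nothing further to compare. The numerical mismatches you flag in the coefficients are a bookkeeping issue rather than a conceptual gap, and you are right that they would need to be traced carefully against the exact choices made in \cite{DeuSeiGP2020} to recover the stated constants $3\pi$, $6\pi$, and the shift $\sqrt{3}\,(2\pi/L)$.
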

The next lemma provides us with bounds for the functions $\gamma$ and $\alpha$.
\begin{lemma}
	\label{lem:boundsgamma}
	The functions $\gamma$ and $\alpha$ in \eqref{eq:TI} satisfy the pointwise bounds 
	\begin{align}
		\gamma(p) &\lesssim \mathds{1}(p \in P_{\mathrm{L}} \backslash\{ 0 \}) \frac{1}{\exp(\beta p^2)-1} + \mathds{1}(p \in P_{\mathrm{B}}) \frac{1}{L^4 p^4} \quad \text{ and } \nonumber \\ 
		|\alpha(q)| &\lesssim \mathds{1}(p \in P_{\mathrm{B}}) \frac{1}{L^2 p^2} \left( 1 + \frac{1}{\beta p^2} \right).
		\label{eq:Comp4}
	\end{align}
	Moreover, for $n \in \{ 0,1 \}$ we have
	\begin{align}
		\sum_{p \in \Lambda_+^*} |p|^{n} \gamma(p) &\lesssim L^3 \beta^{-(3+n)/2} + L^{-n} c_n(N) \quad \text{ with } \quad c_n(N) = \begin{cases}
			1 & \text{ if } n = 0 \\ \ln(N) & \text{ if } n=1 
		\end{cases}  \quad \text{ as well as } \nonumber \\
	 	\sum_{p \in \Lambda_+^*} |p|^{n} |\alpha(p)| &\lesssim L^{-n+2} \beta^{-1} c_n(N) + L^{-n} N^{ (n+1) \delta_{\mathrm{B}}}.
	 	\label{eq:Comp4b}
	\end{align}
	Finally, the number of particles with momenta in $P_{\mathrm{B}}$ is bounded by
	\begin{equation}
		\sum_{p \in P_{\mathrm{B}}} \gamma(p) \lesssim 1 + \frac{L^2 N^{\delta_{\mathrm{B}}}}{\beta}.
		\label{eq:Comp4c}
	\end{equation}
\end{lemma}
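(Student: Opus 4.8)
The plan is to prove the pointwise bounds first, then integrate them using Lemma~\ref{lem:RiemannSum}. For the pointwise bound on $\gamma$, I would split according to the two indicator functions in \eqref{eq:defGam}. On $P_{\mathrm{I}}$ the bound is immediate since $\gamma(p) = (\exp(\beta(p^2-\mu_0))-1)^{-1} \leq (\exp(\beta p^2)-1)^{-1}$ using $\mu_0 < 0$, and $P_{\mathrm{I}} \subset P_{\mathrm{L}}\setminus\{0\}$. On $P_{\mathrm{B}}$ I would use $u_p^2 + v_p^2 = \tfrac12\bigl((p^2-\mu_0)/\epsilon_0(p)^{1/2}\cdot(\ldots)\bigr)$ — more concretely, from \eqref{eq:coefficientsBogtrafo} one has $u_p^2+v_p^2 = \tfrac12\bigl(x_p^{1/2}+x_p^{-1/2}\bigr)$ and $v_p^2 = \tfrac14\bigl(x_p^{1/2}-x_p^{-1/2}\bigr)^2$ where $x_p = (p^2-\mu_0)/(p^2-\mu_0+2\widehat{v}_N*\widehat{f}_N(p)\varrho_0)$. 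The key input is that $\widehat{v}_N*\widehat{f}_N(p)\varrho_0 \sim 8\pi\mathfrak{a}_N\varrho_0 \sim L^{-2}$ on $P_{\mathrm{B}}$ (from the scattering length asymptotics), while $p^2 \lesssim N^{2\delta_{\mathrm{B}}}L^{-2}$ there, so the ``interaction'' term dominates or is comparable. This gives $v_p^2 \lesssim L^2/p^2 \cdot (\text{something})$; combined with $(\exp(\beta\epsilon(p))-1)^{-1} \lesssim 1/(\beta\epsilon(p)) \lesssim 1/(\beta p \cdot L^{-1}) $ and the factor $u_p^2+v_p^2 \lesssim L/|p|$, one lands on $1/(L^4 p^4)$ after also noting $v_p^2$ alone contributes $\lesssim L^2/p^2 \cdot 1/(L^2) \cdot \ldots$; I would track the powers carefully, using $\beta \sim L^2 N^{-2/3}$ so that $1/(\beta \epsilon(p)) \lesssim L^2 N^{2/3}/(L^{-1}|p| \cdot 1) $ — the cleanest route is to bound $\epsilon(p) \gtrsim |p|\sqrt{\varrho_0 \mathfrak{a}_N} \gtrsim |p| L^{-1}$ and $u_p^2+v_p^2 \lesssim \sqrt{\varrho_0\mathfrak{a}_N}/|p| \lesssim L^{-1}/|p|$, yielding $\gamma(p) \lesssim L^{-1}|p|^{-1}\cdot(\beta |p| L^{-1})^{-1} = (\beta L^{-2}|p|^2)^{-1} \lesssim L^{-4}p^{-4}$ after one more application that $\beta^{-1} \lesssim L^{-2}p^2 \cdot (L^2/\beta \cdot p^{-2}\cdot L^{-2})$ — I will set this up so the final inequality reads $\lesssim 1/(L^4 p^4)$ cleanly. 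The bound on $|\alpha(q)|$ is analogous: $|u_p v_p| \lesssim \sqrt{\varrho_0\mathfrak{a}_N}/|p| \lesssim L^{-1}/|p|$ and the bracket $2/(\exp(\beta\epsilon(p))-1)+1 \lesssim 1 + 1/(\beta\epsilon(p)) \lesssim 1 + 1/(\beta |p| L^{-1}) = 1 + L/(\beta|p|)$; multiplying gives $|\alpha(p)| \lesssim L^{-2}p^{-2}(1 + L/(\beta |p|)) $, and since $L/(\beta|p|) \gtrsim 1/(\beta p^2)$ is not quite what is claimed, I would instead note $L/(\beta|p|) \cdot L^{-2}p^{-2} = L^{-1}/(\beta |p|^3)$; the claimed form $L^{-2}p^{-2}(1+1/(\beta p^2))$ I would recover by observing $1/(\beta p^2) \gtrsim L^{-1}/(\beta|p|^3)$ fails for small $p$, so more carefully $\epsilon(p) \gtrsim |p| L^{-1}$ combined with $|p| \gtrsim L^{-1}$ on $\Lambda^*_+$ gives $\epsilon(p) \gtrsim p^2$, hence $1/(\beta\epsilon(p)) \lesssim 1/(\beta p^2)$, which is exactly the claimed form.

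For the summed bounds \eqref{eq:Comp4b}, I would integrate the pointwise bounds using Lemma~\ref{lem:RiemannSum}. For $\sum |p|^n \gamma(p)$: the term $\mathds{1}(p\in P_{\mathrm{L}}\setminus\{0\}) |p|^n/(\exp(\beta p^2)-1)$ is bounded, by Lemma~\ref{lem:RiemannSum} and splitting $|p|\leq L^{-1}$ from $|p|>L^{-1}$, by $L^3 \int |p|^n (\exp(\beta p^2)-1)^{-1}\de p \lesssim L^3 \beta^{-(3+n)/2}$ for the large-$|p|$ part plus a boundary term of order $L^{-n}$ (for $n=0$) or $L^{-n}\ln N$ (for $n=1$, from the logarithmic divergence of $\int_{L^{-1}}^{N^{1/3+\delta_{\mathrm{L}}}/L}|p|^{-1}\de p$ which is $\sim\ln N$). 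The term $\mathds{1}(p\in P_{\mathrm{B}})|p|^n/(L^4 p^4)$ gives, for $n=0$, $\sum_{p\in\Lambda^*_+} L^{-4}p^{-4} \lesssim L^{-4}\cdot L = L^{-3} \lesssim L^{-0} c_0(N)$ after noting the smallest momentum $|p|\sim L^{-1}$ contributes $L^{-4}\cdot L^4 = 1$ and the rest converges; for $n=1$, $\sum L^{-4}|p|^{-3} \lesssim L^{-4}\cdot L^2\ln(\ldots) \lesssim L^{-1}$ (here the sum over $P_{\mathrm{B}}$ has $N^{\delta_{\mathrm{B}}}L^{-1}$ as largest momentum so $\sum|p|^{-3} \sim L^2 \ln(N^{\delta_{\mathrm{B}}}) \lesssim L^2\ln N$), consistent with $L^{-n}c_n(N)$. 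For $\sum |p|^n|\alpha(p)|$ I would use $|\alpha(p)|\lesssim \mathds{1}(p\in P_{\mathrm{B}})L^{-2}p^{-2}(1+(\beta p^2)^{-1})$: the ``$1$'' part gives $\sum_{P_{\mathrm{B}}} L^{-2}|p|^{n-2} \lesssim L^{-2}\cdot L^{2-n}N^{(n+1)\delta_{\mathrm{B}}}\cdot(\text{bounded or }\ln) \sim L^{-n}N^{(n+1)\delta_{\mathrm{B}}}$ — for $n=0$, $\sum_{P_{\mathrm{B}}}|p|^{-2}\sim L^2 N^{\delta_{\mathrm{B}}}$ so with the $L^{-2}$ prefactor it is $N^{\delta_{\mathrm{B}}}$; for $n=1$, $\sum_{P_{\mathrm{B}}}|p|^{-1}\sim L\cdot N^{2\delta_{\mathrm{B}}}$ so it is $L^{-1}N^{2\delta_{\mathrm{B}}}$; while the $(\beta p^2)^{-1}$ part gives $\beta^{-1}\sum_{P_{\mathrm{B}}}L^{-2}|p|^{n-4}\lesssim \beta^{-1}L^{-2}\cdot L^{4-n}c_n(N) = \beta^{-1}L^{2-n}c_n(N)$, matching the first term of the claimed bound.

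Finally, for \eqref{eq:Comp4c} I would simply sum the $\gamma$ bound restricted to $P_{\mathrm{B}}$: the thermal part $\sum_{p\in P_{\mathrm{B}}}(\exp(\beta p^2)-1)^{-1} \lesssim \sum_{p\in P_{\mathrm{B}}}(\beta p^2)^{-1} \lesssim \beta^{-1}L^2 N^{\delta_{\mathrm{B}}}$ by Lemma~\ref{lem:RiemannSum} (since $\int_{|p|\leq N^{\delta_{\mathrm{B}}}/L}p^{-2}\de p \sim N^{\delta_{\mathrm{B}}}/L$, times $L^3$, times $\beta^{-1}$), plus the $v_p^2$-type part $\sum_{p\in P_{\mathrm{B}}}L^{-4}p^{-4} \lesssim 1$ from the $|p|\sim L^{-1}$ mode. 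This yields $1 + \beta^{-1}L^2 N^{\delta_{\mathrm{B}}}$ as claimed.

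I expect the main obstacle to be the bookkeeping of powers of $L$, $N$, and $\beta$ in the pointwise bound for $\gamma$ on $P_{\mathrm{B}}$, specifically establishing $v_p^2 \lesssim L^{-1}/|p|$ and $\epsilon(p) \gtrsim |p|L^{-1}$ uniformly on $P_{\mathrm{B}}$ — this requires the precise asymptotics $\widehat{v}_N*\widehat{f}_N(p)\varrho_0 \sim L^{-2}$ (from $\mathfrak{a}_N \sim (LN)^{-1}\cdot\text{const}$ and $\varrho_0 \sim N/L^3$, i.e.\ $\mathfrak{a}_N\varrho_0 \sim L^{-2}$) together with $|\mu_0| \ll L^{-2}$ when $\kappa>1$ versus $|\mu_0|$ possibly large when $\kappa<1$; I would handle the case $\kappa<1$ (where $P_{\mathrm{B}}$ contributes little since $N_0\sim 1$) by noting the bounds are still valid, perhaps more crudely. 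The integration steps are routine applications of Lemma~\ref{lem:RiemannSum} once the pointwise bounds are in hand, with the only subtlety being the logarithmic factors $c_n(N)$ arising from the range of momenta between $L^{-1}$ and $N^{1/3+\delta_{\mathrm{L}}}/L$ (for $\gamma$) or between $L^{-1}$ and $N^{\delta_{\mathrm{B}}}/L$ (for $\alpha$).
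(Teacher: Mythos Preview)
Your approach to the pointwise bound on $\gamma$ over $P_{\mathrm{B}}$ has a genuine gap. The bounds you aim for, namely $u_p^2+v_p^2 \lesssim L^{-1}/|p|$ and $\epsilon(p) \gtrsim |p|L^{-1}$, rely on $\hat v_N\ast\hat f_N(p)\,\varrho_0 \sim L^{-2}$, i.e.\ on $N_0\sim N$; when $N_0\ll N$ (which the lemma must cover, since it is used throughout without restriction on $\kappa$) this fails and $u_p^2+v_p^2$ is simply $\sim 1$. Even in the condensed phase these bounds fail near the upper edge of $P_{\mathrm{B}}$, where $p^2\sim N^{2\delta_{\mathrm{B}}}L^{-2}\gg \varrho_0\mathfrak a_N\sim L^{-2}$, so again $u_p^2+v_p^2\sim 1$ rather than $\lesssim L^{-1}/|p|$. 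Finally, your recovery of $\epsilon(p)\gtrsim p^2$ from ``$\epsilon(p)\gtrsim |p|L^{-1}$ combined with $|p|\gtrsim L^{-1}$'' is backwards: $|p|L^{-1}\gtrsim p^2$ would require $|p|\lesssim L^{-1}$, not $\gtrsim$.

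The paper's route avoids all of this by using only \emph{upper} bounds on the interaction input and the \emph{trivial} lower bound on the dispersion. From the definition one has $\epsilon(p)\geq p^2-\mu_0\geq p^2$ directly (both factors under the square root are $\geq p^2-\mu_0$). For $v_p^2$ one writes $v_p^2=\tfrac14\bigl((1+t)^{1/2}+(1+t)^{-1/2}-2\bigr)$ with $t=2\hat v_N\ast\hat f_N(p)\varrho_0/(p^2-\mu_0)\geq 0$, and uses the elementary inequality $(1+t)^{1/2}+(1+t)^{-1/2}-2\leq t^2/4$ to get
\[
v_p^2 \;\leq\; \frac{(\varrho_0\,\hat v_N\ast\hat f_N(p))^2}{4p^4}\;\lesssim\; \frac{N_0^2}{N^2L^4p^4}\;\leq\;\frac{1}{L^4p^4},
\]
since $|\hat v_N\ast\hat f_N(p)|\leq N^{-1}\|v\|_1$ and $N_0\leq N$. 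In particular $v_p^2\lesssim 1$, hence $u_p^2+v_p^2=1+2v_p^2\lesssim 1$, and the bound on $\gamma(p)$ is immediate. For $\alpha$ the same gives $|u_pv_p|\leq |v_p|\sqrt{1+v_p^2}\lesssim L^{-2}p^{-2}$, and with $\epsilon(p)\geq p^2$ the claimed form follows. These two observations replace all your Bogoliubov-scale estimates and are uniform in $\kappa$ and across $P_{\mathrm{B}}$. Your treatment of the summed bounds via Lemma~\ref{lem:RiemannSum} is then essentially the paper's.
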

\begin{proof}
	We start by noting that 
	\begin{equation}
		v_p^2 = \frac{1}{4} \left( \frac{p^2-\mu_0}{p^2-\mu_0 + 2 \hat{v}_N \ast \hat{f}_{N}(p) \varrho_0} \right)^{1/2} + \frac{1}{4} \left( \frac{p^2-\mu_0}{p^2-\mu_0 + 2 \hat{v}_N \ast \hat{f}_{N}(p) \varrho_0} \right)^{-1/2} - \frac{1}{2}.
		\label{eq:Comp5}
	\end{equation}
	As already remarked above, we can assume that $2 \hat{v}_N \ast \hat{f}_{N}(p) \geq 0$ holds uniformly in $p \in P_{\mathrm{B}}$ (see \eqref{eq:Posititvity}). In combination with the bound $0 \leq (1+x)^{-1/2} + (1+x)^{1/2} - 2 \leq x^2/4$ for $x \geq 0$ and $\mu_0 < 0$, this implies
	\begin{equation}
		v_p^2 \leq \frac{(\varrho_0 \hat{v}_N \ast \hat{f}_{N}(p))^2}{4 p^4 }.
		\label{eq:Comp6}
	\end{equation}
	Using $0 \leq f_{N} \leq 1$, we see that
	\begin{equation}
		| \hat{v}_N \ast \hat{f}_{N}(p) | \leq \int_{\Lambda} v_N(x) f_{N}(x) \de x \leq N^{-1} \int_{\Lambda} v(x) \de x,
		\label{eq:Comp7}
	\end{equation}
	and hence
	\begin{equation}
		v_p^2 \lesssim \frac{N^2_0}{N^2 L^4 p^4} \leq \frac{1}{L^4 p^4}.
		\label{eq:boundvp}
	\end{equation}
	The bounds for $\gamma(p)$ and $|\alpha(p)|$ now follow from \eqref{eq:defGam}, \eqref{eq:boundvp}, $u_p^2 - v_p^2 = 1$ and $\epsilon(p) \geq p^2 - \mu_0 \geq p^2$. The bounds in \eqref{eq:Comp4b} and \eqref{eq:Comp4c} are a direct consequence of the pointwise bounds for $\gamma(p)$ and $|\alpha(p)|$ and Lemma~\ref{lem:RiemannSum}.
\end{proof}

\subsubsection*{Properties of the state $\Gamma_0$}

Recall definition \eqref{eq:WidetildeN0} for $\widetilde{N}_0$, the expected number of particles in the condensate of our trial state $\Gamma_0$. Recall also that $\tr[\mathcal{N}\Gamma_0] = N+\Delta N$ with $| \Delta N | \lesssim N^{\delta_{\mathrm{H}}}$. We highlight that if we know $\widetilde{N}_0$ we can compute the chemical potential $\widetilde{\mu}$ in the definition of $\zeta$, as discussed below \eqref{eq:modfreestate}. 
%
In the following lemma, we prove a bound for $\widetilde{N}_0$ showing that it is close to $N_0$ in \eqref{eq:DensityBEC} in a suitable sense.

\begin{lemma}
	\label{lem:BoundWidetildeN0}
	Assume that $\beta \gtrsim \beta_{\mathrm{c}}$. There exists a constant $c>0$ such that $\widetilde{N}_0$ satisfies the bound
	\begin{equation}
		| \widetilde{N}_0 - N_0 | \lesssim N^{\delta_{\mathrm{H}}} + \frac{N_0 L^2}{N \beta} + \frac{N_0^2}{N^2} + \exp(- c N^{2 \delta_{\mathrm{L}}}).
		\label{eq:BoundWidetildeN0}
	\end{equation}
\end{lemma}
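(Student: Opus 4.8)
The plan is to compare the two constrained minimization problems that define $\widetilde{N}_0$ and $N_0$ and to estimate how the shift of the constraint in $\tr[\mathcal{N}\Gamma_0] = N + \Delta N$ propagates into a shift of the condensate occupation. First I would recall from the construction (below \eqref{eq:modfreestate}) that $\widetilde{\mu}$ is fixed by the requirement $\tr[\mathcal{N}\Gamma_0] = N + \Delta N$ with $|\Delta N| \lesssim N^{\delta_{\mathrm{H}}}$, and that by definition $\tr[\mathcal{N}\Gamma_0] = \widetilde{N}_0 + \sum_{p \in P_{\mathrm{B}} \cup P_{\mathrm{I}}} \gamma(p)$ (the Bogoliubov and ideal-gas parts carry no condensate). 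On the other hand, $N_0$ in \eqref{eq:DensityBEC} satisfies $N = N_0 + \sum_{p \in \Lambda_+^*} (\exp(\beta(p^2-\mu_0))-1)^{-1}$. Subtracting these two identities gives
\begin{equation}
	\widetilde{N}_0 - N_0 = \Delta N + \sum_{p \in \Lambda_+^*} \frac{1}{\exp(\beta(p^2-\mu_0))-1} - \sum_{p \in P_{\mathrm{B}} \cup P_{\mathrm{I}}} \gamma(p),
	\label{eq:PlanIdentity}
\end{equation}
so the task reduces to controlling the difference of the two momentum sums, the first ranging over all of $\Lambda_+^*$ with the ideal-gas occupations and the second over $P_{\mathrm{B}} \cup P_{\mathrm{I}} = \{0 < |p| \le N^{1/3+\delta_{\mathrm{L}}}/L\}$ with the Bogoliubov-corrected occupations $\gamma(p)$ from \eqref{eq:defGam}.

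The next step is to split this difference into three pieces: (i) the tail $\sum_{|p| > N^{1/3+\delta_{\mathrm{L}}}/L}(\exp(\beta(p^2-\mu_0))-1)^{-1}$, which is the part of the ideal-gas sum not present in the second sum; (ii) on $P_{\mathrm{I}}$, the difference vanishes identically since $\gamma(p) = (\exp(\beta(p^2-\mu_0))-1)^{-1}$ there; (iii) on $P_{\mathrm{B}}$, the difference $\sum_{p \in P_{\mathrm{B}}}\big[\gamma(p) - (\exp(\beta(p^2-\mu_0))-1)^{-1}\big]$ between the Bogoliubov occupation and the free one. For (i), since $\beta \gtrsim \beta_{\mathrm{c}} \sim L^2 N^{-2/3}$ and $|\mu_0| < 0$, on the tail $\beta p^2 \gtrsim \beta N^{2/3+2\delta_{\mathrm{L}}}/L^2 \gtrsim N^{2\delta_{\mathrm{L}}}$, so $(\exp(\beta(p^2-\mu_0))-1)^{-1} \le (\exp(\beta p^2)-1)^{-1}$ is exponentially small and, summing via Lemma~\ref{lem:RiemannSum}, the tail is bounded by $\exp(-cN^{2\delta_{\mathrm{L}}})$ for a suitable $c>0$. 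For (iii), I would estimate each summand by using $u_p^2 - v_p^2 = 1$, so $\gamma(p) - (\exp(\beta(p^2-\mu_0))-1)^{-1} = 2v_p^2(\exp(\beta\epsilon(p))-1)^{-1} + v_p^2 + [(u_p^2+v_p^2)-1]\big[(\exp(\beta\epsilon(p))-1)^{-1} - (\exp(\beta(p^2-\mu_0))-1)^{-1}\big]$ — these are all controlled by $v_p^2 \lesssim N_0^2 N^{-2} L^{-4} p^{-4}$ (see \eqref{eq:boundvp}) times either $1$ or $(\beta p^2)^{-1}$, using $\epsilon(p) \ge p^2 - \mu_0 \ge p^2$. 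Summing over $p \in P_{\mathrm{B}} = \{0<|p|\le N^{\delta_{\mathrm{B}}}/L\}$ with Lemma~\ref{lem:RiemannSum} (the sum of $p^{-4}$ over this set is $\sim L^4 \cdot L^{-1}\int_{|p|\ge 2\pi/L} p^{-4}\,dp \sim L^4 \cdot L \cdot (L/2\pi)^3 \sim L^5$, wait — more carefully $L^3 \int_{|p| \ge 2\pi/L} p^{-4} dp \sim L^3 \cdot L = L^4$, and with the $(\beta p^2)^{-1}$ factor one gains $\int p^{-6} \sim L^3$), I would obtain a bound of the order $N_0^2 N^{-2}$ plus $N_0^2 N^{-2} L^2 \beta^{-1}$; the estimate $N_0 L^2/(N\beta)$ in the statement then follows by crudely bounding $N_0^2/N^2 \le N_0/N$ where appropriate, or more precisely these terms are absorbed into $N_0 L^2/(N\beta) + N_0^2/N^2$.

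The main obstacle, and the place requiring the most care, is item (iii): one must track the interplay between the smallness of $v_p^2$ (which is $O(N_0^2 N^{-2})$ pointwise-scaled but accumulates over the $\sim (N^{\delta_{\mathrm{B}}})^3$ modes in $P_{\mathrm{B}}$) and the infrared behavior $p^{-4}$, which is only marginally summable in three dimensions and produces the factor $L^2\beta^{-1}$ through the thermal enhancement $(\beta p^2)^{-1}$ of the low-lying modes. The delicate point is that $P_{\mathrm{B}}$ is chosen small enough ($\delta_{\mathrm{B}} < 1/3$) precisely so that this contribution stays at the order claimed; one should verify that the power of $N^{\delta_{\mathrm{B}}}$ cancels and does not survive in the final bound. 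Combining \eqref{eq:PlanIdentity} with $|\Delta N| \lesssim N^{\delta_{\mathrm{H}}}$, the tail bound $\exp(-cN^{2\delta_{\mathrm{L}}})$, the vanishing of (ii), and the $P_{\mathrm{B}}$-estimate $\lesssim N_0 L^2/(N\beta) + N_0^2/N^2$ yields \eqref{eq:BoundWidetildeN0}.
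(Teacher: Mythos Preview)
Your overall strategy matches the paper's proof: write the identity \eqref{eq:PlanIdentity}, observe that the $P_{\mathrm{I}}$ part vanishes, bound the high-momentum tail by $\exp(-cN^{2\delta_{\mathrm{L}}})$, and then control the $P_{\mathrm{B}}$ contribution. The tail estimate (i) and the observation (ii) are fine.

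There is, however, a genuine gap in step (iii). Your algebraic decomposition of $\gamma(p)-(\exp(\beta(p^2-\mu_0))-1)^{-1}$ is incorrect: with $g_\epsilon=(\exp(\beta\epsilon(p))-1)^{-1}$ and $g_0=(\exp(\beta(p^2-\mu_0))-1)^{-1}$, one has $\gamma(p)-g_0 = 2v_p^2 g_\epsilon + v_p^2 + (g_\epsilon - g_0)$, i.e.\ the last term carries the coefficient $1$, not $(u_p^2+v_p^2)-1=2v_p^2$. This matters: the term $g_\epsilon-g_0$ is \emph{not} controlled by $v_p^2$, so your claim ``these are all controlled by $v_p^2$'' fails precisely here. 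In the paper this term is handled separately via a first-order Taylor (mean-value) argument in the dispersion,
\[
g_\epsilon - g_0 = -\int_0^1 \frac{\beta\big(\epsilon(p)-(p^2-\mu_0)\big)}{4\sinh^2\!\big((t(p^2-\mu_0)+(1-t)\epsilon(p))/2\big)}\,\de t,
\]
together with $|\epsilon(p)-(p^2-\mu_0)|\lesssim \varrho_0\hat v_N\ast\hat f_N(p)\lesssim N_0/(NL^2)$ and $\sinh(x/2)\ge x/2$, which yields $\sum_{p\in P_{\mathrm{B}}}|g_\epsilon-g_0|\lesssim N_0 L^2/(N\beta)$. This is in fact the origin of the $N_0 L^2/(N\beta)$ term in the statement; it does not arise from crudely bounding $N_0^2/N^2\le N_0/N$ in a $v_p^2$-controlled term as you suggest. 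Once you insert this missing estimate, the remaining two $v_p^2$-terms give $N_0^2 L^2/(N^2\beta)$ and $N_0^2/N^2$ (no residual power of $N^{\delta_{\mathrm{B}}}$ survives because $\sum_{p\in\Lambda_+^*}p^{-4}$ and $\sum_{p\in\Lambda_+^*}p^{-6}$ converge), and the proof goes through exactly as in the paper.
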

\begin{proof}
	The expected number of particles in the state $\Gamma_0$ equals $N +\Delta N$, that is,
	\begin{equation}
		N + \Delta N= \int_{\mathbb{C}} \tr[\mathcal{N} \ |z \rangle \langle z | \otimes G_{\mathrm{B}}(z) \otimes G_{\mathrm{free}}] \zeta(z) \de z = \int_{\mathbb{C}} |z|^2 \zeta(z) \de z + \sum_{p \in \Lambda_+^*} \gamma(p),
		\label{eq:Comp8}
	\end{equation}
	where we used Lemma~\ref{lem:1pdm} to obtain the second identity. We apply Lemma~\ref{lem:1pdm} and the identity $u_p^2 - v_p^2 = 1$ to see that the part of the sum on the r.h.s. that runs over $P_{\mathrm{B}}$ can be written as
	\begin{equation}
		\sum_{p \in P_{\mathrm{B}}} \gamma(p) = \sum_{p \in P_{\mathrm{B}}} \frac{1}{\exp(\beta \epsilon(p))-1} + 2 \sum_{p \in P_{\mathrm{B}}} \frac{1}{\exp(\beta \epsilon(p))-1} v_p^2 + \sum_{p \in P_{\mathrm{B}}} v_p^2
		\label{eq:Comp8a}
	\end{equation}
	with $\epsilon(p)$ in \eqref{eq:BogoDispersion} and $v_p^2$ in \eqref{eq:Comp5}. We use $\epsilon(p) \geq p^2$, $(\exp(x)-1)^{-1} \leq 1/x$, and the bound for $v_p^2$ in \eqref{eq:boundvp} to see that the second term on the r.h.s. is bounded by
	\begin{equation}
		2 \sum_{p \in P_{\mathrm{B}}} \frac{1}{\exp(\beta \epsilon(p))-1} v_p^2 \lesssim \sum_{p \in \Lambda_+^*} \frac{1}{\beta p^2} \frac{N_0^2}{N^2 L^4 p^4} \lesssim \frac{N_0^2 L^2}{N^2 \beta}.
		\label{eq:Comp8b}
	\end{equation}
	Moreover, for the third term
	\begin{equation}
		\sum_{p \in P_{\mathrm{B}}} v_p^2 \lesssim \frac{N_0^2}{N^2}
		\label{eq:Comp8c}
	\end{equation}
	holds.
	
	We also claim that 
	\begin{equation}
		\left| \sum_{p \in P_{\mathrm{B}}} \left( \frac{1}{\exp(\beta \epsilon(p))-1} - \frac{1}{\exp(\beta(p^2-\mu_0))-1} \right) \right| \lesssim \frac{N_0 L^2}{N \beta}.
		\label{eq:Comp9}
	\end{equation}  
	To see this, we write
	\begin{equation}
		\frac{1}{\exp(\beta \epsilon(p))-1} = \frac{1}{\exp(\beta(p^2-\mu_0))-1} - \int_0^1 \frac{ \beta (p^2-\mu_0) \left( \sqrt{ 1 + \frac{2 \varrho_0 \hat{v}_N \ast f_{N}(p)}{p^2-\mu_0} } -1 \right)}{4 \sinh^2\left( ( t(p^2-\mu_0) + (1-t) \epsilon(p) ) /2 \right)} \de t.
		\label{eq:Comp10}
	\end{equation}
	Using $|\sqrt{1+x} - 1| \leq x/2$ for $x \geq 0$ and \eqref{eq:Comp7}, we check that
	\begin{equation}
		\left| \sqrt{ 1 + \frac{2 \varrho_0 \hat{v}_N \ast f_{N}(p)}{p^2-\mu_0} } -1 \right| \leq \frac{\varrho_0 \hat{v}_N \ast f_{N}(p)}{p^2-\mu_0} \leq \frac{2 \varrho_0 \Vert v \Vert_1}{N ( p^2 - \mu_0 )} \lesssim \frac{N_0}{N L^2 (p^2-\mu_0)}.
		\label{eq:Comp11}
	\end{equation} 
	In combination, \eqref{eq:Comp10}, \eqref{eq:Comp11}, $\epsilon(p) \geq p^2 - \mu_0$ and $\mu_0 < 0$ imply
	\begin{equation}
		\left| \sum_{p \in P_{\mathrm{B}}} \left( \frac{1}{\exp(\beta \epsilon(p))-1} - \frac{1}{\exp(\beta(p^2-\mu_0))-1} \right) \right| \lesssim \frac{N_0 \beta}{N L^2} \sum_{p \in  P_{\mathrm{B}}} \frac{1}{\sinh^2( \beta (p^2 - \mu_0)/2 )} \leq \frac{N_0}{N \beta L^2} \sum_{p \in \Lambda_+^*} \frac{1}{p^4},
		\label{eq:Comp12}
	\end{equation}
	which proves \eqref{eq:Comp9}. 
	
	When we put \eqref{eq:Comp8}--\eqref{eq:Comp8c} and \eqref{eq:Comp9} together and use \eqref{eq:defGam}, we find
	\begin{equation}
		N + \Delta N= \widetilde{N}_0 + \sum_{p \in P_{\mathrm{L}} \backslash \{ 0 \}} \frac{1}{\exp(\beta(p^2 - \mu_0))} + O\left( \frac{N_0 L^2}{N \beta} \right)
		\label{eq:Comp13}
	\end{equation}
	with $\widetilde{N}_0$ in \eqref{eq:WidetildeN0}. The second term on the r.h.s. can be written as
	\begin{equation}
		\sum_{p \in P_{\mathrm{L}} \backslash \{ 0 \}} \frac{1}{\exp(\beta(p^2 - \mu_0))-1} = \sum_{p \in \Lambda_+^*} \frac{1}{\exp(\beta(p^2 - \mu_0))-1} - \sum_{p \in P_{\mathrm{L}}^{\mathrm{c}}} \frac{1}{\exp(\beta(p^2 - \mu_0))-1},
		\label{eq:Comp14}
	\end{equation}
	where $P_{\mathrm{L}}^{\mathrm{c}}$ denotes the complement of the set $P_{\mathrm{L}}$. The first term on the r.h.s. equals $N - N_0$ with $N_0$ in \eqref{eq:DensityBEC} and the second term satisfies the bound
	\begin{equation}
		\sum_{p \in P_{\mathrm{L}}^{\mathrm{c}}} \frac{1}{\exp(\beta(p^2 - \mu_0))-1} \leq \left( \frac{1}{\exp\left(\beta N^{2/3 + 2 \delta_{\mathrm{L}}}\right)-1} \right)^{1/2} \sum_{p \in \Lambda_+^*} \left( \frac{1}{\exp(\beta(p^2 - \mu_0))-1} \right)^{1/2} \lesssim \exp( -c N^{2 \delta_{\mathrm{L}}} ) N
		\label{eq:Comp15}
	\end{equation}
	for some $c>0$. To obtain \eqref{eq:Comp15}, we used $\beta \gtrsim \beta_{\mathrm{c}}$ and the definition of $P_{\mathrm{L}}$ in \eqref{eq:PH}. When we put \eqref{eq:Comp13}--\eqref{eq:Comp15} together, and use  $|\Delta N| \lesssim N^{\delta_{\mathrm{H}}}$ as well as the assumption $\delta_{\mathrm{L}} > 0$, we obtain a proof of \eqref{eq:BoundWidetildeN0}.
\end{proof}

For the computation of the energy of $\Gamma_0$ we need to know its 2-particle density matrix (2-pdm), which is stated in the next lemma.

\begin{lemma}
\label{lem:1-pdmGamma0}
The 2-pdm of the state $\Gamma_0$ in \eqref{eq:modfreestate} reads
\begin{align} 
\tr_{\fock} \big[a_{u_1}^*a_{v_1}^* a_{u_2} &a_{v_2} \Gamma_0 \big] = \delta_{u_1,0}\delta_{v_1,0}\delta_{u_2,0}\delta_{v_2,0} \int_{\mathbb{C}} |z |^4 \zeta(z) \de z  \nonumber \\
&+\widetilde{N}_0 \left[ \gamma(v_1) \delta_{v_1,v_2}\delta_{u_1,0}\delta_{u_2,0} +\gamma(u_1)\delta_{u_1,u_2}\delta_{v_1,0}\delta_{v_2,0} +\gamma(u_1)\delta_{u_1,v_2}\delta_{v_1,0}\delta_{u_2,0}+\gamma(v_1)\delta_{v_1,u_2}\delta_{u_1,0}\delta_{v_2,0} \right] \nonumber \\
&+ \widetilde{N}_0 \left[ \alpha(u_2) \delta_{u_2,-v_2}\delta_{u_1,0}\delta_{v_1,0} + \overline{\alpha(u_1)} \delta_{u_1,-v_1}\delta_{u_2,0}\delta_{v_2,0}  \right] \nonumber \\
&+\gamma(u_1)\gamma(v_1)\delta_{u_1,u_2}\delta_{v_1,v_2}+\gamma(u_1)\gamma(v_1)\delta_{u_1,v_2}\delta_{v_1,u_2} +\overline{\alpha(u_1)}\alpha(u_2)\delta_{u_1,-v_1}\delta_{u_2,-v_2}
\label{lm:GBGF}
\end{align}
with $\widetilde{N}_0$ in \eqref{eq:WidetildeN0} and $\gamma$, $\alpha$ in \eqref{eq:defGam}.
\end{lemma}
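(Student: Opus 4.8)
The plan is to compute the 2-pdm of $\Gamma_0 = \int_{\mathbb{C}} | z \rangle \langle z | \otimes G_{\mathrm{B}}(z) \otimes G_{\mathrm{free}} \, \zeta(z) \de z$ mode by mode, exploiting the tensor product structure \eqref{eq:FockSpaceDecomposition}. First I would observe that in the expression $\tr_{\fock}[a_{u_1}^* a_{v_1}^* a_{u_2} a_{v_2} \Gamma_0]$, each creation/annihilation operator acts either on $\mathscr{F}_0$ (if the momentum is $0$) or on $\mathscr{F}_{\mathrm{B}} \otimes \mathscr{F}_{\mathrm{I}}$ (if the momentum lies in $P_{\mathrm{B}} \cup P_{\mathrm{I}}$); since the momenta in the definition of $B$ were restricted to $P_{\mathrm{L}}$, all relevant momenta lie in $\{0\} \cup P_{\mathrm{B}} \cup P_{\mathrm{I}}$, and $G_{\mathrm{free}}$ only charges $P_{\mathrm{I}}$. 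One then splits the four momenta according to how many equal $0$. Because the condensate part $|z\rangle\langle z|$ and the quasi-free part $G_{\mathrm{B}}(z) \otimes G_{\mathrm{free}}$ factorize in the tensor product, the trace factors into a product of a $\mathscr{F}_0$-expectation of the $p=0$ operators against $|z\rangle\langle z|$ and a $\mathscr{F}_{\mathrm{B}} \otimes \mathscr{F}_{\mathrm{I}}$-expectation of the nonzero-momentum operators against $G_{\mathrm{B}}(z) \otimes G_{\mathrm{free}}$, and finally one integrates in $z$ with weight $\zeta(z)$.

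The key computational inputs are: $\langle z | a_0 | z \rangle = z$, $\langle z | a_0^* a_0 | z \rangle = |z|^2$, $\langle z | a_0^* a_0^* a_0 a_0 | z \rangle = |z|^4$ (coherent states are eigenstates of $a_0$), together with the 1-pdm and pairing function of $G_{\mathrm{B}}(z) \otimes G_{\mathrm{free}}$ from Lemma~\ref{lem:1pdm}, and the fact that $G_{\mathrm{B}}(z) \otimes G_{\mathrm{free}}$ is quasi-free, so that higher correlation functions are obtained by Wick's theorem. I would enumerate the cases by the number $k \in \{0,1,2,3,4\}$ of the four momenta $u_1,v_1,u_2,v_2$ that vanish. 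The case $k=4$ gives the first line of \eqref{lm:GBGF} after $\int |z|^4 \zeta(z)\de z$. The cases $k=3$ and $k=1$ vanish because the coherent-state factor contributes an odd number of $a_0, a_0^*$ (expectation still nonzero, $=z$ or $\bar z$), but the quasi-free factor then has an odd number of nonzero-momentum operators, whose expectation against a quasi-free \emph{even} state vanishes — so these do not contribute. The case $k=2$ produces the second and third lines: two of the operators are $a_0$/$a_0^*$ giving $\langle z| a_0^{\#} a_0^{\#} |z\rangle$, which is $|z|^2$ (for $a_0^* a_0$ orderings) or $z^2$/$\bar z^2$ (for $a_0 a_0$/$a_0^* a_0^*$), and the remaining two nonzero-momentum operators give a $\gamma$ (for $a_q^* a_p$) or an $\alpha$-type term (for $a_p a_q$, producing the $\alpha(u_2)\delta_{u_2,-v_2}$ and $\overline{\alpha(u_1)}\delta_{u_1,-v_1}$ terms, where the phases $(z/|z|)^2$ from Lemma~\ref{lem:1pdm} combine with $\langle z|a_0 a_0|z\rangle = z^2$ to give $|z|^2$, explaining why only $\widetilde N_0 = \int |z|^2\zeta(z)\de z$ appears and the phase cancels); integrating over $z$ turns the $|z|^2$ prefactor into $\widetilde N_0$. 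The case $k=0$ gives the last line via Wick's theorem applied to $\tr_{\mathscr{F}_{\mathrm{B}}\otimes\mathscr{F}_{\mathrm{I}}}[a_{u_1}^* a_{v_1}^* a_{u_2} a_{v_2} G_{\mathrm{B}}(z)\otimes G_{\mathrm{free}}]$, which decomposes into the three pairings: $\gamma(u_1)\gamma(v_1)\delta_{u_1,u_2}\delta_{v_1,v_2}$, $\gamma(u_1)\gamma(v_1)\delta_{u_1,v_2}\delta_{v_1,u_2}$ (the two "normal" contractions), and $\overline{\alpha(u_1)}\alpha(u_2)\delta_{u_1,-v_1}\delta_{u_2,-v_2}$ (the "anomalous" contraction, where again the $z$-phases from the two pairing functions cancel and the $z$-integral is trivial since the integrand is $z$-independent).

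Since all the hard inputs — the Bogoliubov diagonalization and the formulas for $\gamma, \alpha$ — are already established in Lemmas~\ref{lem:DiagBogoH} and \ref{lem:1pdm}, the proof is essentially bookkeeping. The main thing to get right is the careful case analysis and, in particular, tracking which orderings of the $p=0$ operators appear so that one correctly identifies when $\langle z|\cdot|z\rangle$ produces $|z|^2$ versus $z^2$ or $\bar z^2$, and then checking that every phase $z/|z|$ coming from a pairing function $\alpha$ or from a coherent-state expectation cancels so that the final answer depends only on $|z|^2$ and hence, after the $z$-integration, only on $\widetilde N_0$ and $\int|z|^4\zeta(z)\de z$. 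I expect the only genuine subtlety — hardly an "obstacle" — to be verifying the sign/phase structure in the $k=2$ and $k=0$ anomalous terms and making sure no term involving $\alpha$ without a compensating phase survives; everything else follows from the canonical commutation relations, the defining property $a_0|z\rangle = z|z\rangle$, quasi-freeness of $G_{\mathrm{B}}(z)\otimes G_{\mathrm{free}}$, and Lemma~\ref{lem:1pdm}.
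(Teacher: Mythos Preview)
Your proposal is correct and follows essentially the same route as the paper. The only cosmetic difference is that the paper conjugates by the Weyl operator $\mathcal{W}_z$ to replace each $a_0,a_0^*$ by $z,\bar z$ in one stroke (yielding the operator $A_{u_1,v_1,u_2,v_2}$), whereas you factor the trace over $\mathscr{F}_0$ using the eigenvalue property $a_0|z\rangle=z|z\rangle$ and then split by the number of zero momenta; both lead to the same list of terms, and the remaining computation via Lemma~\ref{lem:1pdm} and Wick's theorem is identical.
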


\begin{proof}
We denote by $\mathcal{W}_z=\exp( z a_0^* - \overline{z} a_0 )$ the Weyl transformation that implements the condensate. Using $\mathcal{W}_z^* a_0 \mathcal{W}_z = a_0 + z$, we find
\begin{equation} 
\tr_{\fock} [a_{u_1}^* a_{v_1}^* a_{u_2} a_{v_2}\Gamma_0 ] = \int_{\mathbb{C}} \tr_{\mathscr{F}_{\mathrm{B}} \otimes \mathscr{F}_{\mathrm{I}}}[A_{u_1, v_1, u_2,v_2} G_{\mathrm{B}}(z) \otimes G_{\mathrm{free}}] \zeta(z) \de z
\label{eq:gbgf}
\end{equation}
with the operator 
\begin{align}
A_{u_1, v_1, u_2,v_2}=&|z|^4\delta_{u_1,0}\delta_{v_1,0}\delta_{u_2,0}\delta_{v_2,0} \nonumber \\
&+|z|^2\Big(a_{v_1}^*a_{v_2}\delta_{u_1,0}\delta_{u_2,0}+a_{u_1}^*a_{u_2}\delta_{v_1,0}\delta_{v_2,0}+a_{u_1}^*a_{v_2}\delta_{v_1,0}\delta_{u_2,0}+a_{v_1}^*a_{u_2}\delta_{u_1,0}\delta_{v_2,0}\Big) \nonumber \\
&+\overline{z}^2a_{u_2} a_{v_2}\delta_{u_1,0}\delta_{v_1,0}+z^2a_{u_1}^* a_{v_1}^*\delta_{u_2,0}\delta_{v_2,0}+a_{u_1}^* a_{v_1}^* a_{u_2} a_{v_2}. 
\label{eq:Comp16}
\end{align}
An application of Lemma~\ref{lem:1pdm} allows us to compute the terms proportional to $|z|^2$, $z^2$ and $\overline{z}^2$. It remains to compute the expectation of the last term in \eqref{eq:Comp16}. Since $G_{\mathrm{B}}(z) \otimes G_{\mathrm{free}}$ is a quasi-free state we can apply the Wick theorem and find
\begin{align}
\tr_{\mathscr{F}_{\mathrm{B}} \otimes \mathscr{F}_{\mathrm{I}}}\big[a_{u_1}^* a_{v_1}^* a_{u_2} a_{v_2} G_{\mathrm{B}}(z) \otimes G_{\mathrm{free}} \big] =& \tr_{\mathscr{F}_{\mathrm{B}} \otimes \mathscr{F}_{\mathrm{I}}}\big[a_{u_1}^* a_{u_2} G_{\mathrm{B}}(z) \otimes G_{\mathrm{free}}\big] \tr_{\mathscr{F}_{\mathrm{B}} \otimes \mathscr{F}_{\mathrm{I}}}\big[ a_{v_1}^* a_{v_2} G_{\mathrm{B}}(z) \otimes G_{\mathrm{free}} \big] \nonumber \\
&+\tr_{\mathscr{F}_{\mathrm{B}} \otimes \mathscr{F}_{\mathrm{I}}}\big[a_{u_1}^* a_{v_2} G_{\mathrm{B}}(z) \otimes G_{\mathrm{free}} \big]\tr_{\mathscr{F}_{\mathrm{B}} \otimes \mathscr{F}_{\mathrm{I}}}\big[ a_{v_1}^* a_{u_2} G_{\mathrm{B}}(z) \otimes G_{\mathrm{free}} \big] \nonumber \\
&+\tr_{\mathscr{F}_{\mathrm{B}}}\big[a_{u_1}^* a_{v_1}^* G_{\mathrm{B}}(z) \big]\tr_{\mathscr{F}_{\mathrm{B}}}\big[ a_{u_2} a_{v_2} G_{\mathrm{B}}(z) \big]. \label{eq:QF}
\end{align}
The claimed identity in \eqref{lm:GBGF} follows when we apply Lemma~\ref{lem:1pdm} to compute the expectations in \eqref{eq:QF}.
\end{proof}

Our last preparatory lemma contains bounds for the 2, 3 and 4-pdms of $\Gamma_0$.

\begin{lemma}\label{lm:Tr}
The state $\Gamma_0$ in \eqref{eq:modfreestate} satisfies
\begin{align} 
\sum_{\substack{u_1,v_1\in P_{\mathrm{L}}\\u_2,v_2\in P_{\mathrm{L}}}}\big| \tr_{\mathscr{F}_{\mathrm{B}} \otimes \mathscr{F}_{\mathrm{I}}} \big[a_{u_1}^*a_{v_1}^* a_{u_2} a_{v_2}\Gamma_0\big]\big| &\lesssim N^2, \nonumber \\
\sum_{\substack{u_1,u_2,u_3\in P_{\mathrm{L}}\\v_1,v_2,v_3\in P_{\mathrm{L}}}} \big| \tr_{\mathscr{F}_{\mathrm{B}} \otimes \mathscr{F}_{\mathrm{I}}} [a^*_{v_1}a^*_{v_2}a^*_{v_3}  a_{u_1}a_{u_2}   a_{u_3} \Gamma_0] \big| &\lesssim N^3, \quad \text{ and } \nonumber \\
\sum_{\substack{u_1,u_2,u_3,u_4\in P_{\mathrm{L}} \\ v_1,v_2,v_3,v_4\in P_{\mathrm{L}}}} \big| \tr_{\mathscr{F}_{\mathrm{B}} \otimes \mathscr{F}_{\mathrm{I}}} [a^*_{v_1}a^*_{v_2}a^*_{v_3} a^*_{v_4}  a_{u_1}a_{u_2}   a_{u_3}a_{u_4}\Gamma_0]\big| &\lesssim N^4.
\end{align}
\end{lemma}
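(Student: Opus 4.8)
The plan is to prove all three bounds by the same strategy: expand the pdm of $\Gamma_0$ into a sum of products of the functions $\gamma$ and $\alpha$ (and the ``condensate moments'' $\int |z|^{2k}\zeta(z)\de z$) via the Wick theorem, exactly as was done for the 2-pdm in Lemma~\ref{lem:1-pdmGamma0}, and then bound each resulting term using the $L^1$-type estimates for $\gamma$ and $\alpha$ from Lemma~\ref{lem:boundsgamma} together with the moment bounds for $\zeta$. For the 2-pdm, the identity \eqref{lm:GBGF} already gives the full decomposition: the terms are the $|z|^4$-moment (bounded by $\lesssim N^2$ since $\widetilde N_0\lesssim N$ and the fourth moment of $\zeta$ is $\lesssim N_0^2$, which one can extract from the explicit Gaussian-type integral or from Proposition~\ref{prop:FreeEnergyBECb}/Lemma~\ref{lem:1-pdmGamma0}), the ``mixed'' terms of the form $\widetilde N_0\sum_p\gamma(p)\lesssim N\cdot N=N^2$ and $\widetilde N_0\sum_p|\alpha(p)|\lesssim N\cdot N=N^2$ using \eqref{eq:Comp4b} with $n=0$ (note $\sum\gamma\lesssim L^3\beta^{-3/2}+1\sim N$ and $\sum|\alpha|\lesssim L^2\beta^{-1}+N^{\delta_{\mathrm B}}\sim N^{2/3}+N^{\delta_{\mathrm B}}\lesssim N$), and the fully ``thermal'' terms $\big(\sum_p\gamma(p)\big)^2\lesssim N^2$ and $\big(\sum_p|\alpha(p)|\big)^2\lesssim N^2$. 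Summing over the finitely many Kronecker-delta patterns gives the first inequality.

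\medskip

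For the 3-pdm and 4-pdm I would \emph{not} write out the analogue of \eqref{lm:GBGF} in full; instead I would argue structurally. Write $a_p=a_p^{>}+ \delta_{p,0}a_0$ where $a_0$ acts only on $\mathscr F_0$ and $a_p^{>}$ on the rest, and use that $\Gamma_0$ is a tensor product $\big(\int|z\rangle\langle z|\zeta(z)\de z\big)\otimes\big(G_{\mathrm B}(z)\otimes G_{\mathrm{free}}\big)$. Then $\tr[a^*_{v_1}\cdots a^*_{v_k}a_{u_1}\cdots a_{u_k}\Gamma_0]$ factorizes: the $p=0$ legs contribute a moment $\int |z|^{2j}\zeta(z)\de z$ with $j$ equal to the number of zero momenta among the $v$'s (which must equal that among the $u$'s), and the remaining legs, all carrying momenta in $P_{\mathrm L}\setminus\{0\}$, are evaluated in the quasi-free state $G_{\mathrm B}(z)\otimes G_{\mathrm{free}}$ via the Wick theorem, producing a sum of products of $\gamma$ and $\alpha$ contractions. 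The crucial input is the moment bound $\int |z|^{2j}\zeta(z)\de z\lesssim_j N_0^{\,j}\lesssim N^{\,j}$, which follows because $\zeta$ is (up to normalization) $\exp(-\beta(4\pi\mathfrak a_N L^{-3}|z|^4-\widetilde\mu|z|^2))$ with $\int|z|^2\zeta=\widetilde N_0\lesssim N$; one can either cite/adapt a lemma from Appendix~\ref{app:CondensateFreeEnergy} on the moments of $g$/$\zeta$, or observe directly that in the interacting regime $|z|^2$ concentrates around $\widetilde N_0$ with sub-Gaussian tails, while in the free regime $\zeta$ is essentially exponential and all moments are $O(\widetilde N_0^{\,j})$.

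\medskip

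The combinatorial bookkeeping then runs as follows. In each Wick pairing of the $2(k-j)$ nonzero legs, every $\gamma$-contraction ties a $v$ to a $u$ and carries a free momentum summed with weight $\sum_{p}\gamma(p)\lesssim N$; every $\alpha$-contraction ties two $v$'s (or two $u$'s) with opposite momenta and, once summed, gives $\sum_p|\alpha(p)|\lesssim N$ (here I only need the crude bound $\sum_p|\alpha(p)|\lesssim N$, which follows from \eqref{eq:Comp4b} since $\delta_{\mathrm B}<1/3$). Thus a pairing of $k-j$ pairs contributes at most $\lesssim N^{k-j}$, and multiplying by the condensate moment $\lesssim N^{j}$ gives $\lesssim N^{k}$ for each term; since for fixed $k$ the number of choices of $j$, of which legs are the zero ones, and of Wick pairings is a finite ($k$-dependent) constant, we obtain $\lesssim N^3$ for $k=3$ and $\lesssim N^4$ for $k=4$.

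\medskip

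I expect the only genuine obstacle to be the moment estimate $\int|z|^{2j}\zeta(z)\de z\lesssim N^{j}$ uniformly across the whole range $0\le\widetilde N_0\le CN$ and for $j$ up to $4$: one must handle both the interacting regime (where concentration makes this easy) and the nearly-free regime (where $\zeta$ has heavy exponential tails but $\widetilde\mu<0$ still controls the moments) — this is precisely where a lemma from Appendix~\ref{app:CondensateFreeEnergy} about the distribution $g$ (equivalently $\zeta$) is needed. Everything else is routine once Lemma~\ref{lem:boundsgamma} and the Wick expansion are in place; the momentum restrictions $u_i,v_i\in P_{\mathrm L}$ only help, since they make all the sums of $\gamma$ and $\alpha$ over a \emph{subset} of $\Lambda^*_+$, so the bounds of Lemma~\ref{lem:boundsgamma} apply verbatim.
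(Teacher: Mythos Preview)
Your approach is correct and is exactly what the paper does: compute the $k$-pdm of $\Gamma_0$ via the Weyl transformation plus Wick's theorem (as in the proof of Lemma~\ref{lem:1-pdmGamma0}), then bound each resulting term using Lemma~\ref{lem:boundsgamma} for $\sum\gamma,\sum|\alpha|\lesssim N$ and the moment bound $\int|z|^{2k}\zeta\,\de z\lesssim_k N^k$ from Lemma~\ref{lem:momentBoundsZeta} in Appendix~\ref{app:CondensateFreeEnergy}.

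One small correction is worth flagging: $\Gamma_0$ is \emph{not} a tensor product of a condensate state with $G_{\mathrm B}(z)\otimes G_{\mathrm{free}}$, because $G_{\mathrm B}(z)$ depends on $z$ and the pairing carries a phase $(z/|z|)^{\pm 2}$ (Lemma~\ref{lem:1pdm}). Consequently terms with $j\neq j'$ zero-momentum legs on the creation and annihilation sides do \emph{not} vanish --- see the third line of \eqref{lm:GBGF} for an explicit example in the 2-pdm. However, the condensate factor $\bar z^{\,j}z^{\,j'}$ and the $\alpha$-phases always combine to $|z|^{j+j'}$ (with $j+j'$ even), so after the $z$-integration each term is still bounded by $N^{(j+j')/2}\cdot N^{\,k-(j+j')/2}=N^k$, and your counting goes through unchanged.
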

\begin{proof}
The first bound is a direct consequence of Lemmas~\ref{lem:boundsgamma}, \ref{lem:1-pdmGamma0} and \ref{lem:momentBoundsZeta}. To prove the second and the third bound we first need to compute the 3-pdm and the 4-pdm of $\Gamma_0$ as in the proof of Lemma~\ref{lem:1-pdmGamma0}. Afterwards, applications of the same lemmas prove the claim. Carrying out these steps is straightforward but a little lengthy. We therefore leave the details to the reader.
\end{proof}

\section{Bound for the energy}
\label{sec:energy}

We compute now the expectation of the Hamiltonian $\cH_N$, defined in \eqref{eq:Hamiltonian} and \eqref{eq:Hamiltonian(n)}, on our trial state. The main result of this section is Proposition \ref{trHG} below. This, together with Proposition \ref{prop:Entropy} for the entropy contribution (in Section \ref{entr}) will be the main ingredient to prove Theorem \ref{thm:Main}.

\begin{proposition}\label{trHG}
 Assume that $v : [0,\infty) \to [0,\infty]$ is nonnegative, compactly supported, and satisfies $v(| \cdot |) \in L^3(\Lambda, \de x)$. Let $\Gamma$ be defined in \eqref{trialstate} and $\beta=\kappa\beta_\mathrm{c}$, with $\kappa\in(0,\infty)$. Then we have
 \begin{equation} \label{eq:FH}
\tr\big[\cH_N\Gamma\big]-E_{\cH_N} \lesssim L^{-2}\cE_{\cH_N}\\
\end{equation}
where
\begin{align}
E_{\cH_N}=&\tr_{\Fock_{\mathrm{I}}} \Big[\Big(\sum_{p\in P_\mathrm{I}}p^2a^*_p a_p\Big)  G_{\mathrm{free}}\Big] \nonumber \\
&+ \tr_{\Fock_{\mathrm{B}}} \,\Big[\sum_{p\in P_\mathrm{B}} \Big( p^2a^*_p a_p+\frac{N_0(\beta,N,L)}{2L^3}(\hat v_N\ast\hat f_N)(p)\Big(2\,a^*_{p}a_{p}+(z^2/|z|^2)a^*_{p}a^*_{-p}+(\bar z^2/|z|^2)a_{p}a_{-p}\Big) \Big)\,G_{\mathrm{B}}(z)\Big] \nonumber \\
&+4\pi\frak{a}_NL^{-3}\biggl[\int_{\mathbb{C}} |z|^4 \zeta(z) \de z+2\widetilde N_0\sum_{\substack{ u\in P_{\mathrm{L}}\backslash\{0\}}}\gamma(u)+ 2\widetilde N_0\sum_{q\in P_\mathrm{I}}\gamma(q) +2\sum_{\substack{ u,v\in P_{\mathrm{L}}\backslash\{0\}}}\gamma(v)\gamma(u)\biggl] \label{eq:EH}
\end{align}
and
\begin{equation}\label{eq:FEr}
\cE_{\cH_N}=N^{1-\delta_{\mathrm{H}}}+N^{\delta_{\mathrm{H}}+2\delta_{\mathrm{B}}}+N^{-1/3+\delta_{\mathrm{H}}+2\delta_L}+N^{1/3+\delta_{\mathrm{B}}}.
\end{equation}
The parameter $\widetilde N_0$ has been introduced in \eqref{eq:WidetildeN0}, while $N_0(\beta,N,L)$ has been defined in \eqref{eq:DensityBEC}.
\end{proposition}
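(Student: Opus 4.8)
The plan is to compute $\tr[\mathcal{H}_N\Gamma]$ by first relating it to an expectation in the uncorrelated state $\Gamma_0$, and then evaluating the latter with the help of the pdm formulas in Lemmas~\ref{lem:1pdm}, \ref{lem:1-pdmGamma0} and~\ref{lm:Tr}. Since $\phi_\alpha = (1+B)\psi_\alpha/\|(1+B)\psi_\alpha\|$ and $\Gamma_0 = \sum_\alpha \lambda_\alpha|\psi_\alpha\rangle\langle\psi_\alpha|$, one has
\begin{equation}
\tr[\mathcal{H}_N\Gamma] = \sum_\alpha \lambda_\alpha \frac{\langle (1+B)\psi_\alpha, \mathcal{H}_N (1+B)\psi_\alpha\rangle}{\|(1+B)\psi_\alpha\|^2},
\end{equation}
so the first task is to expand $(1+B)^*\mathcal{H}_N(1+B) = \mathcal{H}_N + \mathcal{H}_N B + B^*\mathcal{H}_N + B^*\mathcal{H}_N B$ and to control the normalization denominator, showing $\|(1+B)\psi_\alpha\|^2 = 1 + O(\text{small})$ after taking the $\Gamma_0$-average; here the bounds on $\eta_p$ from Appendix~\ref{app:ScatteringEquation} (essentially $\|\eta\|_\infty$, $\sum|\eta_p|^2$, etc.) together with Lemma~\ref{lm:Tr} are what make the correction terms negligible compared to $L^{-2}\mathcal{E}_{\mathcal{H}_N}$.

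The core of the computation is the commutator expansion. I would split $\mathcal{H}_N = \sum_p p^2 a_p^*a_p + \mathcal{V}_N$ with $\mathcal{V}_N = \frac{1}{2L^3}\sum \hat v_N(p) a_{k+p}^*a_{\ell-p}^*a_\ell a_k$, and compute $[\,\cdot\,, B]$ term by term. The kinetic part produces (after using the Neumann scattering equation \eqref{eq:NP} in the form that relates $p^2\eta_p$ to a convolution of $\hat v_N$ and $\hat f_N$) precisely the pieces that combine with $\mathcal{V}_N$ to renormalize the potential: the key algebraic identity is that $2p^2\eta_p + \widehat{v_N f_N}(p)$ (suitably cut off) is essentially $\widehat{v_N f_N}(p)$ evaluated near $p=0$, whose value is $8\pi\mathfrak{a}_N$ up to errors, by the definition of the scattering length \eqref{eq:sl}. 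Concretely, the diagonal ``$8\pi\mathfrak{a}_N L^{-3}$'' terms in \eqref{eq:EH} — the $|z|^4$, the $\widetilde N_0\gamma$, and the $\gamma\gamma$ contractions — arise from the leading Wick contractions of $\mathcal{V}_N + (\text{kinetic}\circ B\text{ terms})$ evaluated against the 2-pdm in Lemma~\ref{lem:1-pdmGamma0}, where $\hat v_N(0) \to 8\pi\mathfrak{a}_N$ after renormalization; meanwhile the quadratic Bogoliubov expression (the second line of \eqref{eq:EH}) comes from the $a^*a^*$, $aa$, and off-diagonal $a^*a$ contractions involving the pairing function $\alpha(p)$, which are only non-negligible for $p\in P_{\mathrm{B}}$ and reproduce exactly the Hamiltonian $\mathcal{H}^{\mathrm{B}}$ of \eqref{eq:HB} (hence the appearance of $G_{\mathrm{B}}(z)$ and the phase $z^2/|z|^2$ in \eqref{eq:EH}). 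The $P_{\mathrm{I}}$-mode kinetic term $\sum_{p\in P_{\mathrm{I}}} p^2 a_p^*a_p$ against $G_{\mathrm{free}}$ is simply the kinetic energy not touched by $B$, since $B$ creates momenta in $P_{\mathrm{H}}$ and annihilates in $P_{\mathrm{L}}$.

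The bulk of the work, and the main obstacle, is the careful bookkeeping of all the \emph{error} terms that must be shown to be $\lesssim L^{-2}\mathcal{E}_{\mathcal{H}_N}$ with $\mathcal{E}_{\mathcal{H}_N} = N^{1-\delta_{\mathrm{H}}} + N^{\delta_{\mathrm{H}}+2\delta_{\mathrm{B}}} + N^{-1/3+\delta_{\mathrm{H}}+2\delta_{\mathrm{L}}} + N^{1/3+\delta_{\mathrm{B}}}$. These come from: (i) the momentum cut-offs in $B$ (replacing $\eta_p$ by its restriction to $P_{\mathrm{H}}$, and restricting $u,v$ to $P_{\mathrm{L}}$ — controlled by the decay of $\gamma,\alpha$ from Lemma~\ref{lem:boundsgamma} and tail bounds on $\eta$), (ii) the $B^*\mathcal{V}_N B$ and $B^*(\text{kin})B$ terms (genuinely higher order, bounded using Lemma~\ref{lm:Tr} and operator bounds on $B$ of the type $\|B\psi\| \lesssim \|\eta\|\cdot\|\mathcal{N}^{\ldots}\psi\|$), (iii) replacing $\widehat{v_N f_N}(p)$ by $8\pi\mathfrak{a}_N$ uniformly over the relevant momenta, and (iv) the discrepancy between the full scattering solution on $\mathbb{R}^3$ and the finite-volume Neumann problem \eqref{eq:NP}. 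I would organize this as a sequence of lemmas, one per ``block'' of terms, keeping track of powers of $N$, $L$, $\beta\sim\beta_{\mathrm{c}}\sim L^2 N^{-2/3}$, $N_0\lesssim N$, and the cut-off exponents $\delta_{\mathrm{B}},\delta_{\mathrm{L}},\delta_{\mathrm{H}}$, and at the end collect the worst contributions to match \eqref{eq:FEr}. The delicate point throughout is that several individual terms are of size $\sim L^{-2}N$ (same order as the claimed leading contributions) and only cancel against each other after invoking the scattering equation — so the identity \eqref{eq:NP} must be used to extract those cancellations \emph{before} estimating, not after.
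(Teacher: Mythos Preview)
Your overall strategy matches the paper's: split $\mathcal{H}_N=\cK+\cV_N$, expand $(1+B)^*\mathcal{H}_N(1+B)$, use the scattering equation \eqref{npscatt} to renormalize $\hat v_N$ to $8\pi\mathfrak{a}_N$, and estimate remainders with Lemmas~\ref{lem:boundsgamma}, \ref{lem:1-pdmGamma0}, \ref{lm:Tr} and Appendix~\ref{app:ScatteringEquation}. The identification of which Wick contractions feed which line of \eqref{eq:EH} is also correct.

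There is, however, a genuine gap in your treatment of the normalization. You propose to show $\|(1+B)\psi_\alpha\|^2=1+O(\text{small})$ ``after taking the $\Gamma_0$-average''. For the potential energy and for the small-momentum kinetic pieces this is fine (indeed the paper simply uses $\|(1+B)\psi_\alpha\|^2\geq 1$ for an upper bound, since all those operators are nonnegative). But for $\cK_{\mathrm I}=\sum_{p\in P_{\mathrm I}}p^2a_p^*a_p$ your approach fails: the leading term $\tr[\cK_{\mathrm I}\Gamma_0]$ is of order $L^{-2}N^{5/3}$, while $\langle\psi_\alpha,B^*B\psi_\alpha\rangle$ averages to order $N^{-1+\delta_{\mathrm H}}$, so the product is $\sim L^{-2}N^{2/3+\delta_{\mathrm H}}$, far larger than any term in $\cE_{\cH_N}$. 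Your claim that the $P_{\mathrm I}$ kinetic energy is ``not touched by $B$'' is also incorrect: $B$ annihilates momenta in $P_{\mathrm L}\supset P_{\mathrm I}$, so $[\cK_{\mathrm I},B]\neq 0$ (only the creation operators of $B$ lie outside $P_{\mathrm I}$).

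The paper resolves this by a cancellation between numerator and denominator that you do not mention. One writes $\cK_{\mathrm I}(1+B)=(1+B)\cK_{\mathrm I}+[\cK_{\mathrm I},B]$ and uses that each $\psi_\alpha=\xi_{\alpha_1}\otimes\nu_{\alpha_2}$ is an \emph{eigenvector} of $\cK_{\mathrm I}$, say $\cK_{\mathrm I}\psi_\alpha=E_{\alpha_2}\psi_\alpha$. Then
\[
\frac{\langle(1+B)\psi_\alpha,\cK_{\mathrm I}(1+B)\psi_\alpha\rangle}{\|(1+B)\psi_\alpha\|^2}
=E_{\alpha_2}+\frac{\langle(1+B)\psi_\alpha,[\cK_{\mathrm I},B]\psi_\alpha\rangle}{\|(1+B)\psi_\alpha\|^2},
\]
so the large kinetic energy passes through exactly and only the commutator must be estimated. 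After contracting the high momenta, that commutator term carries a factor $\sum_{p\in P_{\mathrm H}}|\eta_p\eta_{p+u}|\,|u|^2$ with $|u|\lesssim N^{1/3+\delta_{\mathrm L}}/L$, giving the contribution $N^{-1/3+\delta_{\mathrm H}+2\delta_{\mathrm L}}$ in \eqref{eq:FEr}. Without this eigenvector trick the required error bound cannot be reached; you should split $\cK=\cK_{\mathrm B}+\cK_{\mathrm I}+\cK_{>}$ and treat the three pieces separately, using the eigenvector cancellation only for $\cK_{\mathrm I}$.
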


\begin{remark}
	The $z$-dependence of the Bogoliubov Hamiltonian and of $G_{\mathrm{B}}(z)$ in the second line of \eqref{eq:EH} cancel out exactly. This explains why this term is not integrated over $z$. 
\end{remark}

To prove Proposition \ref{trHG}, we split the Hamiltonian in two contributions: we define 
\begin{equation}\label{eq:excitHam} 
\cK=\sum_{p\in\L^*_{+}}p^2  a^*_p a_p\qquad\text{and}\qquad\cV_N=  \frac{1}{2L^3} \sum_{p,u,v \in \Lambda^*} \hat{v}_N(p) \,a_{u+p}^* a_{v-p}^* a_u a_v
\end{equation}
so that $\cH_N=\cK+\cV_N$. We have therefore
\begin{equation}
	\tr[\mathcal{H}_N \Gamma] = \sum_{\alpha} \lambda_{\alpha} \frac{\langle (1+B) \psi_{\alpha}, ( \cK + \cV_N ) (1+B) \psi_{\alpha} \rangle }{\langle (1+B) \psi_{\alpha}, (1+B) \psi_{\alpha} \rangle} \coloneqq \cG_\cK + \cG_\cV. 
	\label{eq:GKGV}
\end{equation}

We will prove Proposition \ref{trHG} in Section \ref{prop}, using the results of Lemma \ref{lm:GV} below for the analysis of $\cG_\cV$ and Lemma \ref{lm:GK}  for the analysis of $\cG_\cK$.

\subsection{Analysis of $\cG_\cV$}\label{GV}

In this section we prove an upper bound for $\cG_\cV$, as stated in the following lemma.

\begin{lemma}\label{lm:GV}
Under the assumptions of Proposition \ref{trHG}, we have
\begin{equation}\label{eq:gVV}
 \cG_V-E_{\cV_N} \lesssim L^{-2} \big(N^{1-\delta_{\mathrm{H}}}+N^{1/3}+N^{\delta_{\mathrm{H}}}\big),
\end{equation}
where
\begin{align}
  E_{\cV_N}=&\frac{4\pi\frak{a}_N}{L^3}\int_{\mathbb{C}} |z|^4 \zeta(z) \de z+\frac{8\pi\frak{a}_N\widetilde N_0}{L^3}\sum_{\substack{ u\in P_\mathrm{L}\backslash\{0\}}}\gamma(u) \nonumber \\
  &+\frac{\widetilde N_0}{2L^6}\sum_{\substack{p,q\in \L^*}} \hat{v}_N(p-q)\hat f_N(p)\Big[2\gamma(q)+ \alpha(q)+\overline{\alpha(q)}\Big]+\frac{8\pi\frak{a}_N}{L^3}\sum_{\substack{ u,v\in P_\mathrm{L}\backslash\{0\}}}\gamma(v)\gamma(u) \label{eq:mainGV} \\
  &+\frac{1}{2L^6}\sum_{\substack{p_1\in P_\mathrm{H}\\u_1,v_1,u_2,v_2\in P_\mathrm{L}}}\eta_{p_1}\Big[\hat v_N(p_1+u_1-u_2)+\frac{1}{L^3} \sum_{p_2\in P_\mathrm{H}}\hat v_N(p_1+p_2+u_1-u_2)\eta_{p_2}\Big]\tr[a^*_{v_1}a^*_{u_1}a_{v_2}a_{u_2}\Gamma_0]. \nonumber
\end{align}
The functions $\gamma(p)$ and $\alpha(p)$ are defined in \eqref{eq:defGam}.
\end{lemma}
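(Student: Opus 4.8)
The plan is to compute $\cG_\cV = \sum_\alpha \lambda_\alpha \langle (1+B)\psi_\alpha, \cV_N (1+B)\psi_\alpha\rangle / \Vert (1+B)\psi_\alpha\Vert^2$ by expanding the quadratic form and systematically organizing the resulting terms by powers of $B$ and by the structure of the momentum sums. First I would deal with the normalization: since $B$ shifts the particle number only on the scale $N^{\delta_{\mathrm{H}}}$ and $\cV_N \geq 0$, one shows $\Vert (1+B)\psi_\alpha\Vert^2 = 1 + O(\text{small})$ after summing against $\lambda_\alpha$, so that the denominators can be replaced by $1$ at the cost of an error absorbed into the right-hand side of \eqref{eq:gVV}; here the a priori bounds on $\Gamma_0$ (Lemmas~\ref{lm:NG}, \ref{lm:Tr}) are the key inputs. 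Then $\langle (1+B)\psi_\alpha, \cV_N(1+B)\psi_\alpha\rangle = \langle \psi_\alpha, \cV_N \psi_\alpha\rangle + 2\Re\langle \psi_\alpha, \cV_N B\psi_\alpha\rangle + \langle B\psi_\alpha, \cV_N B\psi_\alpha\rangle$, and summing against $\lambda_\alpha$ turns each term into a trace against $\Gamma_0$ (using $\Gamma_0 = \sum_\alpha \lambda_\alpha |\psi_\alpha\rangle\langle\psi_\alpha|$), namely $\tr[\cV_N \Gamma_0] + 2\Re\tr[\cV_N B \Gamma_0] + \tr[B^* \cV_N B \Gamma_0]$, where the cross term should be interpreted carefully since $B$ is not self-adjoint — one writes $2\Re\tr[\cV_N B\Gamma_0] = \tr[(\cV_N B + B^*\cV_N)\Gamma_0]$.

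Next I would evaluate each of the three traces. The term $\tr[\cV_N\Gamma_0]$ is computed directly from the 2-pdm formula in Lemma~\ref{lem:1-pdmGamma0}: substituting \eqref{lm:GBGF} into $\cV_N = \frac{1}{2L^3}\sum_{p,u,v}\hat v_N(p)\, a^*_{u+p}a^*_{v-p}a_u a_v$ and collecting Kronecker deltas produces the term $\frac{4\pi\frak a_N}{L^3}\int|z|^4\zeta\,\dd z$ (from the all-zero momentum contribution, using $\hat v_N(0) \simeq 8\pi\frak a_N/L^0$ up to scattering-length renormalization — this is where the renormalization via $f_N$ enters), the $\widetilde N_0 \sum \gamma$ and $\sum\gamma\gamma$ terms, and the ``pairing'' contribution $\frac{\widetilde N_0}{2L^6}\sum_{p,q}\hat v_N(p-q)\hat f_N(p)[2\gamma(q)+\alpha(q)+\overline{\alpha(q)}]$ that appears on the second line of \eqref{eq:mainGV}. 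The cross term $\tr[(\cV_N B + B^*\cV_N)\Gamma_0]$ and the quadratic term $\tr[B^*\cV_N B\Gamma_0]$ require commuting the creation/annihilation operators of $\cV_N$ past those of $B$; the leading contributions combine with the $\eta_p$-weighted sums to reconstruct the scattering equation and yield the last line of \eqref{eq:mainGV}, while all commutator remainders must be shown to be lower order. The main algebraic engine is the identity that $\eta_p$ solves (a finite-volume version of) the Neumann scattering equation, so that $\sum_p \eta_p \hat v_N \ast(\cdots)$ and the ``$\hat v_N \eta$'' double sums telescope into the scattering length; the relevant properties are collected in Appendix~\ref{app:ScatteringEquation} and I would cite \eqref{eq:sl} and Lemma~\ref{lemma:NPprop} for the pointwise bounds $0\le f_N\le 1$ and the size of $\eta_p$.

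The error analysis is the bulk of the work: every term that is not listed in $E_{\cV_N}$ must be bounded by $L^{-2}(N^{1-\delta_{\mathrm{H}}} + N^{1/3} + N^{\delta_{\mathrm{H}}})$. For this I would use the moment bounds of Lemma~\ref{lm:Tr} (controlling 2-, 3- and 4-pdm sums over $P_{\mathrm{L}}$ by $N^2, N^3, N^4$), the pointwise and summed bounds on $\gamma,\alpha$ from Lemma~\ref{lem:boundsgamma}, the bound $\sum_{p\in P_{\mathrm{H}}}|\eta_p|^2 \lesssim L^3 N^{-1}$ and $|\eta_p| \lesssim L^3/(N p^2)$ type estimates, together with $\Vert v\Vert_1, \Vert v\Vert_3$ finite. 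The momentum cut-offs $P_{\mathrm{H}}, P_{\mathrm{L}}, P_{\mathrm{B}}$ are designed precisely so that terms where $B$'s high momentum $p\in P_{\mathrm{H}}$ is ``absorbed'' by $\cV_N$ give $N^{-\delta_{\mathrm{H}}}$ gains, terms living on $P_{\mathrm{B}}$ give $N^{\delta_{\mathrm{B}}}$ factors, etc.; matching these powers against \eqref{eq:FEr} (restricted to the three terms appearing in \eqref{eq:gVV}) is bookkeeping but must be done carefully.

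The hardest part, I expect, will be the term $\tr[B^*\cV_N B\Gamma_0]$: here one has an eight-operator expression and, after Wick-contracting against the quasi-free-type structure of $\Gamma_0$, a proliferation of terms, some of which are genuinely of the order of the leading $E_{\cV_N}$ contributions (the piece with two factors of $\eta$ in the last line of \eqref{eq:mainGV}) and must be tracked exactly, while the rest must be shown subleading. Keeping the phases $z/|z|$ (which cancel against those in $\alpha(p)$ and in the definition of $B$'s condensate legs) straight, and correctly identifying which double-$\eta$ contractions survive, is the delicate combinatorial bottleneck; the momentum restrictions $u,v\in P_{\mathrm{L}}$ in $B$ and the disjointness $P_{\mathrm{L}}\cap P_{\mathrm{H}}=\emptyset$ are what make the bookkeeping tractable.
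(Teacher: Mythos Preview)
Your overall plan is on the right track but you have misidentified the key renormalization mechanism, and this is a genuine gap. You claim that $\tr[\cV_N\Gamma_0]$ already produces the term $\frac{4\pi\frak a_N}{L^3}\int|z|^4\zeta\,\dd z$, saying ``this is where the renormalization via $f_N$ enters''. It is not. Computing $\tr[\cV_N\Gamma_0]$ from the 2-pdm gives the bare coupling $\hat v_N(0)=N^{-1}\int v$, not the scattering length, and these differ at the order you need to track. The renormalization $\hat v_N\to \hat v_N\ast\hat f_N$ (and hence, via \eqref{eq:sl}, the appearance of $8\pi\frak a$) emerges from the \emph{cross term} $\tr[\cV_N B\,\Gamma_0]$: after normal ordering the four annihilation/creation operators against the high-momentum creation operators in $B$, one obtains a contribution that exactly cancels $\tr[\cV_N\Gamma_0]$ and leaves behind the same 2-pdm expression but with $\hat v_N$ replaced by $\hat v_N\ast\hat f_N$, plus a remainder of size $L^{-2}N^{1-\delta_{\mathrm H}}$ coming from the missing low-momentum part of $\eta$. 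This cancellation is the heart of the argument and your write-up obscures it.

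Two smaller points. First, your handling of the denominator is more complicated than necessary and, as stated, not quite correct (you cannot ``sum against $\lambda_\alpha$'' a quantity that sits in the denominator of each summand). Since $\cV_N\ge 0$ the numerator is nonnegative, and $\|(1+B)\psi_\alpha\|^2=1+\langle\psi_\alpha,B^*B\psi_\alpha\rangle\ge 1$ because $B^*\psi_\alpha=0$; hence simply dropping the denominator gives the desired upper bound term by term. Second, the Neumann scattering equation \eqref{npscatt} for $\eta_p$ is \emph{not} used in the proof of this lemma; the last line of $E_{\cV_N}$ is recorded exactly as it arises from $\tr[B^*\cV_N\Gamma_0]$ combined with the fully contracted piece of $\tr[B^*\cV_N B\,\Gamma_0]$, and the scattering equation is only invoked later when this expression is combined with the kinetic contribution from $\cG_\cK$.
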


\begin{proof}
Recall definition \eqref{eq:B} for  $B$. Acting on $\psi_\alpha$ (i.e., the eigenfunctions of $\Gamma_0$, defined in \eqref{eq:modfreestate}) with annihilation operators of momenta in $P_\mathrm{H}$ gives zero. Therefore  $\langle\psi_\alpha, B\,\psi_\alpha \rangle=\langle\psi_\alpha, B^*\,\psi_\alpha \rangle=0$, and we can estimate the denominator in \eqref{eq:cLn0} as $\|(1+B)\psi_{\alpha}\|^2=\langle\psi_\alpha, (1+B^*B)\,\psi_\alpha \rangle\geq 1$ so to have the upper bound
\begin{equation}
 \cG_V \leq\sum_{\alpha} \lambda_{\alpha}\braket{\psi_{\alpha},\mathcal{V}_N(1+B) \psi_{\alpha}}+\sum_{\alpha} \lambda_{\alpha}\braket{\psi_{\alpha},B^*\mathcal{V}_N(1+B) \psi_{\alpha}}=:\cG_V^{(1)}+\cG_V^{(2)}.
 \label{eq:cLn0} 
\end{equation}
With definitions \eqref{eq:excitHam} for $\mathcal{V}_N$ and \eqref{eq:B} for $B$ we write
\begin{align}
\cG_V^{(1)}&=\tr [ \mathcal{V}_N(1+B) \Gamma_0  ] =\frac{1}{2L^3}\sum_{p_1,u_1,v_1 \in \Lambda^*} \hat{v}_N(p_1) \,\tr[a_{u_1+p_1}^* a_{v_1-p_1}^* a_{u_1} a_{v_1}\Gamma_0] \nonumber \\
&\quad+\frac{1}{4L^6}\sum_{\substack{p_1,u_1,v_1 \in \Lambda^*\\p_2\in P_\mathrm{H},\, u_2,v_2\in P_\mathrm{L}}} \hat{v}_N(p_1)\eta_{p_2} \,\tr[a_{u_1+p_1}^* a_{v_1-p_1}^* a_{u_1} a_{v_1}a_{u_2+p_2}^* a_{v_2-p_2}^* a_{u_2} a_{v_2}\Gamma_0] \label{eq:G1} \nonumber \\
&=:\cG_V^{(1,1)}+\cG_V^{(1,2)}.
\end{align}
Using the commutation relations \eqref{eq:ccr}, we bring the monomial $a_{u_1} a_{v_1}a_{u_2+p_2}^* a_{v_2-p_2}^*$ in $\cG_V^{(1,2)}$ to normal order. When we exploit again that acting on $\Gamma_0$ with annihilation operators of momenta in $P_\mathrm{H}$ gives zero, we remain with
\begin{equation}
\cG_V^{(1,2)}=\frac{1}{2L^6}\sum_{\substack{p_1,u_1,v_1 \in \Lambda^*\\p_2\in P_\mathrm{H},\, u_2,v_2\in P_\mathrm{L}}} \hat{v}_N(p_1)\eta_{p_2} \,\tr[a_{u_1+p_1}^* a_{v_1-p_1}^*  a_{u_2} a_{v_2}\Gamma_0]\delta_{v_1,v_2-p_2}\delta_{u_1,u_2+p_2},
\end{equation}
where in addition  we used the symmetry under exchange of $u_1$ with $v_1$ and $p_1$ with $-p_1$. We add and subtract the contributions where $p_2\in P_\mathrm{H}^{\mathrm{c}}=\{p \in \Lambda^* \ | \ |p|< N^{1 - \delta_{\mathrm{H}}} /L \}$; using the definition of $\eta_p$ in \eqref{eq:eta} we find
\begin{align}
\cG_V^{(1,2)}
&=-\frac{1}{2L^3}\sum_{\substack{p_1,u_1,v_1 \in \Lambda^*}} \hat{v}_N(p_1) \,\tr[a_{u_1+p_1}^* a_{v_1-p_1}^*  a_{u_1} a_{v_1}\Gamma_0] \nonumber \\
&\quad+\frac{1}{2L^6}\sum_{\substack{p_1 \in \Lambda^*\\p_2\in \L^*,\, u_2,v_2\in P_\mathrm{L}}} \hat{v}_N(p_1)\hat f_N(p_2) \,\tr[a_{u_2+p_2+p_1}^* a_{v_2-p_2-p_1}^*  a_{u_2} a_{v_2}\Gamma_0] \nonumber \\
&\quad+\frac{1}{2L^6}\sum_{\substack{p_1,u_1,v_1 \in \Lambda^*\\p_2\in P_\mathrm{H}^c,\, u_2,v_2\in P_\mathrm{L}}} \hat{v}_N(p_1)\eta_{p_2} \,\tr[a_{u_2+p_2+p_1}^* a_{v_2-p_2-p_1}^*  a_{u_2} a_{v_2}\Gamma_0] \label{eq:G12}
\end{align}
The first contribution in \eqref{eq:G12} cancels with the first contribution in \eqref{eq:G1}. In the following, we denote the second and the third term on the r.h.s. of \eqref{eq:G12} by $\tilde\cG_V$ and $\cE^{(1)}_V$, respectively.

Using Lemma \ref{lm:Tr} to estimate the trace, equation \eqref{eq:sumPHCC} to estimate the sum of $\eta_{p_2}$ over $p_2$ and the bound $| \hat v_N(p_1) | \leq \int v_N(x) \de x \lesssim L N^{-1} $, we see that 
\begin{equation}\label{eq:erv1}
|\cE^{(1)}_V|  \leq\frac{1}{2 L^6}\biggl[\sup_{p_1\in \Lambda^*} | \hat{v}_N(p_1) | \biggl]\sum_{\substack{p_2\in P_\mathrm{H}^c}} |\eta_{p_2}| \sum_{\substack{p_1 \in \Lambda^*, u_2,v_2\in P_\mathrm{L}}} \big|\tr[a_{u_2+p_2+p_1}^* a_{v_2-p_2-p_1}^*  a_{u_2} a_{v_2}\Gamma_0]\big| \lesssim L^{-2}N^{1-\delta_{\mathrm{H}}}.
\end{equation}
We consider now $\tilde\cG_V$; we compute the trace using  Lemma \ref{lem:1-pdmGamma0} and obtain
\begin{align}
\tilde\cG_V &=\frac{1}{2L^6}\sum_{\substack{p_1,p_2\in \L^*}} \hat{v}_N(p_1)\hat f_N(p_2) \,\biggl[\delta_{p_1+p_2,0}\int_{\mathbb{C}}  |z|^4 \,\zeta(z) \de z\,+2\widetilde N_0\delta_{p_1+p_2,0}\sum_{\substack{ u\in P_\mathrm{L}\backslash\{0\}}}\gamma(u)+2\widetilde N_0\gamma(p_1+p_2) \nonumber \\
&\hspace{2cm}+\widetilde N_0\alpha(p_1+p_2)+\widetilde N_0\overline{\alpha(p_1+p_2)}+\sum_{\substack{ u\in P_\mathrm{B}\backslash\{0\}}}\overline{\alpha(u+p_2+p_1)}\alpha(u) \nonumber \\
&\hspace{2cm}+\delta_{p_1+p_2,0}\sum_{\substack{ u,v\in P_\mathrm{L}\backslash\{0\}}}\gamma(v)\gamma(u)+\sum_{\substack{ u\in P_\mathrm{L}\backslash\{0\}}}\gamma(u+p_2+p_1)\gamma(u)\biggl]=:\sum_{j=1}^8\tilde\cG_{V,j}. \label{eq:tildeGV}
\end{align}
Using  \eqref{eq:sl} we see that
 \begin{align}\label{eq:Vf}
\sum_{p\in\L^*}\hat v_N(p)\hat f_N(p)=\frac{L^3}{N}\int  v(x)f(x) \de x=\frac{L^3}{N}\left(8\pi\frak{a}+CL/N\right),
\end{align}
and therefore the first contribution in \eqref{eq:tildeGV} satisfies
\begin{equation}\label{eq:gV1}
\tilde\cG_{V,1}
=\frac{1}{2L^6}\sum_{\substack{p\in \L^*}} \hat{v}_N(p)\hat f_N(p) \int_{\mathbb{C}} |z|^4 \zeta(z) \de z \leq \frac{4\pi\frak{a}}{NL^3}\int_{\mathbb{C}} |z|^4 \zeta(z) \de z+ C L^{-2}.\\
\end{equation}
To obtain a bound for the integral over $|z|^4$ we applied Lemma~\ref{lem:momentBoundsZeta}. 

We consider now $\tilde\cG_{V,2}$. From $\sum_{u \in \Lambda^*_+} \gamma(u) \leq N$ and \eqref{eq:Vf} we know that
\begin{equation}\label{eq:gV2}
\tilde\cG_{V,2}
=\frac{\widetilde N_0}{L^6}\sum_{\substack{p\in \L^*}} \hat{v}_N(p)\hat f_N(p) \sum_{\substack{ u\in P_\mathrm{L}\backslash\{0\}}}\gamma(u) \leq \frac{8\pi\frak{a}\widetilde N_0}{NL^3}\sum_{\substack{ u\in P_\mathrm{L}\backslash\{0\}}}\gamma(u)+ CL^{-2} \\
\end{equation}
holds.
The sum of $\tilde\cG_{V,3}, \tilde\cG_{V,4}$ and $\tilde\cG_{V,5}$ is left untouched, i.e.,
\begin{equation}\label{eq:gV345}
\tilde\cG_{V,3}+\tilde\cG_{V,4}+\tilde\cG_{V,5}=\frac{\widetilde N_0}{2L^6}\sum_{\substack{p,q\in \L^*}} \hat{v}_N(p-q)\hat f_N(p)\Big[2\gamma(q)+\alpha(q)+\overline{\alpha(q)}\Big].\\
\end{equation}
Next, we consider $\tilde\cG_{V,6}$; from \eqref{eq:sl} it follows that
  \begin{equation}\label{eq:SSV}
\sup_{p\in\L^*}\Big|\sum_{q\in\L^*}\hat v_N(q)\hat f_N(p-q)\Big|\leq \frac{L^3}{N} \left(8\pi\frak{a}+\frac{CL}{N}\right);\\
 \end{equation}
using in addition \eqref{eq:Comp4b} and $\delta_{\mathrm{B}} < 1/3$ we see that
\begin{equation}\label{eq:gV6}
|\tilde\cG_{V,6}|
=\frac{1}{2L^6}\biggl|\sum_{\substack{p,q\in \L^*}} \hat{v}_N(q)\hat f_N(p-q) \sum_{\substack{ u\in P_\mathrm{B}\backslash\{0\}}}\bar\alpha(u+p)\alpha(u)\biggl|\leq CL^{-2}N^{1/3}.
\end{equation}
Using \eqref{eq:Vf} and \eqref{eq:SSV} we see that the last two terms in \eqref{eq:tildeGV} are equal at leading order:
\begin{align}
\tilde\cG_{V,7}+\tilde\cG_{V,8}
&=\frac{1}{2L^6}\sum_{\substack{p\in \L^*}} \hat{v}_N(p)\hat f_N(p) \sum_{\substack{ u,v\in P_\mathrm{L}\backslash\{0\}}}\gamma(v)\gamma(u) +\frac{1}{2L^6}\sum_{\substack{p,q\in \L^*}} \hat{v}_N(q)\hat f_N(p-q) \sum_{\substack{ u\in P_\mathrm{L}\backslash\{0\}}}\gamma(u+p)\gamma(u) \nonumber \\
&\leq \frac{8\pi\frak{a}}{NL^3} \sum_{\substack{ u,v\in P_\mathrm{L}\backslash\{0\}}}\gamma(v)\gamma(u) + C L^{-2}.
\label{eq:gV7}
\end{align}
It remains to consider $\cG_V^{(2)}$ in \eqref{eq:cLn0}.

To that end, we write
\begin{align}
\cG_V^{(2)}&= \tr [ B^*\mathcal{V}_N(1+B) \Gamma_0 ] = \tr [ B^*\mathcal{V}_N \Gamma_0  ] + \tr [ B^*\mathcal{V}_N B \Gamma_0  ] =:\cG_V^{(2,1)}+\cG_V^{(2,2)}.
\label{eq:Sect3Andi}
\end{align}
We have
 \begin{equation}
B^*\mathcal{V}_N =\frac{1}{4L^6}\sum_{\substack{p_1\in P_\mathrm{H}\\u_1,v_1\in P_L}}\eta_{p_1}\sum_{\substack{p_2,u_2,v_2\in \Lambda^*}}\hat v_N(p_2)\,a^*_{v_1}a^*_{u_1} a_{u_1+p_1} a_{v_1-p_1} a^*_{u_2+p_2} a^*_{v_2-p_2} a_{v_2}a_{u_2};
\end{equation}
commuting  $a_{u_1+p_1} a_{v_1-p_1} $ to the right and observing that only the contributions with $v_2,u_2\in P_\mathrm{L}$ give a non zero contribution, we arrive at
\begin{align}
\cG_V^{(2,1)}&=\frac{1}{2L^6}\sum_{\substack{p_1\in P_\mathrm{H}\\u_1,v_1\in P_\mathrm{L}}}\eta_{p_1}\sum_{\substack{p_2\in \Lambda^*\\u_2,v_2\in P_\mathrm{L}}}\hat v_N(p_2)\, \tr[ a^*_{v_1}a^*_{u_1}a_{v_2}a_{u_2} \Gamma_0 ] \delta_{u_2+p_2,u_1+p_1}\delta_{v_2-p_2,v_1-p_1} \nonumber \\
&=\frac{1}{2L^6}\sum_{\substack{p_1\in P_\mathrm{H}\\u_1,v_1,u_2,v_2\in P_\mathrm{L}}}\eta_{p_1}\hat v_N(p_1+u_1-u_2)\,\tr[a^*_{v_1}a^*_{u_1}a_{v_2}a_{u_2}\Gamma_0]\delta_{v_2,u_1+v_1-u_2}. \label{eq:BstarV}
\end{align}
This term contributes to \eqref{eq:mainGV}.
Note that $\delta_{v_2,u_1+v_1-u_2}$ can be dropped here because $\Gamma_0$ is translation invariant.

To compute $\cG_V^{(2,2)}$ we need to study
\begin{equation}
B^*\mathcal{V}_NB=\frac{1}{8L^9}\sum_{\substack{p_1,p_3\in P_\mathrm{H}\\u_1,v_1,u_3,v_3\in P_L\\p_2,u_2,v_2,\in \Lambda^*}}\eta_{p_1}\hat v_N(p_2)\eta_{p_3}\,a^*_{v_1}a^*_{u_1} a_{u_1+p_1} a_{v_1-p_1} a^*_{u_2+p_2} a^*_{v_2-p_2} a_{v_2}a_{u_2}a^*_{u_3+p_3} a^*_{v_3-p_3} a_{v_3}a_{u_3}.
\end{equation}
We bring the monomial $a_{v_2}a_{u_2}a^*_{u_3+p_3} a^*_{v_3-p_3}$ to normal order; the symmetries under exchange of $u_2$ with $v_2$, of $u_3$ with $v_3$ and $\eta_{-p_2} = \eta_{p_2}$ allow to organize the result of the commutation in the following three contributions:
\begin{equation}
\cG_V^{(2,2)}=\text{J}_1+\text{J}_2+\text{J}_3
\label{eq:Sect3Andi2}
\end{equation}
with
\begin{align}
\text{J}_1:=\frac{1}{8L^9}\sum_{\substack{p_1,p_3\in P_\mathrm{H}\\u_1,v_1,u_3,v_3\in P_\mathrm{L}\\p_2,u_2,v_2,\in \Lambda^*}}\eta_{p_1}\hat v_N(p_2)\eta_{p_3}\tr[a^*_{v_1}a^*_{u_1} a_{u_1+p_1} a_{v_1-p_1} a^*_{u_2+p_2} a^*_{v_2-p_2}   a^*_{u_3+p_3} a^*_{v_3-p_3} a_{v_2}a_{u_2}   a_{v_3}a_{u_3}\Gamma_0], \nonumber \\
\text{J}_2:=\frac{1}{2L^9}\sum_{\substack{p_1,p_3\in P_\mathrm{H}\\u_1,v_1,u_3,v_3\in P_\mathrm{L}\\p_2,u_2,v_2,\in \Lambda^*}}\eta_{p_1}\hat v_N(p_2)\eta_{p_3}\tr[a^*_{v_1}a^*_{u_1} a_{u_1+p_1} a_{v_1-p_1} a^*_{u_2+p_2} a^*_{v_2-p_2}   a^*_{u_3+p_3} a_{u_2}   a_{v_3}a_{u_3}\Gamma_0]\delta_{v_2,v_3-p_3}, \nonumber \\
\text{J}_3:=\frac{1}{4L^9}\sum_{\substack{p_1,p_3\in P_\mathrm{H}\\u_1,v_1,u_3,v_3\in P_\mathrm{L}\\p_2,u_2,v_2,\in \Lambda^*}}\eta_{p_1}\hat v_N(p_2)\eta_{p_3}\tr[a^*_{v_1}a^*_{u_1} a_{u_1+p_1} a_{v_1-p_1} a^*_{u_2+p_2} a^*_{v_2-p_2}  a_{v_3}a_{u_3}\Gamma_0]\delta_{u_2,v_3-p_3}\delta_{v_2,u_3+p_3}.
\end{align}
In $\text{J}_1$ we bring the monomial $a_{u_1+p_1} a_{v_1-p_1}a^*_{u_3+p_3} a^*_{v_3-p_3} $ to normal order; using that we obtain zero when we act with annihilations operators of momenta in $P_\mathrm{H}$ on $\Gamma_0$ and exploiting the symmetry under exchange of $v_1$ and $u_1$ we obtain
\begin{equation}
\text{J}_1=\frac{1}{4L^9}\sum_{\substack{p_1\in P_\mathrm{H}\\u_1,v_1,u_3\in P_\mathrm{L}\\p_2,u_2,v_2,\in \Lambda^*}}\eta_{p_1}\hat v_N(p_2)\eta_{p_1+u_1-u_3}\tr[a^*_{v_1}a^*_{u_1}a^*_{u_2+p_2} a^*_{v_2-p_2}  a_{v_2}a_{u_2}   a_{v_1+u_1-u_3}a_{u_3}\Gamma_0].
\end{equation}
When we apply Lemma \ref{lm:Tr} and the Cauchy-Schwarz inequality, and use the bound in \eqref{eq:sumPHC} as well as $|\hat v_N(p)| \lesssim L N^{-1}$, we see that
\begin{equation}\label{eq:gV8r}
|\text{J}_1| \lesssim \frac{L}{NL^9}\sum_{\substack{u_1,v_1,u_3\in P_\mathrm{L}\\p_2,u_2,v_2,\in \Lambda^*}}\big|\tr[a^*_{v_1}a^*_{u_1}a^*_{u_2+p_2} a^*_{v_2-p_2}  a_{v_2}a_{u_2}   a_{v_1+u_1-u_3}a_{u_3}\Gamma_0]\big|\sum_{\substack{p_1\in P_\mathrm{H}}}|\eta_{p_1}\eta_{p_1+u_1-u_3}| \lesssim L^{-2} N^{\delta_{\mathrm{H}}}
\end{equation}
holds. 

In $\text{J}_2$ we bring the monomial $a_{u_1+p_1} a_{v_1-p_1}a^*_{u_3+p_3} $ to normal order (so we obtain zero when $a^*_{u_3+p_3}$ acts on $\Gamma_0$, since $u_3+p_3\in P_\mathrm{H}$ ) and find
\begin{equation}
\text{J}_2=\frac{1}{L^9}\sum_{\substack{p_1,p_3\in P_\mathrm{H}\\u_1,v_1,u_3,v_3\in P_\mathrm{L}\\p_2,u_2,v_2,\in \Lambda^*}}\eta_{p_1}\hat v_N(p_2)\eta_{p_3}\tr[a^*_{v_1}a^*_{u_1}  a_{v_1-p_1} a^*_{u_2+p_2} a^*_{v_2-p_2}   a_{u_2}   a_{v_3}a_{u_3}\Gamma_0]\delta_{v_2,v_3-p_3}\delta_{u_1+p_1,u_3+p_3}\\
\end{equation}
Now we normal order $ a_{v_1-p_1} a^*_{u_2+p_2} a^*_{v_2-p_2} $ (with the aim of commuting $a_{v_1-p_1}$ to the right, since $v_1-p_1\in P_\mathrm{H}$) obtaining
\begin{align}
\text{J}_2&=\frac{2}{L^9}\sum_{\substack{p_1,p_3\in P_\mathrm{H}\\u_1,v_1,u_3,v_3\in P_\mathrm{L}\\p_2,u_2,v_2,\in \Lambda^*}}\eta_{p_1}\hat v_N(p_2)\eta_{p_3}\tr[a^*_{v_1}a^*_{u_1}  a^*_{u_2+p_2} a_{u_2}   a_{v_3}a_{u_3}\Gamma_0]\delta_{v_2,v_3-p_3}\delta_{u_1+p_1,u_3+p_3}\delta_{v_1-p_1,v_2-p_2} \nonumber \\
&=\frac{2}{L^9}\sum_{\substack{p_1\in P_\mathrm{H}\\u_1,v_1,u_3,v_3,u_2\in P_\mathrm{L}}}\eta_{p_1}\hat v_N(v_3+u_3-v_1-u_1)\eta_{p_1+u_1-u_3}\tr[a^*_{v_1}a^*_{u_1}  a^*_{u_2-v_1+v_3-u_1+u_3} a_{u_2}   a_{v_3}a_{u_3}\Gamma_0].
\end{align}
Again we exploited that $v_1$ can be exchanged with $u_1$ and  $v_2$ with $u_2$.
Using Lemma \ref{lm:Tr}, \eqref{eq:sumPHC} and $| \hat v_N(p) | \lesssim L N^{-1}$, we obtain the bound
\begin{equation}\label{eq:gV9r}
|\text{J}_2| \lesssim L^{-2} N^{-1+\delta_{\mathrm{H}}}.
\end{equation}
Normal ordering of $ a_{u_1+p_1} a_{v_1-p_1} a^*_{u_2+p_2} a^*_{v_2-p_2}$ and analogous considerations as above lead to
\begin{align}
\text{J}_3&=\frac{1}{2L^9}\sum_{\substack{p_1,p_3\in P_\mathrm{H}\\u_1,v_1,u_3,v_3\in P_\mathrm{L}\\p_2,u_2,v_2,\in \Lambda^*}}\eta_{p_1}\hat v_N(p_2)\eta_{p_3}\tr[a^*_{v_1}a^*_{u_1}   a_{v_3}a_{u_3}\Gamma_0]\delta_{u_2,v_3-p_3}\delta_{v_2,u_3+p_3}\delta_{v_1-p_1,v_2-p_2}\delta_{u_2+p_2,u_1+p_1} \nonumber \\
&=\frac{1}{2L^9}\sum_{\substack{p_1,p_3\in P_\mathrm{H}\\u_1,v_1,v_3,u_3\in P_\mathrm{L}}} \eta_{p_1} \hat{v}_N(p_1+p_3+u_3-v_1) \eta_{p_3} \tr[ a^*_{v_1} a^*_{u_1} a_{v_3} a_{u_3} \Gamma_0].
\end{align}
We combine $\text{J}_3$ with $\cG_V^{(2,1)}$ in \eqref{eq:BstarV} and obtain
\begin{align}
\cG_V^{(2,1)}+&\text{J}_3=\frac{1}{2L^6}\sum_{\substack{p_1\in P_\mathrm{H}\\u_1,v_1,v_2,u_2\in P_\mathrm{L}}}\eta_{p_1}\hat v_N(p_1+u_1-u_2)\,\tr[a^*_{v_1}a^*_{u_1}a_{v_2}a_{u_2}\Gamma_0] \nonumber \\
&\hspace{0.6cm}+\frac{1}{2L^9}\sum_{\substack{p_1,p_2\in P_\mathrm{H}\\u_1,v_1,v_2,u_2\in P_\mathrm{L}}}\eta_{p_1}\hat v_N(p_1+p_2+u_2-v_1)\eta_{p_2}\tr[a^*_{v_1}a^*_{u_1}   a_{v_2}a_{u_2}\Gamma_0] \label{eq:gV10} \\
=& \frac{1}{2L^6}\sum_{\substack{p_1\in P_\mathrm{H}\\u_1,v_1,u_2,v_2\in P_\mathrm{L}}}\eta_{p_1}\Big[\hat v_N(p_1+u_1-u_2)+ \frac{1}{| \Lambda| }\sum_{p_2\in P_\mathrm{H}}\hat v_N(p_1+p_2+u_1-u_2)\eta_{p_2}\Big]\tr[a^*_{v_1}a^*_{u_1}a_{v_2}a_{u_2}\Gamma_0]. \nonumber
\end{align}
In the last line we used the symmetry under exchange of $v_1$ with $u_1$.
Collecting the results of \eqref{eq:cLn0}, \eqref{eq:gV1}, \eqref{eq:gV2}, \eqref{eq:gV345}, \eqref{eq:gV7}, \eqref{eq:Sect3Andi}, \eqref{eq:Sect3Andi2}, \eqref{eq:gV10} and the bounds on the error terms in   \eqref{eq:erv1},  \eqref{eq:gV6}, \eqref{eq:gV8r}, \eqref{eq:gV9r} we obtain \eqref{eq:gVV}.
\end{proof}

\subsection{Analysis of $\cG_\cK$}\label{GK}

We recall from definitions of $\cK$ in \eqref{eq:excitHam} and $\cG_\cK$ in \eqref{eq:GKGV}. In this section we prove an upper bound for $\cG_\cK$, as stated in the following lemma.

\begin{lemma}\label{lm:GK}
Under the assumptions of Proposition \ref{trHG} we have
\begin{equation}
 \cG_\cK-E_{\cK} \lesssim L^{-2}\big(N^{\delta_{\mathrm{H}}+2\delta_{\mathrm{B}}}+N^{-1/3+\delta_{\mathrm{H}}+2\delta_L}+N^{1/3}\ln(N)\big)
\end{equation}
with
\begin{equation}\label{eq:mainGK}
 E_{\cK}= \sum_{p\in P_\mathrm{L}}p^2\gamma(p)+\frac{1}{L^6}\sum_{\substack{p_1\in P_\mathrm{H}\\u_1,v_1,u_2,v_2\in P_\mathrm{L}}}\eta_{p_1}(p_1+u_1-u_2)^2\eta_{p_1+u_1-u_2}\,\tr[a^*_{v_1}a^*_{u_1}a_{v_2}a_{u_2}\Gamma_0].
\end{equation}
The function $\gamma(p)$ is defined in \eqref{eq:defGam}.
\end{lemma}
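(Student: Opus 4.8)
The plan is to follow the same strategy as for Lemma~\ref{lm:GV}, with one additional ingredient needed to handle the normalization appearing in the denominator of $\cG_\cK$. The eigenfunctions $\psi_\alpha$ of $\Gamma_0$ are supported on momenta in $P_{\mathrm{L}}$ (the $\mathscr{F}_>$-factor of $\Gamma_0$ is the vacuum), while $B^*$ acts with annihilation operators carrying momenta of modulus $\gtrsim N^{1-\delta_{\mathrm{H}}}/L\gg N^{1/3+\delta_{\mathrm{L}}}/L$; hence $B^*\psi_\alpha=0$, so $\langle\psi_\alpha,B\psi_\alpha\rangle=0$ and $\|(1+B)\psi_\alpha\|^2=1+\|B\psi_\alpha\|^2$. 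Moreover $\cK$ is diagonal in the occupation-number basis while $B\psi_\alpha$ is supported on Fock states containing two $P_{\mathrm{H}}$-momenta absent from $\psi_\alpha$, so $\langle\psi_\alpha,\cK B\psi_\alpha\rangle=0$ and the numerator of the $\alpha$-summand equals $\langle\psi_\alpha,\cK\psi_\alpha\rangle+\langle B\psi_\alpha,\cK B\psi_\alpha\rangle$.

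In contrast to Lemma~\ref{lm:GV} one cannot simply bound the denominator from below by $1$: the piece $\tr[B^*B\cK\Gamma_0]$ contained in $\tr[B^*\cK B\Gamma_0]$ is of order $L^{-2}N^{2/3+\delta_{\mathrm{H}}}$ — it carries the bulk kinetic energy of the thermal cloud times $\|B\psi_\alpha\|^2$ — and it is cancelled precisely by the denominator. I would therefore use the elementary estimate $\tfrac{a+b}{1+c}\le a+|b-ac|$ for $a,c\ge 0$ with $a=\langle\psi_\alpha,\cK\psi_\alpha\rangle$, $b=\langle B\psi_\alpha,\cK B\psi_\alpha\rangle$, $c=\|B\psi_\alpha\|^2$, and sum over $\alpha$ to get
\begin{equation}
	\cG_\cK\le\tr[\cK\Gamma_0]+\sum_\alpha\lambda_\alpha\Big|\langle B\psi_\alpha,\cK B\psi_\alpha\rangle-\langle\psi_\alpha,\cK\psi_\alpha\rangle\,\|B\psi_\alpha\|^2\Big|.
	\label{eq:GKplan}
\end{equation}
By Lemma~\ref{lem:1pdm} the first term equals $\sum_{p\in\Lambda^*_+}p^2\gamma(p)=\sum_{p\in P_{\mathrm{L}}}p^2\gamma(p)$, the first term of $E_\cK$.

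For the error term in \eqref{eq:GKplan} I would split $\cK=\cK_<+\cK_>$ at a cutoff $R$ with $N^{1/3+\delta_{\mathrm{L}}}/L\ll R\ll N^{1-\delta_{\mathrm{H}}}/L$, so that $\cK_>\psi_\alpha=0$, $\cK_>B\psi_\alpha=[\cK_>,B]\psi_\alpha$, $\cK_<B=B\cK_<+[\cK_<,B]$, and $\cK_<\psi_\alpha=\cK\psi_\alpha$; here $[\cK_>,B]$ is obtained from $B$ in \eqref{eq:B} by inserting the factor $(u+p)^2+(v-p)^2$ into the summand and $[\cK_<,B]$ by inserting $-(u^2+v^2)$. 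Substituting these identities, the quantity inside the absolute value in \eqref{eq:GKplan} becomes, after summation against $\lambda_\alpha$,
\begin{equation*}
	\tr[B^*[\cK_>,B]\Gamma_0]+\tr[B^*[\cK_<,B]\Gamma_0]+\Big(\tr[B^*B\cK\Gamma_0]-\sum_\alpha\lambda_\alpha\langle\psi_\alpha,\cK\psi_\alpha\rangle\langle\psi_\alpha,B^*B\psi_\alpha\rangle\Big).
\end{equation*}
The first trace, put into normal order and stripped of all contributions in which the two $P_{\mathrm{H}}$-annihilators of $B^*$ do not fully contract with the two $P_{\mathrm{H}}$-creators of $[\cK_>,B]$, yields exactly the second term of $E_\cK$: momentum conservation in $\tr[a^*_{v_1}a^*_{u_1}a_{v_2}a_{u_2}\Gamma_0]$ forces the second $\eta$-momentum to equal $p_1+u_1-u_2$, one has $(u_1+p_1)^2+(v_1-p_1)^2=2(p_1+u_1-u_2)^2$ up to $P_{\mathrm{L}}$-sized momentum shifts, and the combinatorial factor $\tfrac14\cdot2\cdot2$ produces the prefactor $L^{-6}$. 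The second trace is of lower order because its $a^*_0a^*_0a_0a_0$ part is annihilated by the vanishing weight $u^2+v^2$ at $u=v=0$, so every surviving term carries an extra $\gamma$-weighted factor $\lesssim\beta^{-1}$ (use Lemmas~\ref{lem:boundsgamma} and \ref{lm:Tr}). The bracketed difference — the ``spectator fluctuation'' — is handled by noting that the leading part of $B^*B$ equals $\tfrac{1}{2L^6}\big(\sum_{p\in P_{\mathrm{H}}}\eta_p^2\big)\,\cN_{\mathrm{L}}(\cN_{\mathrm{L}}-1)$ (with $\cN_{\mathrm{L}}$ the number operator on the modes in $P_{\mathrm{L}}$), which commutes with $\cK$, so the difference reduces to a covariance of $\cN_{\mathrm{L}}(\cN_{\mathrm{L}}-1)$ and $\cK$ in the states $\psi_\alpha$; since the condensate occupation is independent of the excitation cloud, this covariance only involves the much smaller fluctuations of the modes in $P_{\mathrm{B}}\cup P_{\mathrm{I}}$. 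Finally, all subleading normal-ordering remainders from $B^*[\cK_>,B]$, as well as the errors from the shifts $(u_1+p_1)^2+(v_1-p_1)^2\to 2(p_1+u_1-u_2)^2$, are estimated just as in the proof of Lemma~\ref{lm:GV} by combining the pdm bounds of Lemmas~\ref{lem:boundsgamma} and \ref{lm:Tr}, the $\zeta$-moment bounds of Lemma~\ref{lem:momentBoundsZeta}, and the scattering-equation estimates of Appendix~\ref{app:ScatteringEquation} for the sums $\sum_{p\in P_{\mathrm{H}}}|\eta_p|$, $\sum_{p\in P_{\mathrm{H}}}|\eta_p|^2$ and $\sum_{p\in P_{\mathrm{H}}}|p|^n|\eta_p|^2$; they produce the three error scales $N^{\delta_{\mathrm{H}}+2\delta_{\mathrm{B}}}$, $N^{-1/3+\delta_{\mathrm{H}}+2\delta_{\mathrm{L}}}$ and $N^{1/3}\ln N$ in $\cE_\cK$.

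The step I expect to be the main obstacle is exactly the ``spectator fluctuation'' bracket and the denominator cancellation it implements: one must show that the two individually large quantities $\tr[B^*B\cK\Gamma_0]$ and $\sum_\alpha\lambda_\alpha\langle\psi_\alpha,\cK\psi_\alpha\rangle\langle\psi_\alpha,B^*B\psi_\alpha\rangle$ — both of order $L^{-2}N^{2/3+\delta_{\mathrm{H}}}$ — differ only by a term that fits inside $L^{-2}\cE_\cK$. Everything else is the same kind of normal-ordering bookkeeping already carried out for $\cG_\cV$.
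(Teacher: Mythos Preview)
Your global inequality $\frac{a+b}{1+c}\le a+|b-ac|$ and the decomposition $b-ac=T_1+T_2+T_3$ are fine, and $T_1$ and $T_2$ can indeed be handled as you indicate. The gap is in your treatment of the covariance bracket $T_3$. The fact that $B^*B$ (or its leading part) commutes with $\cK$ is irrelevant: two commuting operators can have nonzero covariance in a state that is not an eigenstate of either, and $\psi_\alpha$ is certainly not an eigenstate of $\cN_{\mathrm{L}}$ (the condensate factor is built from coherent states). So the sentence ``the difference reduces to a covariance \ldots\ which commutes with $\cK$'' does not by itself give a bound.

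What actually makes $T_3$ small is precisely the mechanism the paper exploits, and which you do not invoke: the eigenvectors of $\Gamma_0$ factor as $\psi_\alpha=\xi_{\alpha_1}\otimes\nu_{\alpha_2}$ with $\nu_{\alpha_2}$ an eigenvector of $\cK_{\mathrm I}$ (because $G_{\mathrm{free}}$ is diagonal in the occupation-number basis). Hence $\cK_{\mathrm I}\psi_\alpha=E_{\alpha_2}\psi_\alpha$ and the $\cK_{\mathrm I}$-part of $T_3$ vanishes \emph{for each} $\alpha$. Only the $\cK_{\mathrm B}$-part of $T_3$ survives, and that one is harmless because $\cK_{\mathrm B}$ itself is small. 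The paper uses this observation much more directly: it splits $\cK=\cK_{\mathrm B}+\cK_{\mathrm I}+\cK_>$ and, for $\cK_{\mathrm I}$ only, commutes it past $(1+B)$ onto $\psi_\alpha$, giving $E_{\alpha_2}(1+B)\psi_\alpha$ and hence exact cancellation with the denominator; for $\cK_{\mathrm B}$ and $\cK_>$ it simply bounds the denominator by $1$. This avoids the covariance bracket entirely and is both shorter and conceptually cleaner than your route. If you insist on your global inequality, you should replace the commutation argument by the eigenvector argument for $\cK_{\mathrm I}$ and then bound the residual $\cK_{\mathrm B}$-covariance by, e.g., Cauchy--Schwarz together with the estimate $\tr[(B^*B)^2\Gamma_0]\lesssim N^{-2+2\delta_{\mathrm H}}$ already established in the entropy section.
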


\begin{proof}
It is convenient to introduce the operators
\begin{equation}\label{eq:KSKI} 
\cK_{\mathrm{B}}=\sum_{p\in P_\mathrm{B}}p^2  a^*_p a_p,\qquad \cK_{\mathrm{I}}=\sum_{p\in P_\mathrm{I}}p^2  a^*_p a_p\qquad\text{and}\qquad \cK_>=\sum_{p\in P_\mathrm{L}^c}p^2  a^*_p a_p
\end{equation}
and to denote the corresponding expectations w.r.t. $\Gamma$ by $\cG_{\cK_{\mathrm{B}}}$, $\cG_{\cK_{\mathrm{I}}}$, $\cG_{\cK_>}$, i.e., 
\begin{equation}
\cG_\cK = \sum_{\alpha} \lambda_{\alpha}\frac{\braket{(1+B)\psi_{\alpha}, \big(\cK_{\mathrm{B}}+\cK_{\mathrm{I}}+\cK_>\big)(1+B)\psi_{\alpha}}} {\braket{(1+B)\psi_{\alpha}, (1+B)\psi_{\alpha}}}=\cG_{\cK_{\mathrm{B}}}+\cG_{\cK_{\mathrm{I}}}+\cG_{\cK_>}.
\end{equation}
In the next subsections we will prove upper bounds for $\cG_{\cK_{\mathrm{B}}}, \cG_{\cK_{\mathrm{I}}}$ and $\cG_{\cK_>}$. We will need to compute the commutators of $\cK_{\mathrm{I}}$ and $\cK_{>}$ with $B$. Indicating with $P_\#$ either $P_\mathrm{I}$ or $P_\mathrm{L}^{\mathrm{c}}$ and with $\cK_\#$ either $\cK_{\mathrm{I}}$ or $\cK_>$, we will use the result
\begin{align}
[\cK_\#,B] &=\frac{1}{2L^3}\sum_{\substack{q\in P_\#,p\in P_\mathrm{H},\\\, u,v\in P_{\mathrm{L}}}}q^2\eta_{p}[a^*_qa_q, a^*_{u+p}a^*_{v-p}a_ua_v] \nonumber \\
&=\frac{1}{L^3}\sum_{\substack{q\in P_\#,p\in P_\mathrm{H},\\\, u,v\in P_{\mathrm{L}}}}q^2\eta_{p}\, \Big(\delta_{q,u+p}a^*_{q}a^*_{v-p}a_ua_v-\delta_{q,u}\,a^*_{u+p}a^*_{v-p}a_va_q\Big), \label{eq:Kcomm}
\end{align}
where we exploited the symmetry $\eta_p=\eta_{-p}$.

\subsubsection{Analysis of $\cG_{\cK_{\mathrm{B}}}$} 

Using the positivity of $\cK_{\mathrm{B}}$ we estimate the denominator by one;  observing that $\braket{\psi_{\alpha},(B^*\cK_{\mathrm{B}}+\cK_{\mathrm{B}} B)\,\psi_{\alpha}}$ vanishes (because the creation operators in $B$ commute with $\cK_{\mathrm{B}}$ and give zero when acting on $\psi_\alpha$), we have
\begin{align}
\cG_{\cK_{\mathrm{B}}}&=\sum_{\alpha} \lambda_{\alpha}\frac{\braket{(1+B)\psi_{\alpha}, \cK_\mathrm{B}(1+B)\psi_{\alpha}}} {\braket{(1+B)\psi_{\alpha}, (1+B)\psi_{\alpha}}}\leq\sum_{\alpha} \lambda_{\alpha}\braket{\psi_{\alpha},\Big(\sum_{p\in P_\mathrm{B}}p^2a^*_p a_p\Big)\psi_{\alpha}}+\sum_{\alpha} \lambda_{\alpha}\braket{\psi_{\alpha},B^*\cK_{\mathrm{B}}B\,\psi_{\alpha}} \nonumber \\
&=\sum_{p\in P_\mathrm{B}}p^2\gamma(p)+\cE_{\cK_{\mathrm{B}}}. \label{eq:estSmallMom}
\end{align}
To estimate the error term in \eqref{eq:estSmallMom} we contract the annihilation operators with momenta in $P_\mathrm{H}$ in $B^*$ with those in $B$, and obtain
\begin{align}
|\cE_{\cK_{\mathrm{B}}}|&=\frac{1}{2L^6}\biggl|\sum_{\substack{p_1\in P_\mathrm{H},\,q\in P_\mathrm{B}\\u_1,v_1,u_2\in P_\mathrm{L}}}q^2\,\eta_{p_1}\,\eta_{p_1+u_1-u_2}\,\tr[a^*_{u_1}a^*_{v_1}a^*_qa_qa_{u_2}a_{u_1+v_1-u_2}\Gamma_0]\biggl| \nonumber \\
&\lesssim \frac{N^{-3+\delta_\mathrm{H}+2\delta_\mathrm{B}}}{L^2}\sum_{\substack{q\in P_\mathrm{B}\\u_1,v_1,u_2\in P_\mathrm{L}}}\big|\tr[a^*_{u_1}a^*_{v_1}a^*_qa_qa_{u_2}a_{u_1+v_1-u_2}\Gamma_0]\big| \lesssim L^{-2}N^{\delta_{\mathrm{H}}+2\delta_\mathrm{B}}.
\label{eq:erK1}
\end{align}
The inequalities follow from $|q|\leq N^{\delta_\mathrm{B}}$, the Cauchy-Schwarz inequality, the bound in \eqref{eq:sumPHC} and Lemma \ref{lm:Tr}. We therefore have
\begin{equation}\label{eq:KB}
\cG_{\cK_{\mathrm{B}}}- \sum_{p\in P_\mathrm{B}}p^2\gamma(p) \lesssim L^{-2} N^{\delta_{\mathrm{H}}+2\delta_\mathrm{B}}.
\end{equation}

\subsubsection{Analysis of $\cG_{\cK_{\mathrm{I}}}$}

Here we need to exploit crucial cancellations between the numerator and the denominator. We commute $\cK_{\mathrm{I}}$ to the right and obtain
 \begin{equation}\nonumber
 \cG_{\cK_{\mathrm{I}}}= \sum_{\alpha} \lambda_{\alpha}\frac{\braket{(1+B)\psi_{\alpha}, \cK_{\mathrm{I}}(1+B)\psi_{\alpha}}} {\braket{(1+B)\psi_{\alpha}, (1+B)\psi_{\alpha}}}=\sum_{\alpha} \lambda_{\alpha}\frac{\braket{(1+B)\psi_{\alpha},(1+B)\cK_{\mathrm{I}}\psi_{\alpha}}}{\braket{(1+B)\psi_{\alpha}, (1+B)\psi_{\alpha}}}+ \cE_{\cG_{K_{\mathrm{I}}}}
 \end{equation}
with
\begin{equation}
\cE_{\cG_{K_{\mathrm{I}}}}:=\sum_{\alpha} \lambda_{\alpha}\frac{\braket{\psi_{\alpha},(1+B^*)[\cK_{\mathrm{I}},B]\psi_{\alpha}}}{\braket{(1+B)\psi_{\alpha}, (1+B)\psi_{\alpha}}}.
\end{equation}
Let us introduce the notation $\psi_\alpha=\xi_{\alpha_1}\otimes\nu_{\alpha_2}$ with $\alpha=(\alpha_1,\alpha_2)$, where $\xi_{\alpha_1}$ and $\nu_{\alpha_2}$ denote the eigenfunctions of $G_{\mathrm{B}}$ and $G_{\mathrm{free}}$, respectively. Calling $E_{\alpha_2}$ the eigenvalues of $\cK_\mathrm{I}$, so that $\cK_\mathrm{I}\nu_{\alpha_2}=E_{\alpha_2}\nu_{\alpha_2}$,
we see that
\begin{equation}\label{eq:Ktermsb}
\cG_{\cK_{\mathrm{I}}} =\sum_{\alpha}  \lambda_{\alpha} E_{\alpha_2}+\cE_{\cG_{K_{\mathrm{I}}}}
\end{equation}
holds.

Next, we estimate $\cE_{\cG_{K_{\mathrm{I}}}}$. With $\Vert (1+B) \psi_{\alpha} \Vert \geq 1$ we have
\begin{equation}
| \cE_{\cG_{K_{\mathrm{I}}}} | = \left| \sum_{\alpha} \lambda_{\alpha}\frac{\braket{\psi_{\alpha},B^*[\cK_{\mathrm{I}},B]\psi_{\alpha}}}{\braket{(1+B)\psi_{\alpha}, (1+B)\psi_{\alpha}}} \right| \leq \sum_{\alpha} \lambda_{\alpha} |  \braket{\psi_{\alpha},B^*[\cK_{\mathrm{I}},B]\psi_{\alpha}} |.
\label{eq:Sect3Andi3}
\end{equation}
When we contract the high momenta in $B^*[\cK_{\mathrm{I}},B]$, we see that the inner product inside the absolute value equals
\begin{equation}
\braket{\psi_{\alpha},B^*[\cK_{\mathrm{I}},B]\psi_{\alpha}}=-\frac{1}{L^6}\sum_{\substack{u_1,v_1\in P_\mathrm{L}\\p_1\in P_\mathrm{H},\, u_2\in P_\mathrm{I}}}u_2^2\,\eta_{p_1+v_1-u_2}\eta_{p_1}\braket{\psi_{\alpha},\,a^*_{u_1}a^*_{v_1}a_{u_2}a_{u_1+v_1-u_2}  \psi_{\alpha}}.
\end{equation}
Applying the Cauchy-Schwarz inequality,  using \eqref{eq:sumPHC}, and estimating $|u_2|\leq N^{1/3+\delta_\mathrm{L}}/L$ we find
\begin{align}
\sum_{\alpha} \lambda_{\alpha} |\braket{\psi_{\alpha},B^*[\cK_{\mathrm{I}},B]\psi_{\alpha}}|&\leq\frac{1}{L^6}\sum_{\alpha} \lambda_{\alpha} \sum_{\substack{u_1,v_1\in P_\mathrm{L}\\ u_2\in P_\mathrm{I}}} |\braket{\psi_{\alpha},a^*_{u_1}a^*_{v_1}a_{u_2}a_{u_1+v_1-u_2}\psi_{\alpha}}|\sum_{\substack{p_1\in P_\mathrm{H}}}|\eta_{p_1+u_1-u_2}\eta_{p_1}\,u_2^2|\, \nonumber \\
&\lesssim L^{-2} N^{-7/3+\delta_{\mathrm{H}}+2\delta_{\mathrm{L}}} \sum_{\alpha} \lambda_{\alpha} \sum_{\substack{u_1,v_1\in P_\mathrm{L}\\u_2\in P_\mathrm{I}}} |\braket{\psi_{\alpha},a^*_{u_1}a^*_{v_1}a_{u_2}a_{u_1-v_1+u_2}\psi_{\alpha}}|.
\label{eq:Section3Andi4}
\end{align}
We observe now that since $u_2\in P_\mathrm{I}$, at least one of the momenta $u_1$ or $v_1$ needs to be in $P_\mathrm{I}$ (this is due to the fact that the eigenfunctions of $G_{\mathrm{free}}$ are symmetric tensor products of plane waves). Let us assume without loss of generality that $v_1\in P_\mathrm{I}$. We distinguish now the three cases $u_1=0,\, u_1\in P_\mathrm{B}$ and $u_1\in P_\mathrm{I}$.

If $u_1=0$, then $v_1-u_2=0$ (this again follows from the structure of the orthonormal set $\{ \psi_{\alpha} \}_{\alpha \in \mathbb{N}}$), and the expectation $\braket{\psi_{\alpha},a^*_{0}a^*_{u_2}a_{u_2}a_{0}\psi_{\alpha}}$ is positive. In this case we  estimate the sum on the r.h.s. \eqref{eq:Section3Andi4} by
\begin{align}
&\sum_{\alpha} \lambda_{\alpha}\sum_{\substack{u_2\in P_\mathrm{I}}}\braket{\psi_{\alpha},a^*_{0}a^*_{u_2}a_{u_2}a_{0}\psi_{\alpha}}
=\sum_{\substack{u_2\in P_\mathrm{I}}}\tr\big[a^*_{0}a^*_{u_2}a_{u_2}a_{0}\,\Gamma_0\big] \nonumber \\ &\hspace{1cm}=\int_{\mathbb{C}} \tr_{\Fock_0} \big[a^*_{0}a_{0}\,|z\rangle\langle z|\big]\zeta(z)\de z \ \tr_{\Fock_\mathrm{I}}\Big[\sum_{\substack{u_2\in P_\mathrm{I}}}a^*_{u_2}a_{u_2}\, \Gamma_{\text{free}}\Big] = \int_{\mathbb{C}} |z|^2 \zeta(z)\de z \ \sum_{\substack{u_2\in P_\mathrm{I}}}\gamma(u_2)\leq N^2.
\end{align}
To obtain the last inequality we used $\widetilde{N}_0 + \sum_{p \in \Lambda_+^*} \gamma(p) = N$.

In the case $u_1\in P_\mathrm{B}$ we have $v_1=u_2$ and $u_1-v_1+u_2 = u_1$ because $G_{\mathrm{B}}$ and $G_{\mathrm{free}}$ are both translation-invariant states. In particular, the relevant expectation value is again positive. Using \eqref{eq:Comp4c} we obtain the following bound for the sum on the r.h.s. of \eqref{eq:Section3Andi4}:
\begin{align}
\sum_{\alpha} &\lambda_{\alpha}\sum_{\substack{u_1\in P_\mathrm{B}\\v_1,u_2\in P_\mathrm{I}}}|\braket{\psi_{\alpha},a^*_{u_1}a^*_{v_1}a_{u_2}a_{u_1-v_1+u_2}\psi_{\alpha}}| = \sum_{\alpha} \lambda_{\alpha}\sum_{\substack{u_1\in P_\mathrm{B}\\u_2\in P_\mathrm{I}}}\braket{\psi_{\alpha},a^*_{u_2}a_{u_2}a^*_{u_1}a_{u_1}\psi_{\alpha}} \nonumber \\
&\hspace{1.5cm}=\int_{\mathbb{C}}\tr_{\Fock_{\mathrm{B}}} \,\Big[\sum_{\substack{u_1\in P_\mathrm{B}}}a^*_{u_1}a_{u_1}\, G_{\mathrm{B}}(z)\Big] \zeta(z)\de z \ \tr_{\Fock_{\mathrm{I}}} \,\Big[\sum_{\substack{u_2\in P_\mathrm{I}}}a^*_{u_2}a_{u_2}\, G_{\mathrm{free}}\Big] \lesssim N^{5/3+\delta_\mathrm{B}}
\end{align}

To conclude the discussion, we examine the case $u_1\in P_\mathrm{I}$. This implies that $u_1-v_1+u_2\in P_\mathrm{I}$. Moreover, we have that either $v_1=u_2$ or $u_1=u_2$; in both cases the expectation is positive. An application of Lemma \ref{lm:Tr} shows 
\begin{equation}
\sum_{\alpha}\lambda_{\alpha}\sum_{\substack{u_1, u_2\in P_\mathrm{I}}}\braket{\psi_{\alpha},a^*_{u_1}a^*_{u_2}a_{u_2}a_{u_1}\psi_{\alpha}}=\sum_{\substack{u_1, u_2\in P_\mathrm{I}}}\,\tr_{\Fock_{\mathrm{I}}}\Big[a^*_{u_1}a^*_{u_2}a_{u_2}a_{u_1}\,\Gamma_0\Big]\lesssim  N^2
\label{eq:Section3Andi5}
\end{equation}
as a bound for the sum on the r.h.s. of \eqref{eq:Section3Andi4}.

In combination, \eqref{eq:Sect3Andi3}, \eqref{eq:Section3Andi4}--\eqref{eq:Section3Andi5}, and $\delta_{\mathrm{B}} < 1/3$ imply  
\begin{equation}
|\cE_{\cG_{K_{\mathrm{I}}}}| \lesssim L^{-2}N^{-1/3+\delta_{\mathrm{H}}+2\delta_L}
\end{equation}
as well as
\begin{equation}\label{eq:KIresult}
\big|\cG_{\cK_{\mathrm{I}}}
-\sum_{\alpha}  \lambda_{\alpha} E_{\alpha_2}\big|=\big|\cG_{\cK_{\mathrm{I}}}
- \sum_{p\in P_\mathrm{I}}p^2\gamma(p)\big| \lesssim L^{-2}N^{-1/3+\delta_{\mathrm{H}}+2\delta_L}.
\end{equation}
To obtain \eqref{eq:KIresult}, we additionally used \eqref{eq:Ktermsb} and Lemma~\ref{lem:1pdm}.
\subsubsection{Analysis of $\cG_{\cK_>}$} 
In the analysis of $\cG_{\cK_>}$ we estimate the denominator again by one. When we additionally commute $\cK_>$ to the right we find
 \begin{align}
  \cG_{\cK_>}&\leq \sum_{\alpha} \lambda_{\alpha}\braket{(1+B)\psi_{\alpha}, \cK_>(1+B)\psi_{\alpha}} \nonumber \\
  &=\sum_{\alpha} \lambda_{\alpha}\braket{(1+B)\psi_{\alpha},(1+B)\cK_>\psi_{\alpha}}+ \sum_{\alpha} \lambda_{\alpha}\braket{\psi_{\alpha},(1+B^*)[\cK_>,B]\psi_{\alpha}}.
 \end{align}
 The first contribution clearly vanishes because it contains annihilation operators with momenta in  $P_\mathrm{L}^{\mathrm{c}}$ acting on $\psi_{\alpha}$. 
Using additionally \eqref{eq:Kcomm}, we see that
\begin{equation}\label{eq:comm}
[\cK_>,B]=\frac{1}{L^3}\sum_{\substack{p\in P_\mathrm{H}\\u,v\in P_\mathrm{L}}}p^2\eta_{p}\, a^*_{u+p}a^*_{v-p}a_ua_v+\frac{2}{L^3}\sum_{\substack{p\in P_\mathrm{H}\\u,v\in P_\mathrm{L}}}\eta_p\,u\cdot p\, a^*_{u+p}a^*_{v-p}a_ua_v=:\cK_\eta+\cE_\cK,
\end{equation}
which implies
\begin{equation}\label{eq:Kterms}
\cG_{\cK_>}
\leq \sum_{\alpha} \lambda_{\alpha}\braket{\psi_{\alpha},(1+B^*)\big(\cK_\eta+\cE_\cK\big)\psi_{\alpha}}
= \tr[ B^*\cK_\eta \Gamma_0 ] + \tr [ B^*\cE_\cK \Gamma_0].
\end{equation}
When we contract operators with high momenta this allows us to write the first term on the r.h.s. as
\begin{align}
\tr[ B^*\cK_\eta \Gamma_0 ]&=\frac{1}{L^6}\sum_{\substack{p_1\in P_\mathrm{H}\\u_1,v_1\in P_\mathrm{L}}}\eta_{p_1}\sum_{\substack{p_2\in P_\mathrm{H}\\u_2,v_2\in P_\mathrm{L}}}p^2_2\eta_{p_2}\, \tr[ a^*_{v_1}a^*_{u_1}a_{v_2}a_{u_2} \Gamma_0] \delta_{u_2+p_2,u_1+p_1}\delta_{v_2-p_2,v_1-p_1} \nonumber \\
&=\frac{1}{L^6}\sum_{\substack{p_1\in P_\mathrm{H}\\u_1,v_1,u_2,v_2\in P_\mathrm{L}}}\eta_{p_1}(p_1+u_1-u_2)^2\eta_{p_1+u_1-u_2}\,\tr[a^*_{v_1}a^*_{u_1}a_{v_2}a_{u_2}\Gamma_0]\delta_{v_2,v_1+u_1-u_2}. \label{eq:BstarKeta}
\end{align}
Because $\Gamma_0$ is translation invariant the factor $\delta_{v_2,v_1+u_1-u_2}$ can be dropped.
This term contributes to \eqref{eq:mainGK}. 

We consider now the second term in \eqref{eq:Kterms}, that is,
\begin{equation}\label{eq:BsEK}
\tr [ B^*\cE_\cK\ \Gamma_0] =\frac{2}{L^6}\sum_{u_1,v_1\in P_\mathrm{L}}\sum_{\substack{p_2\in P_\mathrm{H}\\u_2,v_2\in P_\mathrm{L}}}\eta_{p_2+u_2-u_1}\eta_{p_2}p_2\cdot u_2\,\tr[a^*_{v_1}a^*_{u_1}a_{v_2}a_{u_2}\Gamma_0].
\end{equation}
An application of Lemma~\ref{lem:1-pdmGamma0} shows that the trace in \eqref{eq:BsEK} can be written as $\tr [ B^*\cE_\cK\ \Gamma_0] = \text{D}_1+\text{D}_2$
with
\begin{align}
\text{D}_1&=2\widetilde N_0L^{-6}\sum_{\substack{p_2\in P_\mathrm{H},u_2\in P_\mathrm{L}}}\Big[\big(\eta_{p_2}+\eta_{p_2+u_2}\big)\eta_{p_2}p_2\cdot u_2\,\gamma(u_2)+\eta_{p_2+u_2}\eta_{p_2}p_2\cdot u_2\,\alpha(u_2)\Big], \nonumber \\
\text{D}_2&=2L^{-6}\sum_{\substack{p_2\in P_\mathrm{H}\\v_2,u_2\in P_\mathrm{L}}}\Big[\big(\eta_{p_2}+\eta_{p_2-v_2+u_2}\big)\eta_{p_2}p_2\cdot u_2\,\gamma(u_2)\gamma(v_2)+\eta_{p_2-v_2+u_2}\eta_{p_2}p_2\cdot u_2\,\overline{\alpha(v_2)}\alpha(u_2)\Big].
\end{align}
Let us introduce the set $P_r:=\{p\in\L^*:|p|\geq rN\}$, where $r>1$;  for $p_2\in P_r$ we have 
\begin{equation}\label{eq:etap1}
\sup_{u\in P_\mathrm{L}}\sum_{\substack{p_2\in P_r}}|\eta_{p_2-u}\eta_{p_2}||p_2|\leq \biggl(\sum_{\substack{p\in P_{r/2}}}|\eta_{p}|^2\biggl)^{1/2}\biggl(\sum_{\substack{q\in P_r}}|\eta_{q}|^2q^2\biggl)^{1/2} \lesssim_r L^5 N^{-2},
\end{equation}
which follows from the Cauchy-Schwarz inequality, \eqref{eq:etabound} and \eqref{eq:Dw} (the latter  implies $\|\nabla f_N\|_{L^2} \lesssim L^2 N^{-1/2}$).
Instead, for  $p_2\in P_r^{\mathrm{c}}$,  \eqref{eq:etabound} implies
\begin{equation}\label{eq:etap2}
\sup_{u\in P_\mathrm{L}}\sum_{\substack{p_2\in P_\mathrm{H}\cap P_r^c}}|\eta_{p_2}\eta_{p_2-u}||p_2| \lesssim \frac{L^2}{N^2} \sup_{u\in P_\mathrm{L}}  \sum_{\substack{p_2\in P_\mathrm{H}\cap P_r^c}} \frac{1}{|p_2|(|p_2|-|u|)^2} \lesssim \frac{L^5 \ln(N)}{N^2}.
\end{equation}
To obtain the second bound we used $|p_2|-|u| \gtrsim 1$, which follows from the assumption $\delta_{\mathrm{L}} + \delta_{\mathrm{H}} < 2/3$. Using \eqref{eq:Comp4b}, \eqref{eq:etap1}, \eqref{eq:etap2} and $\delta_{\mathrm{B}} < 1/3$, we conclude that
\begin{align}
|\text{D}_1|& \lesssim (\widetilde N_0/L^6)\sup_{u\in P_\mathrm{L}}\sum_{\substack{p_2\in P_\mathrm{H}}}|\eta_{p_2-u}\eta_{p_2}||p_2|\sum_{\substack{u_2\in P_\mathrm{L}}}|u_2|\gamma(u_2)+(\widetilde N_0/L^6) \sup_{u\in P_\mathrm{L}}\sum_{\substack{p_2\in P_\mathrm{H}}}|\eta_{p_2-u}\eta_{p_2}||p_2|\sum_{\substack{u_2\in P_\mathrm{L}}}|u_2||\alpha(u_2)| \nonumber \\
&\lesssim L^{-2} N^{1/3} \ln(N). \label{eq:erD1}
\end{align}
With similar considerations we see that
\begin{align}
|\text{D}_2|
\lesssim& \frac{1}{L^6} \sup_{u\in P_\mathrm{L}}\sum_{\substack{p_2\in P_\mathrm{H}}}|\eta_{p_2-u}\eta_{p_2}||p_2|\sum_{\substack{u_2, v_2\in P_\mathrm{L}}}|u_2|\gamma(u_2)\gamma(v_2) \nonumber \\
&+ \frac{1}{L^6} \sup_{u\in P_\mathrm{L}}\sum_{\substack{p_2\in P_\mathrm{H}}}|\eta_{p_2-u}\eta_{p_2}||p_2|\sum_{\substack{u_2,v_2\in P_\mathrm{L}}}|u_2||\alpha(u_2)\alpha(v_2)| \lesssim L^{-2} N^{1/3} \ln(N). \label{eq:erD3}
\end{align}
Collecting the results of equations \eqref{eq:Kterms}--\eqref{eq:BsEK} and the bounds  \eqref{eq:erD1}, \eqref{eq:erD3}, we conclude that
 \begin{equation}\label{eq:gK+}
  \begin{split}
  \cG_{\cK_>}-\frac{1}{L^6}\sum_{\substack{p_1\in P_\mathrm{H}\\u_1,v_1,u_2,v_2\in P_\mathrm{L}}}\eta_{p_1}(p_1+u_1-u_2)^2\eta_{p_1+u_1-u_2}\,\tr[a^*_{v_1}a^*_{u_1}a_{v_2}a_{u_2}\Gamma_0] \lesssim L^{-2}N^{1/3}\ln(N).
  \end{split}
 \end{equation}
The bounds \eqref{eq:KB}, \eqref{eq:KIresult} and \eqref{eq:gK+}  imply \eqref{eq:mainGK} and conclude the proof of Lemma \ref{lm:GK}. 
\end{proof}

\subsection{Proof of Proposition \ref{trHG}}\label{prop}

The results of Lemma \ref{lm:GV} and Lemma \ref{lm:GK} imply
\begin{equation}\label{eq:enf}
\tr\big[\cH_N \Gamma\big] -\big(E_{\cK}+E_{\cV_N}\big) \lesssim L^{-2}\cE_N,
\end{equation}
with $E_{\cV_N}$ in \eqref{eq:mainGV}, $E_{\cK}$ in \eqref{eq:mainGK} and
\begin{equation}
\cE_N=N^{1-\delta_{\mathrm{H}}}+N^{\delta_{\mathrm{H}}+2\delta_{\mathrm{B}}}+N^{-1/3+\delta_{\mathrm{H}}+2\delta_L}+N^{1/3}\ln(N).
\end{equation}
Separating the contributions from momenta in $P_\mathrm{B}$ and $P_\mathrm{I}$, $E_{\cV_N} + E_{\cK}$ can be written as
\begin{align}
&E_{\cK}+E_{\cV_N}=\tr_{\Fock_{\mathrm{I}}} \Big[\Big(\sum_{p\in P_\mathrm{I}}p^2a^*_p a_p\Big)  G_{\mathrm{free}}\Big]+\frac{\widetilde N_0}{L^3}\sum_{q\in P_\mathrm{I}}(\hat v_N\ast\hat f_N)(q)\gamma(q) \nonumber \\
&\quad+\tr_{\Fock_{\mathrm{B}}} \,\Big[\sum_{p\in P_\mathrm{B}}p^2a^*_p a_p+\frac{\widetilde N_0}{2L^3}\sum_{q\in P_\mathrm{B}}(\hat v_N\ast\hat f_N)(q)\Big(2a^*_{q}a_{q}+( z/|z|)^2a^*_{q}a^*_{-q}+(\bar z/|z|)^2a_{q}a_{-q}\Big)\,G_{\mathrm{B}}(z)\Big] \nonumber \\
&\quad+\frac{4\pi\frak{a}_N}{L^3}\biggl[\int_{\mathbb{C}} |z|^4 \zeta(z) \de z+2\widetilde N_0\sum_{\substack{ u\in P_\mathrm{L}\backslash\{0\}}}\gamma(u)+2\sum_{\substack{ u,v\in P_\mathrm{L}\backslash\{0\}}}\gamma(v)\gamma(u)\biggl]+\text{E}_1
\label{eq:sumHkHv}
\end{align}
with
\begin{align}
\text{E}_1&=\frac{1}{L^6}\sum_{\substack{p_1\in P_\mathrm{H}\\u_1,v_1,u_2\in P_\mathrm{L}}}\hspace{-0.3cm}\eta_{p_1}\Big[(p_1+u_1-u_2)^2\eta_{p_1+u_1-u_2}+\frac{1}{2}\hat v_N(p_1+u_1-u_2) \nonumber \\
&\hspace{3cm}+\frac{1}{2L^3}\sum_{\substack{p_2\in P_\mathrm{H}}}\hat v_N(p_1+p_2+u_1-u_2)\eta_{p_2}\Big]\,\tr[a^*_{v_1}a^*_{u_1}a_{v_2}a_{u_2}\Gamma_0].
\end{align}
An application of the scattering equation in \eqref{npscatt} allows us to write $E_1$ as
\begin{align}
\text{E}_1&=\frac{1}{L^3}\sum_{\substack{p_1\in P_\mathrm{H}\\u_1,v_1,u_2\in P_\mathrm{L}}}\hspace{-0.3cm}\eta_{p_1}\biggl[\lambda_{N} \big(\hat{\mathds{1}}_{|x|\leq\ell}*\hat{f}_N\big)(p_1+u_1-u_2)-\frac{1}{2 L^3 }\sum_{\substack{p_2\in P_\mathrm{H}^c}}\hat v_N(p_1+p_2+u_1-u_2)\eta_{p_2}\biggl]\tr[a^*_{v_1}a^*_{u_1}a_{v_2}a_{u_2}\Gamma_0] \nonumber \\
&=:\text{E}_{11}+\text{E}_{12}. 
\label{eq:Sect3Andi6}
\end{align}
We now prove
\begin{equation}\label{eq:Escat}
|\text{E}_{11}| \lesssim L^{-2} N^{-1/2+\delta_{\mathrm{H}}/2} \quad \text{ and } \quad |\text{E}_{12}| \lesssim L^{-2} N^{1-\delta_{\mathrm{H}}}.
\end{equation}
To obtain the bound for $\text{E}_{11}$, we first note that \eqref{eq:Dw} and $f_N \leq 1$ imply
\begin{equation}
	\sum_{\substack{p_1\in \L^*}}| \big(\hat{\mathds{1}}_{|x|\leq\ell}*\hat{f}_N\big)(p_1)|^2 =  \|\mathds{1}_{|x|\leq\ell} f_N\|^2 \lesssim L^3.
\end{equation}
Applications of the above bound, Cauchy-Schwarz, \eqref{eq:el} (which implies $\lambda_N \lesssim 1/(N L^2)$), \eqref{eq:sumPHC} and Lemma~\ref{lm:Tr} prove the bound for $\text{E}_{11}$. An application of \eqref{eq:etabound} shows
\begin{equation}
	\sup_{u \in \L^*} \sum_{\substack{p\in P_\mathrm{H}}}|\hat v_N(p + u)||\eta_{p}|\leq \frac{L \|\hat v\|_\infty}{N^2}\sum_{\substack{p\in P_\mathrm{H},\\|p|\leq N}}\frac{1}{p^2}+\Biggl(\sum_{\substack{p\in \L^*}}|\hat v_N(p)|^2\Biggl)^{1/2}\Biggl(\sum_{\substack{|p|> N}}\frac{L^2}{N^2p^4}\Biggl)^{1/2} \lesssim L^4 N^{-1}.
\end{equation}
The bound for $\text{E}_{12}$ follows when we combine this bound, \eqref{eq:sumPHCC} and Lemma~\ref{lm:Tr}. 

Next, we use \eqref{eq:Comp4b} and \eqref{eq:SSV} to estimate
\begin{equation}\label{eq:gfin1}
\frac{\widetilde N_0}{L^3}\sum_{q\in P_\mathrm{I}}(\hat v_N\ast\hat f_N)(q)\gamma(q)\leq \frac{8\pi\frak{a}_N\widetilde N_0}{L^3}\sum_{q\in P_\mathrm{I}}\gamma(q)+ C L^{-2}.
\end{equation}
Finally, we can replace $\widetilde N_0$ by $N_0(\beta,N,L)$ in the second line of \eqref{eq:sumHkHv}. More precisely, we apply Lemmas~\ref{lem:boundsgamma}, \ref{lem:BoundWidetildeN0} and \eqref{eq:SSV} and find that the error term is bounded by
\begin{equation}\label{eq:gfin2}
\frac{|\widetilde N_0-N_0|}{L^3}\sum_{q\in P_\mathrm{B}}\big|(\hat v_N\ast\hat f_N)(q)\big|\big(2\gamma(q)+|\alpha(q)|+|\overline{\alpha(q)}|\big) \lesssim L^{-2} N^{1/3+\delta_{\mathrm{B}}}.
\end{equation}
In combination, \eqref{eq:enf}, \eqref{eq:sumHkHv}--\eqref{eq:Escat}, \eqref{eq:gfin1} and \eqref{eq:gfin2} prove \eqref{eq:FH}. 

\section{Bound for the entropy}\label{entr}

In this section we establish the following lower bound for the entropy of our trial state.

\begin{proposition}
	\label{prop:Entropy}
	There exists a constant $C>0$ such that the entropy of the state $\Gamma$ in \eqref{trialstate} satisfies
	\begin{equation}
		S(\Gamma) \geq \int_{\mathbb{C}} S(G_{\mathrm{B}}(z)) \zeta(z) \de z + S(G_{\mathrm{free}}) + S(\zeta) - C N^{-1 + \delta_{\mathrm{H}}},
		\label{eq:Entr}
	\end{equation}
	where 
	\begin{equation}
		S(\zeta) = - \int_{\mathbb{C}} \zeta(z) \ln(\zeta(z)) \de z
		\label{eq:ClassicalEntropy}
	\end{equation}
	denotes the classical entropy of the probability distribution $\zeta$.
\end{proposition}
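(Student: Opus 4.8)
The plan is to compare the entropy of $\Gamma$, which is unitarily related to $\Gamma_0$ only approximately (the map $\psi_\alpha \mapsto \phi_\alpha = (1+B)\psi_\alpha/\|(1+B)\psi_\alpha\|$ is not unitary), with the entropy of $\Gamma_0$, and then to evaluate $S(\Gamma_0)$ exactly using the product/convex-combination structure in \eqref{eq:modfreestate}.

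First I would control $|S(\Gamma) - S(\Gamma_0)|$. Since $\Gamma = \sum_\alpha \lambda_\alpha |\phi_\alpha\rangle\langle\phi_\alpha|$ is written in terms of the \emph{same} weights $\lambda_\alpha$ as $\Gamma_0 = \sum_\alpha \lambda_\alpha |\psi_\alpha\rangle\langle\psi_\alpha|$ but with non-orthonormal vectors $\phi_\alpha$, the map $\Gamma_0 \mapsto \Gamma$ is a (not quite trace-preserving) unital-type operation; the standard tool here is that the von Neumann entropy changes little under a map that is close to unitary. Concretely, I would use the fact that $S$ is continuous in trace norm together with an entropy-difference bound of the form $|S(\Gamma)-S(\Gamma_0)| \lesssim \|\Gamma-\Gamma_0\|_1 \ln(\dots) $, or — cleaner for Fock space where dimension is infinite — I would instead use the bound on the relative entropy / the fact that both $\Gamma$ and $\Gamma_0$ are supported (up to small error) on states with particle number $\lesssim N$, so the relevant "dimension" that enters the Fannes-type inequality is replaced by $\ln(\tr[\mathcal N \Gamma])+\text{const}$. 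The key quantitative input is that $B$ only shifts momenta between $P_{\mathrm L}$ and $P_{\mathrm H}$ while adding few particles: one has $\|B\psi_\alpha\|^2 \lesssim$ (something small), uniformly in $\alpha$ averaged against $\lambda_\alpha$, which should give $\int \|(1+B)\psi_\alpha\|^2 \lambda_\alpha$ close to $1$ and, more importantly, $\sum_\alpha \lambda_\alpha \big| \|(1+B)\psi_\alpha\|^2 - 1 \big| \lesssim N^{-1+\delta_{\mathrm H}}$ (same order as in Lemma~\ref{lm:NG}). This is the main obstacle: making the entropy perturbation rigorous on an infinite-dimensional space and extracting exactly the error order $N^{-1+\delta_{\mathrm H}}$; I expect to route it through the explicit bounds on $B$ already developed for the proof of Proposition~\ref{trHG} (the $\tr[B^*B\,\Gamma_0]$-type estimates) rather than a black-box Fannes inequality.

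Next I would compute $S(\Gamma_0)$ exactly. Because $\Gamma_0 = \int_{\mathbb C} |z\rangle\langle z|\otimes G_{\mathrm B}(z)\,\zeta(z)\de z \otimes G_{\mathrm{free}}$ is a continuous convex combination, indexed by $z$, of states $|z\rangle\langle z|\otimes G_{\mathrm B}(z)\otimes G_{\mathrm{free}}$ living on mutually "spread out" sectors, its entropy decomposes. The coherent states $\{|z\rangle\}$ are not orthogonal, so $S(\Gamma_0)$ is \emph{not} literally $S(\zeta)+\int S(G_{\mathrm B}(z))\zeta\de z + S(G_{\mathrm{free}})$; however the Berezin--Lieb inequality (already quoted in Remark~\ref{rmk:MainResults}.(b)) gives exactly the one-sided bound we need: $S(\Gamma_0) \geq S(\zeta) + \int_{\mathbb C} S(G_{\mathrm B}(z))\,\zeta(z)\de z + S(G_{\mathrm{free}})$, since the $p=0$-mode part $\int |z\rangle\langle z|\,\zeta(z)\de z$ has entropy at least $S(\zeta)$, the $\mathscr F_{\mathrm B}$-part is a $\zeta$-average of the states $G_{\mathrm B}(z)$ whose entropy is independent of $z$ (by the $z$-independence in Lemma~\ref{lem:DiagBogoH}, $S(G_{\mathrm B}(z))$ does not actually depend on $z$, so the integral is trivial), and entropy is superadditive under tensor products combined with the fact that $G_{\mathrm{free}}$ factors off exactly. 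More carefully: $S(A\otimes B) = S(A)+S(B)$ for product states, $S$ is concave so $S(\int \rho_z \zeta\de z) \geq \int S(\rho_z)\zeta \de z$, and the $p=0$ marginal of $\Gamma_0$ dominates $S(\zeta)$ by Berezin--Lieb. Assembling these gives \eqref{eq:Entr} with the stated error coming solely from the first step.

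In summary, the proof has two ingredients: (i) an entropy-stability estimate $S(\Gamma) \geq S(\Gamma_0) - CN^{-1+\delta_{\mathrm H}}$ obtained from the smallness of $B$ via the same second-quantized bounds used in Section~\ref{sec:energy}; and (ii) the lower bound $S(\Gamma_0) \geq \int S(G_{\mathrm B}(z))\zeta(z)\de z + S(G_{\mathrm{free}}) + S(\zeta)$ from Berezin--Lieb, concavity of entropy, additivity over the tensor factors, and the $z$-independence of $S(G_{\mathrm B}(z))$. The delicate point is (i): one must ensure the near-unitarity defect of $1+B$ translates into an entropy error of precisely order $N^{-1+\delta_{\mathrm H}}$ and not something larger, which forces us to use the structural fact that $B$ creates at most two high-momentum particles per application and that $\Gamma_0$ has $O(N)$ particles with good moment bounds (Lemmas~\ref{lem:boundsgamma}, \ref{lm:Tr}, \ref{lem:momentBoundsZeta}).
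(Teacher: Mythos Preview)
Your overall two-step strategy --- first show $S(\Gamma)\ge S(\Gamma_0)-CN^{-1+\delta_{\mathrm H}}$, then bound $S(\Gamma_0)$ from below via Berezin--Lieb --- matches the paper exactly, and your treatment of step~(ii) is essentially correct. The paper packages the Berezin--Lieb argument as a single lemma (Lemma~\ref{lem:BerezinLieb}) which directly yields $S\bigl(\int_{\mathbb C} |z\rangle\langle z|\otimes G(z)\,p(z)\de z\bigr)\ge \int_{\mathbb C} S(G(z))\,p(z)\de z+S(p)$, proved via Jensen's inequality together with the resolution of identity $\int_{\mathbb C} |z\rangle\langle z|\de z=\mathds 1$. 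Your attempt to extract $S(\zeta)$ from the $p=0$ marginal and $\int S(G_{\mathrm B}(z))\zeta\de z$ from concavity \emph{separately} does not quite work as stated, since subadditivity of entropy goes the wrong way to recombine them; but the combined Berezin--Lieb lemma you cite is indeed the right tool, so this is only an issue of formulation.

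The genuine gap is in step~(i). A Fannes--Audenaert-type bound $|S(\Gamma)-S(\Gamma_0)|\lesssim \|\Gamma-\Gamma_0\|_1\cdot(\text{log of effective dimension})$ fails here: one has $\|\Gamma-\Gamma_0\|_1\lesssim (\tr[B^*B\,\Gamma_0])^{1/2}\sim N^{(-1+\delta_{\mathrm H})/2}$, while the entropy of $\Gamma_0$ (hence any reasonable effective log-dimension) is of order $N$, so the resulting error would be of order $N^{(1+\delta_{\mathrm H})/2}$, far larger than $N^{-1+\delta_{\mathrm H}}$. The paper instead invokes a specific entropy inequality due to Seiringer \cite{Sei2006}, stated as Lemma~\ref{lem:entropy1}: if $\hat\Gamma=\sum_\alpha \lambda_\alpha P_\alpha$ with rank-one (not necessarily orthogonal) projections $P_\alpha$ and if $\Gamma$ has eigenvalues $\{\lambda_\alpha\}$, then $S(\hat\Gamma)\ge S(\Gamma)-\ln\tr\bigl(\sum_\alpha P_\alpha\hat\Gamma\bigr)$. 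Applied with $P_\alpha=|\phi_\alpha\rangle\langle\phi_\alpha|$ this gives
\[
S(\Gamma)\ \ge\ S(\Gamma_0)\ -\ \ln\Bigl(\sum_{\alpha,\alpha'}\lambda_\alpha\,|\langle\phi_\alpha,\phi_{\alpha'}\rangle|^2\Bigr),
\]
and the quantity inside the logarithm is then shown to equal $1+O(N^{-1+\delta_{\mathrm H}})$ using $\|(1+B)\psi_\alpha\|\ge 1$, $B^*\psi_\alpha=0$, orthonormality of the $\psi_\alpha$, and the bound $\tr[(B^*B)^2\Gamma_0]\lesssim N^{-2+2\delta_{\mathrm H}}$ obtained from \eqref{eq:sumPHC} and Lemma~\ref{lm:Tr}. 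The logarithm is precisely what converts a quantity close to $1$ into the required small additive error; this mechanism is absent from your proposal, and the vaguer ``route it through the $\tr[B^*B\,\Gamma_0]$-type estimates'' does not by itself identify it.
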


The remainder of this section is devoted to the proof of Proposition~\ref{prop:Entropy}. In the first step, we estimate the influence of the correlation structure with the following lemma. It appeared for the first time in \cite[Lemma~2]{Sei2006}. 

\begin{lemma}
	\label{lem:entropy1}
	Let $\Gamma$ be a density matrix on some Hilbert space with eigenvalues $\{ \lambda_{\alpha} \}_{\alpha \in \mathbb{N}}$, let $\{ P_{\alpha} \}_{\alpha \in \mathbb{N}}$ be a family of one-dimensional orthogonal projection (for which $P_{\alpha_1} P_{\alpha_2} = \delta_{\alpha_1,\alpha_2} P_{\alpha_1}$ need not necessarily be true), and define $\hat{\Gamma} = \sum_{\alpha} \lambda_{\alpha} P_{\alpha}$. Then we have
	\begin{equation}
		S(\hat{\Gamma}) \geq S(\Gamma) - \ln \mathrm{Tr}\left( \sum_{\alpha} P_{\alpha} \hat{\Gamma} \right).
		\label{eq:LemEntropyCorr}
	\end{equation}
\end{lemma}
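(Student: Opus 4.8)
The plan is to compare $S(\hat\Gamma)$ with the classical (Shannon) entropy of the sequence $\{\lambda_\alpha\}$ using the variational/relative-entropy characterization of von Neumann entropy. Recall that for any density matrix $\rho$ and any other density matrix $\sigma$ one has the Gibbs variational inequality $-\tr[\rho\ln\rho]\le -\tr[\rho\ln\sigma]$, equivalently $\tr[\rho(\ln\rho-\ln\sigma)]\ge 0$ (nonnegativity of relative entropy). I would apply this with $\rho=\hat\Gamma$ and a cleverly chosen comparison operator built from the $P_\alpha$'s, so that $-\tr[\hat\Gamma\ln\sigma]$ can be bounded below by $S(\Gamma)=-\sum_\alpha\lambda_\alpha\ln\lambda_\alpha$ minus the stated logarithmic correction.

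Concretely, set $Q=\sum_\alpha P_\alpha$ (a positive operator, not necessarily a projection since the $P_\alpha$ need not be orthogonal) and let $T=\tr Q=\tr(\sum_\alpha P_\alpha\hat\Gamma\cdot\Gamma^{-1})$—more usefully, note $\tr(\sum_\alpha P_\alpha\hat\Gamma)$ appears on the right-hand side, so I would instead track the operator $\sum_\alpha\lambda_\alpha P_\alpha=\hat\Gamma$ itself together with $\sum_\alpha P_\alpha$. The key observation is the operator inequality: since each $P_\alpha$ is a rank-one projection, $\ln$ is operator concave, and $\hat\Gamma=\sum_\alpha\lambda_\alpha P_\alpha$ is a convex-type combination (with weights summing to $1$), Jensen's operator inequality in the form used in \cite{Sei2006} gives
\begin{equation}
	-\tr[\hat\Gamma\ln\hat\Gamma]\ \ge\ -\sum_\alpha\lambda_\alpha\,\tr[P_\alpha\ln\hat\Gamma]\ \ge\ -\sum_\alpha\lambda_\alpha\ln\big(\tr[P_\alpha\hat\Gamma]\big),
	\label{eq:entr_step}
\end{equation}
where the second step is Jensen's inequality for the concave scalar function $\ln$ applied to the probability measure $P_\alpha(\cdot)=\tr[P_\alpha\,\cdot\,]$ evaluated on the spectral data of $\hat\Gamma$ (here one uses $\tr P_\alpha=1$). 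Writing $\mu_\alpha\deq\tr[P_\alpha\hat\Gamma]$, we then have $S(\hat\Gamma)\ge-\sum_\alpha\lambda_\alpha\ln\mu_\alpha$. Now decompose $-\ln\mu_\alpha=-\ln\lambda_\alpha-\ln(\mu_\alpha/\lambda_\alpha)$, so
\begin{equation}
	S(\hat\Gamma)\ \ge\ S(\Gamma)\ -\ \sum_\alpha\lambda_\alpha\ln\!\Big(\frac{\mu_\alpha}{\lambda_\alpha}\Big).
\end{equation}
Finally, apply Jensen's inequality once more—this time to the concave function $\ln$ and the probability weights $\{\lambda_\alpha\}$—to the sum $\sum_\alpha\lambda_\alpha\ln(\mu_\alpha/\lambda_\alpha)\le\ln\big(\sum_\alpha\mu_\alpha\big)=\ln\tr\big(\sum_\alpha P_\alpha\hat\Gamma\big)$, which yields exactly \eqref{eq:LemEntropyCorr}.

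The main obstacle is justifying \eqref{eq:entr_step} rigorously when the $P_\alpha$ are not mutually orthogonal: one must be careful that $\sum_\alpha\lambda_\alpha P_\alpha$ is the spectral decomposition of $\hat\Gamma$ only in a generalized sense, and that the manipulation $\tr[P_\alpha\ln\hat\Gamma]\le\ln\tr[P_\alpha\hat\Gamma]$ is a genuine instance of operator Jensen (valid because $P_\alpha$ is a rank-one projection, hence $\tr[P_\alpha(\cdot)]$ is a state and $\ln$ is concave). All trace expressions are finite since $\hat\Gamma$ is trace class with the same eigenvalue sequence summing to $1$ and $\sum_\alpha\lambda_\alpha|\ln\lambda_\alpha|<\infty$ by the assumed finiteness of $S(\Gamma)$; the remaining convergence issues (e.g.\ that $\sum_\alpha\mu_\alpha$ is finite) are handled by the boundedness hypotheses implicit in the application and can be checked directly. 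Since this lemma is quoted from \cite[Lemma~2]{Sei2006}, I would simply cite that reference for the detailed argument and present the above as the conceptual outline.
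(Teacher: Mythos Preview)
Your argument is correct and is essentially the standard proof of this inequality (which is the one given in \cite{Sei2006}); the paper itself does not supply a proof but simply cites \cite[Lemma~2]{Sei2006}. Two minor remarks: the first inequality in your display \eqref{eq:entr_step} is in fact an equality, since $\hat\Gamma=\sum_\alpha\lambda_\alpha P_\alpha$ gives $\tr[\hat\Gamma\ln\hat\Gamma]=\sum_\alpha\lambda_\alpha\tr[P_\alpha\ln\hat\Gamma]$ directly; and the exploratory paragraph about $Q=\sum_\alpha P_\alpha$ and ``$T=\tr Q=\tr(\sum_\alpha P_\alpha\hat\Gamma\cdot\Gamma^{-1})$'' is muddled and should simply be deleted, as the clean argument you give afterwards does not use it.
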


An application of Lemma~\ref{lem:entropy1} shows
\begin{equation}
	S(\Gamma) \geq S(\Gamma_0) - \ln \tr \left( \sum_{\alpha'} | \phi_{\alpha'} \rangle \langle \phi_{\alpha'} | \ \Gamma \right) = S(\Gamma_0) - \ln\left( \sum_{\alpha,\alpha' } \lambda_{\alpha} \ | \langle \phi_{\alpha}, \phi_{\alpha'} \rangle |^2 \right)
	\label{eq:entropy2b}
\end{equation}
with $\Gamma_0$ in \eqref{eq:modfreestate} and $\lambda_{\alpha}$, $\phi_{\alpha}$ in \eqref{trialstate}. Let us have a closer look at the term inside the logarithm. Using $\Vert (1+B) \psi_{\alpha} \Vert \geq 1$, $B^* \psi_{\alpha} = 0$ and the fact that $\{ \psi_{\alpha} \}_{\alpha \in \mathbb{N}}$ is an orthonormal set, we see that
\begin{align}
	\sum_{\alpha,\alpha'} \lambda_{\alpha} \ | \langle \phi_{\alpha}, \phi_{\alpha'} \rangle |^2 &\leq \sum_{\alpha,\alpha'} \lambda_{\alpha} \ | \langle \psi_{\alpha}, (1+B^*)(1+B) \psi_{\alpha'} \rangle |^2 = 1 + 2 \sum_{\alpha} \lambda_{\alpha} \langle \psi_{\alpha}, B^* B \psi_{\alpha} \rangle + \sum_{\alpha} \lambda_{\alpha} \langle \psi_{\alpha}, (B^* B)^2 \psi_{\alpha} \rangle \nonumber \\
	&\leq 1+\delta + (1+\delta^{-1}) \tr \left[ (B^* B)^2 \Gamma_0 \right] 	\label{eq:entropy3b}
\end{align}
holds for $\delta>0$. 

The last term on the r.h.s. reads
\begin{align}
	\tr&\left[ ( B^* B )^2 \Gamma_0 \right] = \frac{1}{16|\Lambda|^4} \sum_{p_i \in P_{\mathrm{H}}; u_i, v_i \in P_{\mathrm{L}} } \prod_{i=1}^4 \eta_{p_i} \nonumber \\
	&\hspace{2cm}\times \tr \left[ a^*_{u_1} a^*_{v_1} a_{u_1+p_1} a_{v_1-p_1} a^*_{u_2+p_2} a^*_{v_2-p_2} a_{u_2} a_{v_2} a^*_{u_3} a^*_{v_3} a_{u_3+p_3} a_{v_3-p_3} a^*_{u_4+p_4} a^*_{v_4-p_4} a_{u_4} a_{v_4}  \Gamma_0 \right]. 	\label{eq:entropy5b}
\end{align}
Since no momenta in $P_{\mathrm{H}} - P_{\mathrm{L}}$ are present in the state $\Gamma_0$, we know that the operators with momenta in $P_{\mathrm{H}} - P_{\mathrm{L}}$ need to be paired among each other in order to obtain a non-zero contribution. A short computation therefore shows that the r.h.s. of \eqref{eq:entropy5b} is bounded from above by a constant times
\begin{align}
	\frac{1}{|\Lambda|^4} \left( \sum_{p, q \in P_{\mathrm{H}} + P_{\mathrm{L}}} | \eta_{p} \eta_{q} | \right)^2  \sum_{u_i, v_i \in P_{\mathrm{L}}} \tr \left[ a^*_{u_1} a^*_{v_1} a_{u_2} a_{v_2} a^*_{u_3} a^*_{v_3} a_{u_4} a_{v_4} \Gamma_0 \right]. 
\end{align}
We apply the Cauchy-Schwarz inequality and \eqref{eq:sumPHC} to bound the proportional to $\eta_p \eta_q$ by $\sum_{p \in P_{\mathrm{H}} + P_{\mathrm{L}} } \eta_p^2 \lesssim L^{6} N^{-3 + \delta_{\mathrm{H}}}$. Afterwards, we use Lemma~\ref{lm:Tr} to show that the second factor is bounded by a constant times $N^4$. When we put the above considerations together together, use $\delta_{\mathrm{L}} + \delta_{\mathrm{H}} < 2/3$, and choose $\delta = N^{-1 + \delta_{\mathrm{H}}}$, we find
\begin{equation}
	\sum_{\alpha,\alpha'} \lambda_{\alpha} \ | \langle \phi_{\alpha}, \phi_{\alpha'} \rangle |^2 \leq 1 + C N^{-1 + \delta_{\mathrm{H}}} 
\end{equation}
as well as 
\begin{equation}
	S(\Gamma) \geq S(\Gamma_0) - C N^{-1 + \delta_{\mathrm{H}}}.
	\label{eq:IntermediateEntropyBound}
\end{equation}
It remains to find a lower bound for the entropy of $\Gamma_0$.

To that end, we need the following lemma, which provides us with a Berezin--Lieb inequality in the spirit of \cite{Berezin1972,Lieb1973}.
\begin{lemma}
	\label{lem:BerezinLieb}
	Let $\{ G(z) \}_{z \in \mathbb{C}}$ be a family of states on a Hilbert space, let $p: \mathbb{C} \to \mathbb{R}$ be a probability distribution and define the state
	\begin{equation}
		\Gamma = \int_{\mathbb{C}} |z \rangle \langle z | \otimes G(z) p(z) \de z.
		\label{eq:entropy2}
	\end{equation} 
	Then we have
	\begin{equation}
		S(\Gamma) \geq \int_{\mathbb{C}} S(G(z)) p(z) \de z + S(p) \quad \text{ with } \quad S(p) = - \int_{\mathbb{C}} p(z) \ln (p(z)) \de z.
		\label{eq:entropy3}
	\end{equation}
\end{lemma}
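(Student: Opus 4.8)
The plan is to deduce Lemma~\ref{lem:BerezinLieb} from two ingredients: the operator Jensen inequality for the operator-concave function $\ln$, and the nonnegativity of the relative entropy, the latter being used once on the quantum factor $\mathcal{H}$ and once on the classical variable $z$. Throughout I only use that the coherent states of \eqref{eq:TrialStateCondensate} satisfy $\langle z|z\rangle = 1$, that $\int_{\mathbb{C}}|z\rangle\langle z|\,\dd z = \mathds{1}$ on $\mathscr{F}_0$, and that $|\langle z|z'\rangle|^2 > 0$ for all $z,z'$. The first step is to introduce, for each $z$, the isometry $V_z : \mathcal{H} \to \mathscr{F}_0\otimes\mathcal{H}$, $V_z\psi = |z\rangle\otimes\psi$, so that $|z\rangle\langle z|\otimes G(z) = V_z G(z) V_z^*$. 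By cyclicity of the trace this gives $S(\Gamma) = -\int_{\mathbb{C}} p(z)\,\tr_{\mathcal{H}}\big[G(z)\, V_z^*(\ln\Gamma)V_z\big]\,\dd z$, which reduces everything to controlling the compressed operator $V_z^*(\ln\Gamma)V_z$ on $\mathcal{H}$ from above.

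Next I would apply the operator Jensen inequality to the unital, completely positive map $A\mapsto V_z^* A V_z$ and the operator-concave function $\ln$, obtaining $V_z^*(\ln\Gamma)V_z \le \ln\widetilde Q(z)$, where $\widetilde Q(z) := V_z^*\Gamma V_z = \langle z|\Gamma|z\rangle_{\mathscr{F}_0} = \int_{\mathbb{C}} p(z')\,|\langle z|z'\rangle|^2\,G(z')\,\dd z'$ is a positive operator on $\mathcal{H}$. Pairing with $G(z)\ge 0$ and integrating yields $S(\Gamma) \ge -\int_{\mathbb{C}} p(z)\,\tr_{\mathcal{H}}\big[G(z)\ln\widetilde Q(z)\big]\,\dd z$. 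I would then split $\widetilde Q(z) = q(z)\,\widehat G(z)$ with $q(z) := \tr_{\mathcal{H}}\widetilde Q(z) = \int_{\mathbb{C}} p(z')\,|\langle z|z'\rangle|^2\,\dd z'$; using the resolution of the identity one checks $\int_{\mathbb{C}} q(z)\,\dd z = 1$, so $q$ is a probability density, and $q(z)>0$ everywhere since the overlaps never vanish, so $\widehat G(z)$ is a genuine state on $\mathcal{H}$ (a convex combination of the $G(z')$). Because $\ln\widetilde Q(z) = \ln q(z)\,\mathds{1}_{\mathcal{H}} + \ln\widehat G(z)$, this produces $-\tr_{\mathcal{H}}\big[G(z)\ln\widetilde Q(z)\big] = -\ln q(z) + S(G(z)) + D\big(G(z)\,\big\|\,\widehat G(z)\big) \ge -\ln q(z) + S(G(z))$ by positivity of the relative entropy.

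Finally, integrating this against $p$ and using $-\int_{\mathbb{C}} p(z)\ln q(z)\,\dd z = S(p) + D(p\|q) \ge S(p)$ — nonnegativity of the classical relative entropy of the two probability densities $p$ and $q$ — gives exactly $S(\Gamma) \ge \int_{\mathbb{C}} p(z)\,S(G(z))\,\dd z + S(p)$. It is worth noting that the two applications of $D(\cdot\|\cdot)\ge 0$ are precisely where the term $S(p)$ (which mere concavity of $S$ would not produce) comes from, and that they rely essentially on $q$ being an honest probability density, i.e.\ on the overcompleteness relation for coherent states.

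The only real obstacle is a routine matter of domains rather than a conceptual one: $\Gamma$ need not be invertible, so the operator Jensen step and the manipulations with $\ln$ should be carried out on the relevant support projections, or via the $\varepsilon$-regularization $G(z)\mapsto (1-\varepsilon)G(z)+\varepsilon\tau$ with $\tau$ a fixed faithful state on $\mathcal{H}$ (which makes $\Gamma$ faithful on $\mathscr{F}_0\otimes\mathcal{H}$) followed by $\varepsilon\to 0$; and one should observe that whenever a support inclusion fails the corresponding relative entropy is $+\infty$, which only strengthens the inequality. One should also record measurability and integrability of $z\mapsto S(G(z))$, $z\mapsto\widetilde Q(z)$ and $z\mapsto q(z)$ so that all the integrals above are well defined. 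These points are standard in this context and do not affect the structure of the argument.
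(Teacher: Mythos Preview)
Your argument is correct and takes a genuinely different route from the paper. The paper works directly in the eigenbasis $\{w_\alpha\}$ of $\Gamma$: writing $G(z)=\sum_{\alpha'}g_{\alpha'}(z)\,|v_{\alpha'}(z)\rangle\langle v_{\alpha'}(z)|$, one has $\langle w_\alpha,\Gamma w_\alpha\rangle=\int\sum_{\alpha'} g_{\alpha'}(z)p(z)\,|\langle w_\alpha,z\otimes v_{\alpha'}(z)\rangle|^2\,\de z$, and the key observation is that for each fixed $\alpha$ the weights $|\langle w_\alpha,z\otimes v_{\alpha'}(z)\rangle|^2\,\de z$ form a probability measure on $(\alpha',z)$ (this is exactly the overcompleteness relation together with orthonormality of $\{v_{\alpha'}(z)\}$). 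A single application of \emph{scalar} Jensen for $\varphi(x)=-x\ln x$ then gives $S(\Gamma)\ge\int\sum_{\alpha'}\varphi(g_{\alpha'}(z)p(z))\,\de z$, and the identity $\varphi(xy)=y\,\varphi(x)+x\,\varphi(y)$ splits this into the two desired entropies. By contrast, you use the \emph{operator} Jensen inequality for $\ln$ under the compression $A\mapsto V_z^*AV_z$, followed by two separate appeals to nonnegativity of relative entropy (once quantum, once classical). Your approach is more structural and has the virtue of identifying precisely the two defect terms $D(G(z)\|\widehat G(z))$ and $D(p\|q)$ that measure how far the inequality is from saturation; the paper's approach is shorter, entirely self-contained, and avoids invoking the (standard but less elementary) Davis--Choi operator inequality. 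Your remarks on regularization and support issues are apt; the paper sidesteps them by never writing $\ln\Gamma$ explicitly, working only with $\varphi$ applied to scalars.
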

\begin{proof}
	We use the spectral theorem to write
	\begin{equation}
		G(z) = \sum_{\alpha} g_{\alpha}(z) \ | v_{\alpha}(z) \rangle \langle v_{\alpha}(z) | \quad \text{ as well as } \quad |z \rangle \langle z | \otimes G(z) = \sum_{\alpha} g_{\alpha}(z) \ | z \otimes v_{\alpha}(z) \rangle \langle z \otimes v_{\alpha}(z) |.
		\label{eq:entropy4}
	\end{equation} 
	Because $G(z)$ is a state for fixed $z \in \mathbb{C}$, we know that $\{ v_{\alpha}(z) \}_{\alpha \in \mathbb{N}}$ is an orthonormal basis. In combination with the completeness relation $\int |z \rangle \langle z | \de z = \mathds{1}$, this implies
	\begin{equation}
		\int_{\mathbb{C}} \sum_{\alpha=1}^{\infty} | \langle w, z \otimes v_{\alpha}(z) \rangle |^2 \de z = 1
		\label{eq:entropy5}
	\end{equation}
	for any fixed vector $w$ with $\Vert w \Vert = 1$. 
	
	For $x \in [0,1]$ we define the function $\varphi(x) = -x \ln(x)$ and denote by $\{ w_{\alpha} \}_{\alpha \in \mathbb{N}}$ the eigenbasis of $\Gamma$. An application of Jensen's inequality shows
	\begin{align}
		\tr \varphi(\Gamma) = \sum_{\alpha} \varphi\left( \langle w_{\alpha}, \Gamma w_{\alpha} \rangle \right) &= \sum_{\alpha} \varphi\left( \int_{\mathbb{C}} \sum_{\alpha'} g_{\alpha'}(z) | \langle w_{\alpha}, z \otimes v_{\alpha'}(z) \rangle |^2 p(z) \de z \right) \nonumber \\
		&\geq \sum_{\alpha} \int_{\mathbb{C}} \sum_{\alpha'} \varphi(g_{\alpha'}(z) p(z)) | \langle w_{\alpha}, z \otimes v_{\alpha'}(z) \rangle |^2 \de z = \int_{\mathbb{C}} \sum_{\alpha'}  \varphi(g_{\alpha'}(z) p(z)) \de z.
		\label{eq:entropy6}
	\end{align}
	This is justified because $x \mapsto \varphi(x)$ is concave and \eqref{eq:entropy5} holds. In the last step we used that $\{w_{\alpha}\}_{\alpha \in \mathbb{N}}$ is a complete orthonormal basis. With $xy \ln(xy) = xy \ln(x)+ x y \ln(y)$ for $x,y \geq 0$ and $\sum_{\alpha'} g_{\alpha'}(z) = 1$, we see that the r.h.s. of \eqref{eq:entropy6} equals the r.h.s. of the inequality in \eqref{eq:entropy3}, which proves the claim.
\end{proof}

An application of Lemma~\ref{lem:BerezinLieb} on the r.h.s. of \eqref{eq:IntermediateEntropyBound} and the additivity of the entropy w.r.t. tensor products prove Proposition~\ref{prop:Entropy}.
\section{Proof of the main results}
\label{sec:ProofOfThm}
Propositions~\ref{trHG} and \ref{prop:Entropy} imply the following upper bound for the free energy of our trial state:
\begin{align}
	\tr[\mathcal{H}_N \Gamma] - \frac{1}{\beta} S(\Gamma) \leq& \sum_{p \in P_{\mathrm{I}}} p^2 \tr_{\mathscr{F}_{\mathrm{I}}}[ a_p^*a_p G_{\mathrm{free}}] - \frac{1}{\beta} S(G_{\mathrm{free}}) + \int_{\mathbb{C}} \left( \tr_{\mathscr{F}_{\mathrm{B}}}[\mathcal{H}^{\mathrm{B}} G_{\mathrm{B}}(z)] - \frac{1}{\beta} S(G_{\mathrm{B}}(z)) \right) \zeta(z) \de z  \nonumber \\
	&+ \mu_0 \sum_{p \in P_{\mathrm{B}}} \gamma(p) + \frac{4 \pi \mathfrak{a}}{NL^3} \int_{\mathbb{C}} |z|^4 \zeta(z) \de z - \frac{1}{\beta} S(\zeta) \nonumber \\
	&+\frac{4\pi\frak{a}}{NL^3}\biggl[ 2\widetilde N_0\sum_{\substack{ u\in P_{\mathrm{L}}\backslash\{0\}}}\gamma(u)+ 2\widetilde N_0\sum_{u\in P_\mathrm{I}}\gamma(u) +2\sum_{\substack{ u,v\in P_{\mathrm{L}}\backslash\{0\}}}\gamma(v)\gamma(u)\biggl] + L^{-2} \mathcal{E}_{\mathcal{H}_N}.
	\label{eq:Sec51}
\end{align}
The Bogoliubov Hamiltonian $\mathcal{H}^{\mathrm{B}}$ and the error term $\mathcal{E}_{\mathcal{H}_N}$ are defined in \eqref{eq:HB} and \eqref{eq:FEr}, respectively. The first two terms on the r.h.s. can be written as 
\begin{equation}
	\frac{1}{\beta} \sum_{p \in P_{\mathrm{I}}} \ln\left( 1 - \exp(- \beta(p^2 - \mu_0)) \right) + \mu_0 \sum_{p \in P_{\mathrm{I}}} \frac{1}{\exp(\beta(p^2 - \mu_0))-1}
	\label{eq:Sec52}
\end{equation}
with $\mu_0$ in \eqref{eq:ChemicalPotentialIdealGas}, and an application of Lemma~\ref{lem:DiagBogoH} shows that the third term equals
\begin{equation}
	E_0 + \frac{1}{\beta} \sum_{p \in P_{\mathrm{B}}} \ln\left( 1 - \exp(- \beta \epsilon(p) ) \right).
	\label{eq:Sec53}
\end{equation}
We refer to the same lemma also for the definitions of $E_0$ and $\epsilon(p)$. One easily checks that $E_0$ is negative and can be dropped for an upper bound. Let us define 
\begin{equation}\label{eq:epstilde}
\tilde{\epsilon}(p) = \sqrt{p^2-\mu_0} \sqrt{p^2-\mu_0 + 16 \pi \mathfrak{a}_N \varrho_0}.
\end{equation} 
The function $x \mapsto \ln(1 - \exp(-x))$ is monotone increasing ($x \geq 0$). This and \eqref{eq:SSV} allow us to replace $\hat{v}_N \ast \hat{f}_N(p)$ in the definition of $\epsilon(p)$ by $8 \pi \mathfrak{a}_N (1 + C/N)$. Moreover, a first order Taylor expansion then shows
\begin{equation}
	\frac{1}{\beta} \sum_{p \in P_{\mathrm{B}}} \ln\left( 1 - \exp(- \beta \epsilon(p) ) \right) \leq \frac{1}{\beta} \sum_{p \in P_{\mathrm{B}}} \ln\left( 1 - \exp(- \beta \tilde\epsilon(p) ) \right) + \frac{C N_0}{L^2 N^2} \sum_{p \in P_{\mathrm{B}}} \frac{p^2 - \mu_0}{\exp(\beta(p^2-\mu_0))-1} \frac{1}{p^2}.
	\label{eq:Sec54}
\end{equation} 
Using $(\exp(x)-1)^{-1} \leq 1/x$ for $x \geq 0$ and $\delta_{\mathrm{B}} < 1/3$, we check that the second term on the r.h.s. is bounded by a constant times $1/L^2$. Moreover, from Lemma~\ref{lem:freeEnergyBog} we know that
\begin{align}
	&\frac{1}{\beta} \sum_{p \in P_{\mathrm{B}}} \ln\left( 1 - \exp(- \beta \tilde{\epsilon}(p) ) \right) \leq \frac{1}{\beta} \sum_{p \in P_{\mathrm{B}}} \ln\left( 1 - \exp(- \beta (p^2 - \mu_0) ) \right) + 8 \pi \mathfrak{a}_N \varrho_0 \sum_{p \in P_{\mathrm{B}}} \frac{1}{\exp(\beta(p^2 - \mu_0))-1} \nonumber \\
	&\hspace{3cm} - \frac{1}{2 \beta} \sum_{p \in \Lambda^*_+} \left[ \frac{16 \pi \mathfrak{a}_N \varrho_0(\beta,N,L)}{p^2} - \ln\left( 1 + \frac{16 \pi \mathfrak{a}_N \varrho_0(\beta,N,L)}{p^2} \right) \right] + \frac{C}{L^2} \left( N^{\delta_{\mathrm{B}}} + N^{2/3-\delta_{\mathrm{B}}} \right)
\end{align}
holds.

Next, we have a closer look at the first term in the second line of \eqref{eq:Sec51}. In \eqref{eq:Comp8a}--\eqref{eq:Comp9} we showed 
\begin{equation}
	\Bigg| \sum_{p \in P_{\mathrm{B}}} \left( \gamma(p) - \frac{1}{\exp(\beta(p^2-\mu_0))-1} \right) \Bigg| \lesssim \frac{N_0 L^2}{N \beta} + \frac{N_0^2}{N^2},
	\label{eq:Sec55}
\end{equation}
and hence
\begin{equation}
	\mu_0 \sum_{p \in P_{\mathrm{B}}} \gamma(p) \leq \mu_0 \sum_{p \in P_{\mathrm{B}}} \frac{1}{\exp(\beta(p^2-\mu_0))-1} + C L^{-2} N^{1/3}.
	\label{eq:Sec56}
\end{equation}
To obtain the second bound we also used $-\mu_0 = \ln(1+1/N_0)/\beta \leq 1/(\beta N_0)$ (which follows from \eqref{eq:DensityBEC}). In combination, the considerations in \eqref{eq:Sec52}--\eqref{eq:Sec56} and $\delta_{\mathmbox{B}} < 1/3$ imply
\begin{align}
	\sum_{p \in P_{\mathrm{I}}}& p^2 \tr_{\mathscr{F}_{\mathrm{I}}}[ a_p^*a_p G_{\mathrm{free}}] - \frac{1}{\beta} S(G_{\mathrm{free}}) + \int_{\mathbb{C}} \left( \tr_{\mathscr{F}_{\mathrm{B}}}[\mathcal{H}^{\mathrm{B}} G_{\mathrm{B}}(z)] - \frac{1}{\beta} S(G_{\mathrm{B}}(z)) \right) \zeta(z) \de z + \mu_0 \sum_{p \in P_{\mathrm{B}}} \gamma(p) \nonumber \\
	\leq& \frac{1}{\beta} \sum_{p \in P_{\mathrm{L}} \backslash \{ 0 \}} \ln\left( 1 - \exp(- \beta(p^2 - \mu_0)) \right) + \sum_{p \in P_{\mathrm{L}} \backslash \{ 0 \}} \frac{\mu_0}{\exp(\beta(p^2 - \mu_0))-1} + \sum_{p \in P_{\mathrm{B}}} \frac{8 \pi \mathfrak{a}_N \varrho_0}{\exp(\beta(p^2 - \mu_0))-1} \nonumber \\
	&- \frac{1}{2 \beta} \sum_{p \in \Lambda^*_+} \left[ \frac{16 \pi \mathfrak{a}_N \varrho_0(\beta,N,L)}{p^2} - \ln\left( 1 + \frac{16 \pi \mathfrak{a}_N \varrho_0(\beta,N,L)}{p^2} \right) \right] + C L^{-2} ( N^{1/3} + N^{2/3- \delta_{\mathrm{B}}} ). 
	\label{eq:Sec57}
\end{align}
In the first two terms on the r.h.s. it remains to replace the sums over $P_{\mathrm{L}} \backslash \{ 0 \}$ by sums over $\Lambda_+^*$. One easily checks that this can be done at the expense of an error term that is bounded by a constant times $L^{-2} \exp(-cN^{2\delta_{\mathrm{L}}})$ with some $c>0$.

The second and the third term in the second line of \eqref{eq:Sec51} equal 
\begin{equation}
	F^{\mathrm{BEC}}(\beta,\widetilde{N}_0,L,\mathfrak{a}_N) = -\frac{1}{\beta} \ln\left( \int_{\mathbb{C}} \exp\left( - \beta \left( 4 \pi \mathfrak{a}_N L^{-3} |z|^4 - \widetilde{\mu} |z|^2 \right) \right) \de z \right) + \widetilde{\mu} \widetilde{N}_0,
	\label{eq:Sec58}
\end{equation}
where the chemical potential $\widetilde{\mu}$ is chosen such that the Gibbs distribution $\zeta$ in \eqref{eq:GibbsDistributionzeta} satisfies \eqref{eq:WidetildeN0}. The first term on the r.h.s. is a concave function of $\widetilde{\mu}$. But this implies
\begin{align}
	-\frac{1}{\beta} \ln\left( \int_{\mathbb{C}} \exp\left( - \beta \left( 4 \pi \mathfrak{a}_N L^{-3} |z|^4 - \widetilde{\mu} |z|^2 \right) \right) \de z \right) \leq& -\frac{1}{\beta} \ln\left( \int_{\mathbb{C}} \exp\left( - \beta \left( 4 \pi \mathfrak{a}_N L^{-3} |z|^4 - \mu |z|^2 \right) \right) \de z \right) + \mu N_0  \nonumber \\
	& - \widetilde{\mu} N_0.
	\label{eq:Sec59}
\end{align}
Here we also used that the first derivative of the first term on the r.h.s. equals $-N_0$. 

The identity $\widetilde{N}_0 + \sum_{p \in P_{\mathrm{L}} \backslash \{ 0 \}} \gamma(p) = N+\Delta N$ allows us to bound the terms in the third line of \eqref{eq:Sec51} plus the third term on the r.h.s. of \eqref{eq:Sec57} as 
\begin{equation}
	\frac{4\pi\frak{a}}{NL^3}\biggl[ 2 N^2 - 2 N_0^2 + 2 ( N_0^2 - \widetilde{N}_0^2 )  + 2 N_0 \sum_{p \in P_\mathrm{B}} \frac{1}{\exp(\beta(p^2 - \mu_0))-1} - 2 \widetilde{N}_0 \sum_{u \in P_{\mathrm{B}}} \gamma(u) \bigg ]+CN^{\delta_\mathrm{H}}.
	\label{eq:Sec60}
\end{equation}
In the following, we denote $\gamma_0(p) = \exp(\beta(p^2-\mu_0)-1)^{-1}$. Another algebraic manipulation,  equations  \eqref{eq:NG}, \eqref{eq:defGam}, \eqref{eq:Comp4c} and \eqref{eq:Sec55}, the bound $\sum_{p \in P_{\mathrm{L}}^{\mathrm{c}}} \gamma(p) \lesssim \exp(-c N^{2\delta_{\mathrm{L}}})$ for some $c>0$, and $\delta_{\mathrm{H}} < 2/3, \delta_{\mathrm{L}} > 0$ imply
\begin{align}
	N_0^2 - \widetilde{N}_0^2 \leq&\, 2 \widetilde{N}_0 \sum_{p \in P_{\mathrm{B}}} \left( \gamma(p) - \gamma_0(p) \right) + 2 \,\bigg( \sum_{p \in P_{\mathrm{B}}} \gamma(p) \bigg) \sum_{p \in P_{\mathrm{B}}} \left( \gamma(p) - \gamma_0(p) \right)  \nonumber \\
	&+ \sum_{p \in P_{\mathrm{B}}} \left( \gamma_0(p) - \gamma(p) \right) \sum_{p \in P_{\mathrm{B}}} \left( \gamma(p) + \gamma_0(p) \right) + C [ N^{1+\delta_{\mathrm{H}}} + \exp(-c N^{2 \delta_{\mathrm{L}}}) ] \nonumber \\
	\leq&\, 2 \widetilde{N}_0 \sum_{p \in P_{\mathrm{B}}} \left( \gamma(p) - \gamma_0(p) \right) + C [ N^{1+\delta_{\mathrm{H}}} + N^{4/3 + \delta_{\mathrm{B}}}].
	\label{eq:Sec61}
\end{align}
A similar argument that additionally uses Lemma~\ref{lem:BoundWidetildeN0} and $\delta_{\mathrm{H}} < 2/3$ shows  
\begin{equation}
	2 N_0 \sum_{p \in \mathrm{B}} \gamma_0(p) - 2 \widetilde{N}_0 \sum_{p \in P_{\mathrm{B}}} \gamma(p) \leq -2 \widetilde{N}_0 \sum_{p \in P_{\mathrm{B}}} \left( \gamma(p) - \gamma_0(p) \right) + C N^{4/3+\delta_{\mathrm{B}}}.
	\label{eq:Sec62}
\end{equation}
When we collect the results in \eqref{eq:Sec60}--\eqref{eq:Sec62}, we find that \eqref{eq:Sec60} is bounded from above by
\begin{equation}
	\frac{4\pi\frak{a}}{NL^3}\biggl[ 2 N^2 - 2 N_0^2 \biggr] + \frac{8\pi\frak{a}}{NL^3}\biggl[ \widetilde{N}_0 \sum_{p \in P_{\mathrm{B}}} \left( \gamma(p) - \gamma_0(p) \right) \biggr] + C L^{-2} [ N^{\delta_{\mathrm{H}}} + N^{1/3 + \delta_{\mathrm{B}}}].
	\label{eq:Sec63}
\end{equation}

We combine now the second term above with the last terms on the r.h. sides of \eqref{eq:Sec58} and \eqref{eq:Sec59}, that is, we  consider
\begin{equation}
	\frac{8\pi\frak{a}}{NL^3}\biggl[ \widetilde{N}_0 \sum_{p \in P_{\mathrm{B}}} \left( \gamma(p) - \gamma_0(p) \right) \biggr] + \widetilde{\mu}(\widetilde{N}_0 - N_0).
	\label{eq:Sec64}
\end{equation}
We distinguish two cases and assume first that $N_0 < N^{5/6 + \delta}$ for some $\delta>0$. In this case applications of \eqref{eq:Sec55} and Lemma~\ref{lem:BoundWidetildeN0} show that the first term in \eqref{eq:Sec64} is bounded by a constant times $L^{-2} N^{1/2+\delta}$.  Inspection of \eqref{eq:AppBEC4} (recall that $\widetilde{N}_0 = \int |z|^2 \zeta(z) \de z$) shows $|\widetilde \mu | \lesssim 1/(\beta \widetilde{N}_0) + \widetilde{N}_0/(L^2N)$. We use this estimate, $N_0 \geq N^{2/3}$ (this implies, by Lemma~\ref{lem:BoundWidetildeN0}, $\widetilde N_0 \gtrsim N^{2/3}$ ), and again Lemma~\ref{lem:BoundWidetildeN0} to bound the second term in \eqref{eq:Sec64} by a constant times $L^{-2} [ N^{\delta_{\mathrm{H}}} + N^{-1/6 + \delta_{\mathrm{H}} + \delta} + N^{1/3 + 2 \delta} ]$. If $N_0 \geq N^{5/6 + \delta}$ we apply part (a) of Lemma~\ref{lem:ChemPotBEC} to bound the second term in \eqref{eq:Sec64} from above by
\begin{align}
	8 \pi \mathfrak{a}_N L^{-3} \widetilde{N}_0 (\widetilde{N}_0 - N_0) +  C \exp(-c N^{\delta}) \leq -8 \pi \mathfrak{a}_N L^{-3}\widetilde{N}_0 \sum_{p \in P_{\mathrm{B}}} \left( \gamma(p) - \gamma_0(p) \right) + C N^{\delta_{\mathrm{H}}}. \label{eq:Sec65}
\end{align}
To obtain the second bound, we also used $\sum_{p \in P_{\mathrm{L}}^{\mathrm{c}}} \gamma(p) \lesssim \exp(-c N^{2\delta_{\mathrm{L}}})$ for some $c>0$. We highlight that the first term on the r.h.s. of \eqref{eq:Sec65} cancels the first term in \eqref{eq:Sec64}. We collect the above considerations, make the assumption $0 < \delta < 1/6$, and find
\begin{equation}
	\frac{8\pi\frak{a}}{NL^3}\biggl[ \widetilde{N}_0 \sum_{p \in P_{\mathrm{B}}} \left( \gamma(p) - \gamma_0(p) \right) \biggr] + \widetilde{\mu}(\widetilde{N}_0 - N_0) \lesssim L^{-2} [ N^{\delta_\mathrm{H}} + N^{1/2 + \delta} ].
	\label{eq:Sec66}
\end{equation}
It remains to collect our results.

In combination, \eqref{eq:Sec57}--\eqref{eq:Sec59}, \eqref{eq:Sec63} and \eqref{eq:Sec66} imply the final upper bound
\begin{align}
	\tr[\mathcal{H}_N \Gamma] - \frac{1}{\beta} S(\Gamma) &\leq \frac{1}{\beta} \sum_{p \in \Lambda_+} \ln\left( 1 - \exp(- \beta(p^2 - \mu_0)) \right) + \mu_0 (N - N_0) + 8 \pi \mathfrak{a}_N L^3 (\varrho^2 - \varrho_0^2) + F^{\mathrm{BEC}}(\beta,N_0,L,\mathfrak{a}_N) \nonumber \\
	&- \frac{1}{2 \beta} \sum_{p \in \Lambda^*_+} \left[ \frac{16 \pi \mathfrak{a}_N \varrho_0(\beta,N,L)}{p^2} - \ln\left( 1 + \frac{16 \pi \mathfrak{a}_N \varrho_0(\beta,N,L)}{p^2} \right) \right] \nonumber \\
	&+ C L^{-2} [ N^{1- \delta_{\mathrm{H}}} + N^{\delta_{\mathrm{H}} + 2 \delta_{\mathrm{B}}} + N^{-1/3 + \delta_{\mathrm{H}} + 2 \delta_{\mathrm{L}}} + N^{1/3 + \delta_{\mathrm{B}}} + N^{1/2 + \delta} + N^{2/3 - \delta_{\mathrm{B}}} ]. \label{eq:Sec67}
\end{align}
The parameters $\delta_{\mathrm{L}}, \delta$ need to be strictly positive but can otherwise be chosen as small as we wish. The requirements $\delta_{\mathrm{L}} \leq 1/6, \delta \leq 1/12$ assure that they play no role in the optimization. The optimal choice $\delta_{\mathrm{H}} = 1/2 - \delta_{\mathrm{B}}$ with error $N^{1/2+\delta_{\mathrm{B}}}$ follows by combining the first and the second term. Moreover, the optimal choice $\delta_{\mathrm{B}} = 1/12$ results if we combine $N^{1/2+\delta_{\mathrm{B}}}$ and the last term in the last line of \eqref{eq:Sec67}. This leads to an overall error term that is bounded by a constant times $L^{-2} N^{7/12} \ll L^{-2} N^{2/3}$. We recall that the above bound holds under the assumption $N_0 \geq N^{2/3}$.

We now prove a second bound with another trial state (see also Remark~\ref{rmk:MainResults}.(g)) that holds without a restriction on $N_0$ (it, however, captures the correct behavior of the free energy only if $N_0 \ll N^{5/6}$). As undressed trial state we choose the Gibbs state 
\begin{equation}
	G_0 = \frac{\exp(-\beta( \de \Gamma(-\Delta - \mu_0) ))}{\tr_{\mathscr{F}}[\exp(-\beta( \de \Gamma(-\Delta - \mu_0) ))]}.
	\label{eq:Sec68}
\end{equation} 
We define the dressed trial state $\widetilde{\Gamma}$ as in \eqref{trialstate} with $\Gamma_0$ replaced by $G_0$. To obtain an upper bound for the free energy of $\widetilde{\Gamma}$ we can use a simpler version of the above proof. This is related to the following facts: (a) A coherent state in the definition of our trial state is not needed and the pairing function of $G_0$ equals zero. (b) The eigenfunctions of $G_0$ are also eigenfunctions of $\de \Gamma(-\Delta)$. Accordingly, the special treatment of momentum modes in $P_{\mathrm{B}}$ at several places in the proof is not needed. (c) Since $[G_0,\mathcal{N}] = 0$, we have $\tr[\mathcal{N} \widetilde{\Gamma}] = \tr[\mathcal{N} G_0]$. We therefore simply state the result and leave further details to the reader:
\begin{align}
	\tr[\mathcal{H}_N \widetilde{\Gamma}] - \frac{1}{\beta} S(\widetilde{\Gamma}) \leq F_0(\beta,N,L) + 8 \pi \mathfrak{a}_N L^{3} \varrho^2 + C L^{-2} N^{1/2}
	\label{eq:Sec69}
\end{align}
with $F_0$ defined above \eqref{eq:FreeEnergyIdealGasCond}.

We are now prepared to provide the missing proofs in Section~\ref{sec:main}. Theorem~\ref{thm:Main} follows from \eqref{eq:Sec67}, \eqref{eq:Sec69}, Proposition~\ref{prop:FreeEnergyBECb} and fact that the absolute value of the term in the second line of \eqref{eq:Sec67} is bounded by a constant times $N_0^2/(\beta N^2)$. The proof of Proposition~\ref{prop:FreeEnergyBECb} is provided in Appendix~\ref{app:CondensateFreeEnergy}, see Proposition~\ref{prop:FreeEnergyBEC}. Finally, Corollary~\ref{cor:MainCorollary} is a direct consequence of \eqref{eq:Sec67}, \eqref{eq:Sec69} and Proposition~\ref{prop:FreeEnergyBECb}.
\textbf{Acknowledgments.}
A. D. gratefully acknowledges funding from the Swiss National Science Foundation (SNSF) through the Ambizione grant PZ00P2 185851. It is our pleasure to thank Marco Caporaletti, Phan Thành Nam, Marcin Napiórkowski and Robert Seiringer for inspiring discussions.
\begin{appendix}

\begin{center}
	\huge \textsc{--- Appendix ---}
\end{center}

\section{The scattering equation}
\label{app:ScatteringEquation}

In this appendix we collect some known properties of the finite volume scattering equation \eqref{eq:NP}. It is convenient to define $f(Nx)=f_N(x)$, where $f$ satisfies the eigenvalue equation
\begin{equation} \label{eq:NP1}
	\bigg[-\Delta + \frac{v}{2}\bigg] f = \lambda_{\ell} f
\end{equation}
on the ball $|x| \leq N\ell$ with Neumann boundary conditions. It is normalized such that $f(x)=1$ holds for $|x|=N\ell$. By scaling, we have $N^2\lambda_\ell=\lambda_N$. In the next Lemma we collect the properties of $f_N,f$ and $\lambda_\ell$ that are useful for our analysis. The proof can be found in \cite[Appendix A]{BocBreCeSchl2020}.

\begin{lemma} \label{lemma:NPprop}
Let $v \in L^3(\mathbb{R}^3)$ be nonnegative, compactly supported and spherically symmetric. Fix $0<\ell <L/2 $ and let $f$ denote the solution to \eqref{eq:NP1} and $f_N$ the solution to \eqref{eq:NP}. For $N \in \mathbb{N}$ large enough the following properties hold true.
	\begin{enumerate}
		\item We have 
		\begin{equation}\label{eq:el}
			\lambda_{\ell} = \frac{3\frak{a}}{(N\ell)^3}(1+\mathcal{O}\big(\frak{a}/\ell N)\big).
		\end{equation}
		\item We have $0\leq f_{\ell}\leq 1$. Moreover there exists a constant $C > 0$ such that
		\begin{equation} \label{eq:sl}
			\left\lvert \int v(x)f(x)dx - 8\pi \frak{a} \right\lvert \leq \frac{C\frak{a}^2}{N\ell}.
		\end{equation}
		\item There exists a constant $C > 0$ such that, for all $x \in \mathbb{R}^3$,
		\begin{equation}\label{eq:Dw}
			1-f(x) \leq \frac{C}{1 + |x|} \quad \text{and} \quad \left|\nabla f(x)\right| \leq \frac{C}{1 + x^2}.
		\end{equation}
		\item There exists a constant $C > 0$ such that, for all $p \in \L^*_+$,
		\begin{equation}\label{eq:wp}
			\big|\widehat{(1-f_N)}(p)\big| \leq \frac{C}{N p^2}.
		\end{equation}
	\end{enumerate}
\end{lemma}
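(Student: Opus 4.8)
I would first pass to the rescaled variable $f(y)=f_N(y/N)$, which solves $(-\Delta+v/2)f=\lambda_\ell f$ on the large ball $B_R$ with $R=N\ell$, where $f$ is radial, regular at the origin, and satisfies $f=1$, $\partial_r f=0$ on $\partial B_R$; here $\lambda_\ell$ is the simple, non-negative lowest Neumann eigenvalue and $N^2\lambda_\ell=\lambda_N$. The positivity of $f$ and the bound $0\le f\le1$ in statement (2) are classical properties of the scattering solution, which I would quote. I also record for later use that, writing the radial equation as $(r^2f')'=(\tfrac12v-\lambda_\ell)r^2f$ and integrating from $r$ to $R$ using $f'(R)=0$, one obtains $r^2f'(r)=\int_r^R(\lambda_\ell-\tfrac12v)s^2f(s)\,\mathrm{d}s$; in particular, off $\supp v$ this gives $0\le r^2f'(r)=\lambda_\ell\int_r^R s^2f\,\mathrm{d}s\le\tfrac13\lambda_\ell R^3$.

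Statement (1): integrating the eigenvalue equation over $B_R$ and using the divergence theorem together with $\partial_r f|_{\partial B_R}=0$ to kill the Laplacian term leaves $\tfrac12\int_{B_R}vf=\lambda_\ell\int_{B_R}f$. A short bootstrap from $0\le f\le1$ and $\|v\|_1<\infty$ shows $\lambda_\ell R^2\to0$; then $1-f(r)=\int_r^R f'(s)\,\mathrm{d}s\le\tfrac13\lambda_\ell R^3/r$ off $\supp v$, so $\int_{B_R}(1-f)=o(R^3)$ and $\int_{B_R}f=\tfrac{4\pi}{3}R^3(1+o(1))$. Combined with $\tfrac12\int_{B_R}vf=4\pi\mathfrak a+O(\mathfrak a^2/R)$ from (2), solving for $\lambda_\ell$ and substituting $R=N\ell$ gives $\lambda_\ell=\tfrac{3\mathfrak a}{(N\ell)^3}\big(1+O(\mathfrak a/(\ell N))\big)$. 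For the scattering-length identity in (2) I would compare $f$ on $\supp v$ with the zero-energy solution $f_\infty$ on $\bR^3$, which satisfies $\int vf_\infty=8\pi\mathfrak a$ and $f_\infty=1-\mathfrak a/|x|$ off $\supp v$: replacing ``$f\to1$ at infinity'' by ``$f=1$ at radius $R$'' and adding the term $\lambda_\ell f$ changes $f$ on $\supp v$ only by a relative amount $O(\mathfrak a/R)$, whence $\int vf=8\pi\mathfrak a\,(1+O(\mathfrak a/R))$.

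Statement (3): on the shell between $\supp v$ and $\partial B_R$ the function $u=rf$ solves $-u''=\lambda_\ell u$; since $\sqrt{\lambda_\ell}\,R=O(\sqrt{\mathfrak a/(N\ell)})\to0$, Taylor-expanding the explicit solution fixed by the interior matching and by $f(R)=1$, $f'(R)=0$ yields $f(r)=1-\mathfrak a_{\mathrm{eff}}\big(\tfrac1r-\tfrac1R\big)+(\text{uniformly small corrections})$ with $\mathfrak a_{\mathrm{eff}}=\mathfrak a(1+o(1))$; this gives $1-f(r)\le C/(1+r)$, and $|\nabla f(r)|\le C/(1+r^2)$ follows either by differentiating this or directly from $0\le r^2f'(r)\le\tfrac13\lambda_\ell R^3$ and (1). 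On $\supp v$ both bounds are trivial from $0\le f\le1$ and the finite range, and outside $B_R$ they hold since $f\equiv1$. Statement (4) then follows from (1)--(3): setting $w_N=1-f_N$ and extending it by $0$ outside $B_\ell$, the continuity of $f_N$ and $\partial_n f_N$ across $\partial B_\ell$ (the normalization and the Neumann condition) makes $-\Delta w_N=\tfrac12v_Nf_N-\lambda_Nf_N\mathds{1}_{B_\ell}$ a genuine distributional identity on $\Lambda$; taking Fourier coefficients and using $0\le f_N\le1$ gives $|p|^2\,|\widehat{w_N}(p)|\le\tfrac12\|v_N\|_1+\lambda_N|B_\ell|$, and both terms on the right are $O(1/N)$ since $\|v_N\|_1=\|v\|_1/N$ and $\lambda_N|B_\ell|=N^2\lambda_\ell\cdot\tfrac{4\pi}{3}\ell^3\sim4\pi\mathfrak a/N$ by (1).

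The part I expect to require the most care is the perturbative analysis behind (2) and (3): one must carry the expansion of $u=rf$ on the outer shell far enough to identify the coefficient of $1/r$ as $\mathfrak a(1+o(1))$ uniformly up to $r\sim N\ell$, which is precisely the input that upgrades the crude identity $\tfrac12\int vf=\lambda_\ell\int f$ to the quantitative statements (and which feeds back into the bootstrap in (1)); the remaining steps are routine maximum-principle estimates and $L^1/L^2$ bookkeeping.
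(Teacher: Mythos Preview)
Your sketch is correct and follows the standard route; the paper itself does not prove this lemma but simply refers to \cite[Appendix~A]{BocBreCeSchl2020}, whose argument is essentially the one you outline (explicit analysis of the radial ODE for $u=rf$ on the outer shell, comparison with the zero-energy scattering solution to extract $\mathfrak a$, and the Fourier-side computation for $\widehat{1-f_N}$ via the distributional identity $-\Delta w_N=\tfrac12 v_Nf_N-\lambda_Nf_N\mathds{1}_{B_\ell}$). Your identification of the delicate point is also right: the uniform expansion of $f$ on the shell that pins down the $1/r$ coefficient as $\mathfrak a(1+O(\mathfrak a/R))$ is exactly where the work in the cited reference lies.
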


\medskip

%

By definition \eqref{eq:eta}, the function $\eta_p=-\widehat{(1-f_N)}(p)$ solves the equation
\begin{equation} \label{npscatt}
	p^2\eta_p + \frac{\hat v_N(p)}{2} + \frac{1}{2L^3} \sum_{q \in \Lambda^*} \hat{v}_N(p-q)\eta_q =  \frac{\lambda_{N}}{L^3} \sum_{q \in \Lambda^*}\hat{f}_N(p-q)\hat{\mathds{1}}_{|x|\leq\ell}(q),
\end{equation}
where $\hat{\mathds{1}}_{|x|\leq\ell}(q)$ is the Fourier coefficient of the characteristic function of the ball with radius $\ell$. Note that we have reinstated units in \eqref{npscatt}. Moreover, by \eqref{eq:Dw} and \eqref{eq:wp} we have
		\begin{equation} \label{eq:etabound}
			|\eta_p| \lesssim \frac{L}{Np^2}. 
		\end{equation}
 Inequality \eqref{eq:etabound} implies
\begin{equation} \label{eq:sumPHC}
	\sum_{p\in\L^*_+:|p|\geq \frac{1}{2L} N^{1-\delta_\mathrm{H}}}|\eta_p|^2 \lesssim L^6 N^{-3+\delta_\mathrm{H}}
\end{equation}
as well as
\begin{equation} \label{eq:sumPHCC}
	\sum_{p\in P_\mathrm{H}^{\mathrm{c}}}|\eta_p| \lesssim L^3 N^{-\delta_\mathrm{H}},
\end{equation}
where $P_\mathrm{H}^{\mathrm{c}}$ denotes the complement of $P_\mathrm{H}$ in \eqref{eq:PH}.
\section{Bogoliubov free energy}
\label{app:BogFreeEnergy}
The goal of this section is to prove the following lemma.
\begin{lemma}
	\label{lem:freeEnergyBog}
	We consider the limit $N \to \infty$, $\beta = \kappa \beta_{\mathrm{c}}$ with $\kappa \in (0,\infty)$ and $\beta_{\mathrm{c}}$ in \eqref{eq:BECPhaseTransition}. Recall  definition \eqref{eq:epstilde} for $\tilde{\epsilon}(p)$. There exists a constant $C>0$ such that
	\begin{align}
		\frac{1}{\beta} \sum_{p \in P_{\mathrm{B}}} \ln \left( 1- \exp(-\beta \tilde{\epsilon}(p)) \right) \leq& \frac{1}{\beta} \sum_{p \in P_{\mathrm{B}}} \ln \left( 1- \exp(-\beta (p^2-\mu_0)) \right) + 8 \pi \mathfrak{a}_N \varrho_0 \sum_{p \in P_{\mathrm{B}}} \frac{1}{\exp(\beta(p^2-\mu_0))-1} \label{eq:AppB1} \\
		&- \frac{1}{2 \beta} \sum_{p \in \Lambda_+^*} \left[ \frac{16 \pi \mathfrak{a}_N \varrho_0}{p^2} - \ln\left( 1 + \frac{16 \pi \mathfrak{a}_N \varrho_0}{p^2} \right) \right] + \frac{CN_0^2}{N^2} \left[ \frac{N^{\delta_{\mathrm{B}}}}{L^2} + \frac{1}{\beta N^{\delta_{\mathrm{B}}}} + \frac{L^2}{\beta^2 N_0} \right]. \nonumber
	\end{align}
\end{lemma}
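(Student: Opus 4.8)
The plan is to treat each momentum $p\in P_{\mathrm{B}}$ separately. Fix such a $p$, abbreviate $a=p^2-\mu_0$ and $b=16\pi\mathfrak{a}_N\varrho_0$, and study the function $F(s):=\ln\!\big(1-e^{-\beta\sqrt{a(a+s)}}\big)$ on $[0,b]$, so that $F(0)=\ln(1-e^{-\beta(p^2-\mu_0)})$ and $F(b)=\ln(1-e^{-\beta\tilde\epsilon(p)})$ with $\tilde\epsilon(p)$ as in \eqref{eq:epstilde}. I would apply Taylor's theorem with integral remainder, $F(b)=F(0)+bF'(0)+\int_0^b(b-s)F''(s)\,\mathrm{d}s$. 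Writing $\omega(s)=\sqrt{a(a+s)}$, one has $\omega'(s)=a/(2\omega(s))$ and $\omega''(s)=-a^2/(4\omega(s)^3)$, so $F'(0)=\beta/(2(e^{\beta a}-1))$ and $bF'(0)=8\pi\mathfrak{a}_N\varrho_0\,\beta/(e^{\beta(p^2-\mu_0)}-1)$. Decomposing $\beta/(e^{\beta\omega}-1)=\omega^{-1}-\beta g(\beta\omega)$ and $\beta^2e^{\beta\omega}/(e^{\beta\omega}-1)^2=\omega^{-2}+\beta^2g'(\beta\omega)$ with $g(x)=x^{-1}-(e^x-1)^{-1}$, which satisfies $0\le g\le\tfrac12$ and $\sup_{x\ge0}|g'(x)|<\infty$, and using $\omega'(s)^2\omega(s)^{-2}-\omega''(s)\omega(s)^{-1}=a^2/(2\omega(s)^4)=1/(2(a+s)^2)$, a short computation gives
\[
F''(s)=-\frac{1}{2(a+s)^2}+E(s),\qquad E(s):=-\beta\omega''(s)\,g(\beta\omega(s))-\beta^2\omega'(s)^2\,g'(\beta\omega(s)).
\]
Since $\int_0^b(b-s)(a+s)^{-2}\,\mathrm{d}s=\tfrac ba-\ln(1+\tfrac ba)$, dividing by $\beta$ yields the exact identity
\[
\frac1\beta\ln(1-e^{-\beta\tilde\epsilon(p)})=\frac1\beta\ln(1-e^{-\beta(p^2-\mu_0)})+\frac{8\pi\mathfrak{a}_N\varrho_0}{e^{\beta(p^2-\mu_0)}-1}-\frac{1}{2\beta}\Big[\frac{b}{p^2-\mu_0}-\ln\big(1+\tfrac{b}{p^2-\mu_0}\big)\Big]+\frac1\beta\int_0^b(b-s)E(s)\,\mathrm{d}s.
\]
Summing over $p\in P_{\mathrm{B}}$, the first two terms already reproduce the first two terms on the right-hand side of \eqref{eq:AppB1}.

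It remains to (i) replace $-\tfrac{1}{2\beta}\sum_{p\in P_{\mathrm{B}}}[\tfrac{b}{p^2-\mu_0}-\ln(1+\tfrac{b}{p^2-\mu_0})]$ by $-\tfrac{1}{2\beta}\sum_{p\in\Lambda_+^*}[\tfrac{b}{p^2}-\ln(1+\tfrac{b}{p^2})]$ and (ii) bound the $E$-remainder. For (i) I would use that $x\mapsto\tfrac bx-\ln(1+\tfrac bx)$ is nonnegative, decreasing, with derivative $-b^2/(x^2(x+b))$, that $-\mu_0=\beta^{-1}\ln(1+1/N_0)\le1/(\beta N_0)$, and that $x-\ln(1+x)\le x^2/2$; this gives $0\le[\tfrac{b}{p^2}-\ln(1+\tfrac{b}{p^2})]-[\tfrac{b}{p^2-\mu_0}-\ln(1+\tfrac{b}{p^2-\mu_0})]\lesssim|\mu_0|b^2p^{-6}$ and $\tfrac{b}{p^2}-\ln(1+\tfrac{b}{p^2})\le b^2p^{-4}/2$. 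Summing the first estimate over $P_{\mathrm{B}}$ and the second over $\Lambda_+^*\setminus P_{\mathrm{B}}$ (where $|p|\ge N^{\delta_{\mathrm{B}}}/L$), together with the momentum-sum bounds $\sum_{p\in\Lambda_+^*}p^{-6}\lesssim L^6$ and, via Lemma~\ref{lem:RiemannSum}, $\sum_{|p|\ge N^{\delta_{\mathrm{B}}}/L}p^{-4}\lesssim L^4N^{-\delta_{\mathrm{B}}}$, shows that this replacement enlarges the upper bound by at most $C\beta^{-1}N_0^2N^{-2}\big(\beta^{-1}L^2N_0^{-1}+N^{-\delta_{\mathrm{B}}}\big)$, which accounts for the second and third summands of the error in \eqref{eq:AppB1}. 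For (ii), from $\omega(s)\ge a\ge p^2$ one gets $|\omega''(s)|\le(4p^2)^{-1}$ and $\omega'(s)^2\le\tfrac14$, hence $|E(s)|\lesssim\beta p^{-2}+\beta^2$ uniformly on $[0,b]$, so $\beta^{-1}\int_0^b(b-s)|E(s)|\,\mathrm{d}s\lesssim b^2p^{-2}+\beta b^2$. Summing over $P_{\mathrm{B}}$ with $\sum_{p\in P_{\mathrm{B}}}p^{-2}\lesssim L^2N^{\delta_{\mathrm{B}}}$, $|P_{\mathrm{B}}|\lesssim N^{3\delta_{\mathrm{B}}}$, $\beta\lesssim L^2N^{-2/3}$ and $\delta_{\mathrm{B}}<1/3$ yields a contribution bounded by $CN_0^2N^{-2}L^{-2}N^{\delta_{\mathrm{B}}}$, i.e.\ the first summand of the error. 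Adding the errors from (i) and (ii) gives \eqref{eq:AppB1}.

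The main obstacle is getting the order of expansion exactly right: $F$ must be expanded to precisely first order so that the linear term reproduces $8\pi\mathfrak{a}_N\varrho_0/(e^{\beta(p^2-\mu_0)}-1)$ with no leftover, and the leading part of $F''(s)$ must be identified as $-\tfrac12(a+s)^{-2}$, whose $s$-integral is exactly the Born/LHY-type renormalization $\tfrac ba-\ln(1+\tfrac ba)$. The remaining work is to show that the thermal corrections hidden in $g$ and $g'$, the shift $\mu_0\to0$ in the denominators, and the momentum tail outside $P_{\mathrm{B}}$ all fit into the error budget on the right-hand side of \eqref{eq:AppB1}; this is routine but somewhat computation-heavy bookkeeping.
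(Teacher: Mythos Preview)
Your proposal is correct and follows essentially the same route as the paper: a second-order Taylor expansion of $s\mapsto\ln(1-e^{-\beta\sqrt{a(a+s)}})$, identification of the leading part of the second derivative as $-\tfrac12(a+s)^{-2}$ whose integral produces exactly $b/a-\ln(1+b/a)$, and then absorption of (i) the extension $P_{\mathrm{B}}\to\Lambda_+^*$ together with the shift $p^2-\mu_0\to p^2$, and (ii) the thermal remainder, into the stated error. The paper organises the same computation slightly differently: it rescales momenta by $\beta^{1/2}$ and expands the \emph{sum} as a function of $\alpha=16\pi\mathfrak{a}_N\varrho_0\beta$, first assuming $\mu_0=0$ and restoring $\mu_0<0$ only at the very end; and instead of your exact decomposition via $g(x)=x^{-1}-(e^x-1)^{-1}$ it uses the one-sided bounds $g_1(x)\ge x^{-1}-C$ and $g_2(x)\le -x^{-2}+Cx^{-1}$ in \eqref{eq:AppB3b}. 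Your use of the bounded functions $g,g'$ is marginally cleaner (it avoids checking that the argument stays in $(0,1]$), but the underlying mechanism and the error accounting are identical.
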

\begin{proof}
	We first assume $\mu_0 = 0$ and then comment on how to adjust the proof to $\mu_0 < 0$. Let us define the function
	\begin{equation}
		F(\alpha) = \sum_{p \in \beta^{1/2} P_{\mathrm{B}}} \ln\left(1-\exp\left(-|p| \sqrt{p^2 + \alpha}\right)\right).
		\label{eq:AppB2}
	\end{equation}
	For $\alpha = 16 \pi \mathfrak{a}_N \varrho_0 \beta$ it equals $\beta$ times the l.h.s. of \eqref{eq:AppB1}. In the following we derive an asymptotic expansion of $F$ for small values of $\alpha$. We also define the functions
	\begin{equation}
		g_1(x) = \frac{1}{\exp(x)-1} \quad \text{ and } \quad g_2(x) = - \frac{1}{4 \sinh^2(x/2)}
		\label{eq:AppB3}
	\end{equation}
	and note that the bounds
	\begin{equation}
		 g_1(x) \geq \frac{1}{x} - C \quad \text{ and } \quad g_2(x) \leq \frac{-1}{x^2} + \frac{C}{x}
		\label{eq:AppB3b}
	\end{equation}
	hold for $0 < x \leq 1$. The first and the second derivative of $F$ can be written in terms of $g_1$ and $g_2$ as
	\begin{align}
		F'(\alpha) &= \sum_{p \in \beta^{1/2} P_{\mathrm{B}}} g_1\left( |p| \sqrt{p^2 + \alpha} \right) \frac{|p|}{2 \sqrt{p^2 + \alpha}}, \nonumber \\ 
		F''(\alpha) &= \frac{1}{4} \sum_{p \in \beta^{1/2} P_{\mathrm{B}}} \left[ g_2\left( |p| \sqrt{p^2 + \alpha} \right) \frac{p^2}{p^2 + \alpha} - g_1\left( |p| \sqrt{p^2 + \alpha} \right) \frac{|p|}{(p^2 + \alpha)^{3/2}} \right],
		\label{eq:AppB4}
	\end{align}
	and hence
	\begin{equation}
		F(\alpha) - F(0) - F'(0) \alpha = \frac{1}{4} \int_0^{\alpha} \sum_{p \in \beta^{1/2} P_{\mathrm{B}}} \left[ g_2\left( |p| \sqrt{p^2 + t} \right) \frac{p^2}{p^2 + t} - g_1\left( |p| \sqrt{p^2 + t} \right) \frac{|p|}{(p^2 + t)^{3/2}} \right] (\alpha-t) \de t.
		\label{eq:AppB5}
	\end{equation}
	It remains to investigate the r.h.s. of this above identity.
	
	Using the bounds in \eqref{eq:AppB3b}, we see that it is bounded from above by
	\begin{equation}
		-\frac{1}{2} \int_0^{\alpha} \sum_{p \in \beta^{1/2} P_{\mathrm{B}}} \frac{(\alpha-t)}{(p^2 + t)^2} \de t + C \alpha \int_0^{\alpha} \sum_{p \in \beta^{1/2} P_{\mathrm{B}}} \frac{|p|}{(p^2+t)^{3/2}} \de t. 
		\label{eq:AppB6}
	\end{equation}
	Here, the integral in the second term is bounded by
	\begin{equation}
		\sum_{p \in \beta^{1/2} P_{\mathrm{B}}} \int_0^{\alpha} \frac{1}{p^2+t} \de t = \sum_{p \in \beta^{1/2} P_{\mathrm{B}}} \ln\left( 1 + \frac{\alpha}{p^2} \right) \leq \sum_{p \in \beta^{1/2} P_{\mathrm{B}}} \frac{\alpha}{p^2} \lesssim \frac{L^2 \alpha N^{\delta_{\mathrm{B}}}}{\beta}.
		\label{eq:AppB7}
	\end{equation}
	A straightforward computation also shows
	\begin{equation}
		\int_0^{\alpha} \frac{(\alpha-t)}{(p^2 + t)^2} \de t = \frac{\alpha}{p^2} - \ln\left( 1 + \frac{\alpha}{p^2} \right).
		\label{eq:AppB8}
	\end{equation}
	In combination, \eqref{eq:AppB5}--\eqref{eq:AppB8} imply
	\begin{equation}
		F(\alpha) - F(0) - F'(0) \alpha \leq - \frac{1}{2} \sum_{p \in \beta^{1/2} P_{\mathrm{B}}} \left[  \frac{\alpha}{p^2} - \ln\left( 1 + \frac{\alpha}{p^2} \right) \right] + \frac{C L^2 \alpha^2 N^{\delta_{\mathrm{B}}}}{\beta}.
		\label{eq:AppB9}
	\end{equation}
	Finally, using $\ln(1+x) \geq x - x^2/2$ for $x \geq 0$ we see that
	\begin{equation}
		\sum_{p \in \beta^{1/2} P^{\mathrm{c}}_{\mathrm{B}}} \left[  \frac{\alpha}{p^2} - \ln\left( 1 + \frac{\alpha}{p^2} \right) \right] \leq \frac{\alpha^2}{2 \beta^2} \sum_{p \in P^{\mathrm{c}}_{\mathrm{B}}} \frac{1}{p^4} \lesssim \frac{\alpha^2 L^4}{\beta^2 N^{\delta_{\mathrm{B}}}}.
		\label{eq:AppB10}
	\end{equation}
	When we put our findings together, we obtain a proof of \eqref{eq:AppB1} if $\mu_0=0$ (the last error term excluded). 
	
	If $\mu_0 < 0$ our proof applies without changes and we obtain the first term in the second line of \eqref{eq:AppB1} with $p^2$ replaced by $p^2-\mu_0$. It is not difficult to check that the difference between these two terms is bounded by a constant times $N_0 L^2/(\beta^2 N^2)$, which proves the claim of the lemma.
\end{proof}

\section{Properties of the free energy of the condensate}
\label{app:CondensateFreeEnergy}
In this appendix we prove several statements concerning the effective condensate free energy in \eqref{eq:FreeEnergyBEC}, one of which is Proposition~\ref{prop:FreeEnergyBECb}. The other statements are needed for the proof of Theorem~\ref{thm:Main}. We start our discussion with a lemma that provides us with the asymptotic behavior of the chemical potential.

\begin{lemma}
	\label{lem:ChemPotBEC}
	We consider the limit $N \to \infty$, $\beta = \kappa \beta_{\mathrm{c}}$ with $\kappa \in (0,\infty)$ and $\beta_{\mathrm{c}}$ in \eqref{eq:BECPhaseTransition}. Let $g$ be the Gibbs distribution in \eqref{eq:GibbsDistribution} and assume that $\int_{\mathbb{C}} |z|^2 g(z) \de z = M$. The chemical potential $\mu$ related to $g$ satisfies the following statements for a given $\epsilon > 0$:
	\begin{enumerate}[(a)]
		\item If $M \gtrsim N^{5/6 + \epsilon}$ then there exists a constant $c>0$ such that
		\begin{equation}
			\left| \mu - 8 \pi \mathfrak{a}_N M/L^3 \right| \lesssim L^{-2} \exp\left( - c N^{\epsilon} \right).
			\label{eq:LemChemPotBEC1}
		\end{equation}
		\item If $M \lesssim N^{5/6 - \epsilon}$ then we have
		\begin{equation}
			\left| \mu + \frac{1}{\beta M} \right| \lesssim \frac{N^{-2 \epsilon}}{\beta M}.
			\label{eq:LemChemPotBEC2}
		\end{equation}
	\end{enumerate}
\end{lemma}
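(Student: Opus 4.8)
\emph{Proof plan.} The plan is to reduce the Gibbs integral over $\mathbb C$ to a one-dimensional integral and then to treat the two regimes separately. Writing $z=r\ee^{\ii\vartheta}$ and substituting $t=r^2=|z|^2$, one has $\de z=\de x\de y/\pi=r\,\de r\,\de\vartheta/\pi$, so that $\int_{\mathbb C}h(|z|^2)\de z=\int_0^\infty h(t)\de t$ for every integrable radial $h$. Hence the equation defining $\mu$, namely $\int_{\mathbb C}|z|^2 g(z)\de z=M$, becomes $m(\mu)=M$ with
\begin{equation*}
m(\mu)=\int_0^\infty t\,\tilde g_\mu(t)\de t,\qquad \tilde g_\mu(t)=\frac{\exp\!\big(-\beta\big(4\pi\mathfrak a_N L^{-3}t^2-\mu t\big)\big)}{\int_0^\infty\exp\!\big(-\beta\big(4\pi\mathfrak a_N L^{-3}s^2-\mu s\big)\big)\de s}.
\end{equation*}
Since $m'(\mu)=\beta\,\mathrm{Var}_{\tilde g_\mu}(t)>0$, the map $\mu\mapsto m(\mu)$ is a strictly increasing bijection of $\mathbb R$ onto $(0,\infty)$; in particular $\mu$ is uniquely determined by $M$ and it suffices to locate it. I abbreviate $b\deq 4\pi\beta\mathfrak a_N L^{-3}$. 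Because $\beta=\kappa\beta_{\mathrm c}\sim N^{-2/3}$, $\mathfrak a_N=\mathfrak a/N$ and $L=O(1)$, one has $b\sim N^{-5/3}$, so the threshold $N^{5/6}$ in the statement is exactly the scale $b^{-1/2}$, and $m(0)=(\pi b)^{-1/2}\sim N^{5/6}$ is the mean of the free ($\mu=0$) distribution.

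\emph{Part (a), the case $M\gtrsim N^{5/6+\epsilon}$.} First I would complete the square: $4\pi\mathfrak a_N L^{-3}t^2-\mu t=\beta^{-1}b\,(t-t^\ast)^2+\mathrm{const}$ with $t^\ast\deq\mu L^3/(8\pi\mathfrak a_N)$, so $\tilde g_\mu(t)=\ee^{-b(t-t^\ast)^2}/Z$ with $Z=\int_0^\infty\ee^{-b(s-t^\ast)^2}\de s$. Since $M$ eventually exceeds $m(0)=(\pi b)^{-1/2}$, monotonicity of $m$ forces the relevant $\mu$ to satisfy $\mu>0$, hence $t^\ast>0$. A direct computation then gives $M=m(\mu)=t^\ast+R$ with
\begin{equation*}
R=\int_0^\infty (t-t^\ast)\,\tilde g_\mu(t)\de t=\frac{1}{2bZ}\,\ee^{-b(t^\ast)^2},\qquad 0\le R\le\frac{1}{\sqrt{\pi b}}\,\ee^{-b(t^\ast)^2},
\end{equation*}
where the upper bound uses $Z\ge\int_0^\infty\ee^{-bu^2}\de u=\tfrac12\sqrt{\pi/b}$. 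A first, crude use ($\ee^{-b(t^\ast)^2}\le1$) yields $t^\ast=M-R\ge M-(\pi b)^{-1/2}\ge M/2$ for $N$ large; feeding this back gives $b(t^\ast)^2\ge bM^2/4\gtrsim N^{-5/3}N^{5/3+2\epsilon}=N^{2\epsilon}$, hence $R\lesssim N^{5/6}\ee^{-cN^{2\epsilon}}$. As $|\mu-8\pi\mathfrak a_N M/L^3|=8\pi\mathfrak a_N L^{-3}|t^\ast-M|=8\pi\mathfrak a_N L^{-3}R$ and $\mathfrak a_N L^{-3}\sim N^{-1}$, the left side is $\lesssim N^{-1/6}\ee^{-cN^{2\epsilon}}\le L^{-2}\ee^{-c'N^\epsilon}$ for $N$ large, which is \eqref{eq:LemChemPotBEC1}.

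\emph{Part (b), the case $M\lesssim N^{5/6-\epsilon}$.} Here I would use the rate variable $\lambda\deq-\beta\mu$ together with $1-bt^2\le\ee^{-bt^2}\le1$. The Chebyshev sum (Harris) inequality for the measure $\propto\ee^{-\lambda t}\de t$ on $[0,\infty)$, with $t$ nondecreasing and $\ee^{-bt^2}$ nonincreasing, gives $m(\mu)\le1/\lambda$, hence $\lambda\le1/M$; using $\ee^{-bt^2}\ge 1-bt^2$ in the numerator and $\ee^{-bt^2}\le1$ in the denominator gives $m(\mu)\ge\lambda^{-1}\big(1-6b/\lambda^2\big)$. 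Since $M$ is eventually below $m(0)=(\pi b)^{-1/2}$, monotonicity forces $\lambda>0$; and evaluating the lower bound at $\lambda=1/(4M)$ --- where $b/\lambda^2\le16bM^2\lesssim N^{-2\epsilon}\to0$ --- shows $m$ there exceeds $M$, so in fact $\lambda>1/(4M)$, and therefore $b/\lambda^2<16bM^2\lesssim N^{-2\epsilon}$. Inserting this into $M=m(\mu)\ge\lambda^{-1}(1-6b/\lambda^2)$ gives $\lambda\ge M^{-1}(1-CbM^2)$, so $M^{-1}(1-CbM^2)\le\lambda\le M^{-1}$ and $|\lambda-1/M|\le CbM$. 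Dividing by $\beta$ and using $bM^2=4\pi\beta\mathfrak a_N L^{-3}M^2\lesssim N^{-2\epsilon}$ yields $|\mu+1/(\beta M)|=\beta^{-1}|\lambda-1/M|\lesssim N^{-2\epsilon}/(\beta M)$, which is \eqref{eq:LemChemPotBEC2}.

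\emph{Main obstacle.} Each individual estimate above is elementary once the scales are fixed; the one genuinely delicate point is the \emph{a priori} localisation of the solution --- establishing $t^\ast>0$ in (a) and $1/(4M)<\lambda\le1/M$ in (b) --- which has to be secured before the sharp estimates and bootstraps can be run. This is exactly where the hypotheses enter: $M\gtrsim N^{5/6+\epsilon}$ and $M\lesssim N^{5/6-\epsilon}$ place $M$ well above, respectively well below, the free value $m(0)\sim N^{5/6}=b^{-1/2}$, and combined with the strict monotonicity of $\mu\mapsto m(\mu)$ this pins down the correct regime and makes the argument self-consistent. All constants are uniform in $L$ in the stated sense because $L$ enters only through the $O(1)$ combinations appearing in $b$ and $\mathfrak a_N L^{-3}$.
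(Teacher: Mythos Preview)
Your proof is correct and takes a genuinely different route from the paper's. The paper evaluates the one-dimensional integrals explicitly in terms of the complementary error function, writes the equation $\int|z|^2 g\,\de z=M$ as $\sqrt{\pi\beta h}\,M=\Upsilon(\eta)$ with $\eta=\mu\sqrt{\beta/(4h)}$, and then reads off the asymptotics of $\eta$ in each regime from the classical expansions of $\mathrm{erfc}$ (Abramowitz--Stegun, Eqs.~7.1.13 and 7.1.23). You bypass the special-function machinery entirely: in (a) you complete the square, write $M=t^\ast+R$ with $R=(2bZ)^{-1}\ee^{-b(t^\ast)^2}$, and run a two-step bootstrap (crude bound $R\le(\pi b)^{-1/2}$ to get $t^\ast\ge M/2$, then the sharp Gaussian tail); in (b) you replace the error-function asymptotics by the Chebyshev/Harris correlation inequality for the upper bound $m(\mu)\le1/\lambda$ and the elementary bracket $1-bt^2\le\ee^{-bt^2}\le1$ for the lower bound, followed by a monotonicity argument to localise $\lambda\in(1/(4M),1/M]$.

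Both arguments rest on the same structural observation --- that $\mu\mapsto m(\mu)$ is strictly increasing and that the hypothesis on $M$ places it on the correct side of $m(0)\sim N^{5/6}$ --- but your treatment is more self-contained and avoids any appeal to tabulated asymptotics. The paper's approach, on the other hand, gives an exact transcendental equation for $\eta$, which makes the \emph{mechanism} of the two regimes (large positive vs.\ large negative $\eta$) slightly more transparent and would more readily yield higher-order corrections if those were needed.
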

\begin{proof}
	We write the two-dimensional integration over $\mathbb{C}$ w.r.t. the measure $\de z = \de x \de y/ \pi$ in polar coordinates $(r,\varphi)$ and afterwards introduce the variable $x = r^2$. This allows us to write
	\begin{equation}
		M = \int_{\mathbb{C}} |z|^2 g(z) \de z = \frac{\int_0^{\infty} x \exp\left(- \beta \left( h x^2 - \mu x \right) \right) \de x}{\int_0^{\infty} \exp\left(- \beta \left( h x^2 - \mu x \right) \right) \de x},
		\label{eq:AppBEC1}
	\end{equation}
	where $h = 4 \pi \mathfrak{a}_N /L^3 \sim L^{-2} N^{-1}$. A short computation shows the integral in the numerator equals
	\begin{equation}
		\frac{1}{2 \beta h} + \frac{\mu}{4 h} \sqrt{ \frac{\pi}{\beta h} } \exp\left( \frac{\beta \mu^2}{4 h} \right) \mathrm{erfc}\left( - \sqrt{ \frac{\beta}{h}} \frac{\mu}{2}  \right),
		\label{eq:AppBEC2}
	\end{equation}
	where $\mathrm{erfc}(x) = (2/\sqrt{\pi}) \int_x^{\infty} \exp(-t^2) \de t$ denotes the complementary error function. For the integral in the denominator we find
	\begin{equation}
		\frac{1}{2} \sqrt{ \frac{\pi}{\beta h} } \exp\left( \frac{\beta \mu^2}{4 h} \right) \mathrm{erfc}\left( - \sqrt{ \frac{\beta}{h}} \frac{\mu}{2} \right).
		\label{eq:AppBEC3}
	\end{equation}
	Let us introduce the notation $\eta =\mu \sqrt{\beta/(4h)}$. Using \eqref{eq:AppBEC2} and \eqref{eq:AppBEC3}, we bring \eqref{eq:AppBEC1} to the form
	\begin{equation}
		\sqrt{\pi \beta h} M = \frac{ 1 + \sqrt{\pi} \eta \exp(\eta^2) \mathrm{erfc}(- \eta) }{\exp(\eta^2) \mathrm{erfc}(- \eta)} \eqqcolon \Upsilon(\eta).
		\label{eq:AppBEC4}
	\end{equation}
	The function $\Upsilon$ is strictly positive, strictly monotone increasing, and satisfies $\lim_{x \to - \infty} \Upsilon(x) = 0 $ as well as $\lim_{x \to \infty} \Upsilon(x) = + \infty$. In the following we study the asymptotic behavior of the (unique) solution to this equation. We start with the parameter regime $M \gtrsim N^{5/6 + \epsilon}$, which implies $\sqrt{\pi \beta h} M \gtrsim N^{\epsilon}$.
	
	In this case the l.h.s. of \eqref{eq:AppBEC4} diverges in the limit $N \to \infty$, and hence $\eta \to \infty$. From \cite[Eq.~7.1.13]{AbraStegun1972} we know that
	\begin{equation}
		\frac{1}{x + \sqrt{x^2 + 2}} < \exp\left( x^2 \right) \int_x^{\infty} \exp\left( -t^2 \right) \de t \leq \frac{1}{x + \sqrt{x^2 + 4/\pi}}
		\label{eq:AppBEC5}
	\end{equation}
	holds for $x \geq 0$. In combination with $\mathrm{erfc}(-\eta) = 2 - \mathrm{erfc}(\eta)$, this implies
	\begin{equation}
		2 \exp\left( \eta^2 \right) - \frac{2}{ \sqrt{\pi} \left( \eta + \sqrt{\eta^2 + 4/\pi } \right)}  \leq \exp\left( \eta^2 \right) \mathrm{erfc}(-\eta) < 2 \exp\left( \eta^2 \right) - \frac{2}{ \sqrt{\pi} \left( \eta + \sqrt{\eta^2 + 2 } \right)}
		\label{eq:AppBEC6}
	\end{equation}
	as well as
	\begin{equation}
		\sqrt{\pi \beta h} M = \frac{ 1 + \sqrt{\pi} \eta \left[2 \exp\left( \eta^2 \right) + O(1/\eta) \right] }{2 \exp\left( \eta^2 \right) + O(1/\eta)}.
		\label{eq:AppBEC7}
	\end{equation}
	We already know that $\eta \gg 1$, and hence $\eta \simeq \sqrt{\beta h} M$. Using this and our assumption $M \gtrsim N^{5/6 + \epsilon}$, which implies $\eta \gtrsim N^{\epsilon}$, we easily check that \eqref{eq:LemChemPotBEC1} holds. It remains to prove \eqref{eq:LemChemPotBEC2}.
	
	If $M \lesssim N^{5/6 - \epsilon}$ the l.h.s. of \eqref{eq:AppBEC4} satisfies $\sqrt{\pi \beta h} M \lesssim N^{-\epsilon}$ and we therefore have $\eta \to -\infty$. To obtain the leading order behavior of $\eta$, the approximation provided by \eqref{eq:AppBEC5} is not sufficiently accurate. A more precise approximation is provided by \cite[Eq.~7.1.23]{AbraStegun1972}, which implies 
	\begin{equation}
		\sqrt{\pi} \exp( x^2 ) \mathrm{erfc}(x) = \frac{1}{x} - \frac{1}{2 x^3} + Q(x), \quad \text{ where $Q$ satisfies } \quad |Q(x)| \leq \frac{3}{4 x^5}
		\label{eq:AppBEC8}
	\end{equation}
	for $x \geq 0$. We use this approximation in \eqref{eq:AppBEC4} and find
	\begin{equation}
		\sqrt{\beta h} M = \frac{1}{2 | \eta |} \left( 1 + O\left(\eta^{-2} \right) \right).
		\label{eq:AppBEC9}
	\end{equation}
	Eq.~\eqref{eq:LemChemPotBEC2} is a direct consequence of \eqref{eq:AppBEC9}. This proves our claim.
\end{proof}

We are now prepared to give the proof of Proposition~\ref{prop:FreeEnergyBECb}. Because of technical reasons, we prove it in a slightly more general situation.

\begin{proposition}
	\label{prop:FreeEnergyBEC}
	We consider the limit $N \to \infty$, $\beta = \kappa \beta_{\mathrm{c}}$ with $\kappa \in (0,\infty)$ and $\beta_{\mathrm{c}}$ in \eqref{eq:BECPhaseTransition}. The following statements hold for given $\epsilon > 0$:
	\begin{enumerate}[(a)]
		\item Assume that $M \gtrsim N^{5/6 + \epsilon}$. There exists a constant $c>0$ such that 
		\begin{equation}
			F^{\mathrm{BEC}}(\beta,M,L,\mathfrak{a}_N) = 4 \pi \mathfrak{a}_N L^{-3} M^2 + \frac{\ln \left( 4 \beta \mathfrak{a}_N/L^3 \right)}{2 \beta} + O\left( L^{-2} \exp\left(- c N^{\epsilon} \right) \right).
			\label{eq:FreeEnergyBECInteractingLimit}
		\end{equation}
		\item Assume that $M \lesssim N^{5/6 - \epsilon}$. Then 
		\begin{equation}
			F^{\mathrm{BEC}}(\beta,M,L,\mathfrak{a}_N) = - \frac{1}{\beta} \ln(M) - \frac{1}{\beta} + O\left( L^{-2} N^{2/3 - 2 \epsilon} \right)
			\label{eq:FreeEnergyBECNonInteractingLimit}
		\end{equation}
		holds. In particular, $F^{\mathrm{BEC}}(\beta,M,L,\mathfrak{a}_N)$ is independent of $\mathfrak{a}_N$ at the given level of accuracy. 
	\end{enumerate} 
\end{proposition}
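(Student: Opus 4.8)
The plan is to reduce the two-dimensional integral in \eqref{eq:FreeEnergyBEC} to the one-dimensional Gaussian-type integral already evaluated in the proof of Lemma~\ref{lem:ChemPotBEC}, and then to feed in the asymptotics of the chemical potential $\mu$ provided by that lemma. Passing to polar coordinates and substituting $x=|z|^2$ gives $\int_{\mathbb{C}}\exp(-\beta(h|z|^4-\mu|z|^2))\,\de z=\int_0^\infty\exp(-\beta(hx^2-\mu x))\,\de x$ with $h=4\pi\mathfrak{a}_N/L^3$, and by \eqref{eq:AppBEC3} this equals $\tfrac12\sqrt{\pi/(\beta h)}\,\exp(\beta\mu^2/(4h))\,\mathrm{erfc}(-\eta)$ with $\eta=\mu\sqrt{\beta/(4h)}$. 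Taking $-\tfrac1\beta\ln(\cdot)$ and adding $\mu M$ yields the exact identity
\begin{equation}
	F^{\mathrm{BEC}}(\beta,M,L,\mathfrak{a}_N)=\frac{1}{2\beta}\ln\!\Big(\frac{4\beta h}{\pi}\Big)-\frac{\mu^2}{4h}-\frac{1}{\beta}\ln\mathrm{erfc}(-\eta)+\mu M,\label{eq:planident}
\end{equation}
and it remains to analyse the four terms in each of the two regimes, using $h^{-1}\lesssim L^2N$, $\beta^{-1}\sim L^{-2}N^{2/3}$ and $\beta h\sim N^{-5/3}$ (see the proof of Lemma~\ref{lem:ChemPotBEC}).

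For part (a), Lemma~\ref{lem:ChemPotBEC}(a) gives $\mu=8\pi\mathfrak{a}_NM/L^3+O(L^{-2}\exp(-cN^\epsilon))=2hM(1+o(1))$, hence $\eta=M\sqrt{\beta h}(1+o(1))$, and since $M\gtrsim N^{5/6+\epsilon}$ we get $\eta\gtrsim N^\epsilon\to+\infty$. Then $\mathrm{erfc}(-\eta)=2-\mathrm{erfc}(\eta)=2(1+O(\exp(-\eta^2)))$, so $-\tfrac1\beta\ln\mathrm{erfc}(-\eta)=-\tfrac{\ln2}{\beta}+O(\beta^{-1}\exp(-cN^{2\epsilon}))$; combining this with $\tfrac{1}{2\beta}\ln(4\beta h/\pi)$ produces exactly $\tfrac{1}{2\beta}\ln(\beta h/\pi)=\tfrac{1}{2\beta}\ln(4\beta\mathfrak{a}_N/L^3)$. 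Writing $\mu=2hM+\delta$ with $|\delta|\lesssim L^{-2}\exp(-cN^\epsilon)$, the last two terms of \eqref{eq:planident} collapse to $-\mu^2/(4h)+\mu M=hM^2-\delta^2/(4h)=4\pi\mathfrak{a}_NL^{-3}M^2+O(L^{-2}\exp(-cN^\epsilon))$, where one uses $1/h\lesssim L^2N$ so that the extra polynomial factor is absorbed by the exponential. This gives \eqref{eq:FreeEnergyBECInteractingLimit}.

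For part (b), Lemma~\ref{lem:ChemPotBEC}(b) gives $\mu=-\tfrac{1}{\beta M}(1+O(N^{-2\epsilon}))$, hence $\mu M=-\tfrac1\beta+O(\beta^{-1}N^{-2\epsilon})$ and, via \eqref{eq:AppBEC9}, $\eta=-\tfrac{1}{2\sqrt{\beta h}M}(1+o(1))\to-\infty$ with $\eta^{-2}\asymp 4\beta hM^2\lesssim N^{-2\epsilon}$ (using $M\lesssim N^{5/6-\epsilon}$). The $\mathrm{erfc}$ asymptotics \eqref{eq:AppBEC8}, applied at $-\eta=|\eta|\ge0$, yield $\sqrt{\pi}\exp(\eta^2)\mathrm{erfc}(-\eta)=|\eta|^{-1}(1+O(\eta^{-2}))$, so $\tfrac12\sqrt{\pi/(\beta h)}\exp(\beta\mu^2/(4h))\mathrm{erfc}(-\eta)=\tfrac{1}{2|\eta|\sqrt{\beta h}}(1+O(\eta^{-2}))$; combining with \eqref{eq:AppBEC9} (which states $\tfrac{1}{2|\eta|\sqrt{\beta h}}=M(1+O(\eta^{-2}))$) shows the first three terms of \eqref{eq:planident} add up to $-\tfrac1\beta\ln M+O(\beta^{-1}N^{-2\epsilon})$. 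Since $\beta^{-1}\sim L^{-2}N^{2/3}$, all remainders are $O(L^{-2}N^{2/3-2\epsilon})$, which is \eqref{eq:FreeEnergyBECNonInteractingLimit}; the $\mathfrak{a}_N$-independence is immediate because the surviving terms $-\tfrac1\beta\ln M-\tfrac1\beta$ do not involve $h$.

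The only genuinely delicate point is arithmetic bookkeeping: one must verify that the numerical constants ($\tfrac12$, $\pi$, $\ln2$) combine precisely to $\tfrac{1}{2\beta}\ln(4\beta\mathfrak{a}_N/L^3)$, and, in part (a), that the exponentially small \emph{relative} error in $\mu$ survives division by $h$ (it does, since $N\exp(-cN^\epsilon)=O(\exp(-c'N^\epsilon))$). Beyond this there is no conceptual obstacle once Lemma~\ref{lem:ChemPotBEC} and the explicit Gaussian integral from its proof are available.
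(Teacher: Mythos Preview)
Your proof is correct and follows essentially the same route as the paper: both reduce the integral to the explicit Gaussian/erfc expression \eqref{eq:AppBEC3} and then input the asymptotics of $\mu$ (equivalently $\eta$) from Lemma~\ref{lem:ChemPotBEC}. Your version is marginally cleaner in part (a), since completing the square $-\mu^2/(4h)+\mu M = hM^2-\delta^2/(4h)$ makes the cancellation explicit and yields a quadratically small remainder in one step, whereas the paper treats $-\eta^2/\beta$ and $\mu M$ separately before adding them.
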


\begin{proof}
	The free energy $F^{\mathrm{BEC}}(\beta,M,L,\mathfrak{a}_N)$ in \eqref{eq:FreeEnergyBEC} consists of two terms. In the following, we denote the first by $\Phi(\beta,M,L,\mathfrak{a}_N)$. When we apply the same coordinate transformations that led to \eqref{eq:AppBEC1}, we can write it as
	\begin{equation}
		\Phi(\beta,M,L,\mathfrak{a}_N) = - \frac{1}{\beta} \ln\left( \int_0^{\infty} \exp\left( -\beta \left( h x^2 - \mu x \right) \right) \de x \right) = - \frac{1}{\beta} \ln \left( \frac{1}{2} \sqrt{ \frac{\pi}{\beta h} } \exp\left( \frac{\beta \mu^2}{4h} \right) \mathrm{erfc}\left( - \sqrt{ \frac{\beta}{h} } \frac{\mu}{2} \right) \right),
		\label{eq:AppBEC10}
	\end{equation}
	where the second identity follows from the fact that the denominator in \eqref{eq:AppBEC1} is given by \eqref{eq:AppBEC3}. 
	
	We first consider the parameter regime $M \gtrsim N^{5/6 + \epsilon}$, where $\eta \simeq \sqrt{\beta h} M \gtrsim N^{\epsilon}$. An application of \eqref{eq:AppBEC6} shows that the r.h.s. of \eqref{eq:AppBEC10} equals
	\begin{equation}
		-\frac{1}{\beta} \ln\left( \sqrt{ \frac{\pi}{\beta h} } \exp( \eta^2 ) \left( 1 + O\left( \exp(-\eta^2)/\eta \right) \right) \right) = \frac{1}{2 \beta} \ln\left( \frac{4 \beta \mathfrak{a}_N}{L^3} \right) - 4 \pi \mathfrak{a}_N M^2 L^{-3} + O\left( L^{-2} \exp \left(-c N^{2 \epsilon} \right) \right).
		\label{eq:AppBEC11}
	\end{equation}
	From Lemma~\ref{lem:ChemPotBEC} we know that 
	\begin{equation}
		\mu M = 8 \pi \mathfrak{a}_N M^2 L^{-3} + O\left( L^{-2} \exp\left( -c N^{\epsilon} \right) \right). 
		\label{eq:AppBEC12}
	\end{equation}
	In combination, these consideration show
	\begin{equation}
		F^{\mathrm{BEC}}(\beta,M,L,\mathfrak{a}_N) = \Phi(\beta,M,L,\mathfrak{a}_N) + \mu M =  \frac{1}{2 \beta} \ln\left( \frac{4 \beta \mathfrak{a}_N}{L^3} \right) + 4 \pi \mathfrak{a}_N M^2 L^{-3} + O\left( L^{-2} \exp \left(-c N^{\epsilon} \right) \right), 
		\label{eq:AppBEC13}
	\end{equation}
	which proves \eqref{eq:FreeEnergyBECInteractingLimit}. 
	
	Next, we consider the case $M \lesssim N^{5/6 - \epsilon}$, where $\eta \simeq -1/(2 \sqrt{\beta h} M) \lesssim -N^{\epsilon}$. We use \eqref{eq:AppBEC8} to write $\Phi$ as
	\begin{equation}
		\Phi(\beta,M,L,\mathfrak{a}_N) = -\frac{1}{\beta} \ln\left( \sqrt{ \frac{1}{\beta h} } \frac{1}{2 | \eta |} \left( 1 + O \left( \eta^{-2} \right) \right) \right) = -\frac{ \ln( M ) }{\beta} + O\left( N^{-2 \epsilon} / \beta \right).
		\label{eq:AppBEC14}
	\end{equation}
	To obtain the second equality we applied Lemma~\ref{lem:ChemPotBEC}. Another application of the same lemma yields
	\begin{equation}
		\mu M = -\frac{1}{\beta} \left( 1 + O\left( N^{-2 \epsilon} \right) \right).
		\label{eq:AppBEC15}
	\end{equation}
	In combination, \eqref{eq:AppBEC14} and \eqref{eq:AppBEC15} prove \eqref{eq:FreeEnergyBECNonInteractingLimit}. 
\end{proof} 

The last lemma provides us with a large deviations bound as well as with bounds for the moments of the distribution $\zeta$. The large deviations bound is needed in the proof of Lemma~\ref{lm:NG} in Appendix~\ref{app:ExpectedParticleNumber}, whereas the moment bound finds application in Section~\ref{eq:PreparatoryLemmas} in our proof of Lemma~\ref{lm:Tr}. We recall that $\zeta$ equals $g$ in \eqref{eq:GibbsDistribution} except that the chemical potential $\widetilde{\mu}$ is chosen s.t. $\int_{\mathbb{C}} |z|^2 \zeta(z) \de z = \widetilde{N}_0$ holds with $\widetilde{N}_0$ in \eqref{eq:WidetildeN0}.

\begin{lemma}
	\label{lem:momentBoundsZeta}
	We consider the limit $N \to \infty$, $\beta = \kappa \beta_{\mathrm{c}}$ with $\kappa \in (0,\infty)$ and $\beta_{\mathrm{c}}$ in \eqref{eq:BECPhaseTransition}, and assume that $0 \leq \widetilde{N}_0 \lesssim N$ holds. Then there exist constants $c,\tilde{c}>0$ such that 
	\begin{equation}
		\int_{\mathbb{C}} (1+ |z|^{2} ) \mathds{1}(|z|^2 \geq cN) \zeta(z) \de z \lesssim \exp(-\tilde{c} N^{1/3} ).
		\label{eq:largeDeviationsZeta}
	\end{equation}
	Moreover,
	\begin{equation}
		\int_{\mathbb{C}} |z|^{2k} \zeta(z) \de z \lesssim_k N^k
		\label{eq:momentBoundsZeta}
	\end{equation}
	holds for all $k \in \mathbb{N}$.
\end{lemma}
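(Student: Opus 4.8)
The plan is to reduce everything to the one-dimensional integral already obtained in the proof of Lemma~\ref{lem:ChemPotBEC}. After the coordinate changes $z = r\mathrm{e}^{\mathrm{i}\varphi}$, $x = r^2$, the measure $\zeta(z)\,\mathrm{d}z$ becomes the probability density proportional to $\exp(-\beta(hx^2 - \widetilde\mu x))$ on $[0,\infty)$, where $h = 4\pi\mathfrak{a}_N/L^3 \sim L^{-2}N^{-1}$. So \eqref{eq:largeDeviationsZeta} and \eqref{eq:momentBoundsZeta} become, respectively, a tail bound and a moment bound for the random variable $X$ with this density, and I would phrase the whole proof in those terms: I want $\mathbb{E}[(1+X)\mathds{1}(X \geq cN)] \lesssim \exp(-\tilde c N^{1/3})$ and $\mathbb{E}[X^k] \lesssim_k N^k$.

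The first step is to pin down the location of $\widetilde\mu$. From \eqref{eq:AppBEC4} we have $\sqrt{\pi\beta h}\,\widetilde N_0 = \Upsilon(\eta)$ with $\eta = \widetilde\mu\sqrt{\beta/(4h)}$, and by assumption $\widetilde N_0 \lesssim N$, so $\sqrt{\pi\beta h}\,\widetilde N_0 \lesssim \sqrt{\beta h}\,N \sim 1$ since $\beta \sim \beta_{\mathrm c} \sim L^2 N^{-2/3}$ gives $\beta h \sim N^{-5/3}$, hence $\sqrt{\beta h}\,N \sim N^{1/6}$. That is actually of order $N^{1/6}$, so $\eta$ can be as large as (a constant times) $N^{1/6}$; using $\eta \simeq \sqrt{\beta h}\,\widetilde N_0$ in the regime where $\eta$ is large (cf.\ \eqref{eq:AppBEC7}) and $\eta \to -\infty$ with $\widetilde\mu$ small otherwise, I get in all cases $\widetilde\mu \lesssim h\widetilde N_0 \lesssim hN \sim L^{-2}$, and in particular $\beta\widetilde\mu \lesssim \beta h N \sim N^{1/6}$. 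This controls how much the linear term $\beta\widetilde\mu x$ can help the integrand.

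For the tail bound \eqref{eq:largeDeviationsZeta}: on $\{X \geq cN\}$ the exponent satisfies $\beta(hx^2 - \widetilde\mu x) \geq \beta h x^2 (1 - \widetilde\mu/(hx)) \geq \tfrac12 \beta h x^2$ once $x \geq cN$ and $c$ is chosen so that $\widetilde\mu/(hx) \leq \widetilde\mu/(h c N) \leq 1/2$ (possible since $\widetilde\mu/(hN) \lesssim \widetilde N_0/N \lesssim 1$). Then I split $\mathbb{E}[(1+X)\mathds{1}(X\geq cN)]$ into the numerator $\int_{cN}^\infty (1+x)\exp(-\tfrac12\beta h x^2)\,\mathrm{d}x$ over the partition function $Z = \int_0^\infty \exp(-\beta(hx^2-\widetilde\mu x))\,\mathrm{d}x \geq \int_0^\infty \exp(-\beta h x^2)\,\mathrm{d}x = \tfrac12\sqrt{\pi/(\beta h)}$. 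The numerator is a Gaussian tail which I bound by $C(1 + cN + (\beta h)^{-1/2})\exp(-\tfrac12 \beta h (cN)^2)$; since $\beta h (cN)^2 \sim N^{-5/3}N^2 = N^{1/3}$, and $(\beta h)^{-1/2} \sim N^{5/6}$ is polynomially dominated by $\exp(+\tfrac14 \beta h(cN)^2)$, the ratio is $\lesssim \exp(-\tilde c N^{1/3})$ as claimed. For the moment bound \eqref{eq:momentBoundsZeta}, I would either (i) directly estimate $\int_0^\infty x^{k}\exp(-\beta(hx^2-\widetilde\mu x))\,\mathrm{d}x$ by completing the square, $hx^2 - \widetilde\mu x = h(x - \widetilde\mu/(2h))^2 - \widetilde\mu^2/(4h)$, substituting $y = \sqrt{\beta h}(x - \widetilde\mu/(2h))$, and using $\int |y + m|^k \mathrm{e}^{-y^2}\,\mathrm{d}y \lesssim_k (1 + |m|)^k$ with $m = \sqrt{\beta/(4h)}\,\widetilde\mu = \eta$; after dividing by $Z$ (whose Gaussian part contributes the same $\exp(\widetilde\mu^2\beta/(4h))$ and $(\beta h)^{-1/2}$ factors, which cancel) one is left with $\mathbb{E}[X^k] \lesssim_k (\beta h)^{-k/2}(1 + |\eta|)^k \lesssim_k (\beta h)^{-k/2} + \widetilde N_0^k \lesssim_k N^k$, using $(\beta h)^{-1/2} \sim N^{5/6} \ll N$ and $\widetilde N_0 \lesssim N$; or (ii) split $\mathbb{E}[X^k] = \mathbb{E}[X^k\mathds{1}(X < cN)] + \mathbb{E}[X^k\mathds{1}(X\geq cN)]$, bound the first piece trivially by $(cN)^k$ and the second by the tail estimate above, which is exponentially small. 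Option (ii) is cleaner and reuses the first part.

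The main obstacle — really the only subtle point — is the bookkeeping on $\widetilde\mu$ and the verification that all the exponential/polynomial orders in $N$ line up (in particular that $\beta h N^2 \sim N^{1/3}$, so the tail is genuinely stretched-exponential rather than merely polynomial, and that the prefactors $(\beta h)^{-1/2}$ cannot spoil this). One must also take care that $\widetilde N_0$ is allowed to be as small as $0$; when $\widetilde N_0$ is small $\widetilde\mu$ is very negative and the integrand is \emph{more} concentrated near $0$, so all bounds only improve, but this should be stated explicitly rather than assumed. Everything else is the routine Gaussian-integral estimate sketched above.
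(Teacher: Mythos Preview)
Your approach (reduce to a one-dimensional density, Gaussian tail estimate, then deduce moments from tails) is essentially that of the paper. The skeleton is right, but two steps break as written.

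\textbf{The lower bound on $Z$.} Your inequality
\[
Z = \int_0^\infty \exp\bigl(-\beta(hx^2-\widetilde\mu x)\bigr)\,\de x \;\geq\; \int_0^\infty \exp(-\beta h x^2)\,\de x
\]
holds only for $\widetilde\mu \geq 0$. For $\widetilde\mu < 0$ the integrand is pointwise \emph{smaller} than $e^{-\beta h x^2}$ and $Z$ can be arbitrarily small (indeed $Z\to 0$ as $\widetilde\mu\to-\infty$, which is exactly what happens when $\widetilde N_0\to 0$). So your ratio bound fails in that regime. Your closing remark that ``bounds only improve'' is in fact correct, but it is a statement about the \emph{normalized} measure, not about numerator and denominator separately: for $\mu_1<\mu_2$ the likelihood ratio $e^{\beta(\mu_1-\mu_2)x}$ is decreasing, so the $\mu_1$-measure is stochastically dominated by the $\mu_2$-measure, and hence $\mathbb{E}_{\mu_1}[\phi(X)]\leq\mathbb{E}_{\mu_2}[\phi(X)]$ for every increasing $\phi$ such as $\phi(x)=(1+x)\mathds{1}(x\geq cN)$ or $\phi(x)=x^k$. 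With this one line the $\widetilde\mu<0$ case reduces to $\widetilde\mu=0$. The paper instead splits into $\widetilde\mu\geq -C/L^2$ and $\widetilde\mu< -C/L^2$ and treats the second case by a direct exponential-weight estimate (the $hx^2$ term is dropped in the denominator); the stochastic-domination shortcut is arguably cleaner.

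\textbf{Option (i) for the moments.} The chain
\[
(\beta h)^{-k/2}(1+|\eta|)^k \;\lesssim_k\; (\beta h)^{-k/2}+\widetilde N_0^{\,k}
\]
is false in general: $(\beta h)^{-1/2}|\eta| = |\widetilde\mu|/(2h)$, and when $\widetilde N_0$ is small Lemma~\ref{lem:ChemPotBEC}(b) gives $|\widetilde\mu|\sim 1/(\beta\widetilde N_0)$, so $|\widetilde\mu|/(2h)\sim (\beta h)^{-1}/\widetilde N_0 \sim N^{5/3}/\widetilde N_0$, which is unbounded as $\widetilde N_0\to 0$. (The underlying issue is the same: your completion-of-the-square gives the moments of the \emph{untruncated} Gaussian centered at $\widetilde\mu/(2h)$, which is the wrong proxy when that center is far to the left of the support.) Stick with option~(ii); once the tail bound is fixed as above it gives \eqref{eq:momentBoundsZeta} immediately, which is also what the paper does.
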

\begin{proof}
	Let us again use the notation $h = 4 \pi \mathfrak{a}_N L^{-3}$. We first consider \eqref{eq:largeDeviationsZeta} with $|z|^2$ replaced by $|z|^{2k}$, $k \in \mathbb{N}_0$ and $\lesssim$ replaced by $\lesssim_k$, that is, we need to derive a bound for
	\begin{equation}
		\int_{\mathbb{C}} |z|^{2k} \mathds{1}(|z|^2 \geq cN) \zeta(z) \de z = \frac{\int_{cN}^{\infty} x^k \exp( -\beta(h x^2 - \widetilde{\mu} x) ) \de x}{\int_{0}^{\infty} \exp( -\beta(h x^2 - \widetilde{\mu} x) ) \de x} = \frac{\int_{cN}^{\infty} x^k \exp( -\beta h( x - \widetilde{\mu}/(2h))^2 ) \de x}{\int_{0}^{\infty} \exp( -\beta h( x - \widetilde{\mu}/(2h))^2 ) \de x}.
		\label{eq:AppBEC16}
	\end{equation} 
	To obtain the first equality, we used the same coordinate transformations as above \eqref{eq:AppBEC1}. Inspection of \eqref{eq:AppBEC4} shows that the chemical potential assumes its largest (positive) values when $\widetilde{N}_0 \sim N$. This follows from the fact that the l.h.s. of \eqref{eq:AppBEC4} is strictly increasing in $M$ and that the two maps $\widetilde{\eta} \mapsto \Upsilon(\widetilde{\eta})$ with $\Upsilon$ in \eqref{eq:AppBEC4} and $\widetilde{\mu} \mapsto \widetilde{\eta} =\widetilde{\mu} \sqrt{\beta/(4h)}$ are strictly increasing. Application of part (a) of Lemma~\ref{lem:ChemPotBEC} and the bound $\widetilde{N}_0 \lesssim N$ therefore show that $\widetilde{\mu}$ can be bounded from above by a constant times $\mathfrak{a}_N N L^{-3} \lesssim L^{-2}$. 
	
	Using this, we see that for $c>0$ large enough and $x \geq cN$, we have $x - \mu/2h \geq x/2$. We insert this bound on the r.h.s. of \eqref{eq:AppBEC16} and find
	\begin{equation}
		\int_{\mathbb{C}} |z|^{2k} \mathds{1}(|z|^2 \geq cN)  \zeta(z) \de z \lesssim_k \exp\left( -\frac{\beta h c^2 N^2}{16} \right) \frac{ \int_{0}^{\infty} \exp( -(\beta h/2)( x - \widetilde{\mu}/(2h))^2 ) \de x}{\int_{0}^{\infty} \exp( -\beta h( x - \widetilde{\mu}/(2h))^2 ) \de x},
		\label{eq:AppBEC17}
	\end{equation}
	where the fraction on the r.h.s. equals $\sqrt{2}$ times
	\begin{equation}
		\frac{ \int_{-(\widetilde{\mu}/2) \sqrt{\beta/(2h)}}^{\infty} \exp( -x^2 ) \de x}{\int_{-(\widetilde{\mu}/2) \sqrt{\beta/h}}^{\infty} \exp( -x^2 ) \de x}. 
		\label{eq:AppBEC18}
	\end{equation}
	If $\widetilde{\mu} \geq 0 $ we obtain an upper bound when we replace the lower integration boundary in the numerator by $-\infty$ and that in the denominator by $0$. This yields an upper bound of order $1$. If $\widetilde{\mu}<0$ we apply \eqref{eq:AppBEC5} on the r.h.s. of \eqref{eq:AppBEC18} and see that it is bounded from above by a constant times $\exp(\beta \widetilde{\mu}^2/(8h))$. In combination, these considerations imply the bound
	\begin{equation}
		\int_{\mathbb{C}} |z|^{2k} \mathds{1}(|z|^2 \geq cN) \zeta(z) \de z \lesssim_k \exp\left( -\frac{\beta h c^2 N^2}{16} \right) \max\{ 1, \exp(\beta \widetilde{\mu}^2/(8h))  \}.
		\label{eq:AppBEC19}
	\end{equation}
	When we assume that $0 > \widetilde{\mu} \geq -C/L^2$ for some $C>0$ and choose $c$ large enough, then \eqref{eq:AppBEC19} proves our claim. It remains to consider the case $\widetilde{\mu}<-C/L^2$.
	
	In this case we start with the term after the first equality sign in \eqref{eq:AppBEC16}. We pick $c_1>0$ and realize that it is bounded from above by
	\begin{align}
		\frac{\int_{cN}^{\infty} x^k \exp( \beta \widetilde{\mu} x ) \de x}{\int_{0}^{c_1 N} \exp( -\beta(h x^2 - \widetilde{\mu} x )) \de x} &\leq  \frac{ \exp(\beta h c_1^2 N^2) \int_{cN}^{\infty} x^k \exp( \beta \widetilde{\mu} x ) \de x}{  \int_{0}^{c_1N} \exp( \beta \widetilde{\mu} x )) \de x} \nonumber \\
		&\lesssim_k \frac{\exp(\beta h c_1^2 N^2) (|\beta \widetilde{\mu}|^{-k-1} + N^k) \exp(\beta \widetilde{\mu} cN)}{1-\exp(\beta \widetilde{\mu} c_1 N)} \lesssim_k \exp(-\tilde{c} N^{1/3}). \label{eq:AppBEC20}
	\end{align}
	for some $\tilde{c}>0$. In the last step we used $\widetilde{\mu}<-C/L^2$ and that $c_1$ can be chosen as small as we wish. In combination with our previous considerations, this proves 
	\begin{equation}
		\int_{\mathbb{C}} |z|^{2k} \mathds{1}(|z|^2 \geq cN) \zeta(z) \de z  \lesssim_k \exp(-\tilde{c} N^{1/3}) \label{eq:AppBEC21}
	\end{equation}
	for all $k \in \mathbb{N}_0$. In particular, \eqref{eq:largeDeviationsZeta} holds.
	
	Eq.~\eqref{eq:momentBoundsZeta} follows from \eqref{eq:AppBEC21} when we use the decomposition
	\begin{equation}
		\int_{\mathbb{C}} |z|^{2k} \zeta(z) \de z = \int_{\{ |z|^2 \leq cN \}} |z|^{2k} \zeta(z) \de z + \int_{\{ |z|^2 > cN \}} |z|^{2k} \zeta(z) \de z \lesssim_k N^k + \exp(-\tilde{c} N^{1/3} ).
	\end{equation}
\end{proof}

\section{The expected particle number in the trial state}
\label{app:ExpectedParticleNumber}

In this appendix we prove Lemma~\ref{lm:NG}. An essential ingredient of the proof are large deviations bounds for $G_{\mathrm{B}}(z)$ in \eqref{eq:Gbog} and $G_{\mathrm{free}}$ in \eqref{eq:Gfree}. Before we state them, we define
\begin{equation}\label{eq:defNBNI}
	\cN_{\mathrm{B}}=\sum_{p\in P_\mathrm{B}}a^*_pa_p\qquad\text{and}\qquad \cN_{\mathrm{I}}=\sum_{p\in P_\mathrm{I}}a^*_pa_p.
\end{equation}  

\begin{lemma}\label{lm:NGBGF} We consider the limit $N \to \infty$, $\beta = \kappa \beta_{\mathrm{c}}$ with $\kappa \in (0,\infty)$ and $\beta_{\mathrm{c}}$ in \eqref{eq:BECPhaseTransition}. For any $c>0, r \in \mathbb{N}$ we have
\begin{equation}
\tr[(1+\cN_\mathrm{B})\mathds{1}(\cN_\mathrm{B}\geq cN)G_{\mathrm{B}}(z)] \lesssim_r N^{-r}\big(N^{2/3+\delta_B}\big)^{r+1}.
\label{eq:plybound}
\end{equation}
Moreover, there exist positive constants $c,\tilde{c}>0$ such that
\begin{equation}
	\tr[(1+\cN_\mathrm{I})\mathds{1}(\cN_\mathrm{I}\geq cN)G_{\mathrm{free}}] \lesssim \exp(-\tilde{c} N^{1/3} ).
	\label{eq:expgf}
\end{equation}

\end{lemma}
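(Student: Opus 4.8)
The plan is to prove the two tail bounds separately, both by Chebyshev-type arguments tailored to the structure of the respective states. For \eqref{eq:expgf} the state is a product over modes, so genuine exponential moments are available and a Chernoff bound yields the stretched-exponential decay. For \eqref{eq:plybound} the mean $\tr[\cN_{\mathrm B}G_{\mathrm B}(z)]=\sum_{p\in P_{\mathrm B}}\gamma(p)$ is already of order $N^{2/3+\delta_{\mathrm B}}\ll N$, but since only polynomial decay is claimed it suffices to control all moments of $\cN_{\mathrm B}$ and apply Markov's inequality.

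First I would treat \eqref{eq:expgf}. The state $G_{\mathrm{free}}$ factorizes over $p\in P_{\mathrm I}$, so under $G_{\mathrm{free}}$ the operator $\cN_{\mathrm I}=\sum_{p\in P_{\mathrm I}}a_p^*a_p$ is a sum of independent geometric variables with means $\gamma_0(p)=(e^{\beta(p^2-\mu_0)}-1)^{-1}$, and $\tr[e^{t\cN_{\mathrm I}}G_{\mathrm{free}}]=\prod_{p\in P_{\mathrm I}}(1-(e^t-1)\gamma_0(p))^{-1}$ as long as $(e^t-1)\max_{p\in P_{\mathrm I}}\gamma_0(p)<1$. Two inputs are needed: the total mean is controlled by the ideal gas, $\sum_{p\in P_{\mathrm I}}\gamma_0(p)\le\sum_{p\in\L^*_+}\gamma_0(p)=N-N_0\le N$; and since $|p|>N^{\delta_{\mathrm B}}/L$ on $P_{\mathrm I}$ and $-\mu_0>0$, one has $\beta(p^2-\mu_0)\gtrsim\beta_{\mathrm c}N^{2\delta_{\mathrm B}}/L^2\sim N^{2\delta_{\mathrm B}-2/3}\to 0$, hence $\max_{p\in P_{\mathrm I}}\gamma_0(p)\lesssim N^{2/3-2\delta_{\mathrm B}}$. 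Choosing $t>0$ with $(e^t-1)\max_p\gamma_0(p)=\tfrac12$, so that $t\asymp N^{2\delta_{\mathrm B}-2/3}$, and using $-\ln(1-x)\le 2x$ on $[0,\tfrac12]$, gives $\ln\tr[e^{t\cN_{\mathrm I}}G_{\mathrm{free}}]\le 2(e^t-1)\sum_p\gamma_0(p)\le 4tN$. Hence, for a sufficiently large fixed constant $c$,
\[
\tr[\mathds 1(\cN_{\mathrm I}\ge cN)G_{\mathrm{free}}]\le e^{-tcN}\,\tr[e^{t\cN_{\mathrm I}}G_{\mathrm{free}}]\le e^{-(c-4)tN}\le e^{-\tilde c N^{1/3}} .
\]
The weight $1+\cN_{\mathrm I}$ is inserted by repeating the computation with $t/2$ in the exponent and bounding $(1+\cN_{\mathrm I})e^{(t/2)\cN_{\mathrm I}}\lesssim t^{-1}e^{t\cN_{\mathrm I}}$; the extra factor $t^{-1}\lesssim N^{2/3}$ is absorbed by slightly decreasing $\tilde c$.

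For \eqref{eq:plybound} I would note that $(1+\cN_{\mathrm B})\mathds 1(\cN_{\mathrm B}\ge cN)\le (cN)^{-r}(1+\cN_{\mathrm B})^{r+1}$ as an inequality between functions of $\cN_{\mathrm B}$, so by Markov's inequality it suffices to prove $\tr[(1+\cN_{\mathrm B})^{r+1}G_{\mathrm B}(z)]\lesssim_r (N^{2/3+\delta_{\mathrm B}})^{r+1}$ for every $r\in\NNN$. By Lemma~\ref{lem:DiagBogoH}, $G_{\mathrm B}(z)$ is the image under the Bogoliubov transformation $\cU_z$ of the diagonal quasi-free Gibbs state $\propto\exp(-\beta\sum_{p\in P_{\mathrm B}}\epsilon(p)a_p^*a_p)$, hence $G_{\mathrm B}(z)$ is itself quasi-free with one-particle density matrix $\gamma(p)$ and pairing function $\alpha(p)$ as in Lemma~\ref{lem:1pdm}. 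Expanding $\cN_{\mathrm B}^{k}=\sum_{p_1,\dots,p_k\in P_{\mathrm B}}a_{p_1}^*a_{p_1}\cdots a_{p_k}^*a_{p_k}$ and applying Wick's theorem, every contraction graph on the $k$ sites is $2$-regular, hence a disjoint union of cycles; a cycle of length $m$ leaves one free momentum in $P_{\mathrm B}$ and contributes a factor $\lesssim(\sup_p(1+\gamma(p)+|\alpha(p)|))^{m-1}\sum_{p\in P_{\mathrm B}}(1+\gamma(p)+|\alpha(p)|)$. By Lemma~\ref{lem:boundsgamma} (together with $|P_{\mathrm B}|\sim N^{3\delta_{\mathrm B}}$) one has $\sum_{p\in P_{\mathrm B}}(1+\gamma(p))\lesssim N^{2/3+\delta_{\mathrm B}}$, $\sum_{p\in P_{\mathrm B}}|\alpha(p)|\lesssim N^{2/3}$, and $\sup_p\gamma(p),\sup_p|\alpha(p)|\lesssim L^2/\beta\sim N^{2/3}$, so a cycle of length $m$ is $\lesssim N^{2m/3+\delta_{\mathrm B}}\le(N^{2/3+\delta_{\mathrm B}})^{m}$; multiplying over the at most $k$ cycles and summing the finitely many pairings gives $\tr[\cN_{\mathrm B}^{k}G_{\mathrm B}(z)]\lesssim_k(N^{2/3+\delta_{\mathrm B}})^{k}$, hence the moment bound with $k=r+1$.

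The hard part will be the bookkeeping in this last step: one must check that, once the Kronecker deltas produced by the two-point functions of Lemma~\ref{lem:1pdm} are resolved, each Wick term genuinely factorizes over the cycles of the contraction graph into independent sums over $P_{\mathrm B}$, and that the "large" prefactors arising there — the cardinality $|P_{\mathrm B}|\sim N^{3\delta_{\mathrm B}}$ and the suprema $\sup_p\gamma(p),\sup_p|\alpha(p)|\sim N^{2/3}$ — never accumulate fast enough to beat $(N^{2/3+\delta_{\mathrm B}})^{r+1}$. The inequalities $3\delta_{\mathrm B}<2/3+\delta_{\mathrm B}$ and $N^{2/3}\le N^{2/3+\delta_{\mathrm B}}$, both consequences of the standing assumption $\delta_{\mathrm B}<1/3$, are precisely what makes this go through; the rest is routine.
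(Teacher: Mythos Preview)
Your proposal is correct and follows essentially the same strategy as the paper. For \eqref{eq:expgf} the paper also uses a Chernoff bound, phrased via the grand potential $\Phi(\mu)$ and its concavity rather than the product form of the moment generating function, and with the simpler parameter choice $k=L^{-2}\beta\sim N^{-2/3}$ (your choice $t\asymp N^{2\delta_{\mathrm B}-2/3}$ gives a slightly stronger exponent but is not needed). For \eqref{eq:plybound} the paper likewise applies Markov's inequality and Wick's theorem, but first normal-orders and then invokes \eqref{eq:Comp4} and \eqref{eq:Comp4c} without spelling out the cycle combinatorics; your more explicit bookkeeping is a fine way to carry out what the paper leaves to the reader.
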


\begin{proof} For the sake of simplicity, we give the proof of the first bound with $1+\mathcal{N}_{\mathrm{B}}$ replaced by $\mathcal{N}_{\mathrm{B}}$, and similarly for the second bound. For any $r\geq 1$, we have $\mathds{1}(\cN_\mathrm{B}\geq cN)\leq \cN_\mathrm{B}^r(cN)^{-r}$. Hence, 
\begin{align}
 \tr[\cN_\mathrm{B}\mathds{1}(\cN_\mathrm{B}\geq cN)G_{\mathrm{B}}(z)]&\lesssim (cN)^{-r}\tr[\cN_\mathrm{B}^{r+1}G_{\mathrm{B}}(z)] \nonumber \\
 &= (cN)^{-r}\sum_{p_1, \dots, p_{r+1}\in P_\text{B}}\tr[a^*_{p_1}a_{p_1}\dots a^*_{p_{r+1}}a_{p_{r+1}} G_{\mathrm{B}}(z)].
 \end{align}
After normal ordering and an application of Wick's theorem, we can use \eqref{eq:Comp4} and \eqref{eq:Comp4c} to see that
\begin{align}
 \tr[\cN_\mathrm{B}\mathds{1}(\cN_\mathrm{B}\geq cN)G_{\mathrm{B}}(z)]&\lesssim_r N^{-r}\big(N^{2/3+\delta_B}\big)^{r+1}
 \end{align}
holds. This proves the first bound in \eqref{eq:expgf}

Next, we prove the second bound. Let $0<k\leq L^{-2}\beta$ and observe that
\begin{align}
 k\tr[\cN_\mathrm{I}\mathds{1}(\cN_\mathrm{I}\geq cN)G_{\mathrm{free}}]&\leq  k\tr[(\cN_\mathrm{I}-cN)\mathds{1}(\cN_\mathrm{I}\geq cN)G_{\mathrm{free}}]+ckN \tr[\mathds{1}(\cN_\mathrm{I}\geq cN)G_{\mathrm{free}}]\nonumber\\
 &\leq (1+ckN)\tr[\exp(k(\cN_\mathrm{I}-cN))G_{\mathrm{free}}].\label{eq:d1}
 \end{align}
The trace on the r.h.s. can be written as
\begin{align}
\tr[\exp(k(\cN-cN)G_{\mathrm{free}})] &= \exp(-k cN)\frac{\tr_{\mathscr{F}_{\mathrm{I}}}[ \exp(-\beta \de \Gamma( \mathds{1}( -\mathrm{i} \nabla \in P_{\mathrm{I}})(- \Delta - \mu_0-k\beta^{-1} )))]}{\tr_{\mathscr{F}_{\mathrm{I}}}[ \exp(-\beta \de \Gamma( \mathds{1}( -\mathrm{i} \nabla \in P_{\mathrm{I}})(- \Delta - \mu_0 )))]} \nonumber \\
&= \exp(-k cN) \exp\left(\beta\,\big(\Phi(\mu_0)-\Phi(\mu_0+k\beta^{-1})\big)\right),\label{eq:d2}
\end{align}
where
\begin{equation}
	\Phi(\mu)=-\beta^{-1}\ln\tr_{\mathscr{F}_{\mathrm{I}}}\Big[\exp\Big(-\beta \sum_{p\in P_{\mathrm{I}}}(p^2-\mu)a^*_pa_p)\Big)\Big] = \beta^{-1} \sum_{p\in P_{\mathrm{I}}}\Big[\ln\big(1-\exp(-\beta(p^2-\mu)) \big) \Big].
\end{equation}
Using that $\Phi(\mu)$ is a concave and monotone decreasing function of $\mu$, we obtain the lower bound
\begin{align}
	\beta \Phi(\mu_0+k\beta^{-1}) & \geq \beta \Phi(\mu_0) - \sum_{p\in P_{\mathrm{I}}}\frac{k}{\exp(\beta(p^2-\mu_0-k\beta^{-1}))-1} \nonumber \\
	&\geq \beta \Phi(\mu_0) - CN k
	\label{eq:d3}
\end{align}
for some $C>0$. To come to the last line, we used used $0<k\leq L^{-2}\beta$ and applied Lemma~\ref{lem:RiemannSum}. In combination, these considerations show
\begin{equation}\label{eq:expgb1}
 \tr[e^{k(\cN-cN)}G_{\mathrm{free}}]\lesssim  e^{-k N (c - C) }.
 \end{equation}
When we choose $k= L^{-2} \beta$ and $c>C$ in the above equation this proves the second bound in \eqref{eq:expgf}. 
\end{proof}

We are now prepared to give the proof of Lemma~\ref{lm:NG}.

\begin{proof}[Proof of Lemma~\ref{lm:NG}]

We use $[\cN,B]=0$ to write
\begin{align}\label{eq:trng}
\tr [\cN \G] = \sum_{\alpha} \lambda_{\alpha}\frac{\braket{(1+B)\psi_{\alpha},\cN(1+B)\psi_{\alpha}}} {\braket{(1+B)\psi_{\alpha}, (1+B)\psi_{\alpha}}}= \sum_{\alpha} \lambda_{\alpha}\frac{\langle\psi_{\alpha},(\cN+B^*\cN B)\psi_{\alpha}\rangle} {\braket{(1+B)\psi_{\alpha}, (1+B)\psi_{\alpha}}},
\end{align}
and apply the lower bound $\|(1+B)\psi_{\alpha}\|^2=\langle\psi_\alpha, (1+B^*B)\,\psi_\alpha \rangle\geq 1$ to see that the r.h.s. is bounded from above by $\mathrm{Tr}[ (\cN+B^* B\cN) \Gamma_0 ]$. Using the definition of $B$ in \eqref{eq:B},  Lemma~\ref{lm:Tr} and the bound for $\eta_p$ in \eqref{eq:sumPHC}, we find
\begin{align}\label{eq:NBB}
\text{Tr} [B^* B\cN \Gamma_0] &= \frac{1}{4L^6} \sum_{\substack{q\in\L^*,p_1,p_2 \in P_{\mathrm{H}},\\ \ u_1,v_1,u_2,v_2 \in P_{\mathrm{L}}}}\eta_{p_1}\eta_{p_2} \tr [ a_{u_1}^* a_{v_1}^* a_{u_1+p_1}a_{v_1-p_1} a_{u_2+p_2}^*a_{v_2-p_2}^* a_{u_2 }a_{v_2}a^*_q a_q \Gamma_0 ] \nonumber\\
&=\frac{1}{2 L^6} \sum_{\alpha} \lambda_{\alpha}\sum_{\substack{p_1 \in P_{\mathrm{H}},\\ \ q,u_1,v_1,u_2 \in P_{\mathrm{L}}}}\eta_{p_1}\eta_{p_1+u_1-u_2} \tr[ a^*_{u_1} a_{v_1}^* a_{u_2 }a_{v_1+u_1-u_2}a^*_qa_q \Gamma_0 ] \lesssim N^{\delta_H},
\end{align}
 and therefore $\tr [\cN \G] - \tr [\cN \G_0]  \lesssim N^{\delta_H}$. It remains to prove a lower bound.
 
To that end, we note that $B^*\cN B \geq 0$ and $\mathds{1}(\cN_0\leq cN)\mathds{1}(\cN_{\text{B}}\leq cN)\mathds{1}(\cN_{\text{I}}\leq cN) \leq 1$ implies 
\begin{equation}
	\tr [\cN \G] \geq \sum_{\alpha} \lambda_{\alpha}\frac{\langle\psi_{\alpha},\cN\mathds{1}(\cN_0\leq cN)\mathds{1}(\cN_{\text{B}}\leq cN)\mathds{1}(\cN_{\text{I}}\leq cN)\psi_{\alpha}\rangle} {\braket{(1+B)\psi_{\alpha}, (1+B)\psi_{\alpha}}}
	\label{eq:AppD2}
\end{equation}
for sufficiently large $c>0$. Let $\cN_0=a^*_0a_0$ and recall the definition of $\cN_{\mathrm{B}}$ and $\cN_{\mathrm{I}}$ in \eqref{eq:defNBNI}. In combination, \eqref{eq:AppD2}, $1/(1+x) \geq 1 - x$ for $x\geq 0$ and the bound $\langle\psi_{\alpha},\cN\mathds{1}(\cN_0\leq cN)\mathds{1}(\cN_{\text{B}}\leq cN)\mathds{1}(\cN_{\text{I}}\leq cN)\psi_{\alpha}\rangle\leq 3cN$ allow us to show that
\begin{align}
\tr [\cN \G] &\geq \sum_{\alpha} \lambda_{\alpha}\langle\psi_{\alpha},\cN\mathds{1}(\cN_0\leq cN)\mathds{1}(\cN_{\text{B}}\leq cN)\mathds{1}(\cN_{\text{I}}\leq cN)\psi_{\alpha}\rangle(1-\langle\psi_\alpha, B^*B\,\psi_\alpha \rangle) \nonumber \\
&\geq \tr [\cN \G_0]- \tr\left[ \cN\big\{\mathds{1}(\cN_0\geq cN)+\mathds{1}(\cN_\textsc{B}\geq cN)+\mathds{1}(\cN_\textsc{I}\geq cN)\big\} \Gamma_0 \right] -3cN \tr [ B^*B\, \Gamma_0 ]. \label{eq:lbng}
\end{align}
The last contribution can be bounded from below by $-CN^{\delta_H}$ (this can be seen similarly as for \eqref{eq:NBB}). 

To obtain a bound for the second term on the r.h.s. we apply Lemmas~\ref{lem:momentBoundsZeta} and~\ref{lm:NGBGF}: we have
\begin{equation}
	\tr\left[ \cN\big[\mathds{1}(\cN_\textsc{B}\geq cN)+\mathds{1}(\cN_\textsc{I}\geq cN)\big] \Gamma_0 \right] \lesssim_r N^{1-r}\big(N^{2/3+\delta_{\mathrm{B}}}\big)^{r+1} + \exp(-\tilde{c} N^{1/3} ) \lesssim 1,
	\label{eq:AppD5b}
\end{equation}
provided $c>1$ and $r \in \mathbb{N}$ are chosen large enough. To obtain the second bound we also used the assumption $\delta_{\mathrm{B}} < 1/3$. Next, we consider
\begin{equation}
	\int_{\mathbb{C}} \tr[ (1 + \mathcal{N}_0) \mathds{1}(\mathcal{N}_0 \geq cN) |z \rangle \langle z | ] \zeta(z) \de z = \int_{\mathbb{C}} (1 + |z|^2) \tr[ \mathds{1}(\mathcal{N}_0 + 1 \geq cN ) |z \rangle \langle z | ] \zeta(z) \de z,
	\label{eq:AppD3}
\end{equation}
where we used $a_0 \mathds{1}(\mathcal{N}_0 \geq cN) = \mathds{1}(\mathcal{N}_0 + 1 \geq cN) a_0$. 
Pick $c'>0$. An application of Lemma~\ref{lem:momentBoundsZeta} shows that the term on the r.h.s. of \eqref{eq:AppD3} is bounded from above by 
\begin{align}
	&\int_{\{ |z|^2 \leq c' N \} } (1 + |z|^2) \tr[ \mathds{1}(\mathcal{N}_0 + 1 \geq cN) |z \rangle \langle z | ] \zeta(z) \de z + \int_{\{ |z|^2 > c' N \}} (1+|z|^2) \zeta(z) \de z \nonumber \\
	&\hspace{4cm}\leq (1 + c' N) \int_{\{ |z|^2 \leq c' N \} } \tr[ \mathds{1}(\mathcal{N}_0 + 1 \geq cN) |z \rangle \langle z | ] \zeta(z) \de z + C \exp(- \tilde{c} N^{1/3})
	\label{eq:AppD4} 
\end{align}
for some $\tilde{c} > 0$ as long as $c>0$ is chosen large enough. It remains to consider the first term on the r.h.s. of \eqref{eq:AppD4}.

We evaluate the trace in the eigenbasis $\{ |n \rangle \}_{n \in \mathcal{N}_0}$ of $\mathcal{N}_0$:
\begin{align}
	\int_{\{ |z|^2 \leq c' N \} } \tr[ \mathds{1}(\mathcal{N}_0 + 1 \geq cN) |z \rangle \langle z | ] \zeta(z) \de z &= \int_{\{ |z|^2 \leq c' N \} } \sum_{n=0}^{\infty} \mathds{1}(n \geq cN - 1) | \langle z,n \rangle |^2 \zeta(z) \de z \nonumber \\
	&=\int_{\{ |z|^2 \leq c' N \} } \exp(-|z|^2) \left[ \sum_{n \geq cN-1} \frac{|z|^{2n}}{n!} \right] \zeta(z) \de z.
	\label{eq:AppD5}
\end{align}
To come to the second line, we also used the identity $| \langle z,n \rangle |^2 = \exp(-|z|^2) |z|^{2n} / n!$. An application of Taylor's theorem with an explicit form of the remainder allows us to see that the series on the r.h.s. is bounded from above by $\exp(|z|^2) |z|^{2 cN}/(cN !)$ (here we assume for the sake of simplicity that $c \in \mathbb{N}$). An application of Stirling's approximation formula therefore shows
\begin{equation}
	\int_{\{ |z|^2 \leq c' N \} } \tr[ \mathds{1}(\mathcal{N}_0 + 1 \geq cN) |z \rangle \langle z | ] \zeta(z) \de z \leq \frac{(c'N)^{cN}}{cN!} \lesssim N^{-1/2} \left( \frac{c' e}{c} \right)^{cN},
	\label{eq:AppD6} 
\end{equation}
which is exponentially small in $N$ as long as $c' e/c < 1$ holds. When we put \eqref{eq:lbng}--\eqref{eq:AppD4} and \eqref{eq:AppD6} together, we find the bound
\begin{equation}
	\tr\left[ \cN\big[\mathds{1}(\cN_0\geq cN)+\mathds{1}(\cN_\textsc{B}\geq cN)+\mathds{1}(\cN_\textsc{I}\geq cN)\big] \Gamma_0 \right] \lesssim 1.
	\label{eq:AppD7} 
\end{equation}
In combination with \eqref{eq:lbng} and the assumption $\delta_{\mathrm{H}} > 0$ this proves $\tr [\cN \G] - \tr [\cN \G_0]  \gtrsim -N^{\delta_{\mathrm{H}} }$. We put this result and \eqref{eq:AppD5b} together and obtain a proof of \eqref{eq:NG}. 
\end{proof}
\end{appendix}

\vspace{0.5cm}
(Chiara Boccato) Università degli Studi di Milano \\ 
Via Saldini 50, 20133 Milano, Italy \\  
E-mail address: \texttt{chiara.boccato@unimi.it}

\vspace{0.1cm}

\noindent (Andreas Deuchert) Institute of Mathematics, University of Zurich \\ 
Winterthurerstrasse 190, 8057 Zurich, Switzerland \\ 
E-mail address: \texttt{andreas.deuchert@math.uzh.ch}\\

\vspace{-0.33cm}

\noindent (David Stocker) \\
E-mail address: \texttt{david95$\textunderscore$stocker@hotmail.com} \\

\end{document}